\let\C\relax
\title{An Improved Cutting Plane Method for Convex Optimization, Convex-Concave Games and its Applications\thanks{A preliminary version of this paper appears in Proceedings of the 52nd ACM Symposium on Theory of Computing (STOC 2020). The authors would like to express their sincere gratitude to matrix multiplicationer Josh Alman and Jeroen Zuiddam for their patient and answers of our exponential number of questions about fast matrix multiplication.}} 
\author{Haotian Jiang\thanks{\texttt{jhtdavid@uw.edu}. University of Washington.}
\quad
Yin Tat Lee\thanks{\texttt{yintat@uw.edu}. University of Washington and Microsoft Research Redmond. Research supported in part by NSF Awards CCF-1740551, CCF-1749609, and DMS-1839116.} 
\quad
Zhao Song\thanks{\texttt{zhaos@ias.edu}. Princeton University and Institute for Advanced Study. Work done while visiting University of Washington.} 
\quad
Sam Chiu-wai Wong\thanks{ \texttt{samwon@microsoft.com}. Microsoft Research Redmond.}
}
\date{}
\newtheorem{theorem}{Theorem}[section]
\newtheorem{lemma}[theorem]{Lemma}
\newtheorem{definition}[theorem]{Definition}
\newtheorem{corollary}[theorem]{Corollary}
\newtheorem{conjecture}[theorem]{Conjecture}
\newtheorem{assumption}[theorem]{Assumption}
\newtheorem{fact}[theorem]{Fact}
\newtheorem{claim}[theorem]{Claim}
\newtheorem*{theorem*}{Theorem}
\newcommand{\wh}{\widehat}
\newcommand{\wt}{\widetilde}
\newcommand{\R}{\mathbb{R}}
\newcommand{\T}{\mathcal{T}}
\newcommand{\norm}[1]{\left\lVert#1\right\rVert}
\renewcommand{\varepsilon}{\epsilon}
\renewcommand{\tilde}{\wt}
\renewcommand{\hat}{\wh}
\renewcommand{\d}{\mathrm{d}}
\DeclareMathOperator*{\E}{{\bf {E}}}
\DeclareMathOperator*{\Var}{{\bf {Var}}}
\DeclareMathOperator*{\C}{\mathbb{C}}
\DeclareMathOperator{\poly}{poly}
\DeclareMathOperator{\rank}{rank}
\DeclareMathOperator{\dis}{dis}
\DeclareMathOperator{\tr}{tr}
\DeclareMathOperator{\diag}{diag}
\DeclareMathOperator{\cts}{cts}
\DeclareMathOperator{\jl}{jl}
\DeclareMathOperator{\new}{new}
\DeclareMathOperator{\simp}{simp}
\DeclareMathOperator{\comp}{comp}
\DeclareMathOperator{\inn}{inn}
\DeclareMathOperator{\out}{out}
\DeclareMathOperator{\ctr}{ctr}
\DeclareMathOperator{\SO}{SO}
\DeclareMathOperator{\mat}{mat}
\renewcommand{\mid}{\mathrm{mid}}
\newcommand{\interior}{\mathsf{int}}
\newcommand{\vol}{\mathsf{vol}}
\newcommand{\Zhao}[1]{{\color{red}[Zhao: #1]}}
\newcommand{\haotian}[1]{{\color{red}[Haotian: #1]}}
\newcommand*{\RN}[1]{\expandafter\@slowromancap\romannumeral #1@}
\newcommand{\define}[4][ignore]{%
  \ifstrequal{#1}{ignore}{}{
  \@namedef{thmtitle@#2}{#1}}%
  \@namedef{thm@#2}{#4}%
  \@namedef{thmtypen@#2}{lemma}%
  \newtheorem{thmtype@#2}[theorem]{#3}%
  \newtheorem*{thmtypealt@#2}{#3~\ref{#2}}%
}
\newcommand{\state}[1]{%
  \@namedef{curthm}{#1}
  \@ifundefined{thmtitle@#1}{
  \begin{thmtype@#1}
    }{
  \begin{thmtype@#1}[\@nameuse{thmtitle@#1}]
  }
    \label{#1}
    \@nameuse{thm@#1}
  \end{thmtype@#1}
  \@ifundefined{thmdone@#1}{
  \@namedef{thmdone@#1}{stated}%
  }{}
}
\newcommand{\restate}[1]{%
  \@namedef{curthm}{#1}
  \@ifundefined{thmtitle@#1}{
    \begin{thmtypealt@#1}
    }{
  \begin{thmtypealt@#1}[\@nameuse{thmtitle@#1}]
  }
    \@nameuse{thm@#1}
  \end{thmtypealt@#1}
  \@ifundefined{thmdone@#1}{
  \@namedef{thmdone@#1}{stated}%
  }{}
}
\newcommand{\thmlabel}[1]{
  \@ifundefined{thmdone@\@nameuse{curthm}}{\label{#1}
    }{\tag*{\eqref{#1}}}
}
\begin{document}

\begin{titlepage}
  \maketitle
  \begin{abstract}
Given a separation oracle for a convex set $K \subset \mathbb{R}^n$ that is contained in a box of radius $R$, the goal is to either compute a point in $K$ or prove that $K$ does not contain a ball of radius $\epsilon$. 
We propose a new cutting plane algorithm that uses an optimal $O(n \log (\kappa))$ evaluations of the oracle and an additional $O(n^2)$ time per evaluation, where $\kappa = nR/\epsilon$.
\begin{itemize}
\item This improves upon Vaidya's $O( \text{SO} \cdot n \log (\kappa) + n^{\omega+1} \log (\kappa))$ time algorithm [Vaidya, FOCS 1989a] in terms of polynomial dependence on $n$, where $\omega < 2.373$ is the exponent of matrix multiplication and $\text{SO}$ is the time for oracle evaluation.
\item This improves upon Lee-Sidford-Wong's $O( \text{SO} \cdot n \log (\kappa) + n^3 \log^{O(1)} (\kappa))$ time algorithm [Lee, Sidford and Wong, FOCS 2015] in terms of dependence on $\kappa$.
\end{itemize}
For many important applications in economics, $\kappa = \Omega(\exp(n))$ and this leads to a significant difference between $\log(\kappa)$ and $\poly(\log (\kappa))$. 
We also provide evidence that the $n^2$ time per evaluation cannot be improved and thus our running time is optimal.

A bottleneck of previous cutting plane methods is to compute {\em leverage scores}, a measure of the relative importance of past constraints.
Our result is achieved by a novel multi-layered data structure for leverage score maintenance, which is a sophisticated combination of diverse techniques such as random projection, batched low-rank update, inverse maintenance, polynomial interpolation, and fast rectangular matrix multiplication. Interestingly, our method requires a combination of different fast rectangular matrix multiplication algorithms. 

Our algorithm not only works for the classical convex optimization setting, but also generalizes to convex-concave games. We apply our algorithm to improve the runtimes of many interesting problems, e.g., Linear Arrow-Debreu Markets, Fisher Markets, and Walrasian equilibrium.




  \end{abstract}
  \thispagestyle{empty}
\end{titlepage}

\newpage
{\hypersetup{linkcolor=black}
\tableofcontents
}
\newpage

\section{Introduction}

The cutting plane methods are a class of optimization primitives for
solving Linear and Convex Programming, which is encapsulated by the
\emph{feasibility} problem of finding a point in a given convex set
$K$ equipped with a separation oracle. In each iteration, cutting
plane methods query the \emph{separation} oracle which returns a hyperplane
separating the query point from $K$. Since Khachiyan's breakthrough
result \cite{k80} on the \emph{ellipsoid method}, cutting plane methods have
played a central role in theoretical computer science, showing that
a plethora of problems from diverse areas admit polynomial time algorithms.
Early prominent examples include linear programming and submodular
function minimization.

While more applications of the cutting plane methods are still being
discovered to this date, progress on faster cutting plane methods
had stagnated until the recent work of Lee, Sidford and Wong \cite{lsw15}.
Prior to their work, cutting plane methods were considered too slow to be of relevance in theory or in practice (other than establishing polynomiality). 
\cite{lsw15} debunked this by showing how their cutting plane method is faster than tailor-made algorithms for a long list of well-studied problems.

The fastest algorithm before their work was Vaidya's algorithm \cite{v89}, which runs in $O(n^{\omega+1})$ time. Here $\omega<2.373$ is the exponent of matrix multiplication \cite{cw87,w12,ds13,l14}.
Leveraging recent advances in optimization and numerical linear algebra,
LSW lowered the dependence on the dimension $n$ at the expense of
additional log factors in the accuracy $1/\kappa$, namely $\log^{2}(\kappa)$.
The following table compares the runtimes of Vaidya's and LSW methods,
which both achieve the optimal number of oracle calls of $n \log(\kappa)$.

\begin{table}[!h]
\centering%
\begin{tabular}{|l|l|l|l|}
\hline 
{\bf Authors} &  {\bf \#Oracle Calls} & {\bf Runtimes} \\
\hline 
\hline 
Vaidya & $n \log (\kappa)$ & $n^{\omega+1} \log (\kappa)$ \\ \hline
Lee-Sidford-Wong & $n \log (\kappa)$ & $n^{3} \log^{O(1)} (n) \log^{3} (\kappa) $\\ \hline
\end{tabular}
\end{table}

This extra overhead of $\log^{2}(\kappa)$, as exemplified by various
problems in combinatorial optimization in their paper, translates
into only a log-squared factor in the maximum value $M$ of the input
by taking $\epsilon=O(1/M)$. This is because solutions to (polynomial-time
solvable) problems in combinatorial optimization are guaranteed to
be integral. Despite being a nuisance, this small overhead is relatively
mild and can even be absorbed completely for unweighted problems and
strongly polynomial time algorithms. Indeed, armed with this faster
cutting plane method, they improved the state-of-the-art running times
for a host of problems such as semidefinite programming, matroid intersection
and submodular minimization.

\paragraph{Importance of $\log (\kappa)$ vs $\log^3 (\kappa)$.} For many other problems including linear programming and
market equilibrium computation, $\epsilon$ must be taken as $1/M^{O(n)}$
resulting in an additional factor $n^{2}$ between $\log (\kappa)$ and
$\log^{3} (\kappa)$. For these applications, LSW is slower than Vaidy's
by a factor of $\tilde{O}(n^{4-\omega})$. This raises a natural question:\emph{
is there a cutting plane method that simultaneously runs in $O(n\log (\kappa))$ calls
and $O(n^{3}\log^{O(1)} (n) \log (\kappa))$ time? That is, can we achieve
the best of both worlds of Vaidya's and LSW methods in terms of the
dependence on $n$ and $\kappa$?}

In this paper, we answer this question in the affirmative. Somewhat
surprisingly, we are able to remove $\log^{O(1)} (n)$ dependence in
LSW as well.

\begin{theorem}[Main result] \label{thm:cutting_plane_method_intro}
There is a cutting plane method which runs in time $O(n\cdot \SO \log (\kappa)+n^{3}\log (\kappa))$,
where $\SO$ is the time complexity of the separation oracle.
\end{theorem}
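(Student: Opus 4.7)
The plan is to adopt Vaidya's volumetric center cutting plane method as the outer framework, which already achieves the optimal $O(n\log\kappa)$ oracle calls. Each iteration either adds, deletes, or reweights a constraint of a working polytope $\{x : A^\top x \le b\}$ based on the leverage scores $\sigma_i$ of $S^{-1}A$ at the current approximate volumetric center, where $S = \diag(b - A^\top x)$ and the Hessian is $H := A^\top S^{-2} A$. The cost per iteration in Vaidya is dominated by (i) solving a linear system in $H$ for the Newton step toward the next volumetric center, and (ii) computing all leverage scores $\sigma_i = (S^{-1}A H^{-1} A^\top S^{-1})_{ii}$, each of which is done from scratch in $O(n^{\omega})$ time. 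The goal is to amortize both costs down to $O(n^2)$ per iteration without losing any $\log\kappa$ or $\log n$ factors.

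Next I would build a data structure that maintains an approximate inverse $\widetilde{H}^{-1}$ and approximate leverage scores jointly. For the inverse I would apply the Sherman--Morrison formula after each constraint change and fully refresh only after $\Theta(\sqrt{n})$ (or more generally $n^{\alpha}$ for a tuned $\alpha$) changes have accumulated, using fast rectangular matrix multiplication to apply the accumulated batch of low-rank updates in one shot. For the leverage scores I would use the Johnson--Lindenstrauss trick
\[
\sigma_i \;\approx\; \bigl\lVert G\, S^{-1} A\, \widetilde{H}^{-1/2} e_i \bigr\rVert^2
\]
for a random $O(\log n) \times n$ matrix $G$, so that maintaining all $\sigma_i$ reduces to maintaining the small matrix $M := G S^{-1} A \widetilde{H}^{-1/2}$ and reading off column norms. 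Because Vaidya's potential analysis only needs $(1\pm \Theta(1))$-multiplicative accuracy on $\sigma_i$ and a constant-factor Newton step, the sketch dimension and the approximation tolerance stay $O(\log n)$-sized and independent of $\kappa$.

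The main obstacle, and the step I expect to require the most care, is that $S$ itself drifts at every iteration: even with no combinatorial constraint change, every slack $s_j = b_j - a_j^\top x$ shifts when the center moves, so every entry of $S^{-1}A\widetilde H^{-1/2}$ and hence every column of $M$ is affected. My plan for this is a multi-layered hierarchy, where each layer corresponds to a scale of relative slack drift: a coordinate is promoted to a higher layer once its slack has accumulated drift beyond a fixed multiplicative threshold, and layers are refreshed in a lazy, batched fashion. Within a batch I would express $S^{-1}$ as $(S_0 + \Delta)^{-1}$ expanded by a constant-degree Taylor polynomial in $\Delta/S_0$, which turns the update of $M$ into evaluating a low-degree matrix polynomial; this can be applied in a single \emph{rectangular} matrix multiplication whose aspect ratio matches the batch size at that layer. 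Summing costs across layers, using different fast rectangular matrix multiplication exponents at different aspect ratios, is what produces the final $O(n^2)$ amortized cost per iteration and is where the delicate balance of batch sizes must be tuned.

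Having established the per-iteration cost and the correctness of the approximate leverage scores and Newton step, the theorem follows by plugging the data structure into Vaidya's outer loop: $O(n\log\kappa)$ iterations, each costing $\SO$ for one oracle call plus $O(n^2)$ amortized internal work, yields the stated runtime $O(n\cdot\SO\log\kappa + n^3\log\kappa)$. I would finish by verifying that the JL sketch, the delayed inverse refreshes, and the polynomial interpolation all tolerate each other's errors, so that no additional $\log\kappa$ factor sneaks into the accuracy budget.
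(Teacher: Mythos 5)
Your outer architecture---Vaidya's volumetric method driven by a lazily-refreshed, layered data structure that pays for batched low-rank updates via fast rectangular matrix multiplication---matches the paper's. The fatal gap is at exactly the step you flag as requiring the most care: the JL sketch dimension and the accuracy target you pair it with. You claim that $(1\pm\Theta(1))$-multiplicative accuracy per coordinate, hence an $r=O(\log n)$-row sketch $G$, suffices. In the perturbed volumetric center analysis the requirement is an $\ell_2$ bound on the \emph{entire} leverage-score vector: the approximate Newton step contracts the potential gap by a constant factor only up to an additive $O(\|\tilde\sigma-\sigma\|_2^2)$ term, so one needs $\|\tilde\sigma-\sigma\|_2=o(1)$ (at worst a sufficiently small universal constant). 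With $r=O(\log n)$, each coordinate has relative error $\Theta(1/\sqrt r)$; since $\sum_i\sigma_i=n$ over $m=\Theta(n)$ constraints, sketching $\sigma$ directly gives $\|\tilde\sigma-\sigma\|_2=\Theta(\sqrt{n/r})=\Theta(\sqrt{n/\log n})$, and even if you instead sketch only the per-step \emph{changes} $\Delta\sigma$ (whose $\ell_2$ norm is $O(1)$, the trick underlying both LSW and the present paper), the accumulated error over the $T=\Theta(n^{\omega-2})$ iterations you can afford between $O(n^\omega)$-time recomputations is $\Theta(\sqrt{T/r})=\Theta(\sqrt{n^{\omega-2}/\log n})$, still polynomially large. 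Either way $r=O(\log n)$ cannot deliver $o(1)$ accumulated $\ell_2$ error per refresh epoch.

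The paper's central device, and precisely what lets it beat the $\log^{2}\kappa$ overhead of LSW, is a ``tall'' JL matrix with $r=n^{0.31}$ rows, chosen so that $\T_{\mat}(n,n,r)=n^{2+o(1)}$, together with a projection-maintenance tolerance $n^{-0.1}$ that shrinks the $\ell_2$ norm of the quantity being sketched to $n^{-\Theta(1)}$; the combination drives the per-outer-step variance down to roughly $n^{-0.48}$, so that summing over $n^{\omega-2}$ outer steps remains $n^{-\Omega(1)}$. Your proposal inherits exactly the bottleneck that forced LSW to shrink the sketch tolerance to $1/\log\kappa$ and pay $\log^2\kappa$. Two secondary problems point the same way: a constant-degree Taylor truncation of $(S_0+\Delta)^{-1}$ gives only constant (not $n^{-\Omega(1)}$) error per use, which also accumulates---the paper uses $\poly\log(n)$-term preconditioned series and $\poly\log(n)$-point Gaussian quadrature to reach error $n^{-\Omega(\log n)}$---and your per-coordinate drift-threshold promotion scheme is closer to LSW's soft thresholds than to the paper's purely time-partitioned phases, which are sized so that each layer batches a fixed number of updates into a single rectangular multiplication of the width where that layer's matrix-multiplication exponent is optimal.
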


\begin{table}[!t]
\centering
    \begin{tabular}{ | l | l | l | l | l | }
    \hline
    {\bf Reference} & {\bf Year} & {\bf Algorithm} & {\bf Complexity} \\ \hline \hline
     \cite{s77,yn76,k80} & 1979 & Ellipsoid Method & $n^2  \SO  \log (\kappa) + n^4 \log (\kappa)$ \\ \hline
     \cite{kte88,nn89} & 1988 & Inscribed Ellipsoid & $n \SO \log (\kappa) + (n \log (\kappa))^{4.5}$ \\ \hline 
     \cite{v89} & 1989 & Volumetric Center & $n \SO \log (\kappa) + n^{\omega+1} \log (\kappa)$ \\ \hline
     \cite{av95} & 1995 & Analytic Center & $n \SO \log^2 (\kappa) + n^{\omega+1} \log^2 (\kappa) + ( n \log (\kappa) )^{ 2 + \omega / 2 } $ \\ \hline
     \cite{bv02} & 2004 & Random Walk & $ n \SO \log (\kappa) + n^7 \log (\kappa)$ \\ \hline
     \cite{lsw15} & 2015 & Hybrid Center & $n \SO \log (\kappa) + n^3 \log^3 (\kappa)$ \\ \hline
     Theorem~\ref{thm:cutting_plane_method_intro} & 2019 & Volumetric Center & $n \SO \log (\kappa) + n^3 \log (\kappa) $ \\ \hline
    \end{tabular}
    \caption{ \small{Algorithms for the Feasibility Problem. Let $\kappa = nR/\epsilon$. All methods can be used to solves a more general problem where only a membership oracle is given \cite{lsv18}.}}\label{tab:cutting_plane_method} 
\end{table}

\begin{table}[htp!]
    \centering
    \begin{tabular}{ | l | l | l | l | l | }
        \hline
        {\bf Reference} & {\bf Year} & {\bf \#Operations} & {\bf Poly Type } \\ \hline \hline
        \cite{e75}    & 1975    & Finite      & Not poly \\ \hline
        \cite{j07} & 2007  & Polynomial     & Weakly poly \\ \hline
        \cite{y08}         & 2008    & $ n^6 \log(nU)$  & Weakly poly \\ \hline
        \cite{dpsv08}  & 2008    & Polynomial   & Weakly poly \\ \hline
        \cite{dm15} & 2015 & $ n^9 \log(nU)$  &  Weakly poly \\ \hline
        \cite{dgm16}   & 2016    & $ n^6 \log^2 (nU)$ & Weakly poly \\ \hline
        \cite{gv19}   & 2019    & $ m n^{9} \log^2 (n)$  & Strongly poly \\ \hline
        Theorem~\ref{thm:arrow_debreu_market_intro} & 2019 & $ m n^2 \log (nU) $ & Weakly poly \\ \hline
    \end{tabular}
    \caption{\small{Linear Arrow-Debreu Markets. Let $n$ be the number of agents and $m$ the number of edges. Each operation of the given algorithms involves $O(n \log(nU))$-bit numbers.} }
    \label{tab:arrow_debreu_market_intro}
\end{table}

\begin{table}[htp!]
    \centering
    \begin{tabular}{ | l | l | l | l | l | }
        \hline
        {\bf Reference} & {\bf Year} & {\bf \# Operations}  &  {\bf Poly Type } \\ \hline \hline
        \cite{v10} & 2010 &  $ n^3 (n+m)^2 \log (U) \cdot \T_{\text{max-flow}}$  & Weakly poly \\ \hline
        \cite{v16} & 2016 & $mn^3 + m^2 (m+n \log(n))\log(m)$ & Strongly poly  \\ \hline
        \cite{wan16} & 2016 & $m^3 n + m^2 \log(n) (n \log(n) + m) $  & Strongly poly \\ \hline
        Theorem~\ref{thm:fisher_market_intro} & 2019 & $m n^2 \log (n U)$ & Weakly poly \\ \hline 
    \end{tabular}
    \caption{\small{Fisher Markets with Spending Constraint Utilities. Let $n$ denote the total number of buyers and sellers, and $m$ the total number of segments. 
    $\T_{\text{max-flow}}$ denotes the number of operations needed for a max-flow computation. Each operation of the given algorithms involves $O(n \log(nU))$-bit numbers. } }
    \label{tab:fisher_markets_intro}
\end{table}

\begin{table}[htp!]
    \centering
    \begin{tabular}{ | l | l | l | l | l | }
        \hline
        {\bf Reference} & {\bf Year} & {\bf \# Operations} &  {\bf Poly Type } \\ \hline \hline
        \cite{kc82} & 1982 & Finite & Not poly \\ \hline
        \cite{p99} & 1999 & Finite & Pseudo poly  \\ \hline
        \cite{pu02} & 2002 & Finite & Pseudo poly \\ \hline
        \cite{am02} & 2002 & Finite & Pseudo poly \\ \hline
        \cite{dsv07} & 2007 & Finite & Pseudo poly \\ \hline
        \cite{lw17} & 2017 &  $ n^2 \T_{\mathrm{AD}} \log(SMn)+n^6 \log^{O(1)}(SMn)$ & Weak poly  \\ \hline
        Theorem~\ref{thm:walrasian_equilibrium_intro} & 2019 & $ n^2 \T_{\mathrm{AD}} \log(SMn)+n^4 \log(SMn)$ & Weak poly \\ \hline
    \end{tabular}
    \caption{\small{Walrasian equilibrium for general buyer valuations and fixed supply. Let $\T_{\mathrm{AD}}$ denotes the runtime of aggregate demand oracle, $n$ denotes the number of goods. }} 
    \label{tab:walrasian_equilibrium_intro}
\end{table}

As with previous methods, our result achieves the asymptotic optimal
oracle complexity of $O(n\cdot \SO\log (\kappa))$. Moreover, we conjecture
that the runtime is likely the best possible. Note that in each iteration
where a separation oracle call is made, even basic matrix operations
require $O(n^{2})$ time. Thus our runtime is essentially tight unless
properties like sparsity can be surprisingly exploited to update the feasible region
and compute the next query point for the separation oracle.

Our result is obtained by marrying Vaidya's method with recent advances
in numerical linear algebra. Similar to LSW, we implement each iteration
of Vaidya's via tools from fast numerical linear algebra. The main
issue is that such tools rely on approximation and would lead to errors
which must be carefully controlled.

Our first innovation is to run Vaidya's method in \emph{phases}, each
of which consist of a certain number of iterations. Between phases
we ``recompute'' to eliminate the errors accumulated within a phase.
Because of this recomputation we can afford to tolerate higher errors
in an iteration. Secondly, we present a sophisticated data structure
that enables us to implement each iteration of Vaidya's efficiently.
Our data structure leverages recent advances on applying numerical
techniques to optimization. We hope that these numerical tools, as
well as our approach to applying them, would play a greater role in
future development of optimization.

\subsection{Applications}

We highlight some of the key applications of our faster cutting plane method. Interestingly, even though our cutting plane method is a general purpose algorithm, we are able to improve the runtimes of tailor-made algorithms for various problems.

Using a standard reduction of convex minimization to the feasiblity problem (\cite{nemi94} and Theorem 42 of~\cite{lsw15}), we can minimize a convex function with an optimal $\tilde{O}(n)$ subgradient oracle calls and an additional $O(n^2)$ time per oracle call. 

\begin{theorem}[Informal version of Theorem~\ref{thm:convex}]\label{thm:convex_intro}
Let $f$ be a convex function on $\R^n$ and $S$ be a convex set that contains a minimizer of $f$. Suppose we have a subgradient oracle for $f$ with cost $\mathcal{T}$ and $S\subset B(0,R)$. Using $B(0,R)$ as the initial polytope for our Cutting Plane Method, for any $0<\alpha<1$, we can compute $x\in S$ such that $f(x)-\min_{y\in S}f(y)\leq \alpha\left(\max_{y\in S}f(y)-\min_{y\in S}f(y)\right)$, with high probability in $n$ and with a running time of $O(\T \cdot n \log(\kappa)+n^{3}\log (\kappa))$, where $\kappa =  n\gamma / \alpha$ and $\gamma=R /(\mathrm{minwidth}(S))$.
\end{theorem}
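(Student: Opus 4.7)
The plan is to reduce convex minimization to the feasibility setting solved by Theorem~\ref{thm:cutting_plane_method_intro} via the classical subgradient-cut reduction. I would run the cutting plane method starting with the polytope $B(0,R)$, using a composite separation oracle defined as follows: given a query point $x$, first invoke $S$'s separation oracle; if $x \notin S$, return the separating hyperplane it produces, otherwise call the subgradient oracle for $f$ to obtain $g \in \partial f(x)$ and return the half-space $\{y : g^{\top}(y-x) \leq 0\}$. Throughout the execution I would track $x_{\mathrm{best}}$, the query point lying in $S$ with the smallest observed value of $f$, and output $x_{\mathrm{best}}$ at the end of the run.

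For correctness, set $f^{*} = \min_{S} f$, $f_{\max} = \max_{S} f$, and define the target level set $K = \{y \in S : f(y) \leq f^{*} + \alpha(f_{\max} - f^{*})\}$. The first key claim is that $K$ survives every cut, unless $x_{\mathrm{best}}$ is already a valid answer: membership cuts preserve $S \supseteq K$ by construction, and for a subgradient cut at a feasible query $x$, convexity of $f$ gives $g^{\top}(y-x) \leq f(y) - f(x)$, so a point $y \in K$ can only be removed if $f(x) \leq f^{*} + \alpha(f_{\max}-f^{*})$, in which case $x$ itself witnesses the desired optimality gap. The second key claim is that $K$ contains a ball of radius $\alpha \cdot \mathrm{minwidth}(S)/2$: taking $x_{c}$ with $B(x_{c}, r) \subset S$ for $r \geq \mathrm{minwidth}(S)/2$ and $x^{\star} \in \arg\min_{S} f$, convexity of $S$ and $f$ applied to $x_{\alpha} := (1-\alpha) x^{\star} + \alpha x_{c}$ yields $B(x_{\alpha}, \alpha r) \subset K$. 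Choosing $\epsilon = \alpha \cdot \mathrm{minwidth}(S)/4$ in the call to Theorem~\ref{thm:cutting_plane_method_intro} then forces the method, whenever it terminates by certifying no $\epsilon$-ball in the final polytope, to have already produced some $x_{\mathrm{best}}$ with $f(x_{\mathrm{best}}) \leq f^{*} + \alpha(f_{\max} - f^{*})$.

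The runtime follows immediately from Theorem~\ref{thm:cutting_plane_method_intro} with $\kappa = nR/\epsilon = \Theta(n\gamma/\alpha)$: each iteration costs $\T$ for the oracle call plus $O(n)$ to form the hyperplane, there are $O(n \log \kappa)$ iterations, and the data-structure overhead totals $O(n^{3}\log \kappa)$, giving the claimed bound $O(\T \cdot n \log(\kappa) + n^{3}\log(\kappa))$; the high-probability qualifier is inherited verbatim from the randomized cutting plane method. The step I expect to be the most delicate is the invariance argument for $K$: one must verify that, simultaneously with the monotonically non-increasing sequence $f(x_{\mathrm{best}})$, the ball $B(x_{\alpha}, \alpha r)$ sits inside the current polytope at every iteration, so that the failure of the cutting plane method to find an $\epsilon$-ball really contradicts the assumption that $x_{\mathrm{best}}$ has not yet entered $K$. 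The remaining bookkeeping — relating $\mathrm{minwidth}(S)$ to $R$, absorbing constants into logarithms, and propagating the high-probability clause — is routine.
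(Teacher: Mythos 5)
Your route is exactly the standard Nemirovski/Lee--Sidford--Wong reduction that the paper cites for this statement without writing out a proof, so the strategies coincide. There is, however, one incorrect intermediate claim: you assert that $S$ contains a Euclidean ball of radius $r \geq \mathrm{minwidth}(S)/2$. The inequality between inradius and minimal width actually runs the other way: projecting the inscribed ball onto any direction shows $\mathrm{inradius}(S) \leq \mathrm{minwidth}(S)/2$ always, while in the reverse direction Steinhagen's inequality only guarantees $\mathrm{inradius}(S) \gtrsim \mathrm{minwidth}(S)/\sqrt{n}$, with the simplex essentially tight. Consequently you must take $\epsilon = \Theta\bigl(\alpha \cdot \mathrm{minwidth}(S)/\sqrt{n}\bigr)$, which inflates the effective $\kappa$ by a $\sqrt{n}$ factor. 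Since $\mathrm{minwidth}(S)\leq 2R$ forces $\kappa = n\gamma/\alpha \geq n/2$, this extra factor is absorbed as $O(\log\kappa)$ and the claimed complexity is unaffected, but the step as written is wrong and should be fixed.

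The invariance argument for the target sublevel set $K$ is correct, and you are right that the relaxed cuts $b_k' \leq b_k$ and constraint dropping in Vaidya's method can only enlarge the polytope, so $K\subseteq P^{(k)}$ is preserved as long as $x_{\mathrm{best}}\notin K$. A minor presentational point that resolves the ``delicate'' worry you raise at the end: rather than appealing to the feasibility-oracle phrasing of Theorem~\ref{thm:cutting_plane_method_intro} (your composite oracle never separates a point from $K$ itself), invoke the volume-contraction guarantee (Lemma~\ref{lem:ratePreserved}) directly: after $T = O\bigl((n\log\kappa\bigr)$ iterations one has $\vol(P^{(T)})$ smaller than the volume of an $\epsilon$-ball, which contradicts $B(x_\alpha,\alpha r)\subseteq P^{(T)}$ unless at some point $x_{\mathrm{best}}$ entered $K$. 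Stated this way the argument is non-circular and self-contained.
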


Our convex minimization result can be further generalized to convex-concave games.

\begin{theorem}[Informal version of Theorem~\ref{thm:saddle_point}]\label{thm:saddle_point_intro}
Given convex sets $\mathcal{X}\subset B(0,R)\subset\R^{n}$
and $\mathcal{Y}\subset B(0,R)\subset\R^{m}$ such that both $\mathcal{X}$ and $\mathcal{Y}$ contain a ball of radius $r$. Let $f(x,y):\mathcal{X}\times\mathcal{Y}\rightarrow\R$ be
an $L$-Lipschitz function 
that is convex in $x$ and concave in $y$. 
Define $\kappa = \frac{n+m}{\epsilon} \frac{R}{r}$.
For any $ \epsilon \in (0,1/2]$,
we can find $(\hat{x},\hat{y})$ such that
$ \max_{y\in\mathcal{Y}}f(\hat{x},y)-\min_{x\in\mathcal{X}}f(x,\hat{y})\leq\epsilon L r $
in time 
$
O \left( \mathcal{T}  \cdot (n + m) \log (\kappa) + (n+m)^{3} \log (\kappa) \right)
$
with high probability in $n+m$ where $\mathcal{T}$ is the cost of computing subgradient $\nabla f$.
\end{theorem}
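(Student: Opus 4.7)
The plan is to reduce the saddle point problem to a single run of the feasibility cutting plane method of Theorem~\ref{thm:cutting_plane_method_intro} on the product domain $\mathcal{X}\times\mathcal{Y} \subset \R^{n+m}$, using the monotone operator associated with $f$ in the role of a separation oracle. At each query point $z_k = (x_k, y_k)$ produced by the algorithm we return the hyperplane through $z_k$ with normal $g_k := (\nabla_x f(x_k, y_k),\, -\nabla_y f(x_k, y_k))$. Convexity of $f(\cdot, y_k)$ and concavity of $f(x_k, \cdot)$ combine to give, for every $z=(x,y)\in \mathcal{X}\times\mathcal{Y}$,
\begin{align*}
\langle g_k,\, z_k - z\rangle \;\geq\; f(x_k,y) - f(x,y_k),
\end{align*}
so in particular every saddle point lies in the returned half-space and the cut is valid. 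The Lipschitz assumption yields $\|g_k\| = O(L)$, and since each of $\mathcal{X}, \mathcal{Y}$ contains a ball of radius $r$ the joint domain contains $B(0,r)$ in $\R^{n+m}$ and sits inside $B(0, R\sqrt{2})$, so Theorem~\ref{thm:cutting_plane_method_intro} applies with effective dimension $n+m$ and accuracy parameter of order $\kappa = (n+m)R/(\epsilon r)$.

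I would then run Theorem~\ref{thm:cutting_plane_method_intro} for $T = O((n+m)\log\kappa)$ iterations and extract an approximate saddle point as the weighted average
\begin{align*}
\hat{x} \;=\; \frac{\sum_k W_k\, x_k}{\sum_k W_k}, \qquad \hat{y} \;=\; \frac{\sum_k W_k\, y_k}{\sum_k W_k},
\end{align*}
where the nonnegative weights $W_k$ are the leverage scores already produced by the volumetric center algorithm. Combining Jensen's inequality (from convexity in $x$ and concavity in $y$) with the cut inequality above yields
\begin{align*}
\max_{y\in\mathcal{Y}} f(\hat{x},y) \;-\; \min_{x\in\mathcal{X}} f(x,\hat{y}) \;\leq\; \sup_{z \in \mathcal{X}\times\mathcal{Y}} \frac{\sum_k W_k\, \langle g_k,\, z_k - z\rangle}{\sum_k W_k},
\end{align*}
reducing everything to a bound on the weighted regret of the cutting plane trajectory.

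The hard part will be bounding this weighted regret by $\epsilon L r$, which is exactly where the potential-function analysis underlying Theorem~\ref{thm:cutting_plane_method_intro} enters. The per-iteration decrease of the volumetric barrier is (up to constants) proportional to the cut violation weighted by $W_k$, so telescoping across all $T$ iterations and using the lower bound $r$ on the inscribed ball bounds the supremum above by $\|g_k\|\cdot R \cdot 2^{-\Omega(T/(n+m))}$. Substituting $\|g_k\| = O(L)$, $T = \Theta((n+m)\log\kappa)$, and $\kappa = \Theta((n+m)R/(\epsilon r))$ yields exactly the claimed duality gap $\epsilon L r$. The runtime is then immediate: $T$ oracle calls of cost $\mathcal{T}$ plus $O((n+m)^2)$ bookkeeping per iteration give $O(\mathcal{T}(n+m)\log\kappa + (n+m)^3\log\kappa)$, and the high probability guarantee in $n+m$ is inherited from that of Theorem~\ref{thm:cutting_plane_method_intro}. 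In spirit the argument is identical to the derivation of Theorem~\ref{thm:convex_intro} from Theorem~\ref{thm:cutting_plane_method_intro}: the only change is that we feed the method the monotone cut $g_k$ rather than a subgradient of a single convex objective, and convert the resulting weighted regret into a duality gap instead of an optimality gap.
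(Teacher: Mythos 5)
Your high-level reduction is right: run the cutting-plane method on $\mathcal{X}\times\mathcal{Y}$ with the monotone cut $g_k=(\nabla_x f,-\nabla_y f)$, which is exactly the paper's Lemma~\ref{lem:convexprop}, and output a convex combination of the query points. But the mechanism you propose for turning volume shrinkage into a duality-gap bound has a real gap, and it is precisely the part the paper spends three lemmas on.

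Concretely, you take the weights $W_k$ to be ``the leverage scores already produced by the volumetric center algorithm'' and assert that the per-iteration decrease of the volumetric barrier is proportional to the $W_k$-weighted cut violation, so that telescoping gives the regret bound. Neither claim holds. Vaidya's potential drops by a fixed additive constant per step, not by an amount tied to any regret-like quantity; the leverage scores are used to decide which rows to keep or drop and play no role as dual certificates. What is actually needed for the averaged point $\hat z=\sum\lambda_k z^{(k)}$ to work is that the weights approximately satisfy the balance condition $\sum_k\lambda_k g^{(k)}\approx 0$: only then is $z\mapsto\sum_k\lambda_k\gamma^{(k)}(z)$ (nearly) constant, which is what allows one to uniformly bound the duality gap at $\hat z$ by $\max_z\gamma(z)$. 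Leverage scores do not satisfy this balance condition, and there is no reason they should. The paper obtains valid weights as an explicit post-processing step by solving a small linear program (Lemma~\ref{lem:ConvCombSameMax}), and separately bounds $\max_z\gamma(z)$ by a geometric scaling argument relating it to $\vol(P^{(T)})^{1/(n+m)}/\vol(\mathcal{X}\times\mathcal{Y})^{1/(n+m)}$ (Lemma~\ref{lem:upperboundgT}), not by a potential telescope.

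You also omit the feasibility issue: some $z^{(k)}$ lie outside $\mathcal{X}\times\mathcal{Y}$, so averaging all of them can produce $\hat z\notin\mathcal{X}\times\mathcal{Y}$. The paper handles this by extending the oracle to return a normal of length $\beta>L$ at infeasible queries, restricting the average to the set $J$ of feasible $z^{(k)}$, and showing $\sum_{k\in J}\lambda_k\geq 1/2$ once $\beta=\Theta(\sqrt{n+m}\,L)$ (Lemma~\ref{lem:convcombSolution}). Without some version of this renormalization your proposed $\hat z$ need not even be a point at which the stated duality gap is defined.
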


Leveraging this improved dependence on $\kappa$ (and hence $\epsilon$),
our cutting plane method can be used to improve the runtimes of a wide range of problems, especially those on market equilibrium computation (see Table~\ref{tab:arrow_debreu_market_intro},~\ref{tab:fisher_markets_intro} and~\ref{tab:walrasian_equilibrium_intro} for a summary of previous runtimes). 
In all our applications, $1/\epsilon$ needs to be exponentially large in $n$ which renders the $\log(\kappa)$ factor polynomially large.

We show the following runtime improvement for the problem of computing a market equilibrium in linear exchange markets:

\begin{theorem}[Informal version of Theorem~\ref{thm:arrow_debreu_market}]\label{thm:arrow_debreu_market_intro}
There exists a weakly polynomial algorithm that computes a market equilibrium in linear exchange markets in time $O(mn^2 \log (nU))$.
\end{theorem}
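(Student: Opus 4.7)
The plan is to cast linear Arrow-Debreu equilibrium computation as an $n$-dimensional convex minimization problem and apply Theorem~\ref{thm:convex_intro}. I would use one of the known convex programs for this problem (for instance a logarithmic-price reformulation of the Devanur-Garg-V\'egh program or Jain's rational convex program), whose variables are the $n$ good prices and whose minimizer encodes the equilibrium allocations and prices. After a scale normalization, the feasible region sits inside a box $B(0,R)$ with $R = \poly(nU)$, and standard bit-complexity bounds for linear-exchange equilibria imply that a $1/\poly(nU)$-approximate minimizer is already enough to identify the correct combinatorial cell. Consequently $\log \kappa = O(\log(nU))$.

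For the subgradient oracle at a price vector $p$, each buyer's bang-per-buck optimal bundle is determined by the incident edges $(i,j)$ with maximum $u_{ij}/p_j$; aggregating excess demands across all $n$ agents and all $m$ edges, together with the auxiliary matrix-vector work needed to assemble a subgradient of the convex program, costs $\T = O(mn)$ per call. Substituting dimension $n$, oracle cost $\T = O(mn)$, and $\log \kappa = O(\log(nU))$ into Theorem~\ref{thm:convex_intro} gives total running time
\[
O\bigl( \T \cdot n \log \kappa + n^3 \log \kappa \bigr) \;=\; O\bigl(mn^2 \log(nU) + n^3 \log(nU)\bigr) \;=\; O\bigl(mn^2 \log(nU)\bigr),
\]
where the final step uses $m \geq n$, which is without loss of generality since isolated agents or goods can be discarded in a preprocessing pass.

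To convert the resulting $\alpha$-approximate minimizer (with $\alpha = 1/\poly(nU)$) into an exact equilibrium, I would invoke the simultaneous Diophantine approximation / continued-fraction rounding used in earlier weakly polynomial algorithms such as Devanur-Garg-V\'egh; this runs in $\poly(n, \log U)$ time and is absorbed by the optimization cost. The main obstacle will be verifying that the chosen convex program genuinely supports the polynomial-accuracy reduction, i.e.\ that $\log \kappa$ does not secretly hide a $\poly(n)$ factor. This reduces to controlling the geometry of the equilibrium cell inside the log-price simplex (both a $\poly(nU)$ bound on its diameter and a $1/\poly(nU)$ lower bound on its width), which is essentially available from the rational-convex-program literature on Arrow-Debreu markets, but pairing it cleanly with our sharper cutting-plane driver requires care.
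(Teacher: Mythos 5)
Your proposal has a critical error in the $\log \kappa$ calculation, and it also misses what the paper actually does structurally.

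The claim that ``a $1/\poly(nU)$-approximate minimizer is already enough'' and hence $\log\kappa = O(\log(nU))$ is wrong for linear exchange markets. As Lemma~\ref{lem:price_upper_bound} (and the Devanur--Garg--V\'egh bit-complexity bounds it cites) shows, equilibrium prices can be quotients of integers as large as $\Delta = (nU)^n$, so the approximation parameter must be taken as $\epsilon = 1/(nU)^{\Theta(n)}$ and consequently $\log\kappa = \Theta(n\log(nU))$, not $O(\log(nU))$. This is precisely the phenomenon the paper emphasizes in the introduction: for market equilibrium problems $\kappa = \Omega(\exp(n))$, which is exactly why getting $\log\kappa$ instead of $\log^3\kappa$ dependence matters. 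Plugging the corrected $\log\kappa = O(n\log(nU))$ into your formula with $\T = O(mn)$ gives $O(\T n\log\kappa + n^3\log\kappa) = O(mn^3\log(nU) + n^4\log(nU))$, which is a factor of $n$ slower than the claimed bound.

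The paper's proof takes a different route that simultaneously fixes both the dimension count and the oracle cost: it starts from the Devanur--Garg--V\'egh convex program, which has $\Theta(n^2)$ variables ($y_{ij}$), and instead of trying to eliminate variables to get a pure convex-minimization problem, it dualizes the equality constraints via Lagrange multipliers $\lambda,\eta\in\R^n$ and eliminates $y$ analytically. This yields an explicit convex-concave game in $(p,\beta)$ vs.\ $(\lambda,\eta)$ with $O(n)$ total variables whose first-order oracle (a sum over edges) costs only $O(m)$ per call. It then invokes Theorem~\ref{thm:saddle_point} (the saddle-point version of the cutting plane method), not Theorem~\ref{thm:convex_intro}. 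With $O(n)$ variables, $\T = O(m)$, and $\log\kappa = O(n\log(nU))$, Theorem~\ref{thm:saddle_point} gives $O(mn^2\log(nU) + n^4\log(nU))$. Your approach would need to exhibit a genuinely $n$-variable convex program whose subgradient is computable in $O(m)$ time and whose bit complexity is controlled, which you do not verify; the log-price dual $\bar f(p,\beta) = \max_{\lambda,\eta} f(p,\beta,\lambda,\eta)$ is convex in $(p,\beta)$ but its subgradient requires solving the inner maximization, which is why the paper keeps the minimax structure explicit rather than collapsing it.
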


The celebrated result of Arrow and Debreu~\cite{ad54} shows the existence of a market equilibrium for a broad class of utility functions. Since then researchers have attempted to design efficient algorithms to compute market equilibria.
One prominent special case of linear utilities has enjoyed significant attention as demonstrated by the long line of work in Table~\ref{tab:arrow_debreu_market_intro}.
Essentially, this problem can be captured by a convex program with linear constraints~\cite{dgv16} (see (\ref{eqn:GoodFullySold})).
While this convex program exhibits certain advantageous features over previous ones~\cite{c89,j07,np83}, it had not led to improved runtimes since the objective is not separable~\cite{gv19}. 
Moreover, the number of variables in the convex program can be as large as $O(n^2)$, which prohibits a fast runtime for the cutting plane method if applied directly. 
Our approach in Theorem~\ref{thm:arrow_debreu_market_intro} is to transform the convex program into a convex-concave game with $O(n)$ variables and apply Theorem~\ref{thm:saddle_point_intro}.

A similar technique can be applied to the problem of computing market equilibrium for Fisher markets with spending utility constraints where a convex program is given in~\cite{bdx10} (see (\ref{eqn:fisher_market_program})). 
However, directly transforming it into a convex-concave game does not reduce the dimension of the variables, as both the number of variables and constraints in the original convex program is $\Theta(m)$, where $m$ is the total number of segments and it can be much larger than $n^2$. 
In order to reduce the dimension of the convex-concave game to $O(n)$, we express part of the variables as functions of $O(n)$ variables and show that this does not increase the time of the first-order oracle.
This leads to the following runtime improvement: 

\begin{theorem}[Informal version of Theorem~\ref{thm:fisher_market}]\label{thm:fisher_market_intro}
There exists a weakly polynomial algorithm that computes a market equilibrium in Fisher markets with spending constraint utilities in time $O(mn^2 \log (nU))$.
\end{theorem}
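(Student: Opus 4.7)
The plan is to mirror the strategy used for Theorem~\ref{thm:arrow_debreu_market_intro} but push harder on dimension reduction, since for Fisher markets with spending constraint utilities the natural convex program of~\cite{bdx10} (equation~\eqref{eqn:fisher_market_program}) already has $\Theta(m)$ variables \emph{and} $\Theta(m)$ constraints, so merely passing to its Lagrangian would yield a convex-concave game on $\R^{\Theta(m)} \times \R^{\Theta(m)}$ and a direct application of Theorem~\ref{thm:saddle_point_intro} would give a runtime of order $m^3 \log(\kappa)$, which is too slow.

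First I would write down the saddle-point formulation obtained by dualizing the per-segment and per-buyer budget constraints in~\eqref{eqn:fisher_market_program}. The main structural observation, which I would then prove, is that the optimality conditions (KKT) permit a decoupling: given the per-buyer price/utility multipliers and per-seller prices — a vector in $\R^{O(n)}$ — the optimal segment-level allocation variables can be expressed in closed form as simple monotone functions (essentially thresholded ratios of multipliers to per-segment utility rates). Substituting these closed forms back into the Lagrangian produces an equivalent convex-concave game whose primal and dual variables both live in $\R^{O(n)}$.

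Next I would verify that, even though the game has only $O(n)$ variables, a subgradient of the collapsed objective can still be evaluated in $O(m)$ time by a single pass over segments: each segment contributes an additive term depending only on a constant number of the $O(n)$ variables, so both the function value and its subgradient decompose into $m$ cheap contributions. Together with the fact that $\mathcal{X}$ and $\mathcal{Y}$ can be taken to be boxes with $R/r = \poly(n,U)$ and $\epsilon = 1/\poly(n,U)^{n}$ for a weakly polynomial equilibrium-recovery step, this gives $\mathcal{T} = O(m)$ and $\log(\kappa) = O(n \log(nU))$. Plugging into Theorem~\ref{thm:saddle_point_intro} yields
\[
O\bigl(\mathcal{T}\cdot n \log(\kappa) + n^3 \log(\kappa)\bigr) \;=\; O\bigl(m n^2 \log(nU) + n^4 \log(nU)\bigr) \;=\; O(m n^2 \log(nU)),
\]
using $m \ge n$ for the dominant term.

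The hard part will be the closed-form elimination of the $\Theta(m)$ segment variables: one has to check that the resulting reduced objective remains convex-concave (not merely convex/concave after substitution of a non-optimal section), that subgradients of this reduced objective correspond to valid subgradients of the original Lagrangian so that an approximate saddle point translates back to an approximate market equilibrium, and that the Lipschitz and domain-width parameters blow up only polynomially so that $\log(\kappa) = O(n\log(nU))$ suffices. Finally, a standard rounding argument — inverting a linear system once the combinatorial structure of tight segments is identified at the approximate equilibrium — converts the $\epsilon$-approximate saddle point into an exact market equilibrium without affecting the asymptotic runtime.
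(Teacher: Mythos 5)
Your proposal follows essentially the same route as the paper: dualize the constraints of the \cite{bdx10} program, observe the $\Theta(m)$ segment-level variables (the allocations $b_{i,j,l}$ and their box-constraint multipliers $\mu_{i,j,l}$) collapse to closed-form expressions in the $O(n)$ multipliers $\eta_j,\lambda_i$ and prices $p_j$ — in the paper each segment contributes the term $B_{i,j,l}\max\{0,\log u_{i,j,l}-\eta_j-\lambda_i\}$, exactly the thresholded form you describe — and then apply Theorem~\ref{thm:saddle_point} with an $O(m)$-time first-order oracle. One small calibration: your claim that ``$m\ge n$'' suffices to subsume the $n^4\log(nU)$ data-structure term by $mn^2\log(nU)$ is imprecise (you need $m=\Omega(n^2)$), but the paper's own statement commits the same implicit simplification, so this is not a gap introduced by your argument.
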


Yet another economics application of our cutting plane method is the problem of computing a Walrasian equilibrium in a market with fixed supply. 
In this economy, buyers may have arbitrary valuation functions and we would like to compute prices so that the market clears, i.e. the aggregate demand of the buyers matches the fixed supply. 
Recently Paes Leme and Wong~\cite{lw17} gave a polynomial time algorithm for this problem \emph{provided} that an equilbrium actually exists. They achieved this by showing that the cutting plane method can be modified so that the convex program for the equilibrium can be solved under the aggregate demand oracle.

By leveraging our faster cutting plane method we obtain an improved runtime:

\begin{theorem}[Informal version of Theorem~\ref{thm:walrasian_equilibrium}]\label{thm:walrasian_equilibrium_intro}
There is an algorithm that runs in time \\$O(n^{2} \T_{\mathrm{AD}}\log(SMn)+n^{4}\log (SMn))$
for computing a market equilibrium in an economy with general buyer
valuation in the aggregate demand model.
\end{theorem}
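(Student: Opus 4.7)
The plan is to apply our improved cutting plane method to the reduction established by Paes Leme and Wong~\cite{lw17}. They showed that computing a Walrasian equilibrium in an economy with general buyer valuations and fixed supply can be reduced to minimizing a convex potential function over the space of price vectors $p \in \mathbb{R}^{n}$, and that the subgradient/separation oracle required by a cutting plane method can be simulated using a bounded number of aggregate demand oracle calls. Their overall runtime is dominated by the per-iteration non-oracle cost of the LSW cutting plane method as it sits inside this reduction.

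I would first re-derive their reduction to pin down the precise scaling of parameters: the radius $R$ of the enclosing box of prices, the minimum width of the feasible set, and the target accuracy needed to recover an actual Walrasian equilibrium under the stated boundedness on supply and valuations. This bookkeeping yields a condition number $\kappa = \poly(SMn)$ and hence $\log \kappa = O(\log(SMn))$. Second, I would substitute our Theorem~\ref{thm:cutting_plane_method_intro} for LSW inside their framework: our method offers the same optimal $O(n \log \kappa)$ oracle call complexity, but at $O(n^{2})$ amortized non-oracle cost per iteration, versus LSW's $O(n^{3} \log^{O(1)} \kappa)$. Accounting for the $O(n)$ overhead incurred by simulating a separation oracle via aggregate demand queries (so that the effective $\SO = O(n \T_{\mathrm{AD}})$ and the effective per-iteration overhead is $O(n \cdot n^{2}) = O(n^{3})$ once the AD-to-separation interface is threaded in), the total runtime becomes
$$
O\bigl(n \cdot n \T_{\mathrm{AD}} \log(SMn)\bigr) + O\bigl(n \log(SMn) \cdot n^{3}\bigr) \;=\; O\bigl(n^{2} \T_{\mathrm{AD}} \log(SMn) + n^{4} \log(SMn)\bigr),
$$
matching the claimed bound.

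The main obstacle is verifying that our cutting plane method interoperates cleanly with the LW17 reduction, which is not entirely black-box. In particular, the ``separation oracle'' produced by an aggregate demand query is only approximate, is adaptive to the queried price, and may occasionally return a certificate that is not a bona fide separating hyperplane; \cite{lw17} handle this through modifications of the cutting plane routine itself. I would need to check that these modifications remain valid when applied to our volumetric-center method: the approximate leverage score data structure, the periodic phase-level rebuilds we use to discharge accumulated numerical error, and the precision tolerances of our batched low-rank updates must all be compatible with the tolerances required by the AD-to-separation reduction. Concretely, the per-phase rebuild in our method must not destroy LW17's oracle bookkeeping, and the slack we tolerate in leverage scores must stay below the error margin of the simulated separation oracle. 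This is essentially a bookkeeping task of threading precision parameters through both algorithms, and I expect it to be the place where most of the technical work lies.
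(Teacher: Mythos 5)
Your high-level plan—substitute our cutting plane method for LSW's inside the Paes Leme–Wong reduction—is exactly what the paper does (the paper's proof of Theorem~\ref{thm:walrasian_equilibrium} is literally the one-line substitution of Theorem~\ref{thm:convex} for LSW). However, your runtime accounting contains a genuine error that happens to land on the right final bound by two mistakes cancelling.

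You assert that $\kappa = \poly(SMn)$ so that $\log\kappa = O(\log(SMn))$, and then to recover the claimed $n^{2}\T_{\mathrm{AD}}$ and $n^{4}$ factors you insert a factor of $n$ into the oracle cost ($\SO = O(n\T_{\mathrm{AD}})$) and a further factor of $n$ into the per-iteration overhead (claiming it becomes $O(n^{3})$ ``once the AD-to-separation interface is threaded in''). Neither of the latter two factors is justified: in LW17 each cutting-plane iteration consumes $O(1)$ aggregate demand queries, not $O(n)$, and the non-oracle work of our method is $O(n^{2})$ amortized per iteration regardless of how the oracle is realized. The actual source of the extra $n^{2}$ is that the target accuracy $\epsilon$ must be exponentially small in $n$ to round to an exact equilibrium (the paper flags precisely this for market-equilibrium problems in the Introduction), so $\log\kappa = \Theta(n\log(SMn))$, not $O(\log(SMn))$. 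With $\SO = O(\T_{\mathrm{AD}})$, $\log\kappa = O(n\log(SMn))$, and $O(n^{2})$ per-iteration overhead, our Theorem~\ref{thm:convex} gives
\begin{align*}
O\bigl(n\cdot \T_{\mathrm{AD}}\cdot n\log(SMn) + n^{3}\cdot n\log(SMn)\bigr) = O\bigl(n^{2}\T_{\mathrm{AD}}\log(SMn) + n^{4}\log(SMn)\bigr),
\end{align*}
and this also correctly reproduces LW17's $n^{6}\log^{O(1)}(SMn)$ term as $n^{3}(\log\kappa)^{3} = n^{3}\cdot n^{3}\log^{3}(SMn)$, which your accounting would not. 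Your closing concern about whether our method interoperates with LW17's oracle modifications is a reasonable thing to think about, but the paper treats that as black-box and it is not where the quantitative content of the theorem lies.
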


\subsection{Previous works}

The cutting plane methods solve the following \emph{feasbility} problem
that conveniently abstracts the applications to specific scenarios.
\begin{itemize}
\item[] \textbf{Feasibility Problem}: Given a separation oracle for a set
$K$ contained in a box of radius $R$ either find a point $x\in K$
or prove that $K$ does not contain a ball of radius $\epsilon$.
\end{itemize}
All cutting plane methods maintain a candidate region $\Omega$ and
solve the feasibility problem by iteratively refining $\Omega$ based
on the present $\Omega$ and the new separating hyperplane. In each
iteration:\\
1. The separation oracle is queried at some point $x\in\Omega$.\\
2. If $x\in K$ we have solved the feasibility problem.\\
3. Otherwise, the separation oracle returns a separating hyperplane from
which $\Omega$ is further refined and the next query point $x$ is
computed.\\

Previous works differ in how $x$ is selected and how $\Omega$ is
refined. For instance, the classic ellipsoid method maintains $\Omega$
as an ellipsoid and $x$ as its center. Given $\Omega$ and the new
separating hyperplane, the new $\Omega$ is chosen to be the smallest
ellipsoid containing their intersection. Table~\ref{tab:cutting_plane_method} lists
the running times for solving the feasibility problem in the literature.

We focus our discussion on the trade-off between the oracle complexity
and the runtime per iteration. The oracle complexity has a lower bound $\Omega(n\log (\kappa))$~\cite{nemi83}. While the ellipsoid method achieves
a suboptimal $O(n^{2}\log (\kappa))$ in oracle complexity, the runtime
per iteration is $O(n^{2})$ which is faster than all subsequent methods.
 The good runtime follows from the simple calculations needed
to update the ellipsoid, whereas the suboptimal oracle complexity
can be attributed to the ``looseness'' of maintaining only an ellipsoid
as a proxy to the intersection of past separating half-spaces.

Indeed, one can attain the optimal oracle complexity $O(n^{2}\log (\kappa))$
by maintaining \emph{all} previous separating half-spaces. This is
the random walk method~\cite{bv02} where the query point $x$ is chosen to be
its (approximate) center of gravity. Updating $x$ involves performing
a random walk in this polytope and is computationally expensive.

Other cutting methods improve oracle complexity by maintaining more
fine-grained information about past separating hyperplanes in a way
that is computationlly friendly. Of particular relevance is Vaidya's
volumetric center method~\cite{v89}. Vaidy's $\Omega$, similar to the random
walk method, is also a polytope $\Omega=\{x\in\mathbb{R}^{n}\,:\,Ax\geq b\}$.
As the name suggests, the query point is chosen to be the \emph{volumetric
center}, which is the minimizer of the following convex function defined
by the feasible region $\Omega$:
\begin{align*}
\frac{1}{2}\log\det\left( A^{\top} S_{x}^{-2} A \right)\enspace\text{ where }\enspace S_{x}:=\text{diag}(Ax-b)\text{ is the diagonal matrix of the slacks}
\end{align*}

Nevertheless, by judiciously including only a representative subset
of previous half-spaces $ax\geq b$, Vaidya showed that the volumetric
center and $\Omega$ can be updated by basic matrix operations which
run in $O(n^{\omega})$ time.

We defer the discussion of LSW to the next subsection, which explains
how Vaidya's method can be sped up using machineries from numerical
linear algebra.

\subsection{Lee-Sidford-Wong method (LSW)}

LSW's key observation is that Vaidya's method relies heavily on \emph{leverage
scores}, which measure the relative importance of each separating
hyperplane. In Vaidya's method, naively updating leverage scores requires
$O(n^{\omega})$ time and is a bottleneck. Inspired by the work of
Spielman and Srivistava \cite{ss11}, LSW attempted to address this by using random Johnson-Lindenstrauss \cite{jl84} (JL) projection to approximate \emph{changes}
in leverage scores. Leverage scores can then be updated by summing
over the differences.

Approximating leverage scores changes via JL projection however still
requires solving a linear system which would still take $O(n^{\omega})$
time. LSW further overcame this barrier by resorting to a recent work
that efficiently solves ``slowly-changing'' linear system in amortized
$\tilde{O}(n^{2})$ time. Thus after paying $O(n^{\omega})$ initially,
they can solve such linear systems in $\tilde{O}(n^{2})$ time per
iteration.

\smallskip
\noindent \textbf{Error accumulation.} 
Nevertheless, as an approximate
method JL introduces errors which would accumulate across iterations.
While Vaidya's method tolerates small errors in leverage scores, the
total errors incurred in JL projection (as accumulated across iterations)
would eventually become too big and destroy the performance guarantee
of Vaidya's method.

LSW handled this by modifying Vaidya's framework to take into account
of the error in the convex function to be minimized. This approach
gives rise to a ``hybrid'' center algorithm, which involves a complicated
interplay of optimization and linear algebra. In particular, to reduce
the error accumulated the error parameter $\epsilon_0$ in JL projection
has to be as small as $\epsilon_0=1/\log (\kappa)$. As the runtime of
JL depends on $1/\epsilon_0^{2}$, this unfortunately introduces the
$\log^{2}\kappa$ overhead in LSW runtime when compared to Vaidya's
volumetric center method.

\subsection{Overview of our approach} \label{subsec:tech}

To achieve the desired $O(n\SO\log (\kappa)+n^{3}\log (\kappa))$
runtime, 
we build a data structure that approximates
changes in leverage scores to within $c$ in $\ell_2$ norm for small enough constant $c$. 
This would
suffice for Vaidya's cutting plane method. The desired runtime then
directly follows from the performance guarantee of the data structure,
which handles each iteration in amortized $O(n^{2})$ time.
Designing such a data structure requires several new ideas to control the error accumulation issue.

Our data structure employs a layered approach where different
layers are associated with different error tolerances, 
and achieve different accuracy-efficiency tradeoffs in approximating the changes in leverage scores.
The more inner a layer is, the more error it can tolerate and the faster is the runtime.
Whenever the error accumulated in a layer becomes too much, the layer
above would take over and produce a finer error estimate which would,
of course, be more time costly. But because of our layered approach,
the higher layer is called on less often and afford to spend more
time.

Such a layered approach further leads to the following issue.
In the middle and outer layers, we batch the updates of multiple steps into one which allows us to make use of fast rectangular matrix multiplication. However, our algorithm needs to handle possibly exponential weight changes. We show there are not too many such weights and can be handled separately in groups of $\poly \log (n)$ size using low-rank update formula.

Our data structure also draws on various numerical tools such as fast rectangular
matrix multiplication, ``tall'' JL projection, preconditioning, 
inverse maintenance, and polynomial interpolation for approximating
integrals.

A more in-depth discussion of our techniques can be found in Section~\ref{sec:tech}.

\subsection{Discussion of optimality}

Similar to previous methods our cutting plane method achieves the optimal oracle complexity $n \log (\kappa)$~\cite{nemi83}. We present some evidence that our running time of $O(n^3\log (\kappa))$ is also tight.

A bottleneck of Vaidya's method is to solve the inverse maintenance problem. Formally, given a sequence of positive vectors $w^{1}, w^{2}, \cdots w^{T}$, let $P(w)$ be defined as $$P(w) = \sqrt{W} A ( A^\top W A )^{-1} A^\top \sqrt{W},$$ where $W$ is the diagonal matrix such that $W_{i,i} = w_i$. The goal is to output a sequence of vectors $v^{1}, v^{2}, \cdots, v^{T}$ such that 
\begin{align*}
v^t \approx w^t \text{~~~and~~~} P(v^t) \approx P(w^t), \forall t \in [T].
\end{align*} 

There is a long line of research on inverse maintenance and dynamic matrix data-structure problems \cite{k80,v89_lp,s04,ls15,hkns15,cls19,lsz19,song19,bns19,b20}. 
This task can be done naively by spending $n^\omega$ time so the goal is to achieve $o(n^{\omega})$ amortized cost per iteration. For example, in the LP setting the number of iterations is $O(\sqrt{n})$ and Vaidya \cite{v89_lp} combined fast matrix multiplication with inverse maintenance to achieve $O(n^2)$ amortized cost per iteration, which gives an $O(n^{2.5})$ time algorithm. This remained a barrier until recent works \cite{cls19,lsz19} combined sampling and sketching techniques with fast rectangular matrix multiplication and inverse matrix maintenance to give an $O(n^{\omega})$ time algorithm. 

One of the major computation required in each step is matrix-vector multiplication, e.g., $P(w) \cdot h$. Naively, this step takes $O(n^2)$ time per iteration. To achieve $o(n^2)$ amortized cost per iteration, previous works \cite{cls19,lsz19} used an idea called ``iterating and sketching'' which was formally described in \cite{song19}. This idea is very different from the classical ``sketch and solve'' \cite{cw13} and ``guess a sketch'' \cite{rsw16}. The classical idea usually applying sketching matrices only once without modifying the solver itself. However, the ``iterating and sketching'' idea has to modify the solver and applying sketching/sampling matrices over each iteration. 

In \cite{cls19}, they use the diagonal sampling matrix $D \in \R^{n \times n}$ which has roughly $\sqrt{n}$ nonzeros. They use that sampling matrix to {\em sample on the right} hand side:
\begin{align*}
\sqrt{W} A ( A^\top W A )^{-1} A^\top \sqrt{W} \underbrace{D}_{\text{sample~right}} h.
\end{align*}
In \cite{lsz19}, they use the subsampled randomized Hadamard/Fourier matrix \cite{ldfu13,psw17} $R\in \R^{\sqrt{n} \times n}$. They use the sketching matrix to {\em sketch on the left} hand side:
\begin{align*}
\underbrace{ R^\top R }_{ \text{sketch~left} } \sqrt{W} A ( A^\top W A )^{-1} A^\top \sqrt{W} h.
\end{align*}

Compared to the cutting plane method, LP is an easier maintenance task as the matrix $A$ is fixed throughout.
In the cutting plane method, however, rows get inserted into or deleted from $A$ from continuously. 
One critical idea used in all previous works on LP~\cite{v89_lp,cls19,lsz19} is to delay low-rank updates on $( A^\top W A )^{-1}$. 
However, in the cutting plane method, the low rank updates to $A$ cannot be delayed. Thus it appears that previous techniques are inapplicable.
Moreover, $n^2$ lower bounds have recently been established for natural matrix maintenance tasks (e.g. determinant, inverse) with row/column insertions/deletions under the Online Matrix-Vector conjecture (e.g.~\cite{hkns15,bns19}).
Therefore, we believe our algorithm is tight and conjecture the following:

\begin{conjecture}
Solving the feasibility problem requires $\Omega(n^3 \log (\kappa))$ time. Hence our cutting plane method achieves the optimal runtime.
\end{conjecture}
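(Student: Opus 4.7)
The plan is to establish the lower bound conditionally on the Online Matrix-Vector (OMv) conjecture, by combining two complementary hardness sources. The first is the classical information-theoretic bound of Nemirovski--Yudin~\cite{nemi83}, which already forces $\Omega(n \log(\kappa))$ oracle calls and hence $\Omega(n \log(\kappa) \cdot \SO)$ time. The second, and more delicate, is a conditional $\Omega(n^2)$ amortized lower bound on the internal bookkeeping per iteration, which multiplied by $\Omega(n \log(\kappa))$ iterations yields the desired $\Omega(n^3 \log(\kappa))$ total.

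For the per-iteration bound, the approach is to reduce from OMv. The existing $n^2$ OMv-based lower bounds for determinant and inverse maintenance under row/column insertions and deletions~\cite{hkns15,bns19} already target exactly the primitive that every polytope-based cutting plane method appears to require, namely maintaining $(A^\top W A)^{-1}$ (equivalently the projection $P(w) = \sqrt{W} A (A^\top W A)^{-1} A^\top \sqrt{W}$) as rows of $A$ are inserted and deleted and as $w$ evolves. I would embed an OMv instance $(M, v_1, \dots, v_T)$ inside a carefully designed feasibility instance in $\R^n$: the initial batch of separating hyperplanes encodes $M$, and subsequent oracle responses are chosen so that the next query point of any correct cutting plane method must depend on the product of the current projection matrix with a direction determined by $v_t$. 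Reading off this query point to sufficient precision then answers the $t$-th OMv query.

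The main obstacle will be bridging the gap between ``computing $P(w) h$'' and ``producing a valid next query point.'' A feasibility algorithm is not contractually required to compute $P(w) h$; it may pursue any strategy consistent with the oracle responses, so the adversary must design the oracle so that every consistent next query leaks enough information about $M v_t$. This is a subtle but standard style of argument in OMv-conditional lower bounds, in the spirit of~\cite{hkns15}. A secondary technical wrinkle is that the cutting plane setting intermixes row insertions and deletions of $A$, whereas the cleanest OMv reductions handle only insertions; one would either invoke a robust variant along the lines of~\cite{bns19}, or pad the construction with dummy constraints so that deletions can be simulated.

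Finally, to carry the $\log(\kappa)$ factor through, the reduction should be iterated: the Nemirovski--Yudin argument partitions progress into $\Omega(\log(\kappa))$ scale-halving phases, and the construction above can be instantiated once per phase on disjoint coordinate blocks, each phase contributing an independent $\Omega(n^3)$ term. Concatenating the phases yields the full $\Omega(n^3 \log(\kappa))$ bound and matches our upper bound, establishing the conjectured optimality.
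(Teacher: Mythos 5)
The paper does not prove this statement; it is stated as an open \emph{conjecture}, and the surrounding text (Section 1.5) offers only informal evidence: it observes that OMv-conditional $\Omega(n^2)$ lower bounds exist for related matrix-maintenance problems and that the techniques for delaying low-rank updates, which circumvent such bounds in the LP setting, do not seem to apply when rows of $A$ are inserted and deleted. No reduction is constructed and no lower bound is actually claimed. Your proposal therefore is not matched against an existing proof, and it should be judged as an attempt to settle the conjecture.

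As such an attempt, it contains a genuine gap, and you have actually identified it yourself but then treated it as a routine technicality. The step ``design the oracle so that every consistent next query leaks enough information about $Mv_t$'' is not a ``secondary wrinkle in the spirit of'' the existing OMv reductions; it is the entire open problem. The known OMv hardness results for inverse and determinant maintenance are lower bounds \emph{for a fixed data-structure interface}: the algorithm is required, on each update, to output the maintained quantity. The feasibility problem has no such contract. A cutting plane algorithm is free to never form $(A^\top W A)^{-1}$, to query arbitrary interior points, to ignore some oracle responses, to trade extra separation-oracle calls for less linear algebra, and so on. To force leakage of $Mv_t$, one needs an adversary argument showing that \emph{every} strategy achieving the feasibility guarantee in the optimal oracle budget must implicitly answer the OMv queries--and constructing such an adversary is precisely what would turn the conjecture into a theorem. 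Your sketch names the obstacle but does not construct the gadget that overcomes it.

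There is a second, independent problem with the final accounting. You multiply a $\Omega(n\log\kappa)$ \emph{oracle-call} lower bound by an $\Omega(n^2)$ \emph{amortized per-call} lower bound to get $\Omega(n^3\log\kappa)$, and separately propose that the $\log\kappa$ phases contribute ``independent'' $\Omega(n^3)$ terms on ``disjoint coordinate blocks.'' Neither composition is sound as stated. The Nemirovski--Yudin argument lower-bounds the number of oracle queries; it says nothing about the internal computation, and it is not a priori excluded that an algorithm could use $o(n^2)$ internal work per oracle call by making slightly more than the information-theoretic minimum number of calls (the total-time lower bound must control this tradeoff, not just bound each factor). And the scale-halving phases are not disjoint: each phase operates on a feasible region that previous phases have already localized, so an OMv instance planted in phase $t$ cannot be chosen independently of what the algorithm already knows. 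Establishing independence, or an additive structure over phases, is itself a nontrivial adversary argument that is not in the sketch.

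In short, your proposal restates the heuristic evidence the paper gives for believing the conjecture, in somewhat more detail, but it does not resolve the two difficulties that make it a conjecture rather than a theorem: (i) hardness against arbitrary algorithms rather than a fixed maintenance interface, and (ii) a valid composition of the query-complexity bound with the per-query internal-work bound across adaptively shrinking phases.
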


\subsection{Related works}

\paragraph{Leverage scores.} Leverage scores are a fundamental concept in graph problems and numerical linear algebra. There are many works about how to approximate leverage scores \cite{ss11,dmmw12,cw13,nn13} or more general version of leverages, e.g. Lewis weights \cite{l78,blm89,cp15} and ridge leverage scores \cite{cmm17}. From graph perspective, it was applied to solve max-flow \cite{m13_flow,m16}, generate random spanning trees \cite{s18}, and sparsify graphs \cite{ss11}. From matrix perspective, it was used to give matrix CUR decomposition \cite{bw14,swz17,swz19} and tensor CURT decomposition \cite{swz19}. From optimization perspective, it was used for approximating the John Ellipsoid \cite{ccly19}, accelerating the kernel methods \cite{akmmvz17,akmmvz19}, showing the convergence of the deep neural network \cite{lsswy20}, cutting plane methods, e.g. \cite{v89,lsw15} and this paper.

\paragraph{Linear Program.} Linear Program is a fundamental problem in convex optimization and can be treated as an special case where one can apply the cutting plane method. There is a super long list of work focused on fast algorithms for linear program \cite{d47,k80,k84,v87,v89_lp,ls14,ls15,sidford15,lee16,cls19,lsz19,song19,b20,blss20}.

\paragraph{Membership oracle.}
Besides the separation oracle considered in this paper, there is another line of work on using the membership oracle to solve the feasibility problem \cite{p96,kv06,lv06,gls12,lsv18}. For a query point $x$, this oracle outputs $x\in K$ or $x\notin K$.






\section{Our Techniques}\label{sec:tech}

Section~\ref{subsec:tech} provided a quick overview. Here we take a deeper dive into our techniques.

\subsection{Efficient approximation of changes in leverage scores}

The key to fast maintenance of leverage scores is an efficient way to approximate their changes between consecutive steps.  
While a fine-grained approximation leads to an accurate approximation, the time to compute such an approximation might be unaffordable. 
On the other hand, a coarse-grained approximation can be efficiently computed, but might lead to accumulating errors that blow up after a small number of steps. 
This leads to a tradeoff between accuracy and efficiency. 

Central to our data structure are a coarse-grained formula (Lemma~\ref{lem:leverage_score_moving_simple}) and a fine-grained formula for
the change in leverage scores (Lemma~\ref{lem:leverage_score_moving_complicated}). 
While the coarse-grained formula approximates the leverage score's change via a single integral, 
the fine-grained formula is a cocktail involving integrals, matrix inverse and matrix multiplication. 
For the coarse-grained formula, we simply estimate the integral by a single point along the integral (Lemma~\ref{lem:compute_sigma_change_simple}).
For the fine-grained formula, however, the approximation (Lemma~\ref{lem:compute_sigma_change}) is more involved and requires appropreiate numerical tools.

Our course-grained and fine-grained formulas lead to two different data structures for leverage score maintenance: a simple {\em deterministic} data structure that achieves low running time but introduces a large error in each step, and a more complicated {\em randomized} data structure that incurs small error each step at the cost of efficiency.

Observe that we are allowed to recompute the leverage scores exactly after every $O(n^{\omega - 2 + o(1)})$ steps as exact computation of leverage scores takes $O(n^{\omega + o(1)})$ time. 
Therefore it suffices to control the error accumulation in $O(n^{\omega-2 + o(1)})$ steps. To achieve this, both the simple and complicated data structures are crucial.
While the simple data structure achieves the desired $O(n^2)$ time per step, the error accrued in the $O(n^{\omega-2 + o(1)})$ steps are too large for the cutting plane method. 
For the complicated data structure, we are able to achieve $\ell_2$-error $o(1)$ in $O(n^{\omega-2 + o(1)})$ steps, but the amortized running time would be $O(n^{2 + o(1)})$ per step.

\subsection{Layered data structure}

To achieve the best of both worlds of the simple and complicated data structures, we propose a data structure that combines both data structures and interpolates between accuracy and efficiency. 
Specifically, our data structure approximates
changes in leverage scores to $\ell_2$-error within $1/\log^{O(1)} (n)$ which would suffice for Vaidya's cutting plane method and handles each iteration in amortized $O(n^{2})$ time.

Our data structure employs a layered approach (see Figure~\ref{fig:our_algorithm}) where different
layers are associated with different error tolerances. 
The more inner a layer is, the more error it can tolerate and the faster is the runtime.
Whenever the error accumulated in a layer becomes too high, the layer
above would take over and produce a finer error estimate which would,
of course, be more time costly. But because of our layered approach,
the higher layer is called on less often and can afford to spend more
time.
More specifically, our layered data structure contains three layers. 
The inner and middle layers both employ the simple data structure to achieve computational efficiency while the outer layer uses the complicated data structure to ensure a low error.

Our layered approach builds on several fundamental results on matrix multiplication which we summarize in Theorem~\ref{thm:matrix_multiplication_intro}. 
We remark that Vaidya~\cite{v89_lp} used the first result, a recent LP solver~\cite{cls19}
used the first two, while our cutting plane method crucially depends on all the results in the table.

\begin{theorem}[Fast rectangular matrix multiplication results]\label{thm:matrix_multiplication_intro}
For any $n,r > 0$, denote $\T_{\mathrm{mat}}(n,n,r)$ the time to compute the multiplication of an $n \times n$ matrix and an $n \times r$ matrix\footnote{Note that $\T_{\mathrm{mat}}(n,n,r) = \T_{\mathrm{mat}}(n,r,n)$}. Then we have: \\

\begin{center}\vspace{-4mm} 
\begin{tabular}{ | l | l | l | l | l | l|}
    \hline
    \bf{Reference} & \bf{$r$} & \bf{$\T_{\mathrm{mat}}(n,n,r)$} & \bf{Layer}  \\
    \hline \hline
    \rm{\cite{l14}} & $n$ & $O(n^{\omega+o(1)})$ & \rm{Restart} \\ 
    \hline
    \rm{\cite{gu18}} & $n^{0.31}$ & $O(n^{2+o(1)})$ & \rm{Outer layer} \\
    \hline
    \rm{\cite{c82}}  & $n^{0.17}$ & $O(n^2 \log^2 n)$  & \rm{Middle layer} \\
    \hline 
    \rm{\cite{bd76}} & $\log^{O(1)} n$ & $O(n^2)$  & \rm{Inner layer} \\
    \hline
    \end{tabular}
\end{center}

\end{theorem}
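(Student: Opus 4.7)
The statement is really a compilation of four known rectangular matrix multiplication bounds, one per row of the table, so my plan is to justify each row by invoking the corresponding reference and confirming that the stated exponent for $r$ and the stated runtime $\T_{\mathrm{mat}}(n,n,r)$ are indeed what that reference provides. The natural organizing object is the rectangular exponent $\omega(1,1,\alpha)$, defined as the infimum of $c$ for which an $n\times n$ by $n\times n^{\alpha}$ matrix product can be computed in $O(n^{c+o(1)})$ arithmetic operations. Then the first row is just the identity $\omega(1,1,1)=\omega$ with the current record $\omega<2.373$ from Le Gall~\cite{l14}, giving $\T_{\mathrm{mat}}(n,n,n)=O(n^{\omega+o(1)})$ directly.

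For the outer-layer row, I would cite the result of Gall--Urrutia~\cite{gu18}, which establishes that $\omega(1,1,\alpha)=2$ for all $\alpha\le 0.31389\ldots$. Plugging in $\alpha=0.31$ and unwinding the definition of $\omega(1,1,\cdot)$ gives $\T_{\mathrm{mat}}(n,n,n^{0.31})=O(n^{2+o(1)})$, which is exactly the outer-layer bound. The middle-layer row refines this in a way that matters for our application: we need to remove the $n^{o(1)}$ factor and replace it with a concrete $\log^2 n$. For this, I would invoke Coppersmith's theorem~\cite{c82}, which shows that for all sufficiently small $\alpha$ (a value strictly above $0.17$ is known to suffice) one in fact has $\T_{\mathrm{mat}}(n,n,n^{\alpha})=O(n^2 \log^2 n)$, with an explicit polylog rather than an $n^{o(1)}$ slack. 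Taking $\alpha=0.17$ yields the middle-layer entry.

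The inner-layer row is essentially the trivial regime, but still needs to be quoted carefully: when $r=\log^{O(1)} n$, an $n\times n$ by $n\times r$ product naively costs $O(n^2 r) = O(n^2\log^{O(1)}n)$, which is not quite $O(n^2)$. To shave the polylog and get exactly $O(n^2)$, I would cite the Bini--Dotti style construction~\cite{bd76} (equivalently, the observation that $\omega(1,1,0)=2$ realized with only polylog overhead), which allows amortizing across the $r=\log^{O(1)}n$ ``thin'' columns so the total cost collapses to $O(n^2)$. Putting the four rows together establishes the theorem.

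The main ``obstacle'' is not a deep argument but bookkeeping: one must verify that the exponent thresholds $0.31$ and $0.17$ chosen for the outer and middle layers respectively lie strictly inside the regimes guaranteed by~\cite{gu18} and~\cite{c82}, so that a bit of slack is available for the rest of the paper's calculations; and one must check that the inner-layer argument genuinely produces $O(n^2)$ and not merely $O(n^2\log^{O(1)}n)$, since several later amortizations rely on the clean $n^2$ bound. Once these two checks are in place, the theorem follows from direct citation of the four references.
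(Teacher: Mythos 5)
Your proposal is correct and follows essentially the same route as the paper: the theorem is a compilation of known citations (restated as Theorem~\ref{thm:fast_matrx_multiplication} in the preliminaries, also without proof), and your plan of directly invoking each reference while checking the exponent thresholds $0.31 < 0.31389\ldots$ and $0.17 < 0.172\ldots$, and confirming that the inner-layer row yields clean $O(n^2)$ rather than $O(n^2\log^{O(1)}n)$, is precisely the intended justification for each row of the table.
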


Our data structure also draws on various numerical tools such as fast rectangular
matrix multiplication, ``tall'' JL projection, preconditioning,
inverse maintenance, and polynomial interpolation for approximating
integrals, which we discuss in more details below.

\begin{figure}[htp!]
\centering
\includegraphics[width=0.98\textwidth]{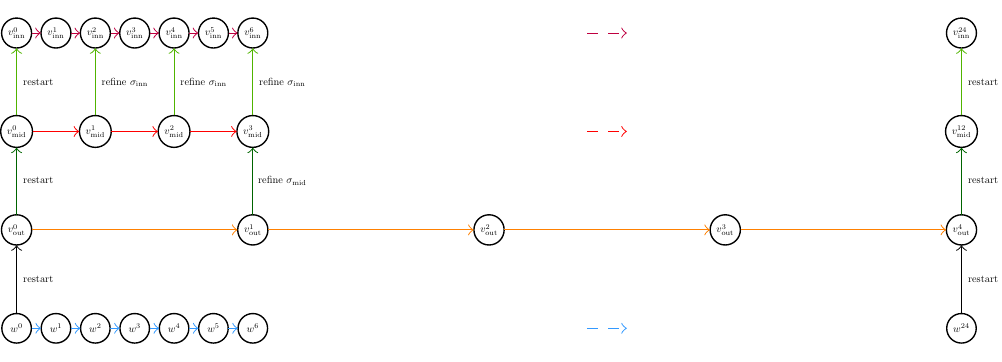}
\caption{\small Illustration of our three-level data structure with $T_{\text{inn}} = 2 $, $T_{\text{mid}} = 3$ and $T_{\text{out}} = 4$ approximating the leverage scores of the sequence $\{w^0, w^1, \cdots\}$. The three levels maintain three approximate sequences $\{ v_{\text{inn}}^0 , v_{\text{inn}}^1, \cdots \}$, $ \{ v_{\text{mid}}^0 , v_{\text{mid}}^1, \cdots \} $ and $\{ v_{\text{out}}^0 , v_{\text{out}}^1 , \cdots \}$, with errors $\epsilon_{\text{inn}} = \norm{ \log(w) - \log(v_{\text{inn}})}_\infty$, $\epsilon_{\text{mid}} = \norm{ \log(w) - \log(v_{\text{mid}})}_\infty$ and $\epsilon_{\text{out}} = \norm{ \log(w) - \log(v_{\text{out}})}_\infty$ that satisfy $1 \gg \epsilon_{\text{inn}} \gg \epsilon_{\text{mid}} \gg \epsilon_{\text{out}} > 0$.
The inner level takes a step for every $w$-update, the middle step takes a step in every $T_{\inn}$ inner steps, and the outer step takes a step in every $T_{\mid}$ middle steps. 
The entire data structure is restarted after $T_{\out}$ outer steps.
Each middle step refines the inner approximation of leverage scores, and each outer step refines the approximation of both the inner and middle approximations of the leverage scores.
For the actual choice of the parameters $T_{\text{inn}}$, $T_{\text{mid}}$, $T_{\text{out}}$ and $\epsilon_{\text{inn}}$, $\epsilon_{\text{mid}}$, $\epsilon_{\text{out}}$ in our data structure, see Table~\ref{table:table_of_constants_main}.}
\label{fig:our_algorithm}
\end{figure}

\subsection{Batched low-rank update}

The layered approach in the previous subsection leads to the following technical hurdle. 
While the inner layer reacts to each update, the middle and outer layers perform one step only after a certain number of updates.
In the cutting plane method, each update of the vector $w \in \R_+^m$ satisfies an $\ell_2$-closeness property: $\norm{\log(w^{\new}) - \log(w)}_2 = O(1)$. 
For a sequence of $T$ updates $w^{(0)}, w^{(1)}, \cdots, w^{(T)}$, the $\ell_2$-closeness property is no longer satisfied for $w^{(0)}$ and $w^{(T)}$, and the weight changes between $w^{(0)}$ and $w^{(T)}$ can be exponential in $T$. 

To resolve this issue, we employ a batched low-rank update method that computes a vector $v^{\mid}$ that is $\ell_\infty$-close to the vector $w^{(T)}$ and an accurate estimate of the leverage score change $\sigma(v^{\mid}) - \sigma(w^{0})$. This accurate estimate is done by a low-rank update rule for matrices (Woodbury matrix identity). Unfortunately, the Woodbury formula involves the inverse of certain matrices but the inverse maintenance data structure only maintains the inverse of a nearby matrix. Therefore, we need to use the maintained inverse as a preconditioner to approximately compute the Woodbury formula. Due to the exponential changes in the weight, we need to solve certain linear systems to exponential accuracy and it can be too expensive even with a very good preconditioner (Lemma \ref{lem:simple_low_rank}).

To rescue this preconditioner idea, we transform and split the sequence into pieces of size $\poly \log(n)$. We ensure the weight changes by only a quasi-polynomial factor and this decreases the cost of solving linear systems to $\poly \log(n)$ steps. Since we batch the task of handling $\poly \log(n)$ weight changes into one rectangular matrix multiplication which can be performed in $O(n^2)$ time, we make sure the cost per weight change is exactly $O(n^2)$ time (Theorem \ref{thm:leverage_batch}).

\subsection{Illustration of our analysis}

We describe the numerical tools used in our analysis, and provide simple illustrations of our applications of these tools. The actual way in which they are used in Sections~\ref{sec:simple_leverage_score}-\ref{sec:complicated_leverage_score} are more involved.

\paragraph{Discrete sampling for multiple variable integrals}

The most standard way to approximate an integral is by 
discretization, which takes a weighted sum of the integrand over a
set of points in the domain. 
Unlike common discretization tools like the trapezoidal method, for our purpose we need to interpolate multiple variable
integral using a polynomial for higher accuracy (Theorem~\ref{thm:multiple_variable_integral}).
We give a simple example to illustrate our application of polynomial interpolation for multiple variable integrals as follows. In our fine-grained formula of the leverage score change from $w$ to $w^{\new}$, one of the integral terms is 
\begin{align*}
\sigma_{i,\textrm{cts}} = \int_0^1 \int_0^1 \int_0^1 \gamma_{i,s,t}^\top \gamma_{i,s',t} \mathrm{d} s \mathrm{d} s' \mathrm{d} t ,
\end{align*}
where 
\begin{align*}
\gamma_{i,s,t} = & ~ \sqrt{ W^{\mid} - W^{\new}} \cdot Q(y_{s,t}) \cdot (Z_t - X_t) \cdot Q(y_{s,t}) \cdot \sqrt{ W^{\new} } \cdot e_i .
\end{align*}
In order to approximate such an integral, we take a set $\T \subseteq [0,1]$ of $N = \log^{O(1)} (n)$ points along the integration together with weights $\{\omega_t \}_{t \in \T}$ as in Theorem~\ref{thm:multiple_variable_integral}, and approximate the integral by 
\begin{align*}
\sigma_{i,\textrm{dis}} = \sum_{t \in \mathcal{T}} \sum_{s \in \mathcal{S}} \sum_{s' \in \mathcal{S}} \omega_t \omega_s \omega_{s'} \gamma_{i,s,t}^\top \gamma_{i,s', t}.
\end{align*} 
Theorem~\ref{thm:multiple_variable_integral} then shows that the $\ell_2$-error can be bounded as 
$
\norm{ \sigma_{\textrm{cts}} - \sigma_{\textrm{dis}} }_2 \leq \poly(n) / 2^{2N} , 
$
which is negligible by our choice of $N = \log^{O(1)} (n)$.

\paragraph{Projection maintenance and preconditioning}

Inverse maintenance was first proposed in \cite{k80} as a method for solving ``slowly-changing'' linear system.

Given a sequence of positive vectors $w^{1}, w^{2}, \cdots w^{T} \in \R_+^m$, let $P(w)$ be defined as $P(w) = \sqrt{W} A ( A^\top W A )^{-1} A^\top \sqrt{W}$, where $W \in \R^{m \times m}$ is the diagonal matrix with $W_{i,i} = w_i$.  
The goal is to output a sequence of vectors $v^{1}, v^{2}, \cdots, v^{T}$ such that $(1- \epsilon) v^t  \leq w^t \leq (1+ \epsilon) v^t$, $\forall t \in [T]$, and efficiently computes $P(v^t) u$, for
query vector $u \in \R^n$. 
The recent work of \cite{cls19} gave an efficient way to perform such a task.
For our purpose, however, rather than just computing $P(v^t) u$, we also need explicit approximations to the matrices $Q(w) = A ( A^\top W A )^{-1} A^\top$ and $M(w)^{-1} = (A^\top W A)^{-1}$. 
The matrices $Q(w)$ and $M(w)^{-1}$ appear frequently in our formula for the changes in leverage scores, and we need their approximations as pre-conditioners for accelerating the computation of certain matrix rectangular multiplication involving $Q(w)$ and $M(w)^{-1}$ (Lemma~\ref{lem:precondition}). 

\paragraph{``Tall'' JL \& fast rectangular matrix multiplication}

From the previous paragraph on discrete sampling, it suffices to compute $\gamma_{i,s,t}$ for all $i$, where 
\begin{align*}
\gamma_{i,s,t} = & ~ \sqrt{ W^{\mid} - W^{\new}} \cdot Q(y_{s,t}) \cdot (Z_t - X_t) \cdot Q(y_{s,t}) \cdot \sqrt{ W^{\new} } \cdot e_i .
\end{align*}
Notice that computing $\gamma_{i,s,t}$ for all $i$ is essentially computing the  matrix products
\begin{align*}
\sqrt{ W^{\mid} - W^{\new}} \cdot Q(y_{s,t}) \cdot (Z_t - X_t) \cdot Q(y_{s,t}) \cdot \sqrt{ W^{\new} } ,
\end{align*}
which would take $O(n^{\omega + o(1)})$ time if computed exactly.
To improve the time while ensuring keeping the error small, 
we invoke JL with dimension $n^c$ for small constant $c$ by computing
\begin{align*}
\sigma_{i,\textrm{jl}} = \sum_{t \in \mathcal{T}} \sum_{s \in \mathcal{S}} \sum_{s' \in \mathcal{S}} \omega_t \omega_s \omega_{s'} \gamma_{i,s,t}^\top R_{\gamma,s,s',t}^\top R_{\gamma,s,s',t} \gamma_{i,s', t} ,
\end{align*}
where $R_{\gamma,s,s',t} \in \mathbb{R}^{n^c \times n}$ is a random matrix. It is essential that $c$ is picked such that the rectangular matrix multiplication can be done in time roughly $n^2$.
To obtain a small error, $\sigma_{i,\textrm{jl}}$ should be a good estimate with a small variance.
We note that $\sigma_{i,\textrm{jl}}$ is indeed an unbiased estimator of $\sigma_{i,\textrm{dis}}$ and its variance can be bounded as
$
\sum_{i=1}^m \Var[ \sigma_{i,\textrm{jl}}] \leq O( \epsilon^2 / n^c ) ,
$
where $\epsilon$ is the error for the projection maintenance used by the data structure (Sections~\ref{subsec:upper_bound_eta_alpha_beta_gamma} and~\ref{subsec:variance_bound_random_gaussian_matrix}). 
Leveraging fast rectangular matrix multiplication, we pick $c = 0.31$ and $\epsilon = n^{-0.1}$.
The variance would then be bounded by $n^{-0.51}$ which is sufficiently small after $n^{\omega - 2 + o(1)}$ steps before the data structure restarts.

\subsection{Much faster rectangular matrix multiplication implies deterministic cutting plane method}

Our algorithm crucially relies on different kinds of fast rectangular matrix multiplication results. 
We also show that if these results are improved, then we are able to get a {\em deterministic} cutting plane method immediately.

\begin{corollary}[Informal version of Corollary~\ref{cor:det_cutting_plane}]\label{cor:frmm_imply_cutting_plane_method}
If $\mathcal{T}_{\mathrm{mat}}(n,n,r) = O(n^2 \log^{O(1)}(n))$ for $r = n^{\beta}$ with $\beta > 2/3$, then there is a deterministic cutting plane method which runs in time 
\begin{align*}
O(n\cdot \SO \log (\kappa)+n^{3}\log (\kappa)),
\end{align*}
where $\SO$ is the time complexity of the separation oracle.
\end{corollary}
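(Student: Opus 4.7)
The plan is to derandomize the algorithm underlying Theorem~\ref{thm:cutting_plane_method_intro} by eliminating its sole source of randomness---the Johnson--Lindenstrauss projection used inside the outer layer of the leverage-score maintenance data structure---and replacing it with a direct matrix product under the hypothetically faster rectangular matrix multiplication.

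First I would audit the algorithm and confirm that randomness enters only through the outer layer. The inner and middle layers use the coarse-grained formula of Lemma~\ref{lem:leverage_score_moving_simple} together with deterministic single-point quadrature, so they are already derandomized. The outer layer uses the fine-grained formula of Lemma~\ref{lem:leverage_score_moving_complicated}, which in the randomized proof is evaluated by estimating the cross terms $\gamma_{i,s,t}^\top \gamma_{i,s',t}$ via the unbiased estimator $\gamma_{i,s,t}^\top R^\top R \gamma_{i,s',t}$ for a random $R \in \mathbb{R}^{n^c \times n}$. JL is used there purely to shrink the inner dimension of a rectangular matrix multiplication so that each outer step fits within the $O(n^{2+o(1)})$ budget afforded by \cite{gu18} with $c = 0.31$.

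Next I would replace the JL step with an exact evaluation. For each triple of discretization nodes $(s,s',t)$, the vectors $\{\gamma_{i,s,t}\}_{i=1}^m$ are the columns of an explicit $m \times m$ matrix, and the required quantities $\{\gamma_{i,s,t}^\top \gamma_{i,s',t}\}_{i=1}^m$ form the diagonal of a single matrix product. Because the batched low-rank update mechanism ensures that the factor $Z_t - X_t$ in $\gamma_{i,s,t}$ has rank at most the batch size $r$ handled by the outer step, the product factors through one rectangular multiplication of an $n \times r$ matrix by an $r \times n$ matrix. Under the hypothesis $\mathcal{T}_{\mathrm{mat}}(n,n,n^\beta) = O(n^2 \log^{O(1)}(n))$ for some $\beta > 2/3$, I would retune the layer sizes so that $r = n^\beta$; each outer step then runs deterministically in $O(n^2 \log^{O(1)}(n))$ time. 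Since the outer step is invoked only once per $r = n^\beta$ updates, its amortized cost per iteration is $O(n^{2-\beta} \log^{O(1)}(n)) = o(n^2)$, dominated by the inner layer's $O(n^2)$ work; the $O(n^{\omega+o(1)})$ restart every $n^{\omega-2+o(1)}$ iterations contributes $O(n^{2+o(1)})$ amortized, giving the claimed $O(n \cdot \SO \log(\kappa) + n^3 \log(\kappa))$ total.

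Correctness would in fact be strengthened, since the outer step now computes the previously JL-approximated inner products exactly; the remaining deterministic error sources (polynomial interpolation of the multi-variable integral in Theorem~\ref{thm:multiple_variable_integral}, preconditioning, and inverse maintenance) are unaffected, so the $\ell_2$-accuracy guarantee for the leverage-score approximation continues to meet Vaidya's tolerance. The main obstacle is the bookkeeping for the enlarged batch size: one must reselect $(T_{\inn}, T_{\mid}, T_{\out})$ and the error thresholds $(\epsilon_{\inn}, \epsilon_{\mid}, \epsilon_{\out})$ of Table~\ref{table:table_of_constants_main} consistently with $r = T_{\inn} T_{\mid} = n^{\beta}$, verify that the inner and middle layers' deterministic $\ell_\infty$-errors still fit inside the larger batch window, and confirm that the inverse-maintenance preconditioner from Lemma~\ref{lem:precondition} remains sharp enough for the higher-accuracy linear solves driving the batched low-rank update of Theorem~\ref{thm:leverage_batch}. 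Because $\beta > 2/3 > \omega - 2$ provides slack on every budget, I expect the retuning to go through without structural change; the bulk of the work will be a careful rewrite of the per-layer error bookkeeping to track exact (rather than high-probability) inequalities.
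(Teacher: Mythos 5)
Your proposal has a genuine gap: you assert that the factor $Z_t - X_t$ appearing in $\gamma_{i,s,t}$ has rank at most the batch size, so that the inner products $\gamma_{i,s,t}^\top \gamma_{i,s',t}$ factor through an $n\times r$ by $r\times n$ rectangular product. But $Z_t - X_t = \diag(z_t - x_t)$ is a \emph{dense} diagonal matrix (it is the coordinate-wise gap between the projection-maintenance approximation and the interpolated weight), with generic rank $\Theta(m)$. There is no canonical rank-$r$ factorization of $Q(y_{s,t})(Z_t - X_t)Q(y_{s,t})$. The JL matrix in Algorithm~\ref{alg:maintain_leverage_score_complicated} is inserted precisely because no such factorization exists: its role is to compress the inner dimension of the product $\{\gamma_{i,s,t}\}_{i}^\top\{\gamma_{i,s',t}\}_{i}$ from $m$ to $r$. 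Removing it forces an exact $m\times m$ by $m\times m$ product, i.e. $\T_{\mat}(n,n,n) = n^{\omega}$, per triple $(s,s',t)$. The hypothesis $\T_{\mat}(n,n,n^{\beta}) = O(n^2\log^{O(1)}n)$ gives $\omega \le 3-\beta < 7/3$ but says nothing about square products, and amortizing $n^{\omega}$ over the $T_{\inn}T_{\mid}$ iterations between outer steps leaves an amortized cost of roughly $n^{\omega}/(T_{\inn}T_{\mid})$, which exceeds $n^2$ for any reasonable choice of $T_{\inn}T_{\mid}$ when $\omega > 2$. So direct derandomization of the outer layer does not meet the $O(n^2)$ per-step budget.

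The paper's proof takes a different and cleaner route: it discards the outer (randomized) layer entirely and runs only the inner and middle phases, both of which already use the deterministic simple data structure (Theorem~\ref{thm:maintain_leverage_score_simple}). The hypothesis does two things simultaneously. First, $\beta > 2/3$ forces $\omega < 7/3$, so restarting the data structure every $n^{\omega-2} < n^{1/3}$ middle steps keeps the amortized restart cost at $O(n^2)$ while shortening the window over which error can accrue. Second, the faster rectangular multiplication lets the middle phase's projection-maintenance update (Theorem~\ref{thm:maintain_projection_mid} with $k^* = n^{\beta}$) work with a much smaller accuracy parameter $\epsilon_{\simp} \approx n^{-\beta/2}/\log^{O(1)}(n)$ while still costing $O(n^2\log^{O(1)}(n))$ per step. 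By Theorem~\ref{thm:maintain_leverage_score_simple} the per-step error of the simple data structure scales with $\epsilon_{\simp}$, so over the shortened window of $n^{\omega-2}$ steps the accumulated $\ell_2$-error is $O(\epsilon_{\simp} \cdot n^{\omega-2}\log^{O(1)} n) = n^{-\Omega(1)}$, within Vaidya's tolerance. In short, the derandomization is achieved not by removing randomness from the randomized layer but by upgrading the accuracy of the already-deterministic layers so that the randomized layer is no longer needed. Your audit correctly identified JL as the sole randomness source, but the fix is to avoid that code path rather than rewrite it.
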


Let $\alpha$ denote the dual exponent of matrix multiplication, which is the largest number $\alpha > 0$ such that ${\cal T}_{\mathrm{mat}}(n,n,n^{\alpha}) = n^{2+o(1)}$. Let $\beta$ denote the largest number such that ${\cal T}_{\mathrm{mat}}(n,n,n^{\beta}) = n^2 \log^{O(1)} n$. A very recent result by Christandl, Le Gall, Lysikov and Zuiddam \cite{cglz20} showed the limitations of several tensor techniques: they proved that $\alpha < 0.625$ for certain tensors. 
We believe our work initiated two interesting open questions in the area of fast matrix multiplication: (1) whether one can prove a better upper bound on $\beta$ (compared to $\alpha$) for certain tensor techniques, and (2) if there is a non-trivial inequality between $\beta$ and $\alpha$.


\bigskip
\noindent{\bf Organization.} We introduce basic notations, backgrounds and tools in Section~\ref{sec:preli}. In Section~\ref{sec:vaidya}, we present the statement that Vaidya's cutting plane method  tolerates perturbed leverage scores.
We present our main data-structure for maintaining leverage scores in Section~\ref{sec:main_leverage_score}. Our main data-structure uses two different leverage score maintenance data-structures in Sections~\ref{sec:simple_leverage_score} and \ref{sec:complicated_leverage_score} with three different settings of the error parameter. 
Both our data-structures in Section~\ref{sec:simple_leverage_score} and~\ref{sec:complicated_leverage_score} reply on the batched low rank algorithm in Section~\ref{sec:batched}.

In Section~\ref{sec:perturb}, we prove that Vaidya's cutting plane method  tolerates perturbed leverage scores.   
We provide several modified versions of projection maintenance data-structures in Section~\ref{sec:projection_maintenance}. Finally, we explain how to handle convex-concave game optimization in Section~\ref{sec:saddle_point}, and present our applications to market equilibrium computations in Section~\ref{sec:application}. 

\section*{Acknowledgments}

The authors would like to express their sincere gratitude to matrix multiplicationer Josh Alman for his patient and answers of our exponential number of questions about fast matrix multiplication. 

The authors would like to thank Swati Padmanabhan for very useful discussions at the early stage of this project. The authors would like to thank Lijie Chen, Nikhil Devanur, Irit Dinur, Simon S. Du, Wei Hu, Jason Lee, Jerry Li, Ruoqi Shen, Aaron Schild, Aaron Sidford, Santosh Vempala, Xin Yang, Peilin Zhong, and Danyang Zhuo.

The authors would like to thank Josh Alman, Irit Dinur, Avi Wigderson, and Jeroen Zuiddam for useful discussion about limitation of fast matrix multiplication.

The authors would like to thank Sanjeev Arora, Alexandr Andoni, Ainesh Bakshi, Yangsibo Huang, Rajesh Jayaram, Ravindran Kanna, Michael Kapralov, Adam Klivans, Kai Li, Christos Papadimitriou, Eric Price, Daniel Roy, Clifford Stein, Omri Weinstein, David P. Woodruff, and Hengjie Zhang for asking interesting questions at the end of the talk of this paper.

This project was supported in part by NSF awards CCF-1749609, CCF-1740551, DMS-1839116, and Microsoft Research Faculty Fellowship.

This project was supported in part by Special Year on Optimization, Statistics, and Theoretical Machine Learning (being led by Sanjeev Arora) at Institute for Advanced Study.

\newpage
\addcontentsline{toc}{section}{References}
\bibliographystyle{alpha}
\bibliography{ref}

\newcommand{\etalchar}[1]{$^{#1}$}
\begin{thebibliography}{DMIMW12}

\bibitem[AD54]{ad54}
Kenneth~J Arrow and Gerard Debreu.
\newblock Existence of an equilibrium for a competitive economy.
\newblock {\em Econometrica: Journal of the Econometric Society}, pages
  265--290, 1954.

\bibitem[AKM{\etalchar{+}}17]{akmmvz17}
Haim Avron, Michael Kapralov, Cameron Musco, Christopher Musco, Ameya
  Velingker, and Amir Zandieh.
\newblock Random {F}ourier features for kernel ridge regression: Approximation
  bounds and statistical guarantees.
\newblock In {\em International Conference on Machine Learning (ICML)}.
  \url{https://arxiv.org/pdf/1804.09893.pdf}, 2017.

\bibitem[AKM{\etalchar{+}}19]{akmmvz19}
Haim Avron, Michael Kapralov, Cameron Musco, Christopher Musco, Ameya
  Velingker, and Amir Zandieh.
\newblock A universal sampling method for reconstructing signals with simple
  {F}ourier transforms.
\newblock In {\em 51st Annual ACM SIGACT Symposium on Theory of Computing
  (STOC)}. \url{https://arxiv.org/pdf/1812.08723.pdf}, 2019.

\bibitem[AM02]{am02}
Lawrence~M Ausubel and Paul~R Milgrom.
\newblock Ascending auctions with package bidding.
\newblock {\em Advances in Theoretical Economics}, 1(1), 2002.

\bibitem[AV95]{av95}
David~S Atkinson and Pravin~M Vaidya.
\newblock A cutting plane algorithm for convex programming that uses analytic
  centers.
\newblock {\em Mathematical Programming}, 69(1-3):1--43, 1995.

\bibitem[BD76]{bd76}
Roger~W Brockett and David Dobkin.
\newblock On the number of multiplications required for matrix multiplication.
\newblock {\em SIAM Journal on Computing}, 5(4):624--628, 1976.

\bibitem[BDX10]{bdx10}
Benjamin Birnbaum, N~Devanur, and Lin Xiao.
\newblock New convex programs and distributed algorithms for fisher markets
  with linear and spending constraint utilities.
\newblock {\em Unpublished manuscript}, 2010.

\bibitem[BLM89]{blm89}
Jean Bourgain, Joram Lindenstrauss, and V~Milman.
\newblock Approximation of zonoids by zonotopes.
\newblock {\em Acta mathematica}, 162(1):73--141, 1989.

\bibitem[BLSS20]{blss20}
Jan van~den Brand, Yin~Tat Lee, Aaron Sidford, and Zhao Song.
\newblock Solving tall dense linear programs in nearly linear time.
\newblock In {\em 52nd Annual ACM SIGACT Symposium on Theory of Computing
  (STOC)}. \url{https://arxiv.org/pdf/2002.02304.pdf}, 2020.

\bibitem[BNS19]{bns19}
Jan van~den Brand, Danupon Nanongkai, and Thatchaphol Saranurak.
\newblock Dynamic matrix inverse: Improved algorithms and matching conditional
  lower bounds.
\newblock In {\em 60th Annual IEEE Symposium on Foundations of Computer Science
  (FOCS)}. \url{https://arxiv.org/pdf/1905.05067.pdf}, 2019.

\bibitem[Bra20]{b20}
Jan van~den Brand.
\newblock A deterministic linear program solver in current matrix
  multiplication time.
\newblock In {\em ACM-SIAM Symposium on Discrete Algorithms (SODA)}.
  \url{https://arxiv.org/pdf/1910.11957.pdf}, 2020.

\bibitem[BV02]{bv02}
Dimitris Bertsimas and Santosh Vempala.
\newblock Solving convex programs by random walks.
\newblock In {\em Proceedings of the thiry-fourth annual ACM symposium on
  Theory of computing (STOC)}, pages 109--115. ACM, 2002.

\bibitem[BW14]{bw14}
Christos Boutsidis and David~P Woodruff.
\newblock Optimal cur matrix decompositions.
\newblock In {\em Proceedings of the 46th Annual ACM Symposium on Theory of
  Computing (STOC)}, pages 353--362. ACM,
  \url{https://arxiv.org/pdf/1405.7910}, 2014.

\bibitem[CCLY19]{ccly19}
Michael~B Cohen, Ben Cousins, Yin~Tat Lee, and Xin Yang.
\newblock A near-optimal algorithm for approximating the {J}ohn ellipsoid.
\newblock In {\em Annual Conference on Learning Theory (COLT)}.
  \url{https://arxiv.org/pdf/1905.11580.pdf}, 2019.

\bibitem[CGLZ20]{cglz20}
Matthias Christandl, Fran{\c{c}}ois~Le Gall, Vladimir Lysikov, and Jeroen
  Zuiddam.
\newblock Barriers for rectangular matrix multiplication.
\newblock In {\em arXiv preprint}. \url{https://arxiv.org/2003.03019.pdf},
  2020.

\bibitem[CLS19]{cls19}
Michael~B Cohen, Yin~Tat Lee, and Zhao Song.
\newblock Solving linear programs in the current matrix multiplication time.
\newblock In {\em Proceedings of the 51st Annual ACM Symposium on Theory of
  Computing (STOC)}. \url{https://arxiv.org/pdf/1810.07896.pdf}, 2019.

\bibitem[CMM17]{cmm17}
Michael~B Cohen, Cameron Musco, and Christopher Musco.
\newblock Input sparsity time low-rank approximation via ridge leverage score
  sampling.
\newblock In {\em Proceedings of the Twenty-Eighth Annual ACM-SIAM Symposium on
  Discrete Algorithms (SODA)}, pages 1758--1777. SIAM,
  \url{https://arxiv.org/pdf/1511.07263.pdf}, 2017.

\bibitem[Cop82]{c82}
Don Coppersmith.
\newblock Rapid multiplication of rectangular matrices.
\newblock {\em SIAM Journal on Computing}, 11(3):467--471, 1982.

\bibitem[Cor89]{c89}
Bernard Cornet.
\newblock Linear exchange economies.
\newblock {\em Cahier Eco-Math, Universit{\'e} de Paris}, 1, 1989.

\bibitem[CP15]{cp15}
Michael~B. Cohen and Richard Peng.
\newblock $\ell_p$ row sampling by lewis weights.
\newblock In {\em Proceedings of the Forty-Seventh Annual ACM on Symposium on
  Theory of Computing (STOC)}, pages 183--192.
  \url{https://arxiv.org/pdf/1412.0588}, 2015.

\bibitem[CW87]{cw87}
Don Coppersmith and Shmuel Winograd.
\newblock Matrix multiplication via arithmetic progressions.
\newblock In {\em Proceedings of the nineteenth annual ACM symposium on Theory
  of computing (STOC)}, pages 1--6. ACM, 1987.

\bibitem[CW13]{cw13}
Kenneth~L. Clarkson and David~P. Woodruff.
\newblock Low rank approximation and regression in input sparsity time.
\newblock In {\em Symposium on Theory of Computing Conference (STOC)}, pages
  81--90. \url{https://arxiv.org/pdf/1207.6365}, 2013.

\bibitem[Dan47]{d47}
George~B Dantzig.
\newblock Maximization of a linear function of variables subject to linear
  inequalities.
\newblock {\em Activity analysis of production and allocation}, 13:339--347,
  1947.

\bibitem[DGM16]{dgm16}
Ran Duan, Jugal Garg, and Kurt Mehlhorn.
\newblock An improved combinatorial polynomial algorithm for the linear
  arrow-debreu market.
\newblock In {\em Proceedings of the twenty-seventh annual ACM-SIAM symposium
  on Discrete algorithms (SODA)}, pages 90--106. SIAM,
  \url{https://arxiv.org/pdf/1510.02694.pdf}, 2016.

\bibitem[DGV16]{dgv16}
Nikhil~R Devanur, Jugal Garg, and L{\'a}szl{\'o}~A V{\'e}gh.
\newblock A rational convex program for linear arrow-debreu markets.
\newblock volume 5(1), page~6. \url{https://arxiv.org/pdf/1307.8037.pdf}, 2016.

\bibitem[DM15]{dm15}
Ran Duan and Kurt Mehlhorn.
\newblock A combinatorial polynomial algorithm for the linear arrow--debreu
  market.
\newblock {\em Information and Computation}, 243:112--132, 2015.

\bibitem[DMIMW12]{dmmw12}
Petros Drineas, Malik Magdon-Ismail, Michael~W Mahoney, and David~P Woodruff.
\newblock Fast approximation of matrix coherence and statistical leverage.
\newblock {\em Journal of Machine Learning Research}, 13(Dec):3475--3506, 2012.

\bibitem[DPSV08]{dpsv08}
Nikhil~R Devanur, Christos~H Papadimitriou, Amin Saberi, and Vijay~V Vazirani.
\newblock Market equilibrium via a primal--dual algorithm for a convex program.
\newblock {\em Journal of the ACM (JACM)}, 55(5):22, 2008.

\bibitem[DS13]{ds13}
Alexander~Munro Davie and Andrew~James Stothers.
\newblock Improved bound for complexity of matrix multiplication.
\newblock {\em Proceedings of the Royal Society of Edinburgh Section A:
  Mathematics}, 143(2):351--369, 2013.

\bibitem[dVSV07]{dsv07}
Sven de~Vries, James Schummer, and Rakesh~V Vohra.
\newblock On ascending vickrey auctions for heterogeneous objects.
\newblock {\em Journal of Economic Theory}, 132(1):95--118, 2007.

\bibitem[Eav75]{e75}
B~Curtis Eaves.
\newblock A finite algorithm for the linear exchange model.
\newblock Technical report, STANFORD UNIV CALIF SYSTEMS OPTIMIZATION LAB, 1975.

\bibitem[GLS12]{gls12}
Martin Gr{\"o}tschel, L{\'a}szl{\'o} Lov{\'a}sz, and Alexander Schrijver.
\newblock {\em Geometric algorithms and combinatorial optimization}, volume~2.
\newblock Springer Science \& Business Media, 2012.

\bibitem[GU18]{gu18}
Francois~Le Gall and Florent Urrutia.
\newblock Improved rectangular matrix multiplication using powers of the
  coppersmith-winograd tensor.
\newblock In {\em Proceedings of the Twenty-Ninth Annual ACM-SIAM Symposium on
  Discrete Algorithms (SODA)}, pages 1029--1046. SIAM,
  \url{https://arxiv.org/pdf/1708.05622.pdf}, 2018.

\bibitem[GV19]{gv19}
Jugal Garg and L{\'a}szl{\'o}~A V{\'e}gh.
\newblock A strongly polynomial algorithm for linear exchange markets.
\newblock In {\em Proceedings of the 51st Annual ACM SIGACT Symposium on Theory
  of Computing (STOC)}, pages 54--65. ACM,
  \url{https://arxiv.org/pdf/1809.06266.pdf}, 2019.

\bibitem[HKNS15]{hkns15}
Monika Henzinger, Sebastian Krinninger, Danupon Nanongkai, and Thatchaphol
  Saranurak.
\newblock Unifying and strengthening hardness for dynamic problems via the
  online matrix-vector multiplication conjecture.
\newblock In {\em Proceedings of the forty-seventh annual ACM symposium on
  Theory of computing (STOC)}, pages 21--30. ACM,
  \url{https://arxiv.org/pdf/1511.06773.pdf}, 2015.

\bibitem[IK94]{isaacson1994analysis}
Eugene Isaacson and Herbert~Bishop Keller.
\newblock Analysis of numerical methods.
\newblock 1994.

\bibitem[Jai07]{j07}
Kamal Jain.
\newblock A polynomial time algorithm for computing an arrow--debreu market
  equilibrium for linear utilities.
\newblock {\em SIAM Journal on Computing}, 37(1):303--318, 2007.

\bibitem[JL84]{jl84}
William~B Johnson and Joram Lindenstrauss.
\newblock Extensions of lipschitz mappings into a hilbert space.
\newblock {\em Contemporary mathematics}, 26(189-206):1, 1984.

\bibitem[Kar84]{k84}
Narendra Karmarkar.
\newblock A new polynomial-time algorithm for linear programming.
\newblock In {\em Proceedings of the sixteenth annual ACM symposium on Theory
  of computing (STOC)}, pages 302--311. ACM, 1984.

\bibitem[Kha80]{k80}
Leonid~G Khachiyan.
\newblock Polynomial algorithms in linear programming.
\newblock {\em USSR Computational Mathematics and Mathematical Physics},
  20(1):53--72, 1980.

\bibitem[KJC82]{kc82}
Alexander~S Kelso~Jr and Vincent~P Crawford.
\newblock Job matching, coalition formation, and gross substitutes.
\newblock {\em Econometrica: Journal of the Econometric Society}, pages
  1483--1504, 1982.

\bibitem[KTE88]{kte88}
Leonid~G Khachiyan, Sergei~Pavlovich Tarasov, and I.~I. Erlikh.
\newblock The method of inscribed ellipsoids.
\newblock In {\em Soviet Math. Dokl}, volume~37, pages 226--230, 1988.

\bibitem[KV06]{kv06}
Adam~Tauman Kalai and Santosh Vempala.
\newblock Simulated annealing for convex optimization.
\newblock {\em Mathematics of Operations Research}, 31(2):253--266, 2006.

\bibitem[LDFU13]{ldfu13}
Yichao Lu, Paramveer Dhillon, Dean~P Foster, and Lyle Ungar.
\newblock Faster ridge regression via the subsampled randomized hadamard
  transform.
\newblock In {\em Advances in neural information processing systems (NIPS)},
  pages 369--377, 2013.

\bibitem[Lee16]{lee16}
Yin~Tat Lee.
\newblock {\em Faster algorithms for convex and combinatorial optimization}.
\newblock PhD thesis, Massachusetts Institute of Technology, 2016.

\bibitem[Lew78]{l78}
D.~Lewis.
\newblock Finite dimensional subspaces of $\ell_{p}$.
\newblock {\em Studia Mathematica}, 63(2):207--212, 1978.

\bibitem[LG14]{l14}
Fran{\c{c}}ois Le~Gall.
\newblock Powers of tensors and fast matrix multiplication.
\newblock In {\em Proceedings of the 39th international symposium on symbolic
  and algebraic computation (ISSAC)}, pages 296--303. ACM, 2014.

\bibitem[LS14]{ls14}
Yin~Tat Lee and Aaron Sidford.
\newblock Path finding methods for linear programming: Solving linear programs
  in ${O}( \sqrt{rank} )$ iterations and faster algorithms for maximum flow.
\newblock In {\em 55th Annual IEEE Symposium on Foundations of Computer Science
  (FOCS)}, pages 424--433. \url{https://arxiv.org/pdf/1312.6677.pdf},
  \url{https://arxiv.org/pdf/1312.6713.pdf}, 2014.

\bibitem[LS15]{ls15}
Yin~Tat Lee and Aaron Sidford.
\newblock Efficient inverse maintenance and faster algorithms for linear
  programming.
\newblock In {\em 56th Annual IEEE Symposium on Foundations of Computer Science
  (FOCS)}, pages 230--249. \url{https://arxiv.org/pdf/1503.01752.pdf}, 2015.

\bibitem[LSS{\etalchar{+}}20]{lsswy20}
Jason Lee, Ruoqi Shen, Zhao Song, Mengdi Wang, and Zheng Yu.
\newblock Leverage score sampling, neural tangent kernel, and kernel ridge
  regression.
\newblock In {\em Manuscript}, 2020.

\bibitem[LSV18]{lsv18}
Yin~Tat Lee, Aaron Sidford, and Santosh~S Vempala.
\newblock Efficient convex optimization with membership oracles.
\newblock In {\em Annual Conference on Learning Theory (COLT)}.
  \url{https://arxiv.org/pdf/1706.07357.pdf}, 2018.

\bibitem[LSW15]{lsw15}
Yin~Tat Lee, Aaron Sidford, and Sam Chiu-wai Wong.
\newblock A faster cutting plane method and its implications for combinatorial
  and convex optimization.
\newblock In {\em 56th Annual IEEE Symposium on Foundations of Computer Science
  (FOCS)}, pages 1049--1065. \url{https://arxiv.org/pdf/1508.04874.pdf}, 2015.

\bibitem[LSZ19]{lsz19}
Yin~Tat Lee, Zhao Song, and Qiuyi Zhang.
\newblock Solving empirical risk minimization in the current matrix
  multiplication time.
\newblock In {\em Annual Conference on Learning Theory (COLT)}.
  \url{https://arxiv.org/pdf/1905.04447}, 2019.

\bibitem[LV06]{lv06}
L{\'a}szl{\'o} Lov{\'a}sz and Santosh Vempala.
\newblock Fast algorithms for logconcave functions: Sampling, rounding,
  integration and optimization.
\newblock In {\em 47th Annual IEEE Symposium on Foundations of Computer Science
  (FOCS)}, pages 57--68. IEEE, 2006.

\bibitem[LW17]{lw17}
Renato~Paes Leme and Sam Chiu-wai Wong.
\newblock Computing walrasian equilibria: fast algorithms and structural
  properties.
\newblock In {\em Proceedings of the Twenty-Eighth Annual ACM-SIAM Symposium on
  Discrete Algorithms (SODA)}, pages 632--651. SIAM,
  \url{https://arxiv.org/pdf/1511.04032.pdf}, 2017.

\bibitem[Mad13]{m13_flow}
Aleksander Madry.
\newblock Navigating central path with electrical flows: From flows to
  matchings, and back.
\newblock In {\em 54th Annual IEEE Symposium on Foundations of Computer Science
  (FOCS)}, pages 253--262. \url{https://arxiv.org/pdf/1307.2205.pdf}, 2013.

\bibitem[Mad16]{m16}
Aleksander Madry.
\newblock Computing maximum flow with augmenting electrical flows.
\newblock In {\em 2016 IEEE 57th Annual Symposium on Foundations of Computer
  Science (FOCS)}, pages 593--602. IEEE,
  \url{https://arxiv.org/pdf/1608.06016.pdf}, 2016.

\bibitem[Nem94]{nemi94}
Arkadi Nemirovski.
\newblock Efficient methods in convex programming.
\newblock {\em Lecture notes}, 1994.

\bibitem[Nem95]{n95}
Arkadi Nemirovski.
\newblock Information-based complexity of convex programming.
\newblock {\em Lecture Notes}, 1995.

\bibitem[NN89]{nn89}
YE~Nesterov and AS~Nemirovskii.
\newblock Self-concordant functions and polynomial time methods in convex
  programming. preprint, central economic \& mathematical institute, ussr acad.
\newblock {\em Sci. Moscow, USSR}, 1989.

\bibitem[NN13]{nn13}
Jelani Nelson and Huy~L Nguy{\^e}n.
\newblock {OSNAP}: Faster numerical linear algebra algorithms via sparser
  subspace embeddings.
\newblock In {\em 54th Annual IEEE Symposium on Foundations of Computer Science
  (FOCS)}, pages 117--126. IEEE, \url{https://arxiv.org/pdf/1211.1002}, 2013.

\bibitem[NP83]{np83}
EI~Nenakov and ME~Primak.
\newblock One algorithm for finding solutions of the arrow-debreu model.
\newblock {\em Kibernetica}, 3:127--128, 1983.

\bibitem[NY83]{nemi83}
Arkadii~Semenovich Nemirovsky and David~Borisovich Yudin.
\newblock Problem complexity and method efficiency in optimization.
\newblock 1983.

\bibitem[Par99]{p99}
David~C Parkes.
\newblock ibundle: An efficient ascending price bundle auction.
\newblock 1999.

\bibitem[Pro96]{p96}
V~Yu Protasov.
\newblock Algorithms for approximate calculation of the minimum of a convex
  function from its values.
\newblock {\em Mathematical Notes}, 59(1):69--74, 1996.

\bibitem[PSW17]{psw17}
Eric Price, Zhao Song, and David~P. Woodruff.
\newblock Fast regression with an ${\ell}_{\infty}$ guarantee.
\newblock In {\em International Colloquium on Automata, Languages, and
  Programming (ICALP)}, 2017.

\bibitem[PU02]{pu02}
David~C Parkes and Lyle~H Ungar.
\newblock An ascending-price generalized vickrey auction.
\newblock 2002.

\bibitem[RSW16]{rsw16}
Ilya Razenshteyn, Zhao Song, and David~P Woodruff.
\newblock Weighted low rank approximations with provable guarantees.
\newblock In {\em Proceedings of the 48th Annual Symposium on the Theory of
  Computing (STOC)}, 2016.

\bibitem[San04]{s04}
Piotr Sankowski.
\newblock Dynamic transitive closure via dynamic matrix inverse.
\newblock In {\em 45th Annual IEEE Symposium on Foundations of Computer Science
  (FOCS)}, pages 509--517. IEEE, 2004.

\bibitem[Sch18]{s18}
Aaron Schild.
\newblock An almost-linear time algorithm for uniform random spanning tree
  generation.
\newblock In {\em Proceedings of the 50th Annual ACM SIGACT Symposium on Theory
  of Computing (STOC)}, pages 214--227. ACM,
  \url{https://arxiv.org/pdf/1711.06455.pdf}, 2018.

\bibitem[Sho77]{s77}
Naum~Z Shor.
\newblock Cut-off method with space extension in convex programming problems.
\newblock {\em Cybernetics and systems analysis}, 13(1):94--96, 1977.

\bibitem[Sid15]{sidford15}
Aaron~Daniel Sidford.
\newblock {\em Iterative methods, combinatorial optimization, and linear
  programming beyond the universal barrier}.
\newblock PhD thesis, Massachusetts Institute of Technology, 2015.

\bibitem[Son19]{song19}
Zhao Song.
\newblock {\em Matrix Theory : Optimization, Concentration and Algorithms}.
\newblock PhD thesis, The University of Texas at Austin, 2019.

\bibitem[SS11]{ss11}
Daniel~A Spielman and Nikhil Srivastava.
\newblock Graph sparsification by effective resistances.
\newblock In {\em SIAM Journal on Computing}, volume 40(6), pages 1913--1926.
  \url{https://arxiv.org/pdf/0803.0929.pdf}, 2011.

\bibitem[SWZ17]{swz17}
Zhao Song, David~P Woodruff, and Peilin Zhong.
\newblock Low rank approximation with entrywise $\ell_1$-norm error.
\newblock In {\em Proceedings of the 49th Annual Symposium on the Theory of
  Computing (STOC)}. ACM, \url{https://arxiv.org/pdf/1611.00898}, 2017.

\bibitem[SWZ19]{swz19}
Zhao Song, David~P Woodruff, and Peilin Zhong.
\newblock Relative error tensor low rank approximation.
\newblock In {\em ACM-SIAM Symposium on Discrete Algorithms (SODA)}.
  \url{https://arxiv.org/pdf/1704.08246}, 2019.

\bibitem[Vai87]{v87}
Pravin~M Vaidya.
\newblock An algorithm for linear programming which requires $o ((( m + n ) n^2
  + ( m + n )^{1.5} n) l)$ arithmetic operations.
\newblock In {\em 28th Annual IEEE Symposium on Foundations of Computer Science
  (FOCS)}, 1987.

\bibitem[Vai89a]{v89}
Pravin~M Vaidya.
\newblock A new algorithm for minimizing convex functions over convex sets.
\newblock In {\em 30th Annual IEEE Symposium on Foundations of Computer Science
  (FOCS)}, pages 338--343, 1989.

\bibitem[Vai89b]{v89_lp}
Pravin~M Vaidya.
\newblock Speeding-up linear programming using fast matrix multiplication.
\newblock In {\em 30th Annual Symposium on Foundations of Computer Science
  (FOCS)}, pages 332--337. IEEE, 1989.

\bibitem[Vaz10]{v10}
Vijay~V Vazirani.
\newblock Spending constraint utilities with applications to the adwords
  market.
\newblock {\em Mathematics of Operations Research}, 35(2):458--478, 2010.

\bibitem[V{\'e}g16]{v16}
L{\'a}szl{\'o}~A V{\'e}gh.
\newblock A strongly polynomial algorithm for a class of minimum-cost flow
  problems with separable convex objectives.
\newblock volume 45(5), pages 1729--1761.
  \url{https://arxiv.org/pdf/1110.4882.pdf}, 2016.

\bibitem[Wan16]{wan16}
Zi~Wang.
\newblock {\em A New Strongly Polynomial Algorithm for Computing Fisher Market
  Equilibria with Spending Constraint Utilities}.
\newblock PhD thesis, New York University, 2016.

\bibitem[Wil12]{w12}
Virginia~Vassilevska Williams.
\newblock Multiplying matrices faster than coppersmith-winograd.
\newblock In {\em Proceedings of the forty-fourth annual ACM symposium on
  Theory of computing (STOC)}, pages 887--898. ACM, 2012.

\bibitem[Woo49]{w49}
Max~A Woodbury.
\newblock The stability of out-input matrices.
\newblock {\em Chicago, IL}, 9, 1949.

\bibitem[Woo50]{w50}
Max~A Woodbury.
\newblock Inverting modified matrices.
\newblock {\em Memorandum report}, 42(106):336, 1950.

\bibitem[Ye08]{y08}
Yinyu Ye.
\newblock A path to the arrow--debreu competitive market equilibrium.
\newblock {\em Mathematical Programming}, 111(1-2):315--348, 2008.

\bibitem[YN76]{yn76}
David~B Yudin and Arkadii~S Nemirovski.
\newblock Evaluation of the information complexity of mathematical programming
  problems.
\newblock {\em Ekonomika i Matematicheskie Metody}, 12:128--142, 1976.

\end{thebibliography}
\newpage

\newpage

\section{Preliminaries}\label{sec:preli}

In this section we introduce the notions and tools used throughout this paper.

\subsection{Notations}
For a positive integer $n$, let $[n]$ denote the set $\{1,2,\cdots,n\}$. For any function $f$, we define $\wt{O}(f)$ to be $f \cdot \log^{O(1)}(f)$. In addition to $O(\cdot)$ notation, for two functions $f$, $g$, we use the shorthand $f \lesssim g$ (resp. $\gtrsim$) to indicate that $f \leq C \cdot g$ (resp. $\geq$) for some absolute constant $C.$

For vector $x \in \R^n$, we use $\| x \|_2$ to denote its $\ell_2$ norm, we use $\| x \|_1$ to denote its $\ell_1$ norm, we use $\| x \|_{\infty}$ to denote its $\ell_{\infty}$ norm. For matrix $A \in \R^{m \times n}$, we use $\| A \|$ to denote the spectral norm of $A$, we use $\| A \|_F$ to denote its Frobenius norm ($\| A \|_F = ( \sum_{i=1}^m \sum_{j=1}^n A_{i,j}^2 )^{1/2}$), we use $\| A \|_1$ to denotes its entry-wise $\ell_1$ norm $\| A \|_1 = \sum_{i=1}^m \sum_{j=1}^n |A_{i,j}|$. We use $A^\dagger$ to denote its Moore–Penrose inverse. We use $A^\top$ denote the transpose of $A$. 

For square matrix $A$, we use $\tr[A]$ to denote the trace of $A$. We say $A$ is positive-semidefinite (psd) if $A = A^\top$ and $x^\top A x \geq 0$, $\forall x \in \R^n$. We use $\succeq, \preceq$ to denote the semidefinite ordering, e.g. $A \succeq 0$ means that $A$ is psd.

For square full-rank matrix $A$, we use $A^{-1}$ to denote the inverse of $A$.

For vectors $a,b \in \R^n$ and accuracy parameter $\epsilon \in (0,1)$, we use $a \approx_{\epsilon} b$ to denote that $(1-\epsilon) b_i \leq a_i \leq (1+\epsilon) b_i$, $\forall i \in [n]$. Similarly, for any scalar $t$, we use $a \approx_{\epsilon} t$ to denote that $(1-\epsilon) t \leq a_i \leq (1+\epsilon) t$, $\forall i \in [n]$.

\subsection{Operators}

\begin{definition}[Operators $\tau$ and $\sigma$]\label{def:tau_sigma}
Given a matrix $A \in \R^{m \times n}$ and a vector $v \in \R^m$, we define the leverage scores $\sigma(v)_i$ and (unnormalized) leverage scores $\tau(v)_i$ as follows
\begin{align*}
\tau(v)_i =  ( A (A^\top V A)^{-1} A^\top )_{i,i} \mathrm{~and~} \sigma(v)_i = ( \sqrt{V} A ( A^\top V A )^{-1} A^\top \sqrt{V} )_{i,i} .
\end{align*}
\end{definition}

\begin{definition}[Operators $M$, $Q$ and $P$] \label{defn:operator_Q_and_P}
Given a matrix $A \in \R^{m \times n}$, we define three operators : $M : \R^m \rightarrow \R^{n \times n}$, $Q : \R^m \rightarrow \R^{m \times m}$ and $P : \R^m \rightarrow \R^{m \times m}$ such that for any vector $v \in \R^m$
\begin{align*}
M(v) = & ~  A^\top V A  \\
Q(v) = & ~ A ( A^\top V A )^{-1} A^\top \\
P(v) = & ~ \sqrt{V} A (A^\top V A)^{-1} A^\top \sqrt{V}.
\end{align*}
\end{definition}

\subsection{Different types of running time}

\begin{definition}[Pseudo-polynomial time]
An algorithm runs in {\em pseudo-polynomial} time if its running time is a polynomial in the length of the input (the number of bits required to represent it) and the numeric value of the input (the largest integer present in the input).
\end{definition}

\begin{definition}[Strongly-polynomial time]
An algorithm runs in {\em strongly polynomial} time if (1) the number of operations in the arithmetic model of computation is bounded by a polynomial in the number of rational numbers in the input instance, and (2) the space used by the algorithm is bounded by a polynomial in the size of the input.
\end{definition}

\begin{definition}[Weakly-polynomial time]
An algorithm runs in {\em weakly-polynomial} time if its running time is upper bounded by a polynomial in the size of the input, and it doesn't run in strongly polynomial time.
\end{definition}

\subsection{Basic results on matrices}

\begin{fact}[Woodbury matrix identity, \cite{w49,w50}]\label{fac:woodbury_matrix_identity}
Given a square invertible $n \times n$ matrix $A$, an $n \times k$ matrix $U$ and a $k \times n$ matrix $V$, let $B$ be an $n \times n$ matrix such that $B = A + UCV$. Then, assuming $(I_k + V A^{-1} U )$ is invertible, we have 
\begin{align*}
B^{-1} = A^{-1} - A^{-1} U ( C^{-1} + V A^{-1} U )^{-1} V A^{-1}.
\end{align*}
\end{fact}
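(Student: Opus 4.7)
The plan is to prove the identity by direct verification: form the product of $B = A + UCV$ with the candidate inverse $A^{-1} - A^{-1}U(C^{-1} + VA^{-1}U)^{-1}VA^{-1}$ and show it equals $I_n$. This is a purely algebraic manipulation and does not require any cleverness beyond careful bookkeeping. (Strictly speaking, the statement also requires $C$ and $C^{-1} + VA^{-1}U$ to be invertible for the right-hand side to make sense; one can check that invertibility of $I_k + VA^{-1}UC = C(C^{-1}+VA^{-1}U)$ together with invertibility of $C$ is equivalent to invertibility of $B$, via the matrix determinant lemma.)

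Concretely, I would expand
\begin{align*}
B \bigl[A^{-1} - A^{-1}U(C^{-1} + VA^{-1}U)^{-1}VA^{-1}\bigr]
\end{align*}
using $B = A + UCV$. The term $A \cdot A^{-1}$ gives $I_n$, and the remaining three terms can be collected by factoring $U$ on the left and $VA^{-1}$ on the right, yielding an expression of the form
\begin{align*}
I_n + U\bigl\{C - \bigl[I_k + CVA^{-1}U\bigr](C^{-1} + VA^{-1}U)^{-1}\bigr\}VA^{-1}.
\end{align*}
The bracket $I_k + CVA^{-1}U$ equals $C(C^{-1} + VA^{-1}U)$, so the second term inside the braces collapses to $C$, cancelling the first term and leaving exactly $I_n$.

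To be fully rigorous I would also verify that the candidate is a \emph{right} as well as a \emph{left} inverse; a symmetric computation (multiplying on the right by $B$ and grouping with $U$ on the right, $A^{-1}V$ on the left) handles this, using the identity $(C^{-1} + VA^{-1}U)C = I_k + VA^{-1}UC$ in place of the one above. Since $B$ is a square matrix, either direction alone is in fact enough, so this step is optional.

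The main — and essentially only — obstacle is keeping the algebra tidy: the calculation involves four terms after expansion and one has to recognize the factorization $I_k + CVA^{-1}U = C(C^{-1}+VA^{-1}U)$ in order to see the cancellation. There is no conceptual difficulty, no limiting argument, and no case analysis; once the factorization is spotted the identity falls out in one line.
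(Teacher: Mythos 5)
Your proof is correct: the direct verification by expanding $B\bigl[A^{-1} - A^{-1}U(C^{-1}+VA^{-1}U)^{-1}VA^{-1}\bigr]$, collecting the three extra terms as $U\{\cdots\}VA^{-1}$, and using $I_k + CVA^{-1}U = C(C^{-1}+VA^{-1}U)$ to make the braces vanish is the standard argument and it works. The paper, however, does not prove this result at all — it is stated as a \emph{Fact} with citations to Woodbury's original reports and is treated as a known identity — so there is no in-paper argument to compare against. Your parenthetical remark about invertibility is also apt: the formula as written in the paper silently requires $C$ and $C^{-1}+VA^{-1}U$ to be invertible (the stated hypothesis that $I_k + VA^{-1}U$ be invertible does not literally match the formula involving $C^{-1}$, which is a minor imprecision in the paper's statement that your proof handles more carefully).
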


\subsection{Fast matrix multiplication}

\begin{definition}[Matrix multiplication time]
For any $n, r > 0$, denote $\T_{\mat}(n,n,r)$ the time to compute the multiplication of an $n \times n$ matrix and an $n \times r$ matrix. 
\end{definition}

\begin{theorem}[Fast matrix multiplication] \label{thm:fast_matrx_multiplication}
We have the following upper bound on $\T_{\mat}(n,n,r)$:
\begin{enumerate}
	\item~\cite{gu18}. For $r = n^{0.31}$, we have $\T_{\mat}(n,n,r) = O(n^{2 + o(1)})$.
	\item~\cite{c82}. For $r = n^{0.17}$, we have $\T_{\mat}(n,n,r) = O( n^2 \log^2 n )$.
	\item~\cite{bd76}. For $r = \log^c n$ for any constant $c > 0$, then we have $\T_{\mat}(n,n,r) = O(n^2)$. 
\end{enumerate}
\end{theorem}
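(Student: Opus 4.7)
The plan is essentially a bookkeeping exercise: each of the three bounds is an established result in the theory of fast rectangular matrix multiplication, so my proof would consist of invoking the corresponding citation in the appropriate parameter regime and, if space allows, a one-line reminder of the idea behind each. There is no new mathematical content to produce here; the purpose of the statement is to collect in a uniform format the three primitives that feed the ``outer'', ``middle'', and ``inner'' layers of the main data structure described in Section~\ref{sec:tech}.

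For part 1, I would appeal to \cite{gu18}, which established the lower bound $\alpha \geq 0.31389$ on the \emph{dual exponent} of matrix multiplication, defined by $\alpha := \sup\{a \ge 0 : \T_{\mat}(n,n,n^a) = n^{2+o(1)}\}$. That bound is obtained by applying the laser method to high tensor powers of the Coppersmith-Winograd tensor, together with the refined asymmetric partition analysis needed for the rectangular case. Since $0.31 < 0.31389$, the desired bound $\T_{\mat}(n,n,n^{0.31}) = O(n^{2+o(1)})$ follows immediately from the definition of $\alpha$.

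For part 2, I would cite \cite{c82}, which showed that when the third dimension is sufficiently small, the generic $n^{o(1)}$ overhead from the laser method can be exchanged for a merely polylogarithmic factor; the threshold in the cited work lies strictly above $0.17$, which gives $\T_{\mat}(n,n,n^{0.17}) = O(n^2 \log^2 n)$. For part 3, I would appeal to \cite{bd76}, from which $\T_{\mat}(n,n,\log^c n) = O(n^2)$ follows for any constant $c > 0$; the underlying idea is a table-lookup / ``Four Russians''-style preprocessing that handles polylogarithmic-sized blocks of one factor in batch, so that after $O(n^2)$ preprocessing each of the $n$ output columns can be read off in $O(n)$ amortized time.

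The only thing requiring a touch of care is the translation between the exponent-based statements in the cited works, typically phrased in terms of $\omega(1,1,a) := \lim_{n\to\infty} \log_n \T_{\mat}(n,n,n^a)$, and the time-based statements here; in each case one needs to check that the stated parameter ($0.31$, $0.17$, or any fixed constant $c$) lies strictly inside the regime covered by the reference, so that the $n^{o(1)}$ or $\log^{O(1)} n$ slack can be absorbed without changing the advertised bound. I do not anticipate any real obstacle, since the theorem is not proving anything new and the main body of the paper only needs these three bounds as black-box subroutines.
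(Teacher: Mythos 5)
Your proposal is correct and matches the paper's treatment: Theorem~\ref{thm:fast_matrx_multiplication} is stated without proof as a collection of citations to \cite{gu18}, \cite{c82}, and \cite{bd76}, exactly as you describe. (One small caveat: your ``Four Russians / table lookup'' gloss on \cite{bd76} is a speculative reconstruction rather than a verified description of that reference's actual mechanism, but since the theorem is used purely as a black box this does not affect anything downstream.)
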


\subsection{Multiple variable polynomial interpolation} 

We first state a one-variable version interpolation theorem.
\begin{theorem}[One variable {\cite[Eq 10 in Page 331]{isaacson1994analysis}}]\label{thm:one_variable_integral}
Given any $2N$ times differentiable function $f : \R \rightarrow \R$. There exits $N$ points $s_1,\cdots, s_N$ and $N$ weights $\omega_1,\cdots,\omega_N \geq 0$ such that $\sum_i \omega_i = 1$ and
\begin{align*}
\left| \int_0^1 f(t) \mathrm{d} t - \sum_{i=1}^N \omega_i f(s_i) \right| \leq 
O(1) \cdot \frac{M_{2N}}{(2N)! 4^N}
\end{align*}
where 
\begin{align*}
M_{2N} = \max_{t \in [0,1] } \left| \frac{ \partial^{(2N)} f } { \partial t^{ ( 2N ) } } ( t ) \right|.
\end{align*}
\end{theorem}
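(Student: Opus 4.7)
The plan is to use Gauss--Legendre quadrature. I would choose the nodes $s_1,\ldots,s_N \in (0,1)$ to be the roots of the shifted Legendre polynomial of degree $N$ on $[0,1]$, and take the weights $\omega_1,\ldots,\omega_N$ to be the associated Gauss--Legendre weights, i.e., $\omega_i = \int_0^1 \ell_i(t)\,\mathrm{d}t$ where $\ell_i$ is the Lagrange basis polynomial of degree $N-1$ with $\ell_i(s_j)=\delta_{ij}$. Two quick sanity checks first: since Gauss--Legendre of order $N$ integrates polynomials of degree $\le 2N-1$ exactly, it integrates the constant $1$ exactly, giving $\sum_i \omega_i = 1$; and applying the exactness to $\ell_i(t)^2$ (degree $2N-2$) yields $\omega_i = \int_0^1 \ell_i(t)^2\,\mathrm{d}t \ge 0$, which gives the nonnegativity required.

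Next I would invoke the classical error identity for Gaussian quadrature: for any $2N$-times differentiable $f$, there exists $\xi \in (0,1)$ such that
\begin{equation*}
\int_0^1 f(t)\,\mathrm{d}t \;-\; \sum_{i=1}^N \omega_i f(s_i) \;=\; \frac{f^{(2N)}(\xi)}{(2N)!}\int_0^1 L_N(t)^2\,\mathrm{d}t,
\end{equation*}
where $L_N(t) = \prod_{i=1}^N (t - s_i)$ is the monic shifted Legendre polynomial of degree $N$. The standard derivation is via Hermite interpolation of $f$ at each $s_i$ with multiplicity $2$: the interpolant is a polynomial of degree $2N-1$, so the quadrature integrates it exactly, and the Hermite remainder equals $\frac{f^{(2N)}(\xi(t))}{(2N)!}L_N(t)^2$, which after integration and an application of the integral mean value theorem (the weight $L_N^2$ is nonnegative) produces the displayed formula.

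It remains to bound $\int_0^1 L_N(t)^2\,\mathrm{d}t$. Using the change of variable $u = 2t - 1$ to pass to the standard Legendre polynomials $P_N$ on $[-1,1]$, and plugging in the well-known normalization $\int_{-1}^1 P_N(u)^2\,\mathrm{d}u = \tfrac{2}{2N+1}$ together with the leading-coefficient formula $P_N(u) = \tfrac{(2N)!}{2^N (N!)^2} u^N + \cdots$, one obtains
\begin{equation*}
\int_0^1 L_N(t)^2\,\mathrm{d}t \;=\; \frac{(N!)^4}{(2N+1)\,\bigl((2N)!\bigr)^2}.
\end{equation*}
Dividing by $(2N)!$ and applying Stirling yields the ratio $\tfrac{(N!)^4}{(2N+1)((2N)!)^3} = O(1)\cdot 16^{-N}/(2N)!$, which is substantially stronger than the claimed $O(1)\cdot 4^{-N}/(2N)!$, so combining with $|f^{(2N)}(\xi)|\le M_{2N}$ completes the proof.

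The plan is essentially bookkeeping on top of well-known facts; the only mildly delicate step is the error identity, where one must be careful that $\xi$ can be pulled out of the integral (this relies on the sign-definiteness of $L_N^2$). The normalization constants in the passage between $P_N$ on $[-1,1]$ and the monic shifted $L_N$ on $[0,1]$ are also routine but easy to miscount; I would verify them by a direct degree check against the leading coefficient.
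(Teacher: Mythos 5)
The paper does not actually prove this statement---it cites it verbatim from Isaacson--Keller and uses only the resulting bound as a black box (the proof given in the paper is for the $d$-variable corollary, Theorem~\ref{thm:multiple_variable_integral}, which invokes this one-variable result). Your proposal instead supplies a complete, self-contained derivation along the classical Gauss--Legendre line, and it is correct: exactness on $\Pi_{2N-1}$ gives $\sum_i\omega_i=1$; the identity $\omega_i=\int_0^1\ell_i(t)^2\,\mathrm{d}t$ gives nonnegativity; the Hermite-interpolation remainder together with the sign-definiteness of $L_N^2$ and the integral mean value theorem yields the error formula $\frac{f^{(2N)}(\xi)}{(2N)!}\int_0^1 L_N^2$; and comparing leading coefficients against the shifted Legendre polynomial evaluates $\int_0^1 L_N^2 = \frac{(N!)^4}{(2N+1)((2N)!)^2}$. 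You in fact obtain the sharper constant $O(1)\cdot 16^{-N}/(2N)!$ rather than the stated $O(1)\cdot 4^{-N}/(2N)!$; if you want to avoid Stirling you can deduce this directly from $\binom{2N}{N}\ge 4^N/(2\sqrt N)$. The only minor caveat, which applies equally to the cited source, is that the mean-value step implicitly uses continuity of $f^{(2N)}$ (i.e.\ $f\in C^{2N}$), which is tacit in the hypothesis since $M_{2N}$ is defined as a maximum over $[0,1]$.
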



Now, we explain how to use one variable interpolation result (Theorem~\ref{thm:one_variable_integral}) to prove a multiple variable interpolation result (Theorem~\ref{thm:multiple_variable_integral}).
\begin{theorem}[$d$-variable case]\label{thm:multiple_variable_integral}
Given a function $f : \R^d \rightarrow \R$ such that for any $j \in [d]$, for any $t_1, \cdots, t_{j-1}, t_{j+1}, \cdots t_d \in [0,1]$, $f(t_1, \cdots, t_{j-1}, t_j, t_{j+1}, \cdots, t_d)$ is $N+1$ times differentiable respect to $t_j$. Then there exists $N$ points $s_1, \cdots, s_N$ and $N$ weights $\omega_1, \cdots, \omega_N$ such that 
\begin{align*}
\left| \int_{[0,1]^d} f(t) \mathrm{d} t - \sum_{i_1 = 1}^N \cdots \sum_{i_d = 1}^N \omega_{i_1} \cdots \omega_{i_d} f( s_{i_1}, \cdots, s_{i_d} ) \right| \leq \frac{1}{(2N)! 2^{2N}} \cdot  \sum_{j=1}^d \max_{t \in [0,1]^d} \left| \frac{ \partial^{(2N)} f }{ \partial t_j^{(2N)} } (t) \right| .
\end{align*}
\end{theorem}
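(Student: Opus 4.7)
The plan is to reduce the $d$-variable case to the one-variable case (Theorem~\ref{thm:one_variable_integral}) by applying the latter one coordinate at a time and telescoping the incremental errors. Let $s_1,\dots,s_N$ and $\omega_1,\dots,\omega_N$ be the nodes and weights from Theorem~\ref{thm:one_variable_integral} (note $\sum_i \omega_i = 1$). For $j = 0, 1, \dots, d$ I will define the hybrid quantity
\begin{align*}
H_j = \sum_{i_1=1}^N \cdots \sum_{i_j=1}^N \omega_{i_1}\cdots\omega_{i_j} \int_{[0,1]^{d-j}} f(s_{i_1},\dots,s_{i_j},t_{j+1},\dots,t_d) \, \mathrm{d} t_{j+1} \cdots \mathrm{d} t_d ,
\end{align*}
so that $H_0$ is the true integral $\int_{[0,1]^d} f(t)\,\mathrm{d}t$ and $H_d$ is the quadrature sum appearing in the statement. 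By the triangle inequality,
\begin{align*}
\Bigl| H_d - H_0 \Bigr| \;\le\; \sum_{j=1}^{d} \bigl| H_j - H_{j-1} \bigr| .
\end{align*}

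Next I would bound each increment $|H_j - H_{j-1}|$. Fixing any choice of $i_1,\dots,i_{j-1}$ and any $(t_{j+1},\dots,t_d) \in [0,1]^{d-j}$, the difference between $\int_0^1 f(s_{i_1},\dots,s_{i_{j-1}},t_j,t_{j+1},\dots,t_d)\,\mathrm{d} t_j$ and its single-variable quadrature $\sum_{i_j} \omega_{i_j} f(s_{i_1},\dots,s_{i_j},t_{j+1},\dots,t_d)$ is, by Theorem~\ref{thm:one_variable_integral} applied to the univariate function $t_j \mapsto f(\cdots,t_j,\cdots)$, bounded by $O(1)/((2N)!\,4^N)$ times the pointwise maximum of $|\partial^{(2N)}_{t_j} f|$ over $t_j \in [0,1]$ with the other coordinates fixed. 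Pulling this bound outside the remaining integrations over $[0,1]^{d-j-1}$ and the outer sums weighted by $\omega_{i_1}\cdots\omega_{i_{j-1}}$ (each summing to $1$), and replacing the pointwise maximum by the global maximum $\max_{t \in [0,1]^d} |\partial^{(2N)}_{t_j} f(t)|$, yields
\begin{align*}
\bigl| H_j - H_{j-1} \bigr| \;\le\; \frac{1}{(2N)!\, 2^{2N}} \cdot \max_{t \in [0,1]^d} \Bigl| \tfrac{\partial^{(2N)} f}{\partial t_j^{(2N)}}(t) \Bigr| ,
\end{align*}
absorbing the $O(1)$ into the constant. Summing over $j = 1,\dots,d$ gives exactly the claimed bound.

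The main obstacle, such as it is, is purely bookkeeping: I must ensure the summations over the ``already-discretized'' coordinates and the integrations over the ``not-yet-discretized'' coordinates are both probability measures (since $\sum_i \omega_i = 1$ and $[0,1]$ has measure $1$), so that passing from a pointwise univariate error bound to a bound on $|H_j - H_{j-1}|$ does not incur any multiplicative factor. The differentiability hypothesis in the statement is exactly what is needed to legitimately apply Theorem~\ref{thm:one_variable_integral} in the $j$-th coordinate while holding the others fixed, so no additional smoothness or Fubini-type concerns arise beyond what is already assumed.
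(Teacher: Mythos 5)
Your telescoping argument is exactly the paper's proof: the hybrid quantities $H_j$ you define are the same interpolants as the paper's $\int_{[0,1]^{j}} g_j$ (up to reversing which coordinate is discretized first), and the bound on each increment $|H_j - H_{j-1}|$ by the one-variable quadrature error is precisely the content of the paper's Claim~\ref{cla:bound_difference_between_j_and_j-1}. The approach and the key step coincide.
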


\begin{proof}
Let $\tau$ denote 
\begin{align*}
\tau = \frac{1}{(2N)! 2^{2N}}
\end{align*}

Using one variable theorem~\ref{thm:one_variable_integral}, we have that : for all $t_1, t_2, \cdots, t_{d-1} \in [0,1]$,
\begin{align*}
\left| \int_0^1 f(t_{[d-1]}, t_d) \mathrm{d} t_d - \sum_{i_d=1}^N \omega_{i_d} f( t_{[d-1]} , s_{i_d} ) \right| \leq  \tau \cdot \max_{ t_{[d-1]} \in [0,1]^{d-1} } \max_{t_d \in [0,1]} | f_{t_d}^{(2N)} (t) |
\end{align*}

To write recursive thing in an easy way, we let $g_d$ denote function $f$. We use $t_{[j]}$ to denote $t_1, t_2, \cdots, t_j$.

For each $j \in \{ d-1, \cdots,1\}$, we define function $g_j : \R^j \rightarrow \R$ such that
\begin{align*} 
g_j (t_{[j]})  = \sum_{ i_{j+1} = 1}^N \omega_{ i_{j+1} } g_{j+1} ( t_{[j]} , s_{i_{j+1}} )
\end{align*}
Let $g_0 = \sum_{i_1 = 1 }^N \omega_{ i_1 } g_1 ( s_{i_1} )$, we can also rewrite $g_0$ as $\int_{ [0,1]^0 } g_0 ( t_{[0]} ) \mathrm{d} t_{[0]}$.

Finally, we want to bound
\begin{align*}
 & ~ \left| \int_0^1 \cdots \int_0^1 f(t_1,\cdots,t_d) \mathrm{d} t_1 \cdots \mathrm{d} t_d - \sum_{i_1 = 0}^N \cdots \sum_{i_d = 1}^N \omega_{s_i} f(s_{i_1}, \cdots, s_{i_d} ) \right| \\
\leq & ~ \sum_{j=1}^d \left| \int_{ [0,1]^j } g_j( t_{[j]} ) \mathrm{d} t_{[j]} - \int_{[0,1]^{j-1}} g_{j-1} ( t_{[j-1]} ) \mathrm{d} t_{[j-1]} \right| \\
\leq & ~ \tau \sum_{j=1}^d \max_{ t \in [0,1]^d } \left| \frac{ \partial^{(2N)} f}{ \partial t_j^{(2N)} (t) }  \right|,
\end{align*}
where the last step follows by Claim~\ref{cla:bound_difference_between_j_and_j-1}.
\end{proof}

\begin{claim}[Bounding the difference between $j$-th term and $j-1$-th term]\label{cla:bound_difference_between_j_and_j-1}
For each $j \in [d]$, 
\begin{align*}
\left| \int_{ [0,1]^j } g_j ( t_{ [j] } ) \mathrm{d} t_{[j]} - \int_{ [0,1]^{j-1} } g_{j-1} ( t_{ [j-1] } ) \mathrm{d} t_{ [j-1] } \right| \leq \tau \cdot \max_{ t \in [0,1]^d } \left| \frac{ \partial^{(2N)} f}{ \partial t_j^{(2N)} (t) } \right|.
\end{align*}
\end{claim}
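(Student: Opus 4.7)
The plan is to reduce the claim to a single application of the one-variable interpolation bound (Theorem~\ref{thm:one_variable_integral}) applied pointwise in the remaining $d-1$ variables, and then bound the $(2N)$-th derivative of the auxiliary function $g_j$ in $t_j$ by the corresponding derivative of $f$ itself.

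First, I would use the recursive definition to rewrite $g_{j-1}(t_{[j-1]}) = \sum_{i_j=1}^N \omega_{i_j}\, g_j(t_{[j-1]}, s_{i_j})$, and observe that $\int_{[0,1]^{j-1}} g_{j-1}(t_{[j-1]})\, \mathrm{d} t_{[j-1]}$ equals $\int_{[0,1]^{j-1}} \sum_{i_j} \omega_{i_j} g_j(t_{[j-1]}, s_{i_j})\, \mathrm{d} t_{[j-1]}$, whereas $\int_{[0,1]^j} g_j(t_{[j]})\, \mathrm{d} t_{[j]} = \int_{[0,1]^{j-1}} \left( \int_0^1 g_j(t_{[j-1]}, t_j)\, \mathrm{d} t_j \right) \mathrm{d} t_{[j-1]}$. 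Subtracting, the difference becomes
\begin{align*}
\int_{[0,1]^{j-1}} \left( \int_0^1 g_j(t_{[j-1]}, t_j)\, \mathrm{d} t_j - \sum_{i_j=1}^N \omega_{i_j} g_j(t_{[j-1]}, s_{i_j}) \right) \mathrm{d} t_{[j-1]}.
\end{align*}

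Next, for each fixed $t_{[j-1]} \in [0,1]^{j-1}$, I would apply Theorem~\ref{thm:one_variable_integral} to the single-variable function $t_j \mapsto g_j(t_{[j-1]}, t_j)$, provided the choice of nodes $(s_i)$ and weights $(\omega_i)$ is the one from that theorem (which is independent of the integrand). This yields the pointwise bound
\begin{align*}
\left| \int_0^1 g_j(t_{[j-1]}, t_j)\, \mathrm{d} t_j - \sum_{i_j=1}^N \omega_{i_j} g_j(t_{[j-1]}, s_{i_j}) \right| \leq O(1) \cdot \frac{1}{(2N)!\, 4^N} \cdot \max_{t_j \in [0,1]} \left| \frac{\partial^{(2N)} g_j}{\partial t_j^{(2N)}}(t_{[j-1]}, t_j) \right|,
\end{align*}
and integrating over $t_{[j-1]} \in [0,1]^{j-1}$ (a region of unit measure) preserves the bound after taking a further supremum over $t_{[j-1]}$.

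Finally, the key remaining step, and the one I expect to be the main obstacle, is controlling $\partial^{(2N)} g_j / \partial t_j^{(2N)}$ by $\partial^{(2N)} f / \partial t_j^{(2N)}$. By unrolling the recursive definition, $g_j(t_{[j]})$ is obtained from $f$ by successively replacing variables $t_{j+1}, \dots, t_d$ with weighted sums $\sum_{i} \omega_i (\cdot)|_{t_\ell = s_i}$, where the $\omega_i$ are nonnegative and sum to $1$ (this normalization is given in Theorem~\ref{thm:one_variable_integral}). Since $t_j$ does not appear in any of the substitution nodes, differentiation in $t_j$ commutes with these weighted sums, so
\begin{align*}
\frac{\partial^{(2N)} g_j}{\partial t_j^{(2N)}}(t_{[j]}) = \sum_{i_{j+1}, \dots, i_d} \omega_{i_{j+1}} \cdots \omega_{i_d} \cdot \frac{\partial^{(2N)} f}{\partial t_j^{(2N)}}(t_{[j]}, s_{i_{j+1}}, \dots, s_{i_d}).
\end{align*}
Taking absolute values, using $\omega_i \geq 0$ and $\sum \omega_i = 1$, and bounding by the supremum gives $\left| \partial^{(2N)} g_j / \partial t_j^{(2N)}(t_{[j]}) \right| \leq \max_{t \in [0,1]^d} \left| \partial^{(2N)} f / \partial t_j^{(2N)}(t) \right|$. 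Combining this with the pointwise one-variable bound above, and noting $O(1)/((2N)!\, 4^N) \leq \tau = 1/((2N)!\, 2^{2N})$ (up to absorbing the $O(1)$ into the constant by choosing the interpolation nodes appropriately, e.g., via Gaussian quadrature, which sharpens the constant), yields the claim.
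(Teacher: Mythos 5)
Your proposal is correct and follows essentially the same route as the paper: unroll $g_{j-1}$ via the recursion, pull the difference inside the integral over $t_{[j-1]}$, apply the one-variable interpolation bound (Theorem~\ref{thm:one_variable_integral}) pointwise, and then control $\partial^{(2N)} g_j/\partial t_j^{(2N)}$ by $\partial^{(2N)} f/\partial t_j^{(2N)}$ using nonnegativity and normalization of the weights. The only difference is that you write out the multi-index expansion that the paper compresses into a ``$\cdots$'', and you flag the $O(1)$ constant from the one-variable theorem, which the paper also silently absorbs into $\tau$.
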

\begin{proof}
First, using one variable theorem~\ref{thm:one_variable_integral}, we can upper bound
\begin{align*}
 & ~ \left| g_j ( t_{[j-1]}, t_j ) \mathrm{d} t_j - \sum_{i_j=1}^N \omega_{i_j} g_j ( t_{[j-1]}, s_{i_j} ) \right| \\
\leq & ~ \tau \cdot \max_{t_{[j-1]} \in [0,1]^{j-1} } \max_{t_j \in [0,1]} | g_{j,t_j}^{(N+1)} ( t_{[j]} ) | \\
\leq & ~ \cdots \\
\leq & ~ \tau \cdot \max_{t_{[j-1]} \in [0,1]^{j-1} } \max_{t_j \in [0,1]} \max_{t_{[d]\backslash [j]} \in [0,1]^{d-j}} |f^{(N+1)}_{t_j} ( t_{[j-1]}, t_j, t_{ [d] \backslash [j] } ) |
\end{align*}
where the last step follows by re-using $\omega_i >0 $ and $\sum_{i = 1}^N \omega_i = 1$.

We have
\begin{align*}
& ~ \left| \int_{ [0,1]^j } g_j ( t_{ [j] } ) \mathrm{d} t_{[j]} - \int_{ [0,1]^{j-1} } g_{j-1} ( t_{ [j-1] } ) \mathrm{d} t_{ [j-1] } \right| \\
= & ~ \left| \int_{ [0,1]^j } g_j ( t_{ [j] } ) \mathrm{d} t_{[j]} - \int_{ [0,1]^{j-1} } \sum_{ i_j = 1 }^N \omega_{i_j} g_{j} (t_{ [j-1] }, s_{i_j}) \mathrm{d} t_{[j-1]} \right| \\
= & ~ \left| \int_{ [0,1]^{j-1} } \left( g_j ( t_{[j-1]}, t_j ) \mathrm{d} t_j - \sum_{i_j=1}^N \omega_{i_j} g_j ( t_{[j-1]}, s_{i_j} )  \right) \mathrm{d} t_{ [j-1] } \right| \\
\leq & ~ \int_{ [0,1]^{j-1} } \left| g_j ( t_{[j-1]}, t_j ) \mathrm{d} t_j - \sum_{i_j=1}^N \omega_{i_j} g_j ( t_{[j-1]}, s_{i_j} ) \right| \mathrm{d} t_{ [j-1] } \\
\leq & ~ \tau \cdot \max_{t_{[j-1]} \in [0,1]^{j-1} } \max_{t_j \in [0,1]} \max_{t_{[d]\backslash [j]} \in [0,1]^{d-j}} |f^{(2N)}_{t_j} ( t_{[j-1]}, t_j, t_{ [d] \backslash [j] } ) | \\
= & ~ \tau \cdot \max_{ t \in [0,1]^d } \left| \frac{ \partial^{(2N)} f}{ \partial t_j^{(2N)} (t) }  \right|.
\end{align*}
\end{proof}

\newpage 

\section{Perturbed Volumetric Center Cutting Plane Method}\label{sec:vaidya}

Recall that the feasibility problem is defined as follows.

\begin{itemize}
\item[] \textbf{Feasibility Problem}: Given a separation oracle for a set
$K$ contained in a box of radius $R$ either find a point $x\in K$
or prove that $K$ does not contain a ball of radius $\epsilon$.
\end{itemize}
Let $K\subset\mathbb{R}^{n}$ be a convex set that is contained in
a box of radius $R$ and equipped with a separation oracle. To solve the feasibility problem, a cutting plane method iteratively
refines a feasible region $\Omega$ and a query point $z$. In each
iteration, $z$ is the input to the separation oracle which
then either certifies $z\in K$ or produces a hyperplane $a^{\top}x = b$
separating $z$ from $K$. The algorithm then refines $\Omega$ and
computes the next query point $z^{\text{(new)}}$ from the new separating hyperplane and any past information.

At a high level, Vaidya's method heavily employs \emph{leverage scores} to decide which hyperplane to add or to drop from the feasible region. The main innovation behind our faster implementation is a data structure that maintains an estimate
of the leverage scores in amortized $O(n^{2})$. This removes the bottleneck in Vaidya's algorithm and yields the following result.

\begin{restatable}{theorem}{mainvaidya} \label{thm:mainvaidya}
Given a separation oracle for a convex set $K \subset \mathbb{R}^n$ that is contained in a box of radius $R$ and a parameter $\epsilon > 0$, 
there is a cutting plane method that either computes a point in $K$ or proves that $K$ does not contain a ball of radius $\epsilon$ in $O((n\SO+n^{3})\log(\kappa))$ time, where $\SO$ is the complexity of the separation oracle and $\kappa = nR / \epsilon$. 

\end{restatable}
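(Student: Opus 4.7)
I would instantiate Vaidya's volumetric center cutting plane framework and replace its exact leverage score computation by an approximate maintenance data structure of amortized cost $O(n^2)$ per step. Vaidya maintains a polytope $\Omega = \{x : Ax \geq b\}$ enclosing $K$, queries the oracle at the volumetric center (the minimizer of $\tfrac12\log\det(A^\top S_x^{-2} A)$), and each iteration either adds the returned separating half-space or drops a constraint whose leverage score is too small, so that the number of rows stays $O(n)$ and the volumetric barrier value decays geometrically. A potential function argument yields $O(n\log\kappa)$ iterations, contributing the $O(n\cdot\SO\log\kappa)$ term. What remains is to implement every iteration in amortized $O(n^2)$ time.

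\textbf{Step 1: Perturbed Vaidya.} The first step is to verify that Vaidya's procedure retains both correctness and its $O(n\log\kappa)$ iteration count if each iteration is supplied with approximate leverage scores $\wt\sigma$ satisfying $\norm{\wt\sigma - \sigma}_2 \leq 1/\log^{O(1)}(n)$. I would follow the potential-function analysis of \cite{v89,lsw15}, rebalancing parameters so that this $\ell_2$-noise budget suffices to support the row add/drop decisions and the Newton re-centering step. This is the content of Section~\ref{sec:perturb}.

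\textbf{Step 2: Leverage score maintenance.} The heart of the proof is the multi-layer data structure of Section~\ref{sec:main_leverage_score}. Given the sequence of weights $w^1, w^2, \ldots$ generated by Vaidya (consecutive weights $\ell_2$-close in $\log$), I need to output $\wt\sigma^t$ with $\ell_2$-error $1/\log^{O(1)}(n)$ in amortized $O(n^2)$ per step. I would use a three-level scheme: an inner level applying a coarse deterministic update every step (Lemma~\ref{lem:leverage_score_moving_simple}), a middle level refining every $T_{\inn}$ steps via the same simple formula on a longer batch, and an outer randomized level refining every $T_{\inn} T_{\mid}$ steps via the fine-grained formula (Lemma~\ref{lem:leverage_score_moving_complicated}) combined with Gaussian-quadrature discretization of the integrals (Theorem~\ref{thm:multiple_variable_integral}), JL sketching, and the inverse-maintenance primitive of \cite{cls19} to keep an approximate $M(w)^{-1} = (A^\top W A)^{-1}$ available as a preconditioner. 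After $T_{\out}$ outer steps (roughly $n^{\omega-2+o(1)}$ total weight updates) the whole structure is rebuilt in $O(n^\omega)$ time so that errors cannot compound. Critically, the three layers are tuned so that each rectangular multiplication hits one of the fast regimes in Theorem~\ref{thm:fast_matrx_multiplication}, namely $r = \log^{O(1)} n$, $r = n^{0.17}$, and $r = n^{0.31}$ respectively; geometric summation of per-layer costs yields $O(n^2)$ amortized per step.

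\textbf{Main obstacle.} The main obstacle is that batching $T$ weight changes at the middle and outer layers, although essential for amortization, allows individual coordinates of $w$ to drift by an exponential-in-$T$ factor in $\ell_\infty$, even though each single step is $\ell_2$-close. This blow-up would destroy the naive use of a stale inverse as a preconditioner and prevent a direct application of the Woodbury identity. I plan to resolve this via the batched low-rank update of Section~\ref{sec:batched}: split every batch into groups of $\poly\log(n)$ coordinates that have undergone large multiplicative changes, apply Woodbury to each such small group separately, and rely on the maintained inverse only on the remaining benign coordinates, on which preconditioning converges in $\poly\log(n)$ iterations. Combining the perturbed Vaidya analysis, the layered maintenance scheme, and the batched low-rank update produces the claimed $O((n\cdot\SO + n^3)\log\kappa)$ runtime.
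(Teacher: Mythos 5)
Your proposal follows the paper's proof of Theorem~\ref{thm:mainvaidya} essentially step for step: run Vaidya's volumetric-center method with leverage scores replaced by $\wt\sigma$ of $\ell_2$-error $1/\log^{O(1)}(n)$, verify via a perturbed potential argument (Lemmas~\ref{lem:con1satisfied}--\ref{lem:ratePreserved}) that this tolerance preserves correctness and the $O(n\log\kappa)$ iteration count, and supply $\wt\sigma$ from the three-layer data structure of Theorem~\ref{thm:leverage_score_maintain_main}, whose inner/middle/outer levels target the $r = \log^{O(1)} n$, $n^{0.17}$, $n^{0.31}$ rectangular-multiplication regimes, with the batched low-rank scheme of Section~\ref{sec:batched} absorbing exponential $\ell_\infty$-drift by working in $\poly\log(n)$-length phases. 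This is the same decomposition and the same key lemmas as the paper, so no further remarks are needed.
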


Since the proof of the validity of Vaidya's method for our result is mostly a perturbed version of his analysis, we defer the proof of Theorem~\ref{thm:mainvaidya} to Section~\ref{sec:perturb}.
\newpage
\section{Main Data Structure for Leverage Score Maintenance}\label{sec:main_leverage_score}

In this section, we present our main data structure for leverage score maintenance that achieves an amortized $O(n^2)$ time per update. 
Combined with Theorem~\ref{thm:mainvaidya} in Section~\ref{sec:vaidya}, our main data structure implies a faster $O(n\SO\log(\kappa)+n^{3}\log(\kappa))$ time cutting plane method. 
Our main data structure further uses the simple leverage score maintenance data structure in Section~\ref{sec:simple_leverage_score} and the complicated leverage score maintenance data structure in Section~\ref{sec:complicated_leverage_score}.

\begin{theorem}[Leverage score maintenance] \label{thm:leverage_score_maintain_main}
Given an initial matrix $A \in \R^{m \times n}$ with $m = O(n)$, initial weight $w \in \R_+^m$.
Then for any constant $c > 0$, there is a randomized data structure (Algorithm~\ref{alg:maintain_leverage_score_main}) that approximately maintains the  leverage scores
\begin{align*}
\sigma_i(w) = (\sqrt{W} A (A^{\top} W A)^{-1} A^{\top} \sqrt{W})_{i,i}, 
\end{align*}
for positive diagonal matrices $W = \diag(w)$ through the following operations:
\begin{enumerate}
	\item $\textsc{Init}(A,w)$: takes $O(n^{\omega + o(1)})$ time to intialize the data structure.
	\item $\textsc{Update}(\mathsf{act})$: updates the data structure for the single update $\mathsf{act}$. 
	\item $\textsc{Query}( )$: takes $O(n)$ time to output a vector $\wt{\sigma} \in \R^m$.
\end{enumerate}
Moreover, if the sequence of $K$ updates $(A^{(k)} \in \R^{m^{(k)} \times n}, w^{(k)} \in \R_+^{m^{(k)}} )$ with $m^{(k)} = O(n)$ satisfies that each update $k$ is one of the following:
\begin{enumerate}
	\item Insertion (resp. deletion) of row $a$ with weight $w_a$ into (resp. from) $(A^{(k-1)}, w^{(k-1)})$ that satisfies
	\begin{align*}
	w_a a a^\top \preceq 0.01 (A^{(k-1)})^\top W^{(k-1)} A^{(k-1)} .
	\end{align*}
	\item Update $w^{(k-1)} \in \R_+^{m^{(k-1)}}$ to $w^{(k)} \in \R_+^{m^{(k)}}$ such that $m^{(k-1)} = m^{(k)}$ and that
	\begin{align*}
	\norm{\log (w^{(k)}) - \log (w^{(k-1)})}_2 \leq 0.01 ,
	\end{align*}
\end{enumerate}
then the function $\textsc{Update}(\mathsf{act})$ takes an amortized $O(n^2)$ time, and the vector $\wt{\sigma}^{(k)}$ output by $\textsc{Query}( )$ at each step $k \in [K]$ satisfies that
\begin{align} \label{eqn:l2_error_guarantee_main}
	 \norm{\wt{\sigma}^{(k)} - \sigma(w^{(k)})}_2 \leq O(1/\log^c(n)) .
\end{align}

\end{theorem}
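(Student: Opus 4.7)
The plan is to assemble the data structure as the three-layer scheme sketched in Section~\ref{sec:tech} (Figure~\ref{fig:our_algorithm}), together with a periodic full recomputation. We initialize by computing $(A^\top W A)^{-1}$ and $\sigma(w)$ exactly from the input, which by standard Cholesky/elimination on an $n\times n$ matrix costs $O(n^{\omega+o(1)})$. We then maintain three approximate weight vectors $v_{\inn}^{(k)}, v_{\mid}^{(k)}, v_{\out}^{(k)}$ and corresponding leverage-score estimates. The inner layer advances on every update, the middle layer advances every $T_{\inn}$ updates, the outer layer advances every $T_{\inn}T_{\mid}$ updates, and the whole data structure is reinitialized every $T_{\inn}T_{\mid}T_{\out}$ updates; the parameters will be set as in Table~\ref{table:table_of_constants_main} so that each level's per-step amortized cost is $O(n^2)$.

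At the implementation level, I would plug in the two building blocks already developed earlier in the paper. For the inner and middle layers we invoke the simple (coarse-grained, deterministic) maintenance of Section~\ref{sec:simple_leverage_score}, which approximates the integral in Lemma~\ref{lem:leverage_score_moving_simple} by a single point via Lemma~\ref{lem:compute_sigma_change_simple} and therefore has higher error but fits in $\T_{\mat}(n,n,\poly\log n)=O(n^2)$ (inner) and $\T_{\mat}(n,n,n^{0.17})=O(n^2\log^2 n)$ (middle) per batch. For the outer layer we invoke the complicated (fine-grained, randomized) maintenance of Section~\ref{sec:complicated_leverage_score}, which uses the fine-grained formula (Lemma~\ref{lem:leverage_score_moving_complicated}) with the discretization of the multi-variable integral from Theorem~\ref{thm:multiple_variable_integral}, the ``tall'' JL sketch $R\in\R^{n^c\times n}$ with $c$ matched to Gall--Urrutia so that $\T_{\mat}(n,n,n^{0.31})=O(n^{2+o(1)})$, and the preconditioned inverse-maintenance of Lemma~\ref{lem:precondition}. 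Between layer boundaries we apply the batched low-rank update of Section~\ref{sec:batched} (Theorem~\ref{thm:leverage_batch}) so that we can pay for a whole batch of updates with a single rectangular multiplication; this is essential because while the per-update $\ell_2$-change in $\log w$ is bounded by $0.01$, the aggregate change over a batch of length $T_{\inn}T_{\mid}$ need not be, and the batched routine splits the work into groups of $\poly\log(n)$ whose weight ratios are only quasi-polynomial and hence preconditionable.

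The amortized analysis is then a direct accounting: the restart cost $n^{\omega+o(1)}$ is amortized over $T_{\inn}T_{\mid}T_{\out}=\Theta(n^{\omega-2+o(1)})$ updates yielding $O(n^{2+o(1)})$ per update; each outer step costs $\T_{\mat}(n,n,n^{0.31})\cdot\poly\log(n)$ amortized over $T_{\inn}T_{\mid}$ steps; each middle step costs $\T_{\mat}(n,n,n^{0.17})\cdot\poly\log(n)$ amortized over $T_{\inn}$ steps; each inner step costs $\T_{\mat}(n,n,\poly\log n)=O(n^2)$ directly. The parameters $T_{\inn},T_{\mid},T_{\out}$ are chosen so that each of these four terms is $O(n^2)$. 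The output at the \textsc{Query} step is the current inner estimate, which is read off in $O(n)$ time after caching.

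The hardest part will be verifying the error bound~(\ref{eqn:l2_error_guarantee_main}). I would prove it by nested induction over the layer boundaries. Within one inner batch of $T_{\inn}$ updates, the per-step error from evaluating the single-integral approximation accumulates additively in the $\ell_2$ norm at a rate governed by the $0.01$ bound from the input assumption, so one must choose $T_{\inn}$ small enough that the inner drift stays below $\epsilon_{\inn}$. The middle layer resets this drift but itself accumulates error of its own, which must stay below $\epsilon_{\mid}$ over $T_{\mid}$ middle steps; and the outer layer resets the middle error, introducing in turn the randomized error of the JL discretization bounded (Sections~\ref{subsec:upper_bound_eta_alpha_beta_gamma}--\ref{subsec:variance_bound_random_gaussian_matrix}) by $\sum_i\Var[\sigma_{i,\jl}]\le O(\epsilon^2/n^c)$, which must remain $O(\log^{-2c}n)$ after $T_{\out}$ outer steps. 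The preconditioner error from the stale inverse inside the batched low-rank step (Theorem~\ref{thm:leverage_batch}) must be combined with all of the above, and the three-way interaction of discretization, JL variance, and preconditioning is where the constants in Table~\ref{table:table_of_constants_main} must be jointly tuned; once that is done the final bound~(\ref{eqn:l2_error_guarantee_main}) follows by a union bound and the triangle inequality over the three layers.
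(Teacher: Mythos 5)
Your proposal follows essentially the same route as the paper's proof of Theorem~\ref{thm:leverage_score_maintain_main}: a three-layer data structure with the simple deterministic maintenance (Theorem~\ref{thm:maintain_leverage_score_simple}) at the inner and middle levels, the complicated randomized maintenance (Theorem~\ref{thm:maintain_leverage_score_complicated}) at the outer level, the batched low-rank update of Theorem~\ref{thm:leverage_batch} across batch boundaries, a periodic full restart, and a direct amortized accounting plus layer-by-layer error accumulation. The paper's proof is precisely this outline made explicit, packaged into a running-time paragraph (treating inner, middle, outer, and restart separately) and an error paragraph.

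Two details in your accounting need to be nailed down before this becomes a proof. First, the restart amortization is off by an $n^{o(1)}$ factor: you write $T_{\inn}T_{\mid}T_{\out}=\Theta(n^{\omega-2+o(1)})$, which only gives $O(n^{2+o(1)})$ per update, not the claimed $O(n^2)$. With the parameters of Table~\ref{table:table_of_constants_main} one has $T_{\inn}T_{\mid}T_{\out}=\log^{10}(n)\cdot n^{0.01}\cdot n^{\omega-2}=n^{\omega-1.99}\log^{10}n$, and the polynomial slack coming from $T_{\mid}=n^{0.01}$ is exactly what absorbs Le Gall's $n^{o(1)}$ factor and yields a genuine $O(n^2)$ amortized restart cost. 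Second, when you carry out the error accounting you will find that the dominating term is the inner layer, not the outer: each inner step incurs error $O(\epsilon_{\inn})=O(\log^{-25}n)$ and the inner estimate is refreshed only every $T_{\inn}=\log^{10}n$ steps, so the inner drift is $O(\epsilon_{\inn}T_{\inn})=O(\log^{-15}n)$; the middle drift (about $n^{-0.06}$) and the total outer JL variance over $T_{\out}=n^{\omega-2}$ steps (about $n^{-0.1}$) are polynomially smaller. Thus the final $O(1/\log^{c}n)$ in~(\ref{eqn:l2_error_guarantee_main}) is governed by $\epsilon_{\inn}\cdot T_{\inn}$, whereas your error paragraph centers the analysis on the outer layer's variance. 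Neither issue is a structural gap---the architecture is right---but these are the places where the constants in Table~\ref{table:table_of_constants_main} actually do work, and where the written proof must be explicit rather than deferred to a "joint tuning."
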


The proof of Theorem~\ref{thm:leverage_score_maintain_main} relies on the data structures in Sections~\ref{sec:batched},~\ref{sec:simple_leverage_score} and~\ref{sec:complicated_leverage_score}.

\begin{proof}[Proof of Theorem~\ref{thm:leverage_score_maintain_main}]
We first prove the running time upper bound of the different functions. 
Notice that the running time upper bound for $\textsc{Init}(A,w)$, and $\textsc{Query}( )$ are immediate corollaries of Theorem~\ref{thm:maintain_leverage_score_simple} and~\ref{thm:maintain_leverage_score_complicated}. 
We prove running time upper bound for the function $\textsc{Update}(\mathsf{act})$ in the following. 
In particular, we show that the inner, middle and outer phases all run in amortized $O(n^2)$ time, and the amortized time to restart the data structure in Step~\ref{step:restart_data_structure_main} is $O(n^2)$.

We start by analyzing the running time of the inner phase. 
Since each call to $\textsc{Update}(\mathsf{act})$ makes one call of $\textsc{Update}(\mathsf{act})$ with $\mathsf{act}$ having a single action $\mathsf{act}$, it follows from Theorem~\ref{thm:maintain_leverage_score_simple} that the inner phase makes one call to $\textsc{pm}.\textsc{update}$ and takes extra time at most $O(n^2)$.
Since we choose our parameter as $\epsilon_{\inn} = 1/\log^{25}(n)$ (see Table~\ref{table:table_of_constants_main}), it then follows from Theorem~\ref{thm:maintain_projection_inn} with $C = O(1)$ that the amortized time per call to $\textsc{pm}.\textsc{update}$ is $O(n^2)$. 
Therefore, the inner phase runs in amortized time $O(n^2)$. 

Next we analyze the running time of the middle phase.
Notice that each middle update happens once every $T_{\inn} = \log^{10}(n) \leq n^{0.08}$ calls to $\textsc{Update}(\mathsf{act})$, so Assumption~\ref{assumption:short_sequence_small_update} holds for the middle update in Step~\ref{step:middle_update_main} of Algorithm~\ref{alg:maintain_leverage_score_main}. 
It then follows from Theorem~\ref{thm:maintain_leverage_score_simple} that for every $T_{\inn}$ calls to the function $\textsc{Update}(\mathsf{act})$, the middle phase makes one call to $\textsc{pm}.\textsc{update}$ and takes extra time at most $O(n^2 T_{\inn})$. 
Since we choose $\epsilon_{\mid} = 1/n^{0.08}$, it follows from Theorem~\ref{thm:maintain_projection_mid} with $C = T_{\inn} = \log^{10}(n)$ that the time for one call to $\textsc{pm}.\textsc{update}$ is $O(n^2 \log^2(n))$.
Amortized over the $T_{\inn} = \log^{10}(n)$ calls to the function $\textsc{Update}(\mathsf{act})$, the amortized running time for the middle phase is thus $O(n^2)$.

Finally we analyze the running time of the outer phase. 
Each outer update happens once every $T_{\inn} \cdot T_{\mid} = n^{0.01}\log^{10}(n)$  calls to $\textsc{Update}(\mathsf{act})$, so Assumption~\ref{assumption:short_sequence_small_update} holds for the outer update in Step~\ref{step:outer_update_main} of Algorithm~\ref{alg:maintain_leverage_score_main}. 
It then follows from Theorem~\ref{thm:maintain_leverage_score_complicated} that for every $T_{\inn} \cdot T_{\mid}$ calls to the function $\textsc{Update}(\mathsf{act})$, the outer phase makes $O(N) = O(\log^2 (n))$ calls to $\textsc{pm}.\textsc{update}$ and takes extra time at most 
\begin{align*}
O(n^2 \cdot T_{\inn} \cdot T_{\mid} + T(n,n,r_{\out}) \cdot N^3 \cdot \log(n)) 
= O(n^2 \cdot T_{\inn} \cdot T_{\mid} + n^{2 + o(1)} \log^7 (n)) .
\end{align*}
Since we choose $\epsilon_{\out} = 1/n^{0.1}$, it follows from Theorem~\ref{thm:maintain_projection_mid} with $C = T_{\inn} \cdot T_{\mid}$ that the time for $O(N)$ calls to $\textsc{pm}.\textsc{update}$ is $O(n^{2 + o(1)})$.  
Amortized over the $T_{\inn} \cdot T_{\mid} = n^{0.01}\log^{10}(n)$ calls to the function $\textsc{Update}(\mathsf{act})$, the amortized running time for the outer phase is again $O(n^2)$.

It remains to upper bound the amortized time to restart the data structure due to Step~\ref{step:restart_data_structure_main}.
We note that restarting takes time $O(n^{\omega + o(1)})$ every $T_{\inn} \cdot T_{\mid} \cdot T_{\out} = \Omega(n^{\omega - 1.99})$ calls to the function $\textsc{Update}(\mathsf{act})$. 
It follows that the amortized time to restart the data structure is $O(n^{\omega + o(1)}) / (T_{\inn} \cdot T_{\mid} \cdot T_{\out}) = O(n^2)$. 
This finishes the running time analysis.

Now we proceed to prove the error guarantee given in the second part of the theorem.
We only prove~(\ref{eqn:l2_error_guarantee_main}) for $c = 15$ as it's easy to adjust the parameter settings in Table~\ref{table:table_of_constants_main} to achieve the guarantee for any constant $c > 0$.
By Theorem~\ref{thm:maintain_leverage_score_simple}, the $\ell_2$-error of the inner phase is $O(\epsilon_{\inn}) = O(\log^{-25}(n))$ per call to $\textsc{Update}(\mathsf{act})$ and the inner phase runs for at most $T_{\inn} = \log^{10}(n)$ steps before its estimate gets refined by the estimate of the middle or outer phases (Step~\ref{step:refine_estimate_middle_to_inner_main} and~\ref{step:refine_estimate_outer_to_inner_main}). 
Therefore, the error introduced by the inner phase is at most $O(\epsilon_{\inn}) \cdot T_{\inn} = O(\log^{-15}(n))$. 
For the middle phase, it follows from Theorem~\ref{thm:maintain_leverage_score_simple} that the $\ell_2$-error introduced by each middle step is at most $O( (\epsilon_{\mid} + T_{\inn} \cdot n^{-0.08}) \cdot T_{\inn} ) \leq n^{-0.07}$.
Since the middle phase run for at most $T_{\mid} = n^{0.01}$ steps before its estimate gets refined by the estimate of the outer phase (Step~\ref{step:refine_estimate_outer_to_middle_main}), 
the error of the middle phase is at most $n^{-0.07} \cdot n^{0.01} = n^{-0.06}$.
For the outer phase, the error in Theorem~\ref{thm:maintain_leverage_score_complicated} is negligible so it suffices to bound the variance.
From Theorem~\ref{thm:maintain_leverage_score_complicated}, the variance in each outer step is $O(N^3/r_{\out}) \cdot \epsilon_{\out}^2 \cdot (T_{\inn} \cdot T_{\mid})^2 \leq O(n^{-0.48})$. 
Since the outer phase run for at most $T_{\out} = n^{\omega - 2}$ steps before the data structure gets restarted, the variance accumulated in the $T_{\out}$ steps is at most $O(n^{-0.1})$. 
Therefore, we have $\norm{\wt{\sigma}^{(k)} - \sigma(w^{(k)})}_2 \leq O(\log^{-15}(n))$ for each $k \in [K]$. 
This completes the proof of Theorem~\ref{thm:leverage_score_maintain_main}.

\end{proof}

\begin{table}[htp!]  
\centering
\begin{tabular}{| l | l | l |}
	\hline
	{\bf Parameter} & {\bf Value} & {\bf Description} \\ \hline
	$T_{\inn}$ & $\log^{10} n$   & Number of inner steps per middle step \\ \hline
	$T_{\mid}$ & $n^{0.01}$       & Number of middle step per outer step \\ \hline
	$T_{\out}$ & $n^{\omega - 2}$ & Number of outer step before re-computing\\ \hline
	$r_{\out}$ & $n^{0.31}$       & Matrix multiplication dimension for outer step  \\ \hline
	$\epsilon_{\inn}$ & $1/\log^{25} n$ & Inner step projection maintenance error \\ \hline
	$\epsilon_{\mid}$ & $1/n^{0.08}$ & Middle step projection maintenance error \\ \hline
	$\epsilon_{\out}$ & $1/n^{0.1}$ &  Outer step projection maintenance error \\ \hline
	$N$   & $100 \log^2 n$ & Number of discrete sampling points in an outer step \\ \hline
\end{tabular}
\caption{Values of parameters used in the main data structure for leverage score maintenance in Algorithm~\ref{alg:maintain_leverage_score_main}.}\label{table:table_of_constants_main}
\end{table}

\begin{algorithm}[htp!]\caption{}\label{alg:maintain_leverage_score_main}
\begin{algorithmic}[1]
\State {\bf data structure} \textsc{MaintainLeverageScoreMain} \Comment{Theorem~\ref{thm:leverage_score_maintain_main}}
\State
\State {\bf members}
\State \hspace{4mm} $\ctr_{\inn}, \ctr_{\mid},\ctr_{\out} \in \mathbb{Z}_+$ \Comment{Step counters for different phases} 
\State \hspace{4mm} $\mathsf{acts}_{\mid}, \mathsf{acts}_{\out} \in \textsc{Vector} \langle \textsc{Action} \rangle$  \Comment{Delayed sequence of updates}
\State \hspace{4mm} \Comment{\textsc{Action} includes insertion, deletion and $w$-update}
\State \hspace{4mm} \textsc{MaintainLeverageScoreSimple} $\textsc{simp}_{\inn}$, $\textsc{simp}_{\mid}$ \Comment{For inner and middle phases, Theorem~\ref{thm:maintain_leverage_score_simple}, Algorithm~\ref{alg:maintain_leverage_score_simple}}
\State \hspace{4mm} \textsc{MaintainLeverageScoreComplicated} $\textsc{comp}_{\out}$ \Comment{For outer phase, Theorem~\ref{thm:maintain_leverage_score_complicated}, Algorithm~\ref{alg:maintain_leverage_score_complicated}}
\State {\bf end members}
\State
\Procedure{\textsc{Init}}{$A \in \R^{m \times n}, w \in \R_+^m$}  \Comment{Initialization}
	\State $\ctr_{\inn}, \ctr_{\mid},\ctr_{\out} \leftarrow 0$ \Comment{Reset counters}
	\State $\mathsf{acts}_{\mid}, \mathsf{acts}_{\out} \leftarrow \textsc{NULL}$ 
	\State $\textsc{simp}_{\inn}$.\textsc{Init}($A,w,\epsilon_{\inn}$)
	\State $\textsc{simp}_{\mid}$.\textsc{Init}($A,w,\epsilon_{\mid}$)
	\State $\textsc{comp}_{\out}$.\textsc{Init}($A,w,r_{\out}, N ,\epsilon_{\out}$) \Comment{Initialize member data structures}
\EndProcedure
\State
\Procedure{\textsc{Update}}{$\mathsf{act} \in \textsc{Action}$}  \Comment{Update estimate for a single update}
	\State Add $\mathsf{act}$ to $\mathsf{acts}_{\mid}$ and $\mathsf{acts}_{\out}$
	\State $\textsc{simp}_{\inn}$.\textsc{Update}($\mathsf{act}$), $\ctr_{\inn} \leftarrow \ctr_{\inn} + 1$   \Comment{Inner update}
	\If {$\ctr_{\inn} = T_{\inn}$}  \Comment{Number of inner steps reaches limit}
		\State $\textsc{simp}_{\mid}$.\textsc{Update}($\mathsf{acts}_{\mid}$), $\ctr_{\mid} \leftarrow \ctr_{\mid} + 1$  \Comment{Middle update}
		\label{step:middle_update_main}
		\State $ \mathsf{acts}_{\mid} \leftarrow \textsc{NULL}$
		\If {$\ctr_{\mid} = T_{\mid}$}  \Comment{Number of middle steps reaches limit}
			\State $\textsc{comp}_{\out}$.\textsc{Update}($\mathsf{acts}_{\out}$), $\ctr_{\out} \leftarrow \ctr_{\out} + 1$ 
			\Comment{Outer update}
			\label{step:outer_update_main}
			\State $ \mathsf{acts}_{\out} \leftarrow \textsc{NULL}$
			\If {$\ctr_{\out} = T_{\out}$} \Comment{Number of outer steps reaches limit}
				\State \textsc{Self}.\textsc{Init}($\textsc{simp}_{\inn}.A,w^{\new}$)   \Comment{Restart the data structure}
				\label{step:restart_data_structure_main}
				\State \textbf{end procedure}
			\EndIf
			\State $\ctr_{\inn} \leftarrow 0, \ctr_{\mid} \leftarrow 0$ \Comment{Restart counters for inner/middle phases}
			\State $\textsc{simp}_{\inn}.\textsc{RefineEstimate}(\textsc{comp}_{\out}.\textsc{Query}())$ 
			\label{step:refine_estimate_outer_to_inner_main}
			\State $\textsc{simp}_{\mid}.\textsc{RefineEstimate}(\textsc{comp}_{\out}.\textsc{Query}())$ \Comment{Refine $\wt{\sigma}_{\inn}$ and $\wt{\sigma}_{\mid}$}
			\label{step:refine_estimate_outer_to_middle_main}
		\Else
			\State $\ctr_{\inn} \leftarrow 0$ \Comment{Restart counter for inner phase}
			\State $\textsc{simp}_{\inn}.\textsc{RefineEstimate}(\textsc{simp}_{\mid}.\textsc{Query}())$  \Comment{Refine $\wt{\sigma}_{\inn}$}
			\label{step:refine_estimate_middle_to_inner_main}
		\EndIf
	\EndIf	
\EndProcedure
\State
\Procedure{\textsc{Query}}{$ $}  
	\State \Return $\textsc{simp}_{\inn}$.\textsc{Query}() \Comment{Return the estimate maintained by $\textsc{simp}_{\inn}$}
\EndProcedure
\State
\State {\bf end data structure}
\end{algorithmic}
\end{algorithm} 
\newpage 
\section{Batched Low Rank Update}\label{sec:batched}

\subsection{Simple low rank update}

In this section, we show how to maintain leverage score under low
rank update. First, we start with a preconditioning lemma showing
that given two spectrally similar matrix, we can invert one matrix faster 
by knowing the inverse of the other matrix.
\begin{lemma}
\label{lem:precondition}Given $n\times n$ PSD matrices $A$ and
$M$ such that $0\preceq M\preceq A\preceq\kappa\cdot M$. For any integer $t \geq 1$ such that $\kappa(1 - 1/ \kappa)^{t + 1}<1$, we have
\begin{align*}
f(M,t) \preceq A^{-1}\preceq\frac{1}{1 - \kappa( 1 - 1 / \kappa )^{t+1}} \cdot f(M,t)
\end{align*}
where $f : \R^{n\times n} \times \mathbb{N} \rightarrow \R^{n \times n}$ is defined as
\begin{align*}
f(M,t) = \frac{1}{\kappa} M^{-1} \sum_{i=0}^{t} (I-\frac{1}{\kappa}AM^{-1})^{i}.
\end{align*}
Furthermore, if $M^{-1}$ and $A$ are given explicitly, for any $V\in\R^{n\times r}$,
then we can compute $f(M,t) \cdot V$ in $O(\T_{\mat}(n,n,r) \cdot t )$ time.
\end{lemma}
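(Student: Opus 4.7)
The plan is to reduce the asymmetric expression for $f(M,t)$ to a symmetric Neumann series in a preconditioned coordinate system, bound the truncation error there, and then conjugate back.

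First, I would introduce the symmetric matrix $\widetilde{C} := \tfrac{1}{\kappa} M^{-1/2} A M^{-1/2}$. The hypothesis $M \preceq A \preceq \kappa M$ gives, after conjugating by $M^{-1/2}$ and dividing by $\kappa$, that $\tfrac{1}{\kappa} I \preceq \widetilde{C} \preceq I$. In particular $I - \widetilde{C}$ is PSD with eigenvalues in $[0, 1 - 1/\kappa]$, so the Neumann series $\widetilde{C}^{-1} = \sum_{i \geq 0} (I - \widetilde{C})^i$ converges. Setting $S_t := \sum_{i=0}^{t} (I-\widetilde{C})^i$, the identity $\widetilde{C}^{-1} - S_t = (I-\widetilde{C})^{t+1}\widetilde{C}^{-1}$ lets me eigendecompose: for each eigenvalue $\lambda \in [1/\kappa,1]$ of $\widetilde{C}$, the corresponding eigenvalue of $\widetilde{C}^{-1}-S_t$ is $(1-\lambda)^{t+1}/\lambda$, which is at most $\kappa(1-1/\kappa)^{t+1}/\lambda$. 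Hence
\begin{equation*}
0 \preceq \widetilde{C}^{-1} - S_t \preceq \kappa\bigl(1-\tfrac{1}{\kappa}\bigr)^{t+1}\widetilde{C}^{-1},
\end{equation*}
which rearranges to $\bigl(1 - \kappa(1-1/\kappa)^{t+1}\bigr)\widetilde{C}^{-1} \preceq S_t \preceq \widetilde{C}^{-1}$, using the hypothesis $\kappa(1-1/\kappa)^{t+1}<1$ to keep both sides PSD.

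Next, I would convert back to the original coordinates. By definition $\tfrac{1}{\kappa} M^{-1/2}\widetilde{C}^{-1} M^{-1/2} = A^{-1}$, and the key algebraic identity I need is
\begin{equation*}
M^{-1/2}\bigl(I - \widetilde{C}\bigr)^{i} M^{-1/2} \;=\; M^{-1}\bigl(I - \tfrac{1}{\kappa} A M^{-1}\bigr)^{i},
\end{equation*}
which follows by expanding $(I-\widetilde{C})^{i}$ as a polynomial in $\widetilde{C}$, using the simple rule $M^{-1/2}(M^{-1/2} A M^{-1/2})^k M^{-1/2} = M^{-1}(AM^{-1})^k$ term by term, and collecting. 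Summing over $i=0,\dots,t$ and multiplying by $1/\kappa$ yields $\tfrac{1}{\kappa} M^{-1/2} S_t M^{-1/2} = f(M,t)$. Conjugating the sandwich inequality for $S_t$ by $\tfrac{1}{\sqrt{\kappa}}M^{-1/2}$ on both sides then delivers exactly the claimed spectral bounds on $f(M,t)$ versus $A^{-1}$.

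For the runtime claim, I would compute $f(M,t)\cdot V$ iteratively without ever forming dense intermediates as $n\times n$ matrices. Starting from $U_0 := V$, set $U_{i+1} := U_i - \tfrac{1}{\kappa} A (M^{-1} U_i)$: each step requires one multiplication of an $n\times n$ matrix against an $n\times r$ matrix for $M^{-1}$ and one for $A$, costing $O(\T_{\mat}(n,n,r))$ per step. After $t$ steps, accumulate $\sum_{i=0}^{t} U_i$ and apply $\tfrac{1}{\kappa} M^{-1}$ once more, for a total of $O(\T_{\mat}(n,n,r)\cdot t)$.

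The only nontrivial step I anticipate is the algebraic identity linking the symmetric form $M^{-1/2}(I-\widetilde{C})^{i} M^{-1/2}$ to the asymmetric form $M^{-1}(I - AM^{-1}/\kappa)^{i}$; everything else is a direct eigenvalue calculation on the PSD ordering of a Neumann partial sum, together with a standard iterative matrix multiplication.
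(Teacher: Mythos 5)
Your proof is correct and follows essentially the same route as the paper's: both conjugate by $M^{-1/2}$ to reduce to the symmetric matrix $\frac{1}{\kappa}M^{-1/2}AM^{-1/2}$, bound the truncated Neumann series against the full inverse there (the paper via a scalar inequality applied functionally, you via an explicit eigenvalue decomposition), conjugate back using the same algebraic identity $M^{-1/2}(I-\widetilde{C})^i M^{-1/2}=M^{-1}(I-\tfrac{1}{\kappa}AM^{-1})^i$, and obtain the runtime by evaluating the polynomial against $V$ left to right. (As a small aside, your eigenvalue bound $(1-\lambda)^{t+1}/\lambda \le \kappa(1-1/\kappa)^{t+1}/\lambda$ is looser than needed — $(1-\lambda)^{t+1}\le(1-1/\kappa)^{t+1}$ already holds — but this matches the slack already present in the lemma statement and in the paper's proof, which instead loses the factor $\kappa$ through $M^{-1}\preceq\kappa A^{-1}$.)
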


\begin{proof}
Note that
\begin{align}\label{eq:A_taylor}
A^{-1} 
 = & ~ M^{-\frac{1}{2}}(M^{-\frac{1}{2}}AM^{-\frac{1}{2}})^{-1}M^{-\frac{1}{2}}\notag \\
 = & ~ \frac{1}{\kappa}M^{-\frac{1}{2}}(I-(I-\frac{1}{\kappa}M^{-\frac{1}{2}}AM^{-\frac{1}{2}}))^{-1}M^{-\frac{1}{2}}.
\end{align}
Using
\begin{align*}
0\preceq I- (1/\kappa) \cdot M^{-\frac{1}{2}}AM^{-\frac{1}{2}} \preceq ( 1 - 1 / \kappa ) \cdot I
\end{align*}
and 
\begin{align*}
\sum_{i=0}^{t} x^{i} \leq(1-x)^{-1} \leq \sum_{i=0}^{t} x^{i} + \kappa ( 1 - 1 / \kappa )^{t+1} ~~~~ \forall x \in [0, 1 - 1 / \kappa ],
\end{align*}
we have
\begin{align}
\sum_{i=0}^{t} \Big( I-\frac{1}{\kappa}M^{-\frac{1}{2}}AM^{-\frac{1}{2}} \Big)^{i} 
& \preceq \Big( I-(I-\frac{1}{\kappa}M^{-\frac{1}{2}}AM^{-\frac{1}{2}}) \Big)^{-1} \nonumber \\
& \preceq \sum_{i=0}^{N} \Big( I-\frac{1}{\kappa}M^{-\frac{1}{2}}AM^{-\frac{1}{2}} \Big)^{i} + \kappa \cdot ( 1 - 1 / \kappa )^{t+1} I \label{eq:MAM_taylor}
\end{align}
Multiplying $\frac{1}{\kappa} M^{-1}$ on both sides of the above equation with a formula $A^{-1}$ (Eq.~\eqref{eq:A_taylor}), we get
\begin{align*}
\frac{1}{\kappa} M^{-1} \sum_{i=0}^{t} \Big( I-\frac{1}{\kappa}M^{-\frac{1}{2}}AM^{-\frac{1}{2}} \Big)^{i} \preceq A^{-1} \preceq \frac{1}{\kappa} M^{-1} \sum_{i=0}^{t} \Big( I-\frac{1}{\kappa}M^{-\frac{1}{2}}AM^{-\frac{1}{2}} \Big)^{i} +  ( 1 - 1 / \kappa )^{t+1} M^{-1}
\end{align*}

Using the definition of $f(M,t)$, we have that
\begin{align*}
f(M,t)\preceq A^{-1} \preceq f(M,t)+( 1 - 1 / \kappa )^{t+1} M^{-1}.
\end{align*}
Using $M^{-1} \preceq \kappa A^{-1}$, we have
\begin{align*}
f(M,t)\preceq A^{-1} \preceq f(M,t)+\kappa \cdot ( 1 - 1 / \kappa )^{t+1} A^{-1}.
\end{align*}
The result follows from some rearranging.
\end{proof}
Now, we show how to do leverage score update under monotone updates.
\begin{lemma}\label{lem:simple_low_rank}
Given a matrix $A\in\R^{m\times n}$,
non-negative weights $w,w^{\new}\in\R^{m}$ such that $w^{\new}\geq w$
with $m=O(n)$ and $\| w^{\new} - w \|_0 = k$. Let $\beta > 1$ denote the parameter such that $A^{\top} W^{\new} A \preceq \beta \cdot A^{\top} W A$. Given some explicit matrix $U \in \R^{n \times n}$ and $U^{\new} \in \R^{n \times n}$ such
that 
\begin{align*}
U^{-1} \preceq & ~ A^{\top}WA \preceq ( 1 + 1/\log n ) U^{-1} , \\
(U^{\new})^{-1}\preceq & ~ A^{\top}W^{\new}A\preceq ( 1 + 1 / \log n ) (U^{\new})^{-1}.
\end{align*}
 Then, for any $\epsilon \in (0,1/2)$, we can output a vector $c \in \R^m$ such that
\begin{align*}
\|c-(\sigma(w^{\new})-\sigma(w))\|_{2}\leq\epsilon
\end{align*}
in time 
\begin{align*}
O( \T_{\mat} (n,n,k) \cdot (1+\log_{\log (n)}( \beta k / \epsilon ) ) ).
\end{align*}

The same statement holds for decreasing $w$, namely $w^{\new}\leq w$, with
$A^{\top}W^{\new}A\succeq\beta^{-1}A^{\top}WA$.
\end{lemma}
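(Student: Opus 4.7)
The approach is to combine the Woodbury identity (Fact~\ref{fac:woodbury_matrix_identity}) with the preconditioning of Lemma~\ref{lem:precondition}. Let $S = \{i : w_i^{\new} \neq w_i\}$ (so $|S| = k$), let $A_S \in \R^{k \times n}$ denote the submatrix of rows indexed by $S$, and let $D = \diag(w_S^{\new} - w_S) \in \R^{k \times k}$. Writing $M = A^\top W A$ and $M_{\new} = A^\top W^{\new} A = M + A_S^\top D A_S$, Woodbury yields
\begin{align*}
M_{\new}^{-1} = M^{-1} - M^{-1} A_S^\top C^{-1} A_S M^{-1}, \qquad C := D^{-1} + A_S M^{-1} A_S^\top \in \R^{k \times k}.
\end{align*}
Substituting into $\sigma_i(w^{\new}) - \sigma_i(w) = w^{\new}_i\, a_i^\top M_{\new}^{-1} a_i - w_i\, a_i^\top M^{-1} a_i$ produces the clean expression
\begin{align*}
c_i = (w^{\new}_i - w_i)\, a_i^\top M^{-1} a_i \;-\; w^{\new}_i\, (A_S M^{-1} a_i)^\top C^{-1} (A_S M^{-1} a_i),
\end{align*}
whose first term vanishes for $i \notin S$.

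Every quantity above can be read off from the single matrix $X := M^{-1} A_S^\top \in \R^{n \times k}$: the matrix $Y := A X \in \R^{m \times k}$ has $i$-th row equal to $(A_S M^{-1} a_i)^\top$, and the $k \times k$ block $A_S X$ equals $A_S M^{-1} A_S^\top$, from which we recover both $C = D^{-1} + A_S X$ and the $k$ scalars $\{a_i^\top M^{-1} a_i\}_{i \in S}$ on its diagonal. The plan is therefore to compute $\wt X \approx M^{-1} A_S^\top$ by $t$ iterations of Lemma~\ref{lem:precondition}, taking its ``$M$'' to be $U^{-1}$ and its ``$A$'' to be $M$. The hypothesis $U^{-1} \preceq M \preceq (1 + 1/\log n) U^{-1}$ gives $\kappa = 1 + 1/\log n$, so each iteration contracts the error by $1 - 1/\kappa \sim 1/\log n$ at cost $O(\T_{\mat}(n,n,k))$. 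I would then form $\wt Y = A \wt X$ and $\wt C = D^{-1} + A_S \wt X$ in another $O(\T_{\mat}(n,n,k))$ time, invert the $k \times k$ matrix $\wt C$ in $O(k^\omega)$ time, and assemble each $c_i$ from $\wt Y_{i,:} \wt C^{-1} \wt Y_{i,:}^\top$ plus the diagonal contribution in $O(k^2)$ time per row.

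The main obstacle is calibrating $t$ so that the final $\ell_2$ error of $c$ is at most $\epsilon$. Lemma~\ref{lem:precondition} delivers $f(U^{-1},t) \preceq M^{-1} \preceq (1+\delta)\, f(U^{-1},t)$ with $\delta = O((1/\log n)^{t})$, which gives a $(1 + O(\delta))$-multiplicative approximation to every quadratic form $v^\top M^{-1} v$. Propagating this through the $k \times k$ inverse $C^{-1}$ costs a factor controlled by the conditioning of $C$, and here one exploits $A_S^\top D A_S \preceq (\beta - 1) M \Rightarrow D^{1/2} A_S M^{-1} A_S^\top D^{1/2} \preceq (\beta - 1) I$, so that $D^{1/2} C D^{1/2} \in [I, \beta I]$ and $\mathrm{cond}(C) \leq \beta$. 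A coordinate-wise perturbation analysis combined with $\sum_i \sigma_i(w) = \sum_i \sigma_i(w^{\new}) = n$ will yield a bound of the form $\|\wt c - c\|_2 \lesssim \delta \cdot \beta k$, so taking $\delta \leq \epsilon / (\beta k)$ forces $t = O(1 + \log_{\log n}(\beta k / \epsilon))$ and the total running time $O(\T_{\mat}(n,n,k) \cdot t)$ claimed in the lemma. The decreasing case is handled by the same Woodbury expansion with $D$ now negative; the hypothesis $A^\top W^{\new} A \succeq \beta^{-1} A^\top W A$ plays the analogous role in bounding the conditioning of $C$, and the remaining arithmetic is identical.
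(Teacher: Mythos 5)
Your computational plan is essentially the one the paper follows: a Woodbury expansion of $M_{\new}^{-1}$ around $M^{-1}$, with the inverse $M^{-1}$ replaced by a short polynomial $U_{\wt\epsilon} := f(U^{-1},t)$ from Lemma~\ref{lem:precondition}. Your decomposition $\sigma(w^{\new})-\sigma(w) = (w^{\new}-w)\tau(w) + w^{\new}(\tau(w^{\new})-\tau(w))$ is the mirror image of the paper's $(w^{\new}-w)\tau(w^{\new}) + w(\tau(w^{\new})-\tau(w))$; both are valid, the cost being at most an extra $\beta$ in the first-term bound (absorbed by a constant change in $t$). The runtime accounting is also correct.

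The gap is in the error analysis, and it is not a small one: that is where all the technical content of the lemma sits. You write that ``a coordinate-wise perturbation analysis combined with $\sum_i \sigma_i(w) = n$ will yield $\|\wt c - c\|_2 \lesssim \delta\,\beta k$,'' but this is not obviously true, and I do not see how your sketch delivers it. The quantity you must control is $\sum_i (w^{\new}_i)^2 \bigl| Y_i^\top C^{-1} Y_i - \wt Y_i^\top \wt C^{-1} \wt Y_i \bigr|^2$, where every entry inherits errors from both the perturbed vectors $\wt Y_i$ and the perturbed $k\times k$ inverse $\wt C^{-1}$. A naive row-by-row treatment does not give you the structural cancellation needed to extract a clean $\poly(\beta,k)\cdot\delta$ bound without unfavorable dimension blow-up, and the identity $\sum_i\sigma_i(w)=n$ does not rescue it. (Also note that $D^{1/2}CD^{1/2}\in[I,\beta I]$ does not imply $\mathrm{cond}(C)\leq\beta$; only the $D^{1/2}$-rescaled matrix is well-conditioned.)

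What makes the paper's bound go through is a specific reinterpretation that your sketch omits: the approximation $\wt Y_i^\top\wt C^{-1}\wt Y_i$ is precisely the \emph{exact} Woodbury expansion of the genuine PSD matrix $U_{\wt\epsilon}^{-1} + \Delta$, so the second-term error equals $w_i\bigl(A\bigl((M+\Delta)^{-1} - (U_{\wt\epsilon}^{-1}+\Delta)^{-1} - M^{-1} + U_{\wt\epsilon}\bigr)A^\top\bigr)_{i,i}$. That combination is a second-order difference $M_{1,1}^{-1}-M_{1,0}^{-1}-M_{0,1}^{-1}+M_{0,0}^{-1}$ where $M_{t,s}=M + t\Delta_2 + s\Delta$ with $\Delta_2 = U_{\wt\epsilon}^{-1}-M$; writing it as a double integral of $\frac{\d^2}{\d t\,\d s}M_{t,s}^{-1}$ and bounding the Frobenius norm via the chain of trace inequalities $\tr\bigl[M_{t,s}^{-\frac12}\Delta M_{t,s}^{-1}\Delta_2 M^{-1}\Delta_2 M_{t,s}^{-1}\Delta M_{t,s}^{-\frac12}\bigr]\leq \wt\epsilon^2\|M^{-\frac12}\Delta M^{-\frac12}\|_F^2\leq \wt\epsilon^2\beta^2 k$ is exactly the step that produces the factor $\beta\sqrt{k}$ and hence the claimed $t=O(1+\log_{\log n}(\beta k/\epsilon))$. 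Without articulating this ``exact Woodbury on the nearby PSD matrix'' step and the trace/Frobenius machinery, the proof is not complete.
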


\begin{proof}
We note that the decreasing case follows from the increasing case
by swapping $w$ and $w^{\new}$. Hence, we focus on the increasing
case. Using Lemma \ref{lem:precondition}, we can construct $U_{\epsilon}$
such that $U_{\epsilon}$ is low-degree polynomial of $U$ and $A^{\top}WA$
and that
\begin{equation}
U_{\epsilon}\preceq(A^{\top}WA)^{-1}\preceq(1+\epsilon)U_{\epsilon}.\label{eq:U_def}
\end{equation}
We define $U_{\epsilon}^{\new}$ similarly. We will show how to use
$U_{\epsilon}$ and $U_{\epsilon}^{\new}$ to approximate $\sigma(w^{\new})-\sigma(w)$.
First, we note that
\begin{align*}
\sigma(w^{\new})-\sigma(w)=(w^{\new}-w)\cdot\tau(w^{\new})+w\cdot(\tau(w^{\new})-\tau(w)).
\end{align*}
We define $c_i$, $c_{1,i}$, $c_{2,i}$ as follows:
\begin{align*}
c_i = & ~ \sigma(w^{\new})_i - \sigma(w)_i \\
= & ~ \underbrace{ (w^{\new}-w)_i \cdot \tau(w^{\new})_i }_{c_{1,i}} + \underbrace{ w_i \cdot (\tau(w^{\new})-\tau(w))_i }_{c_{2,i}} 
\end{align*}
We estimate the right hand side by
\begin{equation}\label{eq:low_rank_c}
\wt{c}_{i}:= \underbrace{ (w^{\new}-w)_{i}\cdot(AU_{\wt{\epsilon}}^{\new}A^{\top})_{i,i} }_{ \wt{c}_{1,i} }+ \underbrace{ w_{i}\cdot(A((U_{\wt{\epsilon}}^{-1}+\Delta)^{-1}-U_{\wt{\epsilon}})A^{\top})_{i,i} }_{ \wt{c}_{2,i} } 
\end{equation}
where $\Delta=A^{\top}W^{\new}A-A^{\top}WA$ and $\wt{\epsilon}=\frac{\epsilon}{3\beta\sqrt{k}}$.

\paragraph{Cost of computing $\wt{c}$}

For the first term in \eqref{eq:low_rank_c}. We note that $w^{\new}-w$
has only $k$ non-zeros. Hence, we only need to compute $AU_{\wt{\epsilon}}^{\new}A^{\top}$
on $k$ of the diagonals. Let $A_{S}\in\R^{k\times n}$ where each
rows of $A$ that $w^{\new}\neq w$. Then, we can compute the $k$
of the diagonals via the whole matrix
\begin{align*}
A_{S}U_{\wt{\epsilon}}^{\new}A_{S}^{\top}.
\end{align*}
Lemma \ref{lem:precondition} shows that $U_{\wt{\epsilon}}^{\new}$ has $O(1+\frac{\log\wt{\epsilon}}{\log\log (n)})$
terms and that it takes 
\begin{align*}
O\left( \T_{\mat}(n,n,k) \cdot \Big(1+\frac{\log\wt{\epsilon}}{\log\log (n)} \Big) \right)
\end{align*}
to compute $U_{\wt{\epsilon}}^{\new}A_{S}^{\top}$. Finally, it takes
extra $\T_{\mat}(k,n,k)=O(\T_{\mat}(n,n,k))$ to do the left multiplication on $A_{S}$.

For the second term in \eqref{eq:low_rank_c}. Woodbury matrix identity
shows that
\begin{align*}
(U_{\wt{\epsilon}}^{-1}+\Delta)^{-1}-U_{\wt{\epsilon}}=U_{\wt{\epsilon}}A_{S}^{\top}(\Delta_{W}^{-1}+A_{S}U_{\wt{\epsilon}}A_{S}^{\top})^{-1}A_{S}U_{\wt{\epsilon}}
\end{align*}
where $\Delta_{W}\in\R^{k\times k}$ be $W^{\new}-W$ restricted on
non-zeros. By same argument above, we can compute $A_{S}U_{\wt{\epsilon}}A_{S}^{\top}$
in 
\begin{align*}
O \left( \T_{\mat}(n,n,k) \cdot \Big(1+\frac{\log\wt{\epsilon}}{\log\log (n)} \Big) \right)
\end{align*}
time. 

Then,
we can compute $(\Delta_{W}^{-1}+A_{S}U_{\wt{\epsilon}}A_{S}^{\top})^{-1}$
in $\T_{\mat}(k,k,k)=O(\T_{\mat}(n,n,k))$ time. Then, we compute
\begin{align*}
AU_{\wt{\epsilon}}A_{S}^{\top}(\Delta_{W}^{-1}+A_{S}U_{\wt{\epsilon}}A_{S}^{\top})^{-1}\qquad\text{in time }\T_{\mat}(m,n,k)=O(\T_{\mat}(n,n,k))
\end{align*}
and
\begin{align*}
A_{S}U_{\wt{\epsilon}}=(U_{\wt{\epsilon}}A_{S}^{\top})^{\top}\qquad\text{in time }O \left(\T_{\mat}(n,n,k) \cdot \Big( 1 + \frac{\log\wt{\epsilon}}{\log\log (n)} \Big) \right).
\end{align*}
Now, we multiple the two terms above together in time $\T_{\mat}(n,k,n)=O(\T_{\mat}(n,n,k))$.
Hence, the total time is 
\begin{align*}
O \left( \T_{\mat} (n,n,k) \cdot \Big( 1+\frac{\log\wt{\epsilon}}{\log\log (n)} \Big) \right) = O( \T_{\mat} (n,n,k)(1+\log_{\log (n)}(  \beta k / \epsilon))).
\end{align*}

\paragraph{The accuracy of $\wt{c}$}

For the first term in $\wt{c}$, we have
\begin{align*}
A (A^{\top}W^{\new}A)^{-1} A^\top \preceq AU_{\wt{\epsilon}}^{\new}A^{\top} \preceq (1+\wt{\epsilon}) \cdot A (A^{\top}W^{\new}A)^{-1} A^\top .
\end{align*}
Hence, we have
\begin{align*}
|\wt{c}_{1,i} - c_{1,i}|
= & ~ \left|(w^{\new}-w)_{i}\left(AU_{\wt{\epsilon}}^{\new}A^{\top}-A(A^{\top}W^{\new}A)^{-1}A^{\top}\right)_{i,i}\right| \\ 
\leq & ~ \wt{\epsilon} \cdot | w^{\new}_i - w_i | \cdot ( A(A^{\top}W^{\new}A)^{-1}A^{\top} )_{i,i}\\
\leq & ~ \wt{\epsilon} \cdot w_{i}^{\new} \cdot ( A (A^{\top}W^{\new}A)^{-1} A^{\top} )_{i,i}\\
\leq & ~ \wt{\epsilon}
\end{align*}
where we used that $w^{\new}\geq w$ in the second inequality and
leverage score is upper bounded by $1$ in the last equality. Hence
the $\ell_{\infty}$ norm error of the first term is given by $\wt{\epsilon}$.
Hence, the $\ell_{2}$ norm error is given by 
\begin{equation}
\| \wt{c}_1 - c_1 \|_2
\leq\sqrt{k}\cdot\wt{\epsilon}.\label{eq:c_err_1}
\end{equation}

For the second term in $\wt{c}$, we have that the error is given by
\begin{align*}
\wt{c}_{2,i} - c_{2,i} = w_{i}\cdot(A((U_{\wt{\epsilon}}^{-1}+\Delta)^{-1}-U_{\wt{\epsilon}})A^{\top})_{i,i}-w_{i}\cdot(A((M+\Delta)^{-1}-M^{-1})A^{\top})_{i,i}
\end{align*}
where $M=A^{\top}WA$. To simplify the notation, we define $M_{t,s} = M + t \cdot \Delta_{2} + s \cdot \Delta$ where $\Delta_{2} = U_{\wt{\epsilon}}^{-1} - M$. Then, the error term becomes
\begin{equation}
(\sqrt{W}A(M_{1,1}^{-1}-M_{1,0}^{-1}-M_{0,1}^{-1}+M_{0,0}^{-1})A^{\top}\sqrt{W})_{i,i}\label{eq:c_2_err_1}
\end{equation}
Note that
\begin{align}
 & M_{1,1}^{-1}-M_{1,0}^{-1}-M_{0,1}^{-1}+M_{0,0}^{-1} \nonumber \\
= & \int_{0}^{1}\int_{0}^{1}\frac{ \d^{2} }{ \d t \d s}M_{t,s}^{-1} \d t \d s \nonumber \\
= & \int_{0}^{1}\int_{0}^{1}M_{t,s}^{-1}\Delta_{2}M_{t,s}^{-1}\Delta M_{t,s}^{-1}+M_{t,s}^{-1}\Delta M_{t,s}^{-1}\Delta_{2}M_{t,s}^{-1} \d t \d s\label{eq:c_2_err_2}
\end{align}
Combining \eqref{eq:c_2_err_1} and \eqref{eq:c_2_err_2}, we have that the $\ell_{2}$ norm error for the second term is bounded by
\begin{align*}
\| \wt{c}_2 - c_2 \|_2 
\leq & ~ \left\| \int_{0}^{1}\int_{0}^{1}\sqrt{W}A(M_{t,s}^{-1}\Delta_{2}M_{t,s}^{-1}\Delta M_{t,s}^{-1}+M_{t,s}^{-1}\Delta M_{t,s}^{-1}\Delta_{2}M_{t,s}^{-1})A^{\top}\sqrt{W} \d t \d s \right\|_{F}\\
\leq & ~ 2 \max_{t \in[0,1], s \in[0,1]} \left\| \sqrt{W}AM_{t,s}^{-1}\Delta_{2}M_{t,s}^{-1}\Delta M_{t,s}^{-1}A^{\top}\sqrt{W} \right\|_{F}.
\end{align*}
To bound the Frobenius norm, we note that $\Delta \succeq 0$ by the
assumption and that $\Delta_{2}=U_{\wt{\epsilon}}^{-1}-M \succeq 0$ \eqref{eq:U_def}.
Hence, we have that $M_{t,s}\succeq M$ and
\begin{align*}
 & \|\sqrt{W}AM_{t,s}^{-1}\Delta_{2}M_{t,s}^{-1}\Delta M_{t,s}^{-1}A^{\top}\sqrt{W}\|_{F}^{2}\\
= & \tr \Big[ A^{\top}WAM_{t,s}^{-1}\Delta_{2}M_{t,s}^{-1}\Delta M_{t,s}^{-1}A^{\top}WAM_{t,s}^{-1}\Delta M_{t,s}^{-1}\Delta_{2}M_{t,s}^{-1} \Big] \\
= & \tr \Big[ M^{\frac{1}{2}}M_{t,s}^{-1}\Delta_{2}M_{t,s}^{-1}\Delta M_{t,s}^{-1}MM_{t,s}^{-1}\Delta M_{t,s}^{-1}\Delta_{2}M_{t,s}^{-1}M^{\frac{1}{2}} \Big] \\
\leq & \tr \Big[ M^{\frac{1}{2}}M_{t,s}^{-1}\Delta_{2}M_{t,s}^{-1}\Delta M_{t,s}^{-1}\Delta M_{t,s}^{-1}\Delta_{2}M_{t,s}^{-1}M^{\frac{1}{2}} \Big] \\
= & \tr \Big[ M_{t,s}^{-\frac{1}{2}}\Delta M_{t,s}^{-1}\Delta_{2}M_{t,s}^{-1}M^{\frac{1}{2}}M^{\frac{1}{2}}M_{t,s}^{-1}\Delta_{2}M_{t,s}^{-1}\Delta M_{t,s}^{-\frac{1}{2}} \Big] \\
\leq & \tr \Big[ M_{t,s}^{-\frac{1}{2}}\Delta M_{t,s}^{-1}\Delta_{2}M^{-1}\Delta_{2}M_{t,s}^{-1}\Delta M_{t,s}^{-\frac{1}{2}} \Big] .
\end{align*}
\eqref{eq:U_def} shows that $\Delta_{2} \preceq \wt{\epsilon} \cdot M$ and hence
$(M^{-\frac{1}{2}} \Delta_{2} M^{-\frac{1}{2}})^{2} \preceq \wt{\epsilon}^{2} \cdot I$.
Continuing the last equation, we have
\begin{align*}
\|\sqrt{W}AM_{t,s}^{-1}\Delta_{2}M_{t,s}^{-1}\Delta M_{t,s}^{-1}A^{\top}\sqrt{W}\|_{F}^{2} & \leq \wt{\epsilon}^{2} \cdot \tr \Big[ M_{t,s}^{-\frac{1}{2}}\Delta M_{t,s}^{-1}MM_{t,s}^{-1}\Delta M_{t,s}^{-\frac{1}{2}} \Big] \\
 & \leq \wt{\epsilon}^{2} \cdot  \tr \Big[ M_{t,s}^{-\frac{1}{2}}\Delta M_{t,s}^{-1}\Delta M_{t,s}^{-\frac{1}{2}} \Big] \\
 & \leq \wt{\epsilon}^{2} \cdot \|M^{-\frac{1}{2}}\Delta M^{-\frac{1}{2}}\|_{F}^{2}.
\end{align*}
Finally, using $\Delta$ has rank $k$, we have
\begin{align}
\| \wt{c}_{2} - c_2 \|_2
\leq & ~ 2\wt{\epsilon} \cdot \|M^{-\frac{1}{2}}\Delta M^{-\frac{1}{2}}\|_{F} \notag \\
\leq & ~ 2\wt{\epsilon} \cdot \sqrt{k}\|M^{-\frac{1}{2}}\Delta M^{-\frac{1}{2}}\| \notag \\
\leq & ~ 2\wt{\epsilon} \cdot \beta\sqrt{k}.\label{eq:c_err_2}
\end{align}

Combining \eqref{eq:c_err_1} and \eqref{eq:c_err_2}, the total $\ell_{2}$
error is bounded by $2\wt{\epsilon}\beta\sqrt{k}+\sqrt{k}\cdot\wt{\epsilon}\leq3\wt{\epsilon}\beta\sqrt{k}$.
This explains the choice of $\wt{\epsilon}$.
\end{proof}

\subsection{Batched low rank update}

When we are given a sequence of update and we need to compute the
change of leverage score, we cannot apply Lemma \ref{lem:simple_low_rank}
directly for the following reasons:
\begin{itemize}
\item The update in general involves both increasing some weight and decreasing
some weight. This can be fixed by splitting the update into positive
update and negative update. (See Step 1 in Algorithm \ref{alg:leverage_batch})
\item To get error $\frac{1}{n^{O(1)}}$, Lemma \ref{lem:simple_low_rank}
takes at least $n^{2}\log (n)$ time. Hence, it is too costly to do
a rank 1 update, which can happen if we have alternating positive
and negative updates (since we can only apply Lemma \ref{lem:simple_low_rank}
for a monotone change). This can be fixed by moving all positive/update
and insert in the front of the update sequence. (See Step 2 in Algorithm
\ref{alg:leverage_batch})
\item Our $\ell_{2}$ update are generally dense. This can be fixed by ignoring
all small multiplicative update. (See Line \ref{line:sparse1} and
\ref{line:sparse2} of Step 3 in Algorithm \ref{alg:leverage_batch})
\item In general, the matrices $A^{\top}WA$ can change by a factor of $e^{\Theta(T)}$ after $T$ iterations. So, Lemma \ref{lem:simple_low_rank}
takes at least $\Omega(\T_{\mat}(n,n,k)\frac{T}{\log\log (n)})$ time. This
forces us to do rank $k$ changes where $\T_{\mat}(n,n,k)\leq n^{2}\log\log (n)$.
Hence, for $\ell_{2}$ update, we need to handle coordinates
with not-so-small multiplicative updates separately. This can be fixed by cutting
the update sequence into phases of length $L$ and noticing there
are not too many coordinate that is changed by $e^{\Theta(L)}$ factor.
(See the rest of Step 3 in Algorithm \ref{alg:leverage_batch})
\end{itemize}
Given the above intuitions, Theorem \ref{thm:leverage_batch} follows from multiple use of
Lemma \ref{lem:simple_low_rank}.
\begin{theorem}[Batched Update of Leverage Score]
\label{thm:leverage_batch}
Given an initial matrix $A\in\R^{m\times n}$,
initial weight $w\in\R_{+}^{m}$ and a sequence of $T$ updates that
belong to one of the following forms:
\begin{itemize}
\item {\bf Update:} Update the weight to $w$
\item {\bf Insert:} Insert a row $a \in \R^n$ to the matrix $A \in \R^{m \times n}$ and weight $w_{a}$ to the vector $w$
\item {\bf Delete:} Delete a row $a \in \R^n$ from the matrix $A \in \R^{m \times n}$ and weight $w_{a}$ from the vector $w$
\end{itemize}
Let the sequence of matrices and vectors be $A^{(0)},A^{(1)},\cdots,A^{(T)} \in \R^{m \times n}$ and $w^{(0)},w^{(1)},\cdots,w^{(T)} \in \R^m$. We assume that $T\leq n^{0.08}$, the maximum number of rows is bounded by $O(n)$, and that the changes satisfy
\begin{itemize}
\item {\bf Update:} $\|\log w^{(k)}-\log w^{(k-1)}\|_{2} \leq 0.01$
\item {\bf Insert/Delete:} $w_{a}aa^{\top} \preceq 0.01 \cdot A^{(k-1)} W^{(k-1)} A^{(k-1)}$
\end{itemize}
Then, there is an algorithm (\textsc{BatchedUpdate} in Algorithm~\ref{alg:leverage_batch} and \ref{alg:leverage_batch2}) that finds a vector $v \in \R^n$ and a vector $c \in \R^n$ such that 
\begin{align*}
\|\log v-\log w^{(T)}\|_{\infty} \leq T \cdot n^{-0.08}, \|\log v-\log w^{(T)}\|_{2} \leq 0.01 \cdot T
\end{align*}
and that 
\begin{align*}
\| c - ( \sigma_{A^{(T)}} ( v ) - \sigma_{A^{(0)}} ( w^{(0)} ) ) \|_{2} \leq 1 / n^{100} .
\end{align*}
in time $O(n^{2}T)$.
\end{theorem}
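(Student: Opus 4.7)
I would follow the four reductions enumerated in the bullets preceding the theorem, using Lemma~\ref{lem:simple_low_rank} as the workhorse. Step~1 splits each $w$-update into its coordinatewise positive and negative parts, and treats each row insertion as a pure positive update and each deletion as a pure negative update. Step~2 reorders the resulting sequence so that every positive update is processed first (producing a monotone nondecreasing sub-sequence from $w^{(0)}$ up to the componentwise maximum $w^{+}$), followed by every negative update (a monotone nonincreasing sub-sequence from $w^{+}$ down to the eventual surrogate weight $v$). Each sub-sequence is now monotone, which is exactly the setting of Lemma~\ref{lem:simple_low_rank}; since both $\sigma_A(\cdot)$ and the final $v$ depend only on the endpoints of the reordered sequence (not on the order of the intermediate increases/decreases), this reordering introduces no additional error.

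\paragraph{Sparsification and phase decomposition.}
Step~3 addresses the fact that a per-step weight change may be dense and that cumulative changes may grow as $e^{\Omega(T)}$, either of which would blow up the time per invocation of Lemma~\ref{lem:simple_low_rank}. To avoid invoking the lemma with $k=\Theta(n)$, I zero out any coordinate whose per-step multiplicative change has $|\log w^{(k)}_i/w^{(k-1)}_i|\le n^{-0.08}$; this drops $\ell_\infty$-mass at most $n^{-0.08}$ per step and $\ell_2$-mass at most $0.01$ per step, which sum over $T$ steps to the stated bounds $\|\log v-\log w^{(T)}\|_\infty\le T n^{-0.08}$ and $\|\log v-\log w^{(T)}\|_2\le 0.01 T$. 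To avoid the $\log_{\log n}(\beta)$ factor in Lemma~\ref{lem:simple_low_rank} blowing up, I then partition the $T$-step sequence into phases of length $L$ (to be tuned). Within a phase, the per-step $\ell_2$ bound $0.01$ and Cauchy--Schwarz imply that the number of coordinates whose cumulative multiplicative change exceeds any fixed constant is only $O(L^2)$; these ``fast'' coordinates are handled individually or in small groups via extra calls to Lemma~\ref{lem:simple_low_rank} with small $k$ and small $\beta$, while the remaining ``slow'' coordinates are batched into one call whose spectral ratio $\beta$ is only polynomial in $n$.

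\paragraph{Budgeting and the main obstacle.}
Setting the accuracy target in each invocation of Lemma~\ref{lem:simple_low_rank} to $\epsilon = n^{-200}/T$ makes the total $\ell_2$-error on $c=\sum_b c_b$ at most $n^{-100}$ by the triangle inequality. The cost in each phase is dominated by $O\bigl(\T_{\mat}(n,n,k_b)\cdot \log_{\log n}(n\beta/\epsilon)\bigr)$ summed over the batches $b$ in that phase; choosing $L$ so that each $k_b$ fits into the polylogarithmic row of Theorem~\ref{thm:fast_matrx_multiplication} (i.e.\ $\T_{\mat}(n,n,k_b)=O(n^2)$) and so that $\beta=\poly(n)$ within a phase yields $O(n^2)$ per call up to a $\log n/\log\log n$ factor, and summing over all $O(T)$ calls gives the claimed $O(n^2 T)$ bound. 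The hardest part is exactly this joint calibration: the phase length $L$ and the sparsification threshold must be coordinated so that (a) the number of ``fast'' coordinates per phase stays polylogarithmic, (b) the spectral ratio $\beta$ within a phase stays polynomial (so the Neumann series of Lemma~\ref{lem:precondition} converges in $O(\log n/\log\log n)$ rounds), and (c) the number of phases times the time per phase still amortizes to $O(n^2T)$. Once these parameters are tuned, composing the per-batch guarantees of Lemma~\ref{lem:simple_low_rank} and accumulating the $\ell_\infty$ and $\ell_2$ errors in $\log v$ is routine.
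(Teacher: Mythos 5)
Your structural decomposition mirrors the paper's exactly: split each $w$-update into positive and negative parts, reorder so all increases precede all decreases (so each block is monotone and eligible for Lemma~\ref{lem:simple_low_rank}), chop the sequence into phases of some tunable length $L$, discard coordinates whose multiplicative change falls below $\eta=n^{-0.08}$, and accumulate leverage-score deltas batch by batch. Your accounting for $\|\log v-\log w^{(T)}\|_\infty$ and $\|\log v-\log w^{(T)}\|_2$ is also fine. The problem is the calibration step, which you correctly identify as the crux but then get backwards.

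You assign a \emph{small} spectral ratio $\beta$ to the group of fast-moving coordinates and a \emph{polynomial} $\beta$ to the slow coordinates, and conclude the phase-level condition is ``$\beta$ stays polynomial.'' The roles are reversed. A coordinate is \emph{fast} precisely because its cumulative log-change over a phase of length $L$ exceeds a constant, so the aggregate change in $A^\top WA$ contributed by the fast block has spectral ratio $\beta = 2^{\Theta(L)}$, which for $L=\log^3 n$ is quasi-polynomial, not polynomial. What makes this affordable is that there are only $O(L^2)$ fast coordinates (your Cauchy--Schwarz counting is correct), so the rank $k$ in the corresponding call to Lemma~\ref{lem:simple_low_rank} is polylogarithmic, $\T_{\mat}(n,n,k)=O(n^2)$ by Theorem~\ref{thm:fast_matrx_multiplication}, and the overhead $\log_{\log n}(\beta k/\epsilon)=O(L/\log\log n)$ fits in the $O(n^2 L)$ per-phase budget. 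Conversely, once the fast coordinates are split off --- the paper realizes this via the entrywise median $w^{(k-1/2)}=\mathrm{median}(\tfrac12 w^{(k-1)}, w^{(k)}, 2w^{(k-1)})$ --- the slow update changes every surviving coordinate by at most a constant factor, so its $\beta=O(1)$; there the bottleneck shifts to the rank $k=O(L^2/\eta^2)$, which must lie under the $n^{0.17}$ threshold. If you insist that $\beta$ be polynomial across the board, you are forced to shrink $L$ to $O(\log n)$, and then the slow-coordinate call, whose cost is $\T_{\mat}(n,n,L^2/\eta^2)\cdot O(\log n/\log\log n)=O(n^2\log^3 n/\log\log n)$ because the rank still sits near $n^{0.17}$, overshoots the per-phase budget $O(n^2 L)=O(n^2\log n)$. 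The two batches have opposite binding parameters (fast: large $\beta$, small $k$; slow: $\beta=O(1)$, large $k$), and $L=\log^3 n$ is the compromise that reconciles them; your stated calibration cannot be met and would not yield $O(n^2 T)$.
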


\begin{proof}
$\ $\linebreak{}

\paragraph{Correctness}

Note that the algorithm \ref{alg:leverage_batch} involves reducing
the sequence $\{M^{(0,k)}\}$ to $\{M^{(1,k)}\}$ to $\{M^{(2,k)}\}$
and finally to $\{M^{(3,k)}\}$. All steps maintains
the matrix at $k=0$. The first two steps maintains the matrix at
the last step. For the last step, we only ignore the rows with multiplicative
changes less than $\eta=n^{-0.08}$. Since there are at most $T/L=n^{0.08}/\log^{3}(n)$
phases, the total accumulated multiplicative changes for one row is
$Tn^{-0.08}$. Therefore, we have
\begin{align*}
M^{(3,\textsc{End})} = (A^{(T)})^{\top} V A^{(T)}
\end{align*}
for some $v$ such that $\| \log v - \log w^{(T)}\|_{\infty} \leq Tn^{-0.08}$.
Furthermore, we output the good enough approximation of $\sigma_{A^{(T)}}(v)-\sigma_{A^{(0)}}(w^{(0)})=\sigma(M^{(3,\textsc{End})})-\sigma(M^{(3,0)})$
where we abused the notation to use $\sigma(M)$ to denote the leverage
score of the rows insides $M$. This is simply because we split the
difference into
\begin{align*}
\sigma(M^{(3,\textsc{End})})-\sigma(M^{(3,0)}) = \sum_{k = 1}^{T} \sigma(M^{(3,k)})-\sigma(M^{(3,k-1)})
\end{align*}
and estimate each step with $\ell_{2}$ error $n^{-1000}$.
There are at most $8T$ many terms, so the total error is less than
$n^{-100}$. (The number of terms can increases by at most a factor of 8 due to step 1, step 2 and step 4.)

\paragraph{Ratio on $M$ within each phase}

By the assumptions on insert/delete and update, we see that $0.8M^{(0,k-1)}\preceq M^{(0,k)}\preceq1.2M^{(0,k-1)}$
for all $k$. The first step clearly maintains this relation. We claim
that the second step also maintains this relation. To see this, we
let $\Delta^{(1,k)} = M^{(1,k+1)}-M^{(1,k)}$. By the relation, we have
that
\begin{equation}\label{eq:M_relation}
-0.2M^{(1,k)} \preceq \Delta^{(k)} \preceq 0.2M^{(1,k)}.
\end{equation}
Note that step 2 simply permutes $\Delta^{(1,k)}$, namely, for each $k$ there is an unique $k'$ such that $\Delta^{(1,k)}=\Delta^{(2,k')}$. Finally, we note that $M^{(1,k)}\preceq M^{(2,k')}$ for all $k$ because $M^{(2,k')}$ is simply $M^{(2,k')}$ plus some PSD matrix (which due to some insert/positive update move into the sum or some delete/negative update removed from the sum). Hence, we still have the relation \eqref{eq:M_relation}.

For the step 3, we should not expect the relation \eqref{eq:M_relation}.
However, we claim that within each phase, let $M^{(3,k)}$ and $M^{(3,\overline{k})}$
be two matrices in the phase. Then, we have that
\begin{equation}
2^{-O(L)}M^{(3,\overline{k})}\preceq M^{(3,k)}\preceq2^{O(L)}M^{(3,\overline{k})}.\label{eq:M_relation_2}
\end{equation}
The proof for this is same for the phase with increasing $M$ and
the phase with decreasing phase. Hence, we only discuss the first
case. Say $k_{1}$ is the first matrix in the phase and $k_{2}$ is
the last matrix in the phase. Note that the step $3$ does not change
the matrix $M^{(3,k_{1})}$ and $M^{(3,k_{2})}$ because we only swapping
operation order within a phase. Hence by \eqref{eq:M_relation}, we
have
\begin{align*}
M^{(3,k_{2})} = M^{(2,k_{2})} \preceq 2^{O(L)}M^{(2,k_{1})} = 2^{O(L)}M^{(3,k_{1})}.
\end{align*}
Since all matrix in the phase is sandwiched between $M^{(3,k_{1})}$
and $M^{(3,k_{2})}$ (due to the monotonicity of the sequence of $M$). Hence, we have
\eqref{eq:M_relation_2}.

\paragraph{Cost of Insert/Delete (Line \ref{line:insert_delete_lem1})}

Now, we bound the runtime. We will show the each phase takes $O(n^{2}L)$
time. Since there are $O(T/L)$ phases, this shows that the total
cost is $O(n^{2}T)$.

Each phase contains at most $L$ insert or delete. Hence, Line \ref{line:insert_delete_lem1}
involves estimating leverage score up to rank $L$ update. By \eqref{eq:M_relation_2},
we know the matrix is changed by a $2^{O(L)}$ factor. Theorem \ref{thm:leverage_batch}
shows the cost is
\begin{align*}
O( \T_{\mat} (n,n,k)(1+\log_{\log (n)}( \beta k / \epsilon )))
\end{align*}
where $\epsilon=n^{-1000}$, $\beta=2^{O(L)}$ and $k=O(L)$. By the
choice of $L=\log^{3}(n)$, we have $\T_{\mat} (n,n,k)=n^{2}$ and hence the
cost is $O(n^{2}L)$.

\paragraph{Cost of Large Update (Line \ref{line:sparse_update_lem1})}

Each phase contains at most $L$ updates. By the assumption, each
update change the $\ell_{2}$ norm of $\log w$ by at most $O(1)$.
Note that this is maintains during all steps. Hence, after merging
all updates into $1$, we have $\|\log w^{(k)}-\log w^{(k-1)}\|_{2}=O(L)$.
By the construction of $w^{(k-\frac{1}{2})}$, we have that $\|\log w^{(k)}-\log w^{(k-\frac{1}{2})}\|_{2}=O(L)$
and that $|\log w_{i}^{(k-\frac{1}{2})}-\log w_{i}^{(k)}|$ is either
$0$ or at least $\log2$ for all $i$. Hence the number of coordinates
that is non-zero in $\log w^{(k)}-\log w^{(k-\frac{1}{2})}$ is at most
$O(L^{2})$. Again, by \eqref{eq:M_relation_2}, we know the matrix
is changed by a $2^{O(L)}$ factor. Hence, Theorem \ref{thm:leverage_batch}
shows the cost is
\begin{align*}
O( \T_{\mat} (n,n,k)(1+\log_{\log (n)}( \beta k / \epsilon )))
\end{align*}
where $\epsilon=n^{-1000}$, $\beta=2^{O(L)}$ and $k=O(L^{2})$.
By the choice of $L=\log^{3}(n)$, we have $\T_{\mat} (n,n,k)=n^{2}$ and hence
the cost is $O(n^{2}L)$.

\paragraph{Cost of Small Update (Line \ref{line:less_sparse_update_lem1})}

Similar to above, we have that $\|\log w^{(k-\frac{1}{2})}-\log w^{(k-1)}\|_{2}=O(L)$.
Furthermore, we have that $\|\log w^{(k-\frac{1}{2})}-\log w^{(k-1)}\|_{\infty}=O(1)$.
Due to Line \ref{line:sparse1} and \ref{line:sparse2}, we removed
all rows with multiplicative changes less than $\eta$. Hence, the
number of coordinates that is non-zero in $\log w^{(k-\frac{1}{2})}-\log w^{(k-1)}$
is at most $O(L^{2}/\eta^{2})$. Hence, Theorem \ref{thm:leverage_batch}
shows the cost is
\begin{align*}
O( \T_{\mat} (n,n,k)(1+\log_{\log (n)}( \beta k / \epsilon )))
\end{align*}
where $\epsilon=n^{-1000}$, $\beta=O(1)$ and $k=O(L^{2}/\eta^{2})$.
By the choice of $L=\log^{3}(n)$ and $\eta=n^{-0.08}$, we have $\T_{\mat} (n,n,k)=n^{2}\log^{2}(n)$
and hence the cost is $O(n^{2}\log^{3}(n))=O(n^{2}L)$.

\paragraph{Summary}

For the cost of the rest of the algorithm, we note that step 1, 2
and 3 simply involves rearrangements and it takes $O(nT)$ for
all of them, much faster than $O(n^{2}T)$. Finally, we note that
Theorem \ref{thm:leverage_batch} requires having an approximate explicit
matrix inverse. This can be done by matrix maintenance (Theorem \ref{thm:maintain_projection_inn})
which takes $n^{2}L$ per phase. To see this, we note that Theorem
\ref{thm:maintain_projection_inn} takes $O(n^{2})$ time per rank 1 update
and it takes $O(n^{2})$ time per unit multiplicative $\ell_{2}$
changes in the weight (if we pick $\epsilon=0.01$ in Theorem \ref{thm:maintain_projection_inn}).
Hence, each phase, the cost is just $O(n^{2}L)$.
\end{proof}

\begin{algorithm}[htp!]
\caption{Batched Update of Leverage Score (Theorem \ref{thm:leverage_batch})}
\label{alg:leverage_batch}

\algnewcommand{\LineComment}[1]{\State \(\triangleright\) #1}
\begin{algorithmic}[1]

\State {\bf Procedure} \textsc{BatchUpdate}($\{A^{(k)}\in\R^{m^{(k)}\times n},w^{(k)}\in \R^{m^{(k)}}\}_{k=0}^{T}$)

\State \ 

\State $T_{0}\leftarrow T$ and $M^{(0,k)}\leftarrow A^{(k)\top}W^{(k)}A^{(k)}$\Comment{Easier
to program via the $n\times n$ matrices $A^{\top}WA$}

\State \ 

\LineComment{{\bf Step 1}: Split the update into positive and negative
update}

\State $M^{(1,0)}\leftarrow M^{(0,0)}$

\For{$k=1,\cdots,T_{0}$}

\If{$\rank(M^{(0,k)}-M^{(0,k-1)})=1$}\Comment{Insert/Delete Step}

\State $M^{(1,\textsc{End}+1)}\leftarrow M^{(0,k)}$\Comment{Append
$M^{(0,k)}$ to the end of the list of matrix $\{M^{(0,:)}\}$}

\Else\Comment{Update Step}

\State Write $M^{(0,k)}=A^{\top}W^{(k)}A$ and $M^{(0,k-1)}=A^{\top}W^{(k-1)}A$

\State $M^{(1,\textsc{End}+1)}\leftarrow A^{\top}\max(W^{(k)},W^{(k-1)})A$\Comment{positive
update: $M^{(1,\textsc{End}+1)}\succeq M^{(1,\textsc{End})}$}

\State $M^{(1,\textsc{End}+1)}\leftarrow M^{(0,k)}$\Comment{negative
update: $M^{(1,\textsc{End}+1)}\preceq M^{(1,\textsc{End})}$}

\EndIf
	\State $\textsc{End}\leftarrow \textsc{End}+1$
\EndFor

\State \ 

\LineComment{{\bf Step 2}: Move insert / positive update into front}

\State $M^{(2,0)}\leftarrow M^{(1,0)}$, $L\leftarrow\log^{3}(n)$\Comment{$L$
is the size of the phase}

\For{$k=1,\cdots,T_{1}$ where $M^{(1,T_{1})}$ is the last element
in $\{M^{(1,:)}\}$}

\If{$M^{(1,k)}\succeq M^{(1,k-1)}$}\Comment{Insert / Positive
Update Step}

\State $M^{(2,\textsc{End}+1)}\leftarrow M^{(2,\textsc{End})}+M^{(1,k)}-M^{(1,k-1)}$\Comment{Put
it into front}
\State $\textsc{End}\leftarrow \textsc{End}+1$

\EndIf

\EndFor
\State \ 

\State Append enough $M^{(2,\textsc{End})}$ into $\{M^{(2,:)}\}$
such that the last index is a multiple of $L$\label{label:appendL1}

\For{$k=1,\cdots,T_{1}$}

\If{$M^{(1,k)}\prec M^{(1,k-1)}$}\Comment{Delete / Negative Update
Step}

\State $M^{(2,\textsc{End}+1)}\leftarrow M^{(2,\textsc{End})}+M^{(1,k)}-M^{(1,k-1)}$\Comment{Put
it into end}
\State $\textsc{End}\leftarrow \textsc{End}+1$

\EndIf

\EndFor

\State Append enough $M^{(2,\textsc{End})}$ into $\{M^{(2,:)}\}$
such that the last index is a multiple of $L$\label{label:appendL2}
\State \ 

\algstore{leverage_b}

\end{algorithmic}
\end{algorithm}

\begin{algorithm}[htp!]
\caption{Batched Update of Leverage Score (Theorem \ref{thm:leverage_batch}, Continuation of Algorithm~\ref{alg:leverage_batch})}
\label{alg:leverage_batch2}

\algnewcommand{\LineComment}[1]{\State \(\triangleright\) #1}

\begin{algorithmic}[1]

\algrestore{leverage_b}
\LineComment{{\bf Step 3}: Within each phase, merge rank-1 update together
and sparsify non rank-1 update}

\State $M^{(3,0)}\leftarrow M^{(2,0)}$, $\eta\leftarrow n^{-0.08}$\Comment{Ignore
all updates with multiplicative change less than $\eta$}

\For{$\ell=1,\cdots,T_{2}/L$}\Comment{Due to line \ref{label:appendL1}
and \ref{label:appendL2}, each phase is monotone}

\If{Phase $\ell$ involves only insert / positive update}

\State $\Delta_{\text{Insert}}\leftarrow\sum_{k}(M^{(2,k)}-M^{(2,k-1)})$
where it sums rank $1$ update in phase $\ell$

\State $\Delta_{\text{PosUpdate}}\leftarrow\sum_{k}(M^{(2,k)}-M^{(2,k-1)})$
where it sums rank $>1$ update in phase $\ell$

\State $M^{(3,\textsc{End}+1)}\leftarrow M^{(3,\textsc{End})}+\Delta_{\text{Insert}}$

\State Write $M^{(3,\textsc{End})}=A^{\top}WA$ and $\Delta_{\text{PosUpdate}}=A^{\top}\diag(\delta)A$

\State $M^{(3,\textsc{End}+1)}\leftarrow M^{(3,\textsc{End})}+A^{\top}\diag(\delta_{\text{sparse}})A$
where $\delta_{\text{sparse}}=\delta1_{|\delta|\geq\eta|w|}$.\label{line:sparse1}

\Else

\State $I\leftarrow$ the list of rows deleted in phase $\ell$

\State $\Delta_{\text{NegUpdate}}\leftarrow\sum_{k}(M^{(2,k)}-M^{(2,k-1)})$
where it sums rank $>1$ update in phase $\ell$

\State Write $M^{(3,\textsc{End})}=A^{\top}WA$ and $\Delta_{\text{NegUpdate}}=A^{\top}\diag(\delta)A$

\State $M^{(3,\textsc{End}+1)}\leftarrow M^{(3,\textsc{End})}+A^{\top}\diag(\delta_{\text{sparse}})A$
where $\delta_{\text{sparse}}=\delta1_{|\delta|\geq\eta|w|}$.\label{line:sparse2}

\State $M^{(3,\textsc{End}+1)}\leftarrow$ $M^{(3,\textsc{End})}$
with rows in $I$ removed

\EndIf
\State $v\leftarrow w + \delta_{\text{sparse}}$
\State $\textsc{End}\leftarrow \textsc{End}+1$
\EndFor
\State \ 

\LineComment{{\bf Step 4}: Add up the estimate change of leverage scores}

\State $c\leftarrow0$

\For{$k=1,\cdots,T_{3}$}

\State Write $M^{(3,k)}=A^{\top}W^{(k)}A$ and $M^{(3,k-1)}=A^{\top}W^{(k-1)}A$

\If{There is no insert/delete between $M^{(3,k)}$ and $M^{(3,k-1)}$}\Comment{Update}

\State $w^{(k-\frac{1}{2})}\leftarrow\text{median}(\frac{1}{2}w^{(k-1)},w^{(k)},2w^{(k-1)})$\Comment{Entry-wise
median}

\State $\delta_{c}\leftarrow$ Estimate of $\sigma(w^{(k-\frac{1}{2})})-\sigma(w^{(k-1)})$
using Lemma \ref{lem:simple_low_rank} with error $\frac{1}{n^{1000}}$\label{line:less_sparse_update_lem1}

\State $\delta_{c}\leftarrow\delta_{c}+$ Estimate of $\sigma(w^{(k)})-\sigma(w^{(k-\frac{1}{2})})$
using Lemma \ref{lem:simple_low_rank} with error $\frac{1}{n^{1000}}$\label{line:sparse_update_lem1}

\Else\Comment{Insert/Delete}

\State $\delta_{c}\leftarrow$ Estimate of $\sigma(w^{(k)})-\sigma(w^{(k-1)})$
using Lemma \ref{lem:simple_low_rank} with error $\frac{1}{n^{1000}}$\label{line:insert_delete_lem1}

\EndIf

\State $c\leftarrow c+\delta_{c}$

\EndFor

\State \Return $v,c$
\State {\bf EndProcedure}

\end{algorithmic}
\end{algorithm}
\newpage
\section{Simple Deterministic Leverage Score Maintenance}\label{sec:simple_leverage_score}

In this section, we give a simple deterministic leverage score maintenance data-structure which is used by both the inner and middle phases in our main data structure in Section~\ref{sec:main_leverage_score}.
As a sub-procedure, we make use of the batched low-rank update procedure in Section~\ref{sec:batched} which requires the following assumption for a sequence of updates $\textsf{acts}$:

\begin{assumption}[Short sequence and small updates] \label{assumption:short_sequence_small_update}
Let the sequence of matrices and vectors in $\mathsf{acts}$ be $A^{(0)},A^{(1)},\cdots,A^{(T)} \in \R^{m \times n}$ and $w^{(0)},w^{(1)},\cdots,w^{(T)} \in \R^m$. We assume that $T \leq n^{0.08}$, the maximum number of rows is bounded by $O(n)$, and that the changes satisfy
\begin{itemize}
\item {\bf Update:} $\|\log w^{(k)}-\log w^{(k-1)}\|_{2} \leq 0.01$.
\item {\bf Insert/Delete:} $w_{a}aa^{\top} \preceq 0.01(A^{(k-1)})^\top W^{(k-1)}A^{(k-1)}$ .
\end{itemize}
\end{assumption}

\subsection{Main result}

\begin{theorem}[Simple leverage score maintenance] \label{thm:maintain_leverage_score_simple}
Given an initial matrix $A \in \R^{m \times n}$ with $m = O(n)$, initial weight $w \in \R_+^m$, and error parameter $\epsilon_{\simp} \leq 0.01$.
There is a deterministic data structure (Algorithm~\ref{alg:maintain_leverage_score_simple}) that approximately maintains the  leverage scores
\begin{align*}
\sigma_i(w) = (\sqrt{W} A (A^{\top} W A)^{-1} A^{\top} \sqrt{W})_{i,i}, 
\end{align*}
for positive diagonal matrices $W = \diag(w)$ through the following operations:
\begin{enumerate}
	\item $\textsc{Init}(A,w,\epsilon_{\simp})$: takes $O(n^{\omega + o(1)})$ time to initialize the data structure. 

	\item $\textsc{Update}(\mathsf{acts})$: update the matrix $A$, vector $w$ and the approximation of the leverage scores for the sequence of updates in $\mathsf{acts}$. 
	Moreover, if $\mathsf{acts}$ satisfies Assumption~\ref{assumption:short_sequence_small_update}, then the function $\textsc{Update}(\mathsf{acts})$ makes one call to $\textsc{pm}.\textsc{update}$ \footnote{The running time of procedure \textsc{Update} depends on $\epsilon_{\mathrm{simp}}$, see details in Theorem~\ref{thm:maintain_projection}} and takes extra time $O(n^2 \cdot |\mathsf{acts}|)$ to compute a vector $\Delta \wt{\sigma} \in \R^m$ such that 
	\begin{align*}
	\norm{\Delta \wt{\sigma} - \Delta \sigma}_2 
	\leq & ~ \frac{1}{n^{100}} + O\left(\epsilon_{\simp} + |\mathsf{acts}| \cdot n^{-0.08} \right) \cdot |\mathsf{acts}| .
	\end{align*}

	\item $\textsc{RefineEstimate}(\wt{\sigma}^{\new})$: takes $O(n)$ time to update the approximation of $\sigma(w)$ to $\wt{\sigma}^{\new}$.

	\item $\textsc{Query}()$: takes $O(n)$ time to output the approximation $\wt{\sigma} \in \R^m$ maintained by the data structure. 
\end{enumerate}
\end{theorem}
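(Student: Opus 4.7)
The plan is to implement the data structure by combining the batched low-rank update procedure of Section~\ref{sec:batched} (Theorem~\ref{thm:leverage_batch}) with a projection maintenance data structure $\textsc{pm}$ that supplies the approximate inverse of $A^\top W A$ needed as a preconditioner in Lemma~\ref{lem:simple_low_rank}. The data structure will store (i) the current matrix $A$ and weight vector $w$, (ii) an approximation $\wt{\sigma}\in\R^m$ to the leverage scores, and (iii) the auxiliary state of $\textsc{pm}$.

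For \textsc{Init}$(A,w,\epsilon_{\simp})$, I would invoke $\textsc{pm}.\textsc{init}(A,w,\epsilon_{\simp})$, form $M=A^\top W A$ and its inverse via fast matrix multiplication, and then compute the exact leverage scores $\wt{\sigma}_i=\sigma_i(w)$ from the definition. This costs $O(n^{\omega+o(1)})$ time. The query $\textsc{Query}()$ returns the stored $\wt{\sigma}$ in $O(n)$ time, and $\textsc{RefineEstimate}(\wt{\sigma}^{\new})$ simply overwrites $\wt{\sigma}$, also in $O(n)$ time.

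The core procedure is \textsc{Update}$(\mathsf{acts})$. I would first call $\textsc{pm}.\textsc{update}$ on $\mathsf{acts}$ to refresh the approximate inverse to $\epsilon_{\simp}$-accuracy, then invoke \textsc{BatchedUpdate} from Theorem~\ref{thm:leverage_batch} on the sequence $\mathsf{acts}$, feeding the refreshed approximate inverse into each call of Lemma~\ref{lem:simple_low_rank}. The procedure returns a vector $v$ with $\|\log v-\log w^{(T)}\|_\infty\le T n^{-0.08}$ and a vector $c$ satisfying $\|c-(\sigma(v)-\sigma(w^{(0)}))\|_2\le n^{-100}$. I set $\Delta\wt{\sigma}\gets c$, update $\wt{\sigma}\gets\wt{\sigma}+c$, and replace the maintained weight $w$ with $v$. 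Assumption~\ref{assumption:short_sequence_small_update} is precisely what is needed to invoke Theorem~\ref{thm:leverage_batch}, and the running time $O(n^2|\mathsf{acts}|)$ is inherited directly from that theorem.

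For the error guarantee, I would decompose
\[
\Delta\wt{\sigma}-\Delta\sigma \;=\; \bigl[c-(\sigma(v)-\sigma(w^{(0)}))\bigr] \;+\; \bigl[\sigma(v)-\sigma(w^{(T)})\bigr].
\]
The first bracket is bounded by $n^{-100}$ by Theorem~\ref{thm:leverage_batch}. For the second bracket, the bound $\|\log v-\log w^{(T)}\|_\infty\le T n^{-0.08}$ has to be turned into an $\ell_2$ bound on $\sigma(v)-\sigma(w^{(T)})$; expanding along the straight-line path from $w^{(T)}$ to $v$ and exploiting the projection structure of $P(w)$, this yields the quoted bound $O(T\cdot n^{-0.08})\cdot T$. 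The $O(\epsilon_{\simp}T)$ contribution absorbs the preconditioner error: each application of Lemma~\ref{lem:simple_low_rank} inside \textsc{BatchedUpdate} injects an $O(\epsilon_{\simp})$ error through the use of $\textsc{pm}$'s approximate inverse instead of the true one, and these errors add up over the $O(T)$ internal calls. The main obstacle I expect is sharpening the second bracket: the naive combination $\|\sigma(v)-\sigma(w^{(T)})\|_2\le O(T n^{-0.08})\sqrt{n}$ coming from $|\sigma_i(v)-\sigma_i(w)|\le O(\epsilon)\sigma_i$ is too weak, and one must exploit cancellation in the $\ell_2$ norm — likely via the identity $\sigma(v)-\sigma(w)=\int_0^1 \partial_t \sigma(w_t)\,\mathrm{d}t$ and the fact that $\partial_t\sigma$ is controlled by a Schur-product of the projection matrix with the log-change vector, giving the quadratic $T\cdot(Tn^{-0.08})$ factor rather than the $\sqrt{n}$ blowup.
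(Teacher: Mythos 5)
Your proposal misses a crucial step that the paper's construction relies on: computing an explicit \emph{correction term} $\Delta\wt\sigma^{\new}$ that approximates $\sigma(w^{\new}) - \sigma(w^{\mid})$, where $w^{\mid}$ is the vector you call $v$ and $w^{\new} = w^{(T)}$ is the true final weight. In Algorithm~\ref{alg:maintain_leverage_score_simple}, the data structure sets
$\wt{\sigma} \gets \wt{\sigma} + \Delta\wt\sigma^{\mid} + \Delta\wt\sigma^{\new}$ and $w \gets w^{\new}$;
here $\Delta\wt\sigma^{\new}$ is computed via the single-point formula of Lemma~\ref{lem:compute_sigma_change_simple} using the projection $Q(v^{\mid})$ returned by $\textsc{pm}$, and Lemma~\ref{lem:error_bound_inner_w_update_simple} bounds the \emph{error of this approximation} (not the magnitude of $\sigma(w^{\new}) - \sigma(w^{\mid})$ itself) by $O(\epsilon_{\simp} + |\mathsf{acts}|\,n^{-0.08})\cdot|\mathsf{acts}|$. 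This is precisely where the $\epsilon_{\simp}$ term in the statement comes from. Your plan instead outputs only $c$, stores $v$ as the new weight, and leaves the raw difference $\sigma(v) - \sigma(w^{(T)})$ as an error to be bounded.

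That raw difference cannot be bounded by $O(|\mathsf{acts}|\,n^{-0.08})\cdot|\mathsf{acts}|$, and the cancellation you hope for does not exist. Write $\rho = \log v - \log w^{(T)}$; the first-order change in leverage scores is $M\rho$ where $M = \diag(\sigma) - P\circ P$ is a PSD diagonally-dominant matrix with zero row sums. One can show $\|M\|_{\mathrm{op}} \le 2\max_i \sigma_i(1-\sigma_i) = \Theta(1)$, so the only control you get from first-order analysis is $\|\sigma(v) - \sigma(w^{(T)})\|_2 \lesssim \|\rho\|_2 \le 0.01\,|\mathsf{acts}|$. There is no mechanism that improves this to $|\mathsf{acts}|^2 n^{-0.08}$: for a projection $P$ that is block-diagonal with $2\times 2$ constant blocks and an alternating $\rho = \pm\epsilon$, the map $\rho \mapsto M\rho$ has $\|M\rho\|_2 = \Theta(\|\rho\|_2)$ while $\|\rho\|_\infty = \epsilon$, so the $\ell_\infty$-smallness of $\rho$ provides no gain. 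Concretely, for the inner phase ($|\mathsf{acts}| = 1$, target error $O(1/\log^{25} n)$) your second bracket could be as large as a constant. The fix, which the paper implements, is not to bound $\sigma(v) - \sigma(w^{(T)})$ directly but to \emph{compute} a first-order estimate of it (using $Q(v^{\mid})$ and a one-point quadrature of the integral in Lemma~\ref{lem:leverage_score_moving_simple}) and only account for the \emph{second-order} residual, which is where the factors $\epsilon_{\simp}$ and $|\mathsf{acts}|\,n^{-0.08}$ genuinely become available.
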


\begin{proof}[Proof of Theorem~\ref{thm:maintain_leverage_score_simple}]
We only need to prove the guarantee for the function $\textsc{Update}(\mathsf{acts})$, as the guarantees for other functions are straightforward. 
Notice that each call to the function $\textsc{Update}(\mathsf{acts})$ makes at most one call to $\textsc{pm}.\textsc{update}$, and the extra running time of $O(n^2 \cdot |\mathsf{acts}|)$ follows directly from Theorem~\ref{thm:leverage_batch}. 
To prove the error guarantee, we let the sequence of matrices and vectors in $\mathsf{acts}$ be $A^{(0)},A^{(1)},\cdots,A^{(T)}$ and $w^{(0)},w^{(1)},\cdots,w^{(T)}$.
notice that by Theorem~\ref{thm:leverage_batch}, we have
\begin{align*}
\norm{ \sigma_{A^{(T)}}(w^{\mid}) - \sigma_{A^{(0)}}(w^{(0)}) - \Delta \wt{\sigma}^{\mid} }_2 \leq \frac{1}{n^{100}}. 
\end{align*}
It follows from Lemma~\ref{lem:error_bound_inner_w_update_simple} that
\begin{align*}
\norm{\sigma_{A^{(T)}}(w^{(T)}) - \sigma_{A^{(T)}}(w^{\mid}) - \Delta \wt{\sigma}^{\new} }_2 
\leq & ~ O\left(\epsilon_{\simp} + |\mathsf{acts}| \cdot n^{-0.08} \right) \cdot |\mathsf{acts}| .
\end{align*}
The error guarantee then follows by summing up the two error upper bounds above. 
\end{proof}

The following corollary is an immediate consequence of Theorem~\ref{thm:maintain_leverage_score_simple} and~\ref{thm:maintain_projection}. 
It states that if matrix multiplication can be performed fast enough, then the simple leverage score maintenance would imply a deterministic $O(n^3 \log(n/\epsilon))$ algorithm for the cutting plane method.

\begin{corollary} \label{cor:det_cutting_plane}
If $\T_{\mat}(n,n,r) = O(n^2 \log^{O(1)}(n))$ for $r = n^{\alpha}$ with $\alpha > 2/3$, then there's a deterministic $O(n^3 \log(n/\epsilon))$ time algorithm for the cutting plane method.
\end{corollary}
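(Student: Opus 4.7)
The plan is to strip the randomized outer layer from the main data structure of Algorithm~\ref{alg:maintain_leverage_score_main} and show that, under the stronger matrix-multiplication hypothesis, the two deterministic layers $\textsc{simp}_{\inn}$ and $\textsc{simp}_{\mid}$ by themselves maintain leverage scores with the $\ell_2$-error and the amortized $O(n^2)$ cost required by Theorem~\ref{thm:mainvaidya}. The key observation is that the hypothesis $\T_{\mat}(n,n,n^\alpha) = O(n^2 \log^{O(1)}(n))$ lets us drive the parameter $\eta$ in the batched low-rank update (Theorem~\ref{thm:leverage_batch}) down to $\log^{O(1)}(n)\cdot n^{-\alpha/2}$ rather than $n^{-0.08}$: the only reason $\eta$ was $n^{-0.08}$ in Step~3 of Algorithm~\ref{alg:leverage_batch} was to keep the number $k=O(L^2/\eta^2)$ of surviving coordinate updates per phase within the Coppersmith regime $\T_{\mat}(n,n,n^{0.17})=\tilde O(n^2)$. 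Under the hypothesis we may instead allow $k\le n^\alpha$, so the $n^{-0.08}$ appearing in the error bound of Theorem~\ref{thm:maintain_leverage_score_simple} is replaced by $n^{-\alpha/2+o(1)}$ throughout.

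Next I would set $T_{\inn}=\log^{O(1)}(n)$, $T_{\mid}=n^{\omega-2+o(1)}/T_{\inn}$, restart period $T_{\text{restart}}=T_{\inn}\cdot T_{\mid}=n^{\omega-2+o(1)}$, $\epsilon_{\inn}=1/\log^{c+10}(n)$, and $\epsilon_{\mid}=n^{-(\omega-2)-\Omega(1)}$, and eliminate the outer level entirely (refinement in Steps~\ref{step:refine_estimate_outer_to_middle_main}--\ref{step:refine_estimate_outer_to_inner_main} is then triggered by the restart in Step~\ref{step:restart_data_structure_main}). Summing the middle-layer error across one restart period gives
\begin{align*}
T_{\text{restart}}\cdot\epsilon_{\mid}\;+\;T_{\text{restart}}\cdot T_{\inn}\cdot n^{-\alpha/2+o(1)}
\;=\;n^{\omega-2-\alpha/2+o(1)}+n^{-\Omega(1)}.
\end{align*}
The hypothesis implies $\omega\le 3-\alpha$ by the block decomposition $\T_{\mat}(n,n,n)\le n^{1-\alpha}\cdot\T_{\mat}(n,n,n^\alpha)$, so $\omega-2-\alpha/2\le 1-3\alpha/2<0$ precisely when $\alpha>2/3$, and the cumulative $\ell_2$-error is $o(1/\log^c n)$ as required. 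The inner layer's contribution $O((\epsilon_{\inn}+\eta)T_{\inn})$ is trivially within budget.

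For the running time, each single-action call to $\textsc{simp}_{\inn}.\textsc{Update}$ costs $O(n^2)$ by Theorem~\ref{thm:maintain_leverage_score_simple}; each call to $\textsc{simp}_{\mid}.\textsc{Update}$ with batch $T_{\inn}$ costs $O(n^2 T_{\inn})$, which amortizes to $O(n^2)$ per underlying step; and the restart costs $O(n^{\omega+o(1)})$ every $T_{\text{restart}}=n^{\omega-2+o(1)}$ steps, again amortizing to $O(n^2)$. All components---$\textsc{simp}_{\inn}$, $\textsc{simp}_{\mid}$, the batched low-rank update, and the projection-maintenance subroutine---are deterministic, so the resulting leverage-score routine is deterministic. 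Feeding it into Theorem~\ref{thm:mainvaidya} (whose analysis in Section~\ref{sec:perturb} is oblivious to whether the $\ell_2$-error is achieved randomly or deterministically) yields the claimed deterministic $O(n\cdot\SO\log\kappa+n^3\log\kappa)$ cutting plane method.

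The main obstacle will be the tightness of the parameter tradeoff: $\alpha>2/3$ is exactly what is needed to push $T_{\text{restart}}\cdot T_{\inn}\cdot\eta$ below $1$, so there is essentially no slack in choosing $\eta$, $T_{\mid}$, and $\epsilon_{\mid}$. Verifying that the projection-maintenance subroutine still runs within $O(n^2 T_{\inn})$ per middle call after shrinking $\epsilon_{\mid}$ from $n^{-0.08}$ to $n^{-(\omega-2)-\Omega(1)}$ is the remaining delicate point, but since $T_{\text{restart}}$ is unchanged and the analysis used in the proof of Theorem~\ref{thm:leverage_score_maintain_main} depends on $\epsilon_{\mid}$ only through a $\log(1/\epsilon_{\mid})=O(\log n)$ factor, it absorbs into the $\log^{O(1)}(n)$ slack without damaging the $O(n^2)$ amortized bound.
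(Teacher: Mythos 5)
Your high-level plan is the same as the paper's: drop the outer (randomized) layer, run only $\textsc{simp}_{\inn}$ and $\textsc{simp}_{\mid}$, and use the stronger rectangular-matrix-multiplication hypothesis to shrink the sparsification threshold $\eta$ in Step~3 of Algorithm~\ref{alg:leverage_batch} from $n^{-0.08}$ to $n^{-\alpha/2}\log^{O(1)}(n)$ while keeping the batch size $k=O(L^2/\eta^2)\le n^\alpha$ inside the ``$\tilde O(n^2)$'' multiplication regime. You are in fact more explicit than the paper's terse proof, which only mentions the change to $\epsilon_{\simp}$ and leaves the corresponding change to $\eta$ implicit, and your derivation of $\omega\le 3-\alpha$ via block decomposition is exactly the inequality the paper is using when it writes $\omega<3-2/3=7/3$.

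However, there is a concrete error in your final paragraph. You dismiss the projection-maintenance cost by asserting that the amortized time ``depends on $\epsilon_{\mid}$ only through a $\log(1/\epsilon_{\mid})=O(\log n)$ factor.'' That is not what Theorem~\ref{thm:maintain_projection} (and its instantiations Theorems~\ref{thm:maintain_projection_mid} and~\ref{thm:maintain_projection_inn}) says: the amortized time per $\textsc{Update}$ call is
\begin{align*}
O\left(t_{k^{*}}+\frac{C}{\epsilon}\cdot \Bigl( \frac{t_{k^{*}}^2}{k^{*}}+\sum_{i=k^{*}}^{m}\frac{t_{i}^{2}}{i^{2}}\Bigr)^{1/2}\cdot\log m \right),
\end{align*}
which is \emph{linear} in $1/\epsilon$, not logarithmic. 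You may be conflating this with the $\log_{\log n}(\beta k/\epsilon)$ factor in Lemma~\ref{lem:simple_low_rank}, where the accuracy does enter only logarithmically; but for projection maintenance the dependence is $1/\epsilon$. Consequently, your choice $\epsilon_{\mid}=n^{-(\omega-2)-\Omega(1)}$ is not free: it inflates the middle-phase projection-maintenance cost by an $n^{\omega-2+\Omega(1)}$ factor, which is not absorbed by a $\log^{O(1)}(n)$ slack and must be balanced against $t_{k^*}$ and the sum $\sum_i t_i^2/i^2$ via a non-trivial choice of $k^*$. The paper avoids this by taking $\epsilon_{\mid}$ only as small as the $\eta$-term already forces it to be, namely $\epsilon_{\mid}=n^{-\alpha/2}/\log^{O(1)}(n)$, so that the two error contributions match and $1/\epsilon_{\mid}$ is as small as possible. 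You should adopt that choice and then actually verify the resulting $(C/\epsilon_{\mid})(\cdots)^{1/2}$ bound using the interpolated $t_k$ estimates rather than waving it away; as written, the last paragraph of your proposal is where the proof would break.
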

\begin{proof}
This algorithm only uses the inner and middle phases. The inner phase is run as is our main algorithm. For the middle phase, we notice that, in this case, the exponent of matrix multiplication time can be bounded as $\omega < 3 - 2/3 = 7/3$. Therefore, the data structure is restarted after every $n^{1/3}$ calls to the middle phase. For the middle phase, we use the simple leverage score maintenance data structure in Theorem~\ref{thm:maintain_leverage_score_simple} with parameter $\epsilon_{\simp} = n^{-\alpha/2} / \log^{O(1)}(n)$. It follows from Theorem~\ref{thm:maintain_leverage_score_simple} that the error accumulated in the $n^{0.2}$ steps is $O(n^{-\alpha/2} n^{1/3} \log^{O(1)}(n)) = n^{-\Omega(1)}$. By Theorem~\ref{thm:maintain_leverage_score_simple} and~\ref{thm:leverage_score_maintain_main}, the running time is $O(n^2 \log^{O(1)}(n))$ per step for middle phase which is fast enough.
This finishes the proof of the corollary.
\end{proof}

\begin{algorithm}[htp!]\caption{}\label{alg:maintain_leverage_score_simple}
\begin{algorithmic}[1]
\State {\bf data structure} \textsc{MaintainLeverageScoreSimple} \Comment{Theorem~\ref{thm:maintain_leverage_score_simple}}
\State
\State {\bf members}
\State \hspace{4mm} $m \in \mathbb{Z}_+$ \Comment{$m = \Theta(n)$}
\State \hspace{4mm} $A \in \R^{m \times n}$  
\State \hspace{4mm} $w \in \R^m$ \Comment{Target vector}
\State \hspace{4mm} $\wt{\sigma} \in \R^m$ \Comment{The approximate leverage scores}
\State \hspace{4mm} \textsc{ProjectionMaintain} \textsc{pm} \Comment{Theorem~\ref{thm:maintain_projection_mid} or \ref{thm:maintain_projection_inn}}
\State {\bf end members}
\State
\Procedure{\textsc{Init}}{$A\in \R^{m \times n}, w\in \R^m_+, \epsilon_{\simp} \in \R_+$}  \Comment{Initialization}
	\State $A \leftarrow A$, $m \leftarrow m$, $w \leftarrow w$
	\State $\wt{\sigma} \leftarrow \sigma(w)$
	\State $\textsc{pm}.\textsc{init}(A, w,\epsilon_{\simp})$ 
\EndProcedure
\State
\Procedure{\textsc{Update}}{$\mathsf{acts} \in \textsc{Vector} \langle \textsc{Action} \rangle$} \Comment{Update estimate for a sequence of updates}
	\State Let $A$ and $w^{\new}$ be the final matrix and vector in $\mathsf{acts}$
	\State $w^{\mid}, \Delta \wt{\sigma}^{\mid} \leftarrow \textsc{BatchedUpdate}$($\mathsf{acts}$) \Comment{Let $w^{\mid}, \Delta \wt{\sigma}^{\mid}$ be computed as in Theorem~\ref{thm:leverage_batch} (Algorithm~\ref{alg:leverage_batch} and \ref{alg:leverage_batch2}) with the sequence of updates in $\mathsf{acts}$}
	\State $v^{\mid}, M(v^{\mid})^{-1}, Q(v^{\mid}) \leftarrow$ \textsc{pm}.\textsc{update}$(w^{\mid})$
	\State Compute $\Delta \wt{\sigma}^{\new}$ as in Lemma~\ref{lem:compute_sigma_change_simple}
	\State $\wt{\sigma} \leftarrow \wt{\sigma} + \Delta \wt{\sigma}^{\mid} + \Delta \wt{\sigma}^{\new}$
	\State $w \leftarrow w^{\new}$
\EndProcedure
\State
\Procedure{\textsc{RefineEstimate}}{$\wt{\sigma}^{\new} \in \R_+^m$}   \Comment{Refine the estimation $\wt{\sigma}$ to $\wt{\sigma}^{\new}$}
	\State $\wt{\sigma} \leftarrow \wt{\sigma}^{\new}$
\EndProcedure
\State
\Procedure{\textsc{Query}}{$ $}
	\State \Return $\wt{\sigma}$
\EndProcedure
\State
\State {\bf end data structure}
\end{algorithmic}
\end{algorithm}

\subsection{Approximate leverage score's moving}

\begin{lemma}[Leverage score's moving] \label{lem:leverage_score_moving_simple}
Given any matrix $A \in \R^{m \times n}$, vectors $w^{\mid}, w^{\new} \in \R_+^m$.
The change in leverage score from $w^{\mid}$ to $w^{\new}$ can be written as
\begin{align*}
\Delta \sigma^{\new}_i 
:= & ~\sigma(w^{\new} )_i - \sigma(w^{\mid})_i \\
= & ~ (w^{\new}_i - w^{\mid}_i) \cdot \tau(w^{\mid})_i + w^{\new}_i \cdot \int_0^1 e_i^\top Q(x_t) (W^{\mid}-W^{\new}) Q(x_t) e_i \mathrm{d} t ,
\end{align*}
where we define $x_t = w^{\mid} + t(w^{\new} - w^{\mid})$ for each $t \in [0,1]$.
\end{lemma}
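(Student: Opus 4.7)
The plan is to prove this identity by a straightforward algebraic decomposition followed by the fundamental theorem of calculus applied to the matrix-valued map $t \mapsto Q(x_t)$. Everything reduces to the standard derivative formula $\frac{d}{dt} M^{-1} = -M^{-1} \dot{M} M^{-1}$ for a differentiable family of invertible matrices.

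First, I would unfold the definitions. By Definition~\ref{def:tau_sigma} and \ref{defn:operator_Q_and_P}, $\sigma(w)_i = w_i \cdot (A(A^\top W A)^{-1} A^\top)_{i,i} = w_i \cdot Q(w)_{i,i}$. Using this identity for both $w^{\new}$ and $w^{\mid}$, I would write
\begin{align*}
\sigma(w^{\new})_i - \sigma(w^{\mid})_i
= (w^{\new}_i - w^{\mid}_i) \cdot Q(w^{\mid})_{i,i} + w^{\new}_i \cdot \bigl( Q(w^{\new})_{i,i} - Q(w^{\mid})_{i,i} \bigr) ,
\end{align*}
obtained by adding and subtracting $w^{\new}_i \cdot Q(w^{\mid})_{i,i}$. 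Since $Q(w^{\mid})_{i,i} = \tau(w^{\mid})_i$, the first summand already matches the first term in the claimed formula.

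It remains to express $Q(w^{\new})_{i,i} - Q(w^{\mid})_{i,i}$ as the integral in the statement. Along the straight-line path $x_t = w^{\mid} + t(w^{\new} - w^{\mid})$, the fundamental theorem of calculus gives $Q(w^{\new}) - Q(w^{\mid}) = \int_0^1 \frac{d}{dt} Q(x_t) \, \mathrm{d} t$. Differentiating $Q(x_t) = A \, M(x_t)^{-1} A^\top$ with $M(x_t) = A^\top X_t A$ and using $\frac{d}{dt} M(x_t) = A^\top (W^{\new} - W^{\mid}) A$ together with $\frac{d}{dt} M(x_t)^{-1} = -M(x_t)^{-1} \bigl(\frac{d}{dt} M(x_t)\bigr) M(x_t)^{-1}$, I get
\begin{align*}
\frac{d}{dt} Q(x_t)
= -A M(x_t)^{-1} A^\top (W^{\new} - W^{\mid}) A M(x_t)^{-1} A^\top
= Q(x_t) (W^{\mid} - W^{\new}) Q(x_t) .
\end{align*}
Extracting the $(i,i)$ entry and plugging back into the decomposition yields exactly the claimed identity.

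There is no real obstacle here; the only thing to be careful about is the sign convention in $\frac{d}{dt} M^{-1}$ (which flips $W^{\new} - W^{\mid}$ into $W^{\mid} - W^{\new}$) and verifying that the path $x_t$ keeps $M(x_t)$ invertible so that the derivative formula is valid. The latter is automatic whenever $w^{\mid}, w^{\new}$ are strictly positive and $A$ has full column rank on their supports, which is part of the standing setup for leverage scores.
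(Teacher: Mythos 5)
Your proof is correct and follows essentially the same route as the paper: the same add-and-subtract decomposition into a $\tau(w^{\mid})$ term plus $w^{\new}_i \cdot (\tau(w^{\new})_i - \tau(w^{\mid})_i)$, followed by the integral representation of $Q(w^{\new}) - Q(w^{\mid})$ along the straight-line path. The only difference is presentational: you spell out the FTC step and the derivative formula $\frac{\d}{\d t}M^{-1} = -M^{-1}\dot{M}M^{-1}$ explicitly, whereas the paper compresses this into a single displayed identity.
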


\begin{proof}
We can rewrite $\sigma(w^{\new})_i - \sigma(w^{\mid})_i$ as 
\begin{align*}
\sigma(w^{\new})_i - \sigma(w^{\mid})_i = (w^{\new}_i - w^{\mid}_i) \cdot \tau(w^{\mid})_i + w^{\new}_i (\tau(w^{\new})_i - \tau(w^{\mid})_i) . 
\end{align*}
Then the second term is computed as
\begin{align*}
& ~ w^{\new}_i(\tau( w^{\new} )_i - \tau(w^{\mid})_i) \\
 = & ~ w^{\new}_i \cdot ((A (A^\top W^{\new} A )^{-1} A^\top )_{i,i} - ( A ( A^\top W ^{\mid}A )^{-1} A^\top )_{i,i} ) \\
= & ~ w^{\new}_i \cdot \int_0^1 e_i^\top Q(x_t) (W^{\mid}-W^{\new}) Q(x_t) e_i \mathrm{d} t .
\end{align*}
where the last step follows by the definition of $Q(v) = A (A^\top V A)^{-1} A^\top \in \R^{m \times m}$. 
\end{proof}

\begin{lemma}[Computing $\Delta \wt{\sigma}^{\new}$] \label{lem:compute_sigma_change_simple}
Given any matrix $A \in \R^{m \times n}$, vectors $w^{\mid}, w^{\new} \in \R_+^m$, where $m = O(n)$.
Define an approximation $\Delta \wt{\sigma}^{\new}_i$ to the change in leverage score to be
\begin{align*}
\Delta \wt{\sigma}^{\new}_i 
= & ~ (w^{\new}_i - w^{\mid}_i) \cdot e_i^\top Q(v^{\mid}) e_i + w^{\new}_i e_i^\top Q(v^{\mid}) (W^{\mid}-W^{\new}) Q(v^{\mid}) e_i . 
\end{align*}
Then given the matrix $Q(v^{\mid})$, we can compute $\Delta \wt{\sigma}^{\new}_i$ for all $i \in [m]$ in time $O(n^2)$. 
\end{lemma}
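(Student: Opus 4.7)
The plan is to split the computation of $\Delta \wt{\sigma}^{\new}_i$ into its two summands and compute each one efficiently for all $i \in [m]$ at once, exploiting the fact that $Q(v^{\mid})$ is given explicitly and is a symmetric $m \times m$ matrix with $m = O(n)$.

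For the first summand, observe that $e_i^\top Q(v^{\mid}) e_i = Q(v^{\mid})_{i,i}$, so once we read the $i$-th diagonal entry of the explicit matrix $Q(v^{\mid})$, the value $(w^{\new}_i - w^{\mid}_i) \cdot Q(v^{\mid})_{i,i}$ is computable in $O(1)$ time per index $i$. Doing this for every $i$ therefore costs $O(m) = O(n)$ time.

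For the second summand, let $D = W^{\mid} - W^{\new}$, which is a diagonal matrix, and write $B = Q(v^{\mid})$. Since $Q(v) = A(A^\top V A)^{-1} A^\top$ is symmetric, we have $B_{j,i} = B_{i,j}$, and hence
\begin{align*}
e_i^\top B D B e_i \;=\; \sum_{j=1}^m \sum_{k=1}^m B_{i,j} D_{j,k} B_{k,i} \;=\; \sum_{j=1}^m B_{i,j}^2 \, D_{j,j},
\end{align*}
using that $D$ is diagonal. For a fixed $i$ this sum is an inner product of two length-$m$ vectors and costs $O(m)$ time; summing over the $m$ values of $i$ costs $O(m^2) = O(n^2)$ time. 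Multiplying each result by the scalar $w^{\new}_i$ adds only $O(m)$ more work.

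Adding the two contributions for each $i$ takes $O(m) = O(n)$ further time, so the total running time is $O(n^2)$, as claimed. There is no real obstacle here beyond noticing the symmetry of $Q(v^{\mid})$, which reduces the ostensibly cubic expression $e_i^\top B D B e_i$ to a weighted sum of squared entries of a single row of $B$; without this observation one might be tempted to form $BDB$ explicitly, which would cost $O(m^\omega)$ and miss the target runtime.
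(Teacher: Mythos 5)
Your proof is correct and follows essentially the same approach as the paper: you read off the diagonal of $Q(v^{\mid})$ for the first summand, and for the second you exploit symmetry of $Q$ and diagonality of $W^{\mid}-W^{\new}$ to reduce $e_i^\top Q D Q e_i$ to $\sum_j Q_{i,j}^2 D_{j,j}$, which amounts to the matrix-vector product of the entry-wise square of $Q$ with the vector $w^{\mid}-w^{\new}$, costing $O(n^2)$ overall. The only cosmetic difference is that the paper packages the row-by-row inner products as a single matrix-vector multiplication $Q^{(2)}(w^{\mid}-w^{\new})$, which is computationally identical to your per-index summation.
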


\begin{proof}
Since we know the matrix $Q(v^{\mid})$, the first term is essentially reading all the diagonal entries of matrix $Q(v^{\mid})$, which can be computed for all $i \in [m]$ in time $O(n)$.
For the second term, we denote $Q = Q(x_t)$ and $Q^{(2)}$ the entry-wise square of $Q$. Then we have
\begin{align*}
  & ~ w_i^{\new} e_i^\top Q ( W^{\mid} - W^{\new} ) Q e_i \\
= & ~ w_i^{\new} \cdot \sum_{j \in [m]} (Q_{i,j})^2 ( w^{\mid} - w^{\new} )_j \\
= & ~ w_i^{\new} \cdot e_i^\top Q^{(2)} ( w^{\mid} - w^{\new} ) .
\end{align*}
Therefore, in order to compute the second term for all $i \in [m]$, it suffices to compute the matrix-vector product $Q^{(2)} ( w^{\mid} - w^{\new} )$, which takes time $O(n^2)$. 
\end{proof}

\subsection{$\ell_2$-error bound}

Let $A^{(T)}$ and $A^{(0)}$ be the final and initial matrix in the sequence of updates in $\mathsf{acts}$, and $w^{(0)}$ be the initial vector in $\mathsf{acts}$.
Denote $\Delta \sigma^{\mid} = \sigma_{A^{(T)}}(w^{\mid}) - \sigma_{A^{(0)}}(w^{(0)})$. 
It follows from Theorem~\ref{thm:leverage_batch} that if Assumption~\ref{assumption:short_sequence_small_update} holds, then we have 
\begin{align*}
\norm{ \Delta \wt{\sigma}^{\mid} - \Delta \sigma^{\mid} }_2 \leq \frac{1}{n^{100}}.
\end{align*}
It therefore suffices to bound the $\ell_2$-error $\norm{ \Delta \wt{\sigma}^{\new} - \Delta \sigma^{\new} }_2$. 
For simplicity, we use $A$ to denote $A^{(T)}$. 
Recall from Lemma~\ref{lem:leverage_score_moving_simple} that
\begin{align*}
\Delta \sigma^{\new}_i 
:= & ~\sigma(w^{\new} )_i - \sigma(w^{\mid})_i \\
= & ~ (w^{\new}_i - w^{\mid}_i) \cdot \tau(w^{\mid})_i + w^{\new}_i \cdot \int_0^1 e_i^\top Q(x_t) (W^{\mid}-W^{\new}) Q(x_t) e_i \mathrm{d} t ,
\end{align*}
where $x_t = w^{\mid} + t(w^{\new} - w^{\mid})$ for each $t \in [0,1]$.
Also recall that in Lemma~\ref{lem:compute_sigma_change_simple}, our approximation $\Delta \wt{\sigma}^{\new}$ is computed as
\begin{align*}
\Delta \wt{\sigma}^{\new}_i 
= & ~ (w^{\new}_i - w^{\mid}_i) \cdot e_i^\top Q(v^{\mid}) e_i + w^{\new}_i e_i^\top Q(v^{\mid}) (W^{\mid}-W^{\new}) Q(v^{\mid}) e_i . 
\end{align*}
We use $\Delta_1$ and $\Delta_2$ to denote the two corresponding error terms  
\begin{align*}
(\Delta_1)_i = & ~ (w^{\new}_i - w^{\mid}_i) \cdot e_i^\top Q(v^{\mid}) e_i - (w^{\new}_i - w^{\mid}_i) \cdot \tau(w^{\mid})_i \\
(\Delta_2)_i = & ~ w^{\new}_i e_i^\top Q(v^{\mid}) (W^{\mid}-W^{\new}) Q(v^{\mid}) e_i - w^{\new}_i \cdot \int_0^1 e_i^\top Q(x_t) (W^{\mid}-W^{\new}) Q(x_t) e_i \mathrm{d} t ,
\end{align*}
where $x_t = w^{\mid} + t(w^{\new} - w^{\mid})$.
It follows that
\begin{align*}
\norm{\Delta \wt{\sigma}^{\new} - \Delta \sigma^{\new}}_2 \leq \norm{\Delta_1}_2 + \norm{\Delta_2}_2 .
\end{align*}
We prove the following bound on the $\ell_2$-error.

\begin{lemma}[$\ell_2$-error bound] \label{lem:error_bound_inner_w_update_simple}
Assume Assumption~\ref{assumption:short_sequence_small_update} holds and that $\epsilon_{\simp} \leq 0.01$, where $\epsilon_{\simp}$ is the error parameter in Algorithm~\ref{alg:maintain_leverage_score_simple}.
Then the $\ell_2$-error is upper bounded as 
\begin{align*}
\norm{\Delta \wt{\sigma}^{\new} - \Delta \sigma^{\new} }_2 
\leq & ~ O\left(\epsilon_{\simp} + |\mathsf{acts}| \cdot n^{-0.08} \right) \cdot |\mathsf{acts}| .
\end{align*}
\end{lemma}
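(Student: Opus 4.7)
The plan is to bound the two error terms $\Delta_1$ and $\Delta_2$ separately in $\ell_2$ and conclude by the triangle inequality. Throughout I write $T = |\mathsf{acts}|$. By Theorem~\ref{thm:leverage_batch} we have $\|\log w^{\mid} - \log w^{\new}\|_\infty \leq T\cdot n^{-0.08}$ and $\|\log w^{\mid} - \log w^{\new}\|_2 \leq 0.01\,T$; by the projection maintenance guarantee of Theorem~\ref{thm:maintain_projection_inn} used in $\textsc{pm}.\textsc{update}$, we have $v^{\mid}\approx_{\epsilon_{\simp}} w^{\mid}$, hence $M(v^{\mid})\approx_{\epsilon_{\simp}} M(w^{\mid})$ and $Q(v^{\mid})\approx_{\epsilon_{\simp}} Q(w^{\mid})$ as PSD matrices.

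\textbf{Bounding $\|\Delta_1\|_2$.} Since $Q(v^{\mid})\approx_{\epsilon_{\simp}} Q(w^{\mid})$, the diagonal entries satisfy $|e_i^\top Q(v^{\mid})e_i - \tau(w^{\mid})_i|\leq O(\epsilon_{\simp})\,\tau(w^{\mid})_i$. Therefore
\[
|(\Delta_1)_i| \leq O(\epsilon_{\simp})\cdot |w^{\new}_i-w^{\mid}_i|\cdot \tau(w^{\mid})_i = O(\epsilon_{\simp})\cdot |(w^{\new}/w^{\mid})_i - 1|\cdot \sigma(w^{\mid})_i.
\]
Using $\sigma_i\leq 1$ and $|(w^{\new}/w^{\mid})_i-1|\lesssim |\log(w^{\new}/w^{\mid})_i|$ (since these log-differences are $O(1)$), summing in $\ell_2$ and invoking the $\ell_2$-closeness above gives $\|\Delta_1\|_2 \lesssim \epsilon_{\simp}\,T$.

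\textbf{Bounding $\|\Delta_2\|_2$.} Split $\Delta_2 = \Delta_2^{(a)} + \Delta_2^{(b)}$, where $\Delta_2^{(a)}$ is the error from replacing $Q(w^{\mid})$ by $Q(v^{\mid})$ in the single-point approximation and $\Delta_2^{(b)}$ is the discrepancy between the single-point value $e_i^\top Q(w^{\mid})(W^{\mid}-W^{\new})Q(w^{\mid})e_i$ and the integral $\int_0^1 e_i^\top Q(x_t)(W^{\mid}-W^{\new})Q(x_t)e_i\,\d t$. For $\Delta_2^{(a)}$, apply the resolvent identity $Q(v^{\mid})-Q(w^{\mid}) = Q(v^{\mid})(W^{\mid}-V^{\mid})Q(w^{\mid})$ together with $|(W^{\mid}-V^{\mid})_{j,j}|\leq \epsilon_{\simp} w^{\mid}_j$; after cross terms are expanded, each entry of $\Delta_2^{(a)}$ is a diagonal of a product of the form $\sqrt{W^{\new}}Q\,D_1 Q\,D_2 Q\sqrt{W^{\new}}$ with $\|D_1\|,\|D_2\|$ controlled multiplicatively by $\epsilon_{\simp}$ and by $\|\log w^{\new}-\log w^{\mid}\|_\infty$. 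For $\Delta_2^{(b)}$, use $\tfrac{\d}{\d t}Q(x_t) = Q(x_t)(W^{\mid}-W^{\new})Q(x_t)$ and write
\[
e_i^\top Q(x_t)\Delta W Q(x_t)e_i - e_i^\top Q(w^{\mid})\Delta W Q(w^{\mid})e_i = \int_0^t 2\,e_i^\top Q(x_s)\Delta W Q(x_s)\Delta W Q(x_s)e_i\,\d s,
\]
where $\Delta W = W^{\mid}-W^{\new}$; integrating over $t\in[0,1]$ gives a cubic-in-$\Delta W$ expression whose diagonals again take the form $\sqrt{W^{\new}}Q\,D\,Q\,D\,Q\sqrt{W^{\new}}$ for a diagonal $D$ with $|D_{j,j}|\lesssim T\,n^{-0.08}$ multiplicatively.

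\textbf{Assembling the $\ell_2$-bound and main obstacle.} In both pieces of $\Delta_2$, we recast $\sum_i(\Delta_2)_i^2$ as the Frobenius norm squared of a matrix of the form $\sqrt{W^{\new}}Q\,D_1 Q\,D_2 Q\sqrt{W^{\new}}$; applying $\|XY\|_F \leq \|X\|\cdot\|Y\|_F$ and the contractive identity $\sqrt{W}Q\sqrt{W}\preceq I$ (the leverage projector) telescopes out one copy of $Q$ at a time, leaving a factor $\|D_1\|\cdot\|D_2\|_F$ or $\|D_1\|_F \cdot\|D_2\|$. The key quantitative input, $\|(W^{\mid}-W^{\new})/\sqrt{W^{\mid}W^{\new}}\|_2 \lesssim T$, comes from the $\ell_2$-closeness of $\log w^{\mid}$ to $\log w^{\new}$, while the multiplicative factor $\epsilon_{\simp}$ or $T\,n^{-0.08}$ is furnished by the $\ell_\infty$-closeness of $v^{\mid}$ to $w^{\mid}$ or of $w^{\mid}$ to $w^{\new}$, respectively. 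The main obstacle is to avoid losing a $\sqrt{m}$ factor when passing to $\ell_2$: a crude operator-norm bound on $\Delta W$ would give $\|\Delta W\|_F \leq \sqrt{m}\|\Delta W\|$, which is too large. The Frobenius-norm reformulation combined with the projector identity $\sqrt{W}Q\sqrt{W}\preceq I$ is what avoids this loss. Adding the resulting bounds $\|\Delta_2^{(a)}\|_2\lesssim \epsilon_{\simp}\,T$ and $\|\Delta_2^{(b)}\|_2\lesssim T\cdot n^{-0.08}\cdot T$ with the earlier bound on $\|\Delta_1\|_2$ produces the stated inequality $\|\Delta\wt{\sigma}^{\new}-\Delta\sigma^{\new}\|_2 \leq O(\epsilon_{\simp} + T\,n^{-0.08})\cdot T$.
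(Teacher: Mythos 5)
Your proof is correct, and it takes a genuinely different route from the paper's, which is worth comparing. For $\|\Delta_1\|_2$, you exploit the multiplicative PSD approximation $Q(v^{\mid})\approx_{O(\epsilon_{\simp})}Q(w^{\mid})$ to obtain a \emph{pointwise} bound $|(\Delta_1)_i|\lesssim \epsilon_{\simp}\,|(w^{\new}/w^{\mid})_i-1|\,\sigma(w^{\mid})_i$ which you then sum in $\ell_2$; the paper instead in Lemma~\ref{lem:error_first_two_terms_simple} writes $Q(v^{\mid})-Q(w^{\mid})$ as $\int_0^1 Q(y_s)(W^{\mid}-V^{\mid})Q(y_s)\,\d s$, applies a mean-value reduction to a single $y_\xi$, and bounds the resulting matrix in Frobenius norm via $\tr[(\sqrt{C}PBP\sqrt{C})^2]\leq\|b\|_\infty^2\,\tr[C^2]$. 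Your pointwise argument is arguably cleaner and does not lose anything since it also feeds through $\sigma_i\leq 1$. For $\|\Delta_2\|_2$, you split into $\Delta_2^{(a)}$ (replace $Q(v^{\mid})$ by $Q(w^{\mid})$, controlled by $\epsilon_{\simp}$) and $\Delta_2^{(b)}$ (single point versus integral over $Q(x_t)$, controlled by $T n^{-0.08}$); the paper in Lemma~\ref{lem:error_fourth_term_simple} instead first picks $\xi$ by mean value, then interpolates along a \emph{single} segment $v_{t,s}=x_t+s(v^{\mid}-x_t)$ so that the displacement $v^{\mid}-x_t$ absorbs both error sources into one matrix $C$. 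Your triangle-inequality split is pedagogically transparent: each of the two terms isolates one of the two $\ell_\infty$ small parameters ($\epsilon_{\simp}$ versus $Tn^{-0.08}$), at the cost of two Frobenius-norm estimates rather than one. The key quantitative inputs you invoke are exactly the right ones: $\|\log w^{\mid}-\log w^{\new}\|_\infty\leq Tn^{-0.08}$ and $\|\log w^{\mid}-\log w^{\new}\|_2\leq 0.01T$ from Theorem~\ref{thm:leverage_batch}, the $\epsilon_{\simp}$ multiplicative guarantee of $v^{\mid}$ from the projection maintenance, and the projector contraction $\sqrt{V}Q(v)\sqrt{V}\preceq I$ combined with $\|XY\|_F\leq\|X\|\,\|Y\|_F$ to avoid the $\sqrt{m}$ loss. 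One small bookkeeping point you should make explicit (the paper also mentions it): the clean integral/resolvent manipulations tacitly assume sign-definiteness of $W^{\new}-W^{\mid}$ and $W^{\mid}-V^{\mid}$; in general one splits these weight differences into positive and negative parts and bounds each separately.
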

\begin{proof}
For simplicity, we assume here that $w^{\new} - w^{\mid} \geq 0$  and that $w^{\mid} - v^{\mid} \geq 0$. 
This is without loss of generality, since in the case where these assumptions don't hold, we can break the corresponding terms into positive and negative parts and bound each one of them. 
The lemma then follows immeditely from the following Lemma~\ref{lem:error_first_two_terms_simple} and~\ref{lem:error_fourth_term_simple}. 
\end{proof}

\begin{lemma}[$\norm{\Delta_1}_2$] \label{lem:error_first_two_terms_simple}
Assume Assumption~\ref{assumption:short_sequence_small_update} holds and that $\epsilon_{\simp} \leq 0.01$.
Then we have the following upper bound on $\norm{\Delta_1}_2$
\begin{align*}
\norm{\Delta_1}_2 \leq & ~ O(\epsilon_{\simp}) \cdot |\mathsf{acts}| .
\end{align*}
\end{lemma}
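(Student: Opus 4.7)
The plan is to first rewrite the error term in a clean form, then apply a PSD sandwich argument, and finally invoke the batched update bounds from Theorem~\ref{thm:leverage_batch}. Recalling that $\tau(w)_i = Q(w)_{i,i}$, we have
\[
(\Delta_1)_i = (w^{\new}_i - w^{\mid}_i)\cdot\bigl(\tau(v^{\mid})_i - \tau(w^{\mid})_i\bigr).
\]
Since the projection maintenance data structure guarantees $v^{\mid} \approx_{\epsilon_{\simp}} w^{\mid}$, i.e.\ $(1-\epsilon_{\simp})W^{\mid} \preceq V^{\mid} \preceq (1+\epsilon_{\simp})W^{\mid}$, the usual monotone-inverse argument gives
\[
(1+\epsilon_{\simp})^{-1} Q(w^{\mid}) \preceq Q(v^{\mid}) \preceq (1-\epsilon_{\simp})^{-1} Q(w^{\mid}).
\]
Evaluating on $e_i$ and using $\epsilon_{\simp} \leq 0.01$, I would conclude the diagonal estimate $|\tau(v^{\mid})_i - \tau(w^{\mid})_i| \leq O(\epsilon_{\simp})\cdot \tau(w^{\mid})_i$.

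Next, I would collect these entrywise bounds and square:
\[
\|\Delta_1\|_2^2 \;\leq\; O(\epsilon_{\simp}^2)\sum_{i}(w^{\new}_i - w^{\mid}_i)^2\,\tau(w^{\mid})_i^2.
\]
The key observation is that $P(w^{\mid}) = \sqrt{W^{\mid}}A(A^\top W^{\mid} A)^{-1} A^\top\sqrt{W^{\mid}}$ is an orthogonal projection, so its diagonal entries $\sigma(w^{\mid})_i = w^{\mid}_i\,\tau(w^{\mid})_i$ lie in $[0,1]$. Consequently $\tau(w^{\mid})_i \leq 1/w^{\mid}_i$, and substituting yields
\[
\|\Delta_1\|_2^2 \;\leq\; O(\epsilon_{\simp}^2)\sum_{i}\left(\frac{w^{\new}_i - w^{\mid}_i}{w^{\mid}_i}\right)^{2}.
\]

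Finally, I would apply Theorem~\ref{thm:leverage_batch} to control the right-hand side. It guarantees both $\|\log w^{\mid} - \log w^{\new}\|_2 \leq 0.01\cdot|\mathsf{acts}|$ and $\|\log w^{\mid} - \log w^{\new}\|_{\infty} \leq |\mathsf{acts}|\cdot n^{-0.08}$, which under Assumption~\ref{assumption:short_sequence_small_update} is bounded by $1/2$. On this regime the elementary inequality $|e^x - 1| \leq 2|x|$ gives $|(w^{\new}_i - w^{\mid}_i)/w^{\mid}_i| \leq 2\,|\log w^{\new}_i - \log w^{\mid}_i|$, so
\[
\sum_{i}\left(\frac{w^{\new}_i - w^{\mid}_i}{w^{\mid}_i}\right)^{2} \;\leq\; 4\,\|\log w^{\new}-\log w^{\mid}\|_2^{2} \;=\; O(|\mathsf{acts}|^{2}).
\]
Taking square roots produces $\|\Delta_1\|_2 \leq O(\epsilon_{\simp})\cdot|\mathsf{acts}|$. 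The argument is essentially bookkeeping; the only mild subtlety is verifying that the projection sandwich transfers from the full PSD ordering to the diagonal, and that the $\ell_\infty$ smallness justifies the linearization of $\log(1+x)$, so I do not anticipate a real obstacle.
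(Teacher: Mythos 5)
Your proof is correct but takes a genuinely different and more elementary route than the paper's. The paper writes $\tau(v^{\mid})_i-\tau(w^{\mid})_i$ as an integral $\int_0^1 e_i^\top Q(y_s)(W^{\mid}-V^{\mid})Q(y_s)e_i\,\mathrm{d}s$, absorbs the factor $w^{\new}_i-w^{\mid}_i$ into a $\sqrt{W^{\new}-W^{\mid}}$ conjugation, and controls $\|\Delta_1\|_2^2$ by a Frobenius-norm/trace inequality $\tr[(\sqrt{C}P B P\sqrt{C})^2]\leq \|b\|_\infty^2\,\tr[C^2]$ involving the projection matrix $P(y_\xi)$. You instead conjugate the Loewner sandwich $(1-\epsilon_{\simp})W^{\mid}\preceq V^{\mid}\preceq(1+\epsilon_{\simp})W^{\mid}$ through $A(A^\top\cdot A)^{-1}A^\top$ to get $(1+\epsilon_{\simp})^{-1}Q(w^{\mid})\preceq Q(v^{\mid})\preceq(1-\epsilon_{\simp})^{-1}Q(w^{\mid})$, read off the diagonal to obtain the pointwise estimate $|\tau(v^{\mid})_i-\tau(w^{\mid})_i|\leq O(\epsilon_{\simp})\tau(w^{\mid})_i$, and then use $\sigma_i\leq 1$ (equivalently $\tau(w^{\mid})_i\leq 1/w^{\mid}_i$) plus the linearization $|e^x-1|\leq 2|x|$ under the $\ell_\infty$ bound from Theorem~\ref{thm:leverage_batch}. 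Both arrive at $\|\Delta_1\|_2^2\leq O(\epsilon_{\simp}^2)\cdot\|\log w^{\new}-\log w^{\mid}\|_2^2$ and hence the claimed bound. What your version buys is transparency: no integral representation, no mean-value step with an $i$-dependent $\xi$, no projection-matrix trace algebra — just a multiplicative PSD sandwich plus the leverage-score bound $\sigma_i\leq 1$. The one thing the paper's heavier machinery buys is uniformity with the proof of $\norm{\Delta_2}_2$ (Lemma~\ref{lem:error_fourth_term_simple}) and the outer-phase error analysis, where the integral/trace framework is genuinely needed because the relevant differences are second-order and can't be captured by a single diagonal sandwich. Your approach is self-contained and valid for this lemma as stated.
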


\begin{proof}
Recall the definition of the error term $\Delta_1$ as
\begin{align*}
(\Delta_1)_i = & ~ (w^{\new}_i - w^{\mid}_i) \cdot e_i^\top Q(v^{\mid}) e_i - (w^{\new}_i - w^{\mid}_i) \cdot \tau(w^{\mid})_i .
\end{align*}
We define $y_s = w^{\mid} + s(v^{\mid} - w^{\mid})$ for $s \in [0,1]$.
The term $(\Delta_1)_i$ can be rewritten as
\begin{align*}
(\Delta_1)_i 
= & ~ (w^{\new}_i - w^{\mid}_i) \cdot e_i^\top Q(v^{\mid}) e_i - (w^{\new}_i - w^{\mid}_i) \cdot \tau(w^{\mid})_i \\
= & ~ (w^{\new}_i - w^{\mid}_i) \cdot e_i^\top \left(\int_0^1 Q(y_s) (W^{\mid} - V^{\mid}) Q(y_s) \mathrm{d} s \right) e_i \\
= & ~ e_i^\top \sqrt{W^{\new} - W^{\mid}}   Q(y_{\xi}) (W^{\mid} - V^{\mid}) Q(y_{\xi}) \sqrt{W^{\new} - W^{\mid}} e_i ,
\end{align*}
for some $\xi \in [0,1]$. It follows that 
\begin{align*}
\norm{\Delta_1}_2^2 
\leq & ~ \norm{ \sqrt{ \frac{ W^{\new} - W^{\mid} } {Y_{\xi} } }  P(y_{\xi}) \frac{W^{\mid} - V^{\mid}}{Y_{\xi}} P(y_{\xi}) \sqrt{ \frac{ W^{\new} - W^{\mid} } {Y_{\xi} } } }_F^2 \\
= & ~ \tr [ (\sqrt{C} P B P \sqrt{C})^2 ] ,
\end{align*}
where in the last step we define diagonal matrices $B, C \in \R^{m \times m}$ as
\begin{align*}
B =   \frac{ W^{\mid} - V^{\mid} }{ Y_{\xi} }    ,
C =   \frac{ W^{\new} - W^{\mid}  }{ Y_{\xi} } ,
\end{align*}
and we use $P$ to denote $P(y_{\xi})$.
Then we have
\begin{align*}
  & ~\tr[ (\sqrt{C} P B P \sqrt{C})^2 ]  \\
\leq & ~ \norm{b}_{\infty}^2 \cdot \tr[ C^2 ]  & \text{$PBP \preceq \norm{b}_\infty \cdot I$} \\
= & ~ O(1) \cdot \norm{\log(w^{\mid}) - \log(v^{\mid}) }_{\infty}^2 \cdot \norm{\log(w^{\new}) - \log( w^{\mid}) }_2^2 \\
\leq & ~ O(\epsilon_{\simp}^{2}) \cdot \norm{\log(w^{\new}) - \log( w^{\mid}) }_2^2 \\
\leq & ~ O(\epsilon_{\simp}^{2}) \cdot |\mathsf{acts}|^2 ,
\end{align*}
where the last step follows from Theorem~\ref{thm:leverage_batch}.
\end{proof}

\begin{lemma}[$\norm{\Delta_2}_2$] \label{lem:error_fourth_term_simple}
Assume Assumption~\ref{assumption:short_sequence_small_update} holds and that $\epsilon_{\simp} \leq 0.01$.
Then we have the following upper bound on $\norm{\Delta_2}_2$
\begin{align*}
\norm{\Delta_2}_2 \leq O\left(\epsilon_{\simp} + |\mathsf{acts}| \cdot n^{-0.08} \right) \cdot |\mathsf{acts}| .
\end{align*}
\end{lemma}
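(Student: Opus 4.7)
The plan is to follow the same template as Lemma~\ref{lem:error_first_two_terms_simple}: telescope the difference between the integrand at $v^{\mid}$ and at $x_{t}$, express the resulting projection-difference via the standard integral formula for the derivative of the inverse, and reduce to a trace-based Frobenius bound. Concretely, I first rewrite the summand in $(\Delta_{2})_i$ using the identity
\[
Q(v^{\mid})(W^{\mid}-W^{\new})Q(v^{\mid}) - Q(x_{t})(W^{\mid}-W^{\new})Q(x_{t})
= \bigl(Q(v^{\mid})-Q(x_{t})\bigr)(W^{\mid}-W^{\new})Q(v^{\mid}) + Q(x_{t})(W^{\mid}-W^{\new})\bigl(Q(v^{\mid})-Q(x_{t})\bigr),
\]
and then, for each $t$, express the projection difference as
\[
Q(v^{\mid}) - Q(x_{t}) = \pm\int_{0}^{1} Q(z_{s}^{t})(X_{t}-V^{\mid})Q(z_{s}^{t})\,\mathrm{d} s,
\qquad z_{s}^{t} := x_{t}+s(v^{\mid}-x_{t}).
\]
Applying the mean value theorem for integrals in both $s$ and $t$ reduces $(\Delta_{2})_i$ to the $(i,i)$-entry of $w_{i}^{\new}$ times a sum of two quintuple products of the form $Q\,\Lambda_{1}\,Q\,\Lambda_{2}\,Q$, where $\Lambda_{1}=W^{\mid}-W^{\new}$ and $\Lambda_{2}=X_{\xi}-V^{\mid}$ are diagonal and all $Q$'s are evaluated at spectrally close weight vectors.

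Next, I invoke the elementary bound $\sum_{i}M_{ii}^{2}\le\|M\|_{F}^{2}$ to estimate
\[
\|\Delta_{2}\|_{2} \le \bigl\|\sqrt{W^{\new}}\bigl(Q\Lambda_{1}Q\Lambda_{2}Q + Q\Lambda_{2}Q\Lambda_{1}Q\bigr)\sqrt{W^{\new}}\bigr\|_{F},
\]
then convert each factor via $Q(y)=Y^{-1/2}P(y)Y^{-1/2}$ so that the inner $P$-matrices are honest orthogonal projections. Pulling the diagonal rescalings $B=(W^{\mid}-W^{\new})/Y$ and $C=(X_{\xi}-V^{\mid})/Y$ out of the trace and applying the standard inequalities $\tr[MN]\le\|M\|\cdot\tr[N]$ for PSD $N$, together with $P\preceq I$, bounds the Frobenius norm by a product of $\|b\|_{\infty}\|c\|_{\infty}$ and a trace involving $\sqrt{W^{\new}/Y}$ factors, exactly as in the proof of Lemma~\ref{lem:error_first_two_terms_simple}.

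Finally, I assemble the quantitative bounds. The projection-maintenance guarantee yields $\|\log v^{\mid}-\log w^{\mid}\|_{\infty}\le O(\epsilon_{\simp})$, and Theorem~\ref{thm:leverage_batch} gives $\|\log w^{\mid}-\log w^{\new}\|_{\infty}\le |\mathsf{acts}|\cdot n^{-0.08}$ together with $\|\log w^{\mid}-\log w^{\new}\|_{2}\le 0.01\cdot|\mathsf{acts}|$. Combining these controls $\|c\|_{\infty}\le O(\epsilon_{\simp}+|\mathsf{acts}|\cdot n^{-0.08})$ and $\|b\|_{\infty}\le O(|\mathsf{acts}|\cdot n^{-0.08})$, while the remaining trace factor contributes at most $O(|\mathsf{acts}|)$ via $\|b\|_{2}$, yielding the claimed bound. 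The main obstacle will be the bookkeeping in turning the asymmetric quintuple product $Q\Lambda_{1}Q\Lambda_{2}Q$ into a trace of a symmetric product of projections and diagonals; in particular, one must commit to a specific distribution of the $\sqrt{W^{\new}}$, $\sqrt{Y}$ and $\sqrt{Z}$ factors among the three $Q$'s, and use the spectral closeness of $v^{\mid}$, $w^{\mid}$, $x_{t}$ and $z_{s}^{t}$ (guaranteed by $\epsilon_{\simp}\le 0.01$ and Assumption~\ref{assumption:short_sequence_small_update}) to absorb the $Y$-to-$W^{\new}$ conversion into $O(1)$ slack.
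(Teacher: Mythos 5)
Your plan is essentially the paper's approach: a mean-value argument collapses both integrals to single interpolation parameters, the residual is expressed as a triple product $Q\,\Lambda_{1}\,Q\,\Lambda_{2}\,Q$ of projections with diagonal insertions, one converts $Q(y)=Y^{-1/2}P(y)Y^{-1/2}$, applies $\sum_i M_{ii}^2 \le \|M\|_F^2$, and closes with the trace inequalities $PDP\preceq O(1)I$ and $PC^2P\preceq \|c\|_\infty^2 I$, yielding $\|\Delta_2\|_2 \lesssim \|c\|_\infty \|b\|_2$.

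The one genuine deviation is your first step. Rather than telescoping the product as $(Q(v^{\mid})-Q(x_t))\Lambda_1 Q(v^{\mid}) + Q(x_t)\Lambda_1(Q(v^{\mid})-Q(x_t))$ and then expanding each projection difference as its own integral, the paper differentiates $s\mapsto e_i^\top\sqrt{W^{\new}}Q(v_{t,s})\Lambda_1 Q(v_{t,s})\sqrt{W^{\new}}e_i$ along $v_{t,s}=x_t+s(v^{\mid}-x_t)$ directly, producing a single $-2\,Q\,\Lambda_2\,Q\,\Lambda_1\,Q$ term with all three $Q$'s evaluated at the \emph{same} weight $v_{t,s}$. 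Your decomposition is correct but gives two separate terms in which the $Q$'s sit at distinct weights ($z_s^t$ vs.\ $v^{\mid}$); the trace reduction still goes through because the weights are all within $O(\epsilon_{\simp}+|\mathsf{acts}|\cdot n^{-0.08})$ of one another in log, but you carry extra $Y_{z_s^t}/V^{\mid}$ conversion factors and lose the symmetry that, in the paper's version, makes the Frobenius computation literally of the form $\tr[M M^\top]$ with a single $M=\sqrt{D}PCPBP\sqrt{D}$.

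One small imprecision to flag: you summarize the trace reduction as producing ``a product of $\|b\|_\infty\|c\|_\infty$ and a trace''; the correct closing step, as in Lemma~\ref{lem:error_first_two_terms_simple}, is to peel off $PC^2P\preceq\|c\|_\infty^2 I$ and leave $\tr[B^2]=\|b\|_2^2$, so the final factor coming from the slow coordinate is $\|b\|_2\le 0.01\,|\mathsf{acts}|$, not $\|b\|_\infty$. Using $\|b\|_\infty$ together with $\tr[D]=\Theta(n)$ would give a bound worse by $\sqrt{n}$ and would not close. Your final paragraph does invoke $\|b\|_2$ correctly, so this is a wording issue in the middle of the plan, not a gap in it.
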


\begin{proof}
Recall the definition of the error term $\Delta_2$ as
\begin{align*}
(\Delta_2)_i = w^{\new}_i e_i^\top Q(v^{\mid}) (W^{\mid}-W^{\new}) Q(v^{\mid}) e_i - w^{\new}_i \cdot \int_0^1 e_i^\top Q(x_t) (W^{\mid}-W^{\new}) Q(x_t) e_i \mathrm{d} t ,
\end{align*}
where $x_t = w^{\mid} + t(w^{\new} - w^{\mid})$. 
It follows that there exists $t \in [0,1]$ such that
\begin{align*}
(\Delta_2)_i 
= & ~ w^{\new}_i e_i^\top Q(v^{\mid}) (W^{\mid}-W^{\new}) Q(v^{\mid}) e_i - w^{\new}_i e_i^\top Q(x_t) (W^{\mid}-W^{\new}) Q(x_t) e_i \\
= & ~ \int_0^1 \frac{\mathrm{d}}{\mathrm{d} s} \left( e_i^\top \sqrt{W^{\new}} Q(v_{t,s}) (W^{\mid}-W^{\new}) Q(v_{t,s}) \sqrt{W^{\new}} e_i \right) \mathrm{d} s ,
\end{align*}
where in the last line we define $v_{t,s} = x_t + s(v^{\mid} - x_t)$ for $s \in [0,1]$. 
Therefore, in order to bound the $\norm{\Delta_2}_i$, it suffices to bound the $\ell_2$-norm of the derivative:
\begin{align*}
\norm{\Delta_2}_2 
\leq & ~\int_0^1 \norm{ \left( \frac{\mathrm{d}}{\mathrm{d} s} \left( e_i^\top \sqrt{W^{\new}} Q(v_{t,s}) (W^{\mid}-W^{\new}) Q(v_{t,s}) \sqrt{W^{\new}} e_i \right) \right )_{i=1}^m }_2 \mathrm{d} s \\
\leq & ~ \sup_{s \in [0,1]} \norm{ \left( \frac{\mathrm{d}}{\mathrm{d} s} \left( e_i^\top \sqrt{W^{\new}} Q(v_{t,s}) (W^{\mid}-W^{\new}) Q(v_{t,s}) \sqrt{W^{\new}} e_i \right) \right )_{i=1}^m }_2 .
\end{align*}
Notice that
\begin{align*}
& ~ \frac{\mathrm{d}}{\mathrm{d} s} \left( e_i^\top \sqrt{W^{\new}} Q(v_{t,s}) (W^{\mid}-W^{\new}) Q(v_{t,s}) \sqrt{W^{\new}} e_i \right) \\
= & ~ -2 e_i^\top \sqrt{W^{\new}} Q(v_{t,s}) (V^{\mid} - X_t) Q(v_{t,s}) (W^{\mid}-W^{\new}) Q(v_{t,s}) \sqrt{W^{\new}} e_i .
\end{align*}
It follows that for any $s \in [0,1]$, we have 
\begin{align*}
& ~ \norm{ \left( \frac{\mathrm{d}}{\mathrm{d} s} \left( e_i^\top \sqrt{W^{\new}} Q(v_{t,s}) (W^{\mid}-W^{\new}) Q(v_{t,s}) \sqrt{W^{\new}} e_i \right) \right )_{i=1}^m }_2^2 \\
\leq & ~ 4 \cdot \norm{\sqrt{W^{\new}} Q(v_{t,s}) (V^{\mid} - X_t) Q(v_{t,s}) (W^{\mid}-W^{\new}) Q(v_{t,s}) \sqrt{W^{\new}}}_F^2 \\
= & ~ 4 \tr [ \sqrt{D} P C P B P D P B P C P \sqrt{D} ] ,
\end{align*}
where in the last step we define diagonal matrices $B, C \in \R^{m \times m}$ as
\begin{align*}
B =   \frac{ W^{\mid} - W^{\new} }{ V_{t,s} }    ,
C =   \frac{ V^{\mid} - X_t }{ V_{t,s} }  ,
D =   \frac{ W^{\new} }{ V_{t,s} } ,
\end{align*}
and we use $P$ to denote $P(v_{t,s})$.
By Assumption~\ref{assumption:short_sequence_small_update}, we have
\begin{align*}
\norm{\log(w^{\new}) - \log(w^{\mid})}_\infty \leq 0.01.
\end{align*}
This together with the assumption that $\epsilon_{\simp} \leq 0.01$ further implies that
\begin{align*}
\norm{ \log(w^{\new}) - \log(v_{s,t}) }_{\infty} \leq 0.1 .
\end{align*}
Therefore, we have
\begin{align*}
& ~ \tr [ \sqrt{D} P C P B P D P B P C P \sqrt{D} ] \\
\leq & ~ O(1) \cdot \tr [ \sqrt{D} P C P B^2 P C P \sqrt{D} ]  & \text{$PDP \preceq O(1) \cdot I$} \\
= & ~ O(1) \cdot \tr [ B P C P D P C P B ] \\
\leq & ~ O(1) \cdot \tr [ B P C^2 P B ]  & \text{$PDP \preceq O(1) \cdot I$} \\
\leq & ~ O(1) \cdot \norm{c}_\infty^2 \cdot \norm{b}_2^2 & \text{$PC^2P \preceq \norm{c}_\infty^2 \cdot I$} \\
\leq & ~ O(1) \cdot \norm{\log(v^{\mid}) - \log(x_t)}_\infty^2 \cdot \norm{\log(w^{\new}) - \log(w) }_2^2 \\
\leq & ~ O(1) \cdot \left(\epsilon_{\simp} + \norm{ \log(w^{\new}) - \log(w^{\mid})}_\infty \right)^2 \cdot \norm{\log(w^{\new}) - \log(w) }_2^2 .
\end{align*}
The lemma then follows from Theorem~\ref{thm:leverage_batch} which gives $\norm{\log(w^{\new}) - \log(w^{\mid})}_\infty \leq |\mathsf{acts}| \cdot n^{-0.08}$ and $\norm{\log(w^{\new}) - \log(w^{\mid})}_2 \leq 0.01 |\mathsf{acts}|$.
\end{proof}

\newpage
\section{Complicated Randomized Leverage Score Maintenance}\label{sec:complicated_leverage_score}

In this section, we present a complicated randomized leverage score maintenance data structure which is used by the outer phase in our main data structure in Section~\ref{sec:main_leverage_score}.
We again make crucial use of the batched low-rank update procedure in Section~\ref{sec:batched}.

\subsection{Main result}

\begin{theorem}[Complicated leverage score maintenance] \label{thm:maintain_leverage_score_complicated}
Given an initial matrix $A \in \R^{m \times n}$ with $m = O(n)$, initial weight $w \in \R_+^m$, JL dimension $r > 0$, discrete sampling parameter $N \in \mathbb{Z}_+$, and error parameter $\epsilon_{\comp} < 0.01$.
There is a randomized data structure (Algorithm~\ref{alg:maintain_leverage_score_complicated}) that approximately maintains the  leverage scores
\begin{align*}
\sigma_i(w) = (\sqrt{W} A (A^{\top} W A)^{-1} A^{\top} \sqrt{W})_{i,i}, 
\end{align*}
for positive diagonal matrices $W = \diag(w)$ through the following operations:
\begin{enumerate}
	\item $\textsc{Init}(A,w,r,N,\epsilon_{\comp})$: takes $O(n^{\omega + o(1)})$ time to initialize the data structure. 

	\item $\textsc{Update}(\mathsf{acts})$: update the matrix $A$, vector $w$ and the approximation of the leverage scores for the sequence of updates in $\mathsf{acts}$. 
	Moreover, if $\mathsf{acts}$ satisfies Assumption~\ref{assumption:short_sequence_small_update}, then the function $\textsc{Update}(\mathsf{acts})$ makes $O(N)$ calls to $\textsc{pm}.\textsc{update}$ and takes extra time $O(n^2 \cdot |\mathsf{acts}| + \T_{\mat}(n,n,r) \cdot N^3 \cdot \log(n))$ to compute a random vector $\Delta \widetilde{\sigma} \in \R^m$ with error upper bounded by
	\begin{align*}
		\norm{ \Delta \sigma - \E[\Delta \wt{\sigma}] }_2 \leq \frac{1}{n^{100}} + O\left(\poly(n) / 2^{2N} \right) ,
	\end{align*}
	and variance upper bounded by
	\begin{align*}
		\E [ \norm{\sigma(w^{\new}) - \sigma(w) - \Delta \wt{\sigma} }_2^2]
		\leq & ~ O(N^3 /r) \cdot \epsilon_{\comp}^2 \cdot |\mathsf{acts}|^2 .
	\end{align*} 

	\item $\textsc{RefineEstimate}(\wt{\sigma}^{\new})$: takes $O(n)$ time to update the approximation of $\sigma(w)$ to $\wt{\sigma}^{\new}$.

	\item $\textsc{Query}()$: takes $O(n)$ time to output the approximation $\wt{\sigma} \in \R^m$ maintained by the data structure. 
\end{enumerate}
\end{theorem}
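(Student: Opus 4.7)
}

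The plan mirrors the skeleton of Theorem~\ref{thm:maintain_leverage_score_simple} but replaces the coarse-grained one-integral formula by the fine-grained cocktail formula (Lemma~\ref{lem:leverage_score_moving_complicated}) and estimates it stochastically. As before, we split the change $\Delta\sigma = \sigma_{A^{(T)}}(w^{\new}) - \sigma_{A^{(0)}}(w^{(0)})$ into a batched part $\Delta\sigma^{\mid} := \sigma_{A^{(T)}}(w^{\mid}) - \sigma_{A^{(0)}}(w^{(0)})$ and a residual part $\Delta\sigma^{\new} := \sigma_{A^{(T)}}(w^{\new}) - \sigma_{A^{(T)}}(w^{\mid})$. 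The batched part is handled exactly as in Section~\ref{sec:batched}: under Assumption~\ref{assumption:short_sequence_small_update}, \textsc{BatchedUpdate} returns $v^{\mid} = w^{\mid}$ together with an estimate $\Delta\wt\sigma^{\mid}$ satisfying $\|\Delta\wt\sigma^{\mid}-\Delta\sigma^{\mid}\|_2 \le n^{-100}$ in time $O(n^2 \cdot |\mathsf{acts}|)$. This accounts for the first term in the claimed error bound and for the $O(n^2 \cdot |\mathsf{acts}|)$ part of the runtime.

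For the residual part, I would use the fine-grained formula to write $\Delta\sigma^{\new}_i$ as a sum of continuous terms $\sigma_{i,\mathrm{cts}}$, each of the form of a two- or three-variable integral of quantities like $\gamma_{i,s,t}^\top \gamma_{i,s',t}$, where $\gamma_{i,s,t}$ involves products $\sqrt{W^\mid{-}W^\new}\cdot Q(y_{s,t})\cdot(Z_t{-}X_t)\cdot Q(y_{s,t})\cdot\sqrt{W^\new}\cdot e_i$ sitting on paths interpolating between $w^\mid$ and $w^\new$. I would then apply Theorem~\ref{thm:multiple_variable_integral} with $N$ nodes per variable to replace each such integral by a weighted sum $\sigma_{i,\mathrm{dis}} = \sum_{t,s,s'}\omega_t\omega_s\omega_{s'}\gamma_{i,s,t}^\top\gamma_{i,s',t}$. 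Because the integrands are rational in the interpolation parameter with bounded-norm denominators ($\|\log(w^\new)-\log(w^\mid)\|_\infty = O(1)$ by Assumption~\ref{assumption:short_sequence_small_update}), their $2N$-th derivatives are at most $\poly(n)\cdot (2N)!$, so Theorem~\ref{thm:multiple_variable_integral} yields a discretization error of $\poly(n)/2^{2N}$, which matches the second error term of the statement.

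Next I would insert a JL sketch $R_{\gamma,s,s',t}\in\R^{r\times n}$ in each inner product, producing $\sigma_{i,\mathrm{jl}} = \sum_{t,s,s'}\omega_t\omega_s\omega_{s'}\gamma_{i,s,t}^\top R^\top R\, \gamma_{i,s',t}$. By rotational invariance of the Gaussian sketch, $\E[\sigma_{i,\mathrm{jl}}] = \sigma_{i,\mathrm{dis}}$ so no extra bias is introduced, and the variance per entry is controlled by $\|\gamma_{i,s,t}\|_2^2 \cdot \|\gamma_{i,s',t}\|_2^2 / r$. Summing over $i$ and using the spectral identities $\sum_i (\sqrt{W^\new}Q\ldots)_{i,\cdot}^2 \lesssim \|\cdot\|_F^2$ together with $\|W^\mid - W^\new\| = O(|\mathsf{acts}|)\cdot\|V_{s,t}\|$ and $\|Z_t - X_t\| = O(\epsilon_{\comp})\cdot\|V_{s,t}\|$ (from projection-maintenance closeness), yields a total variance $O(N^3/r)\cdot \epsilon_{\comp}^2\cdot |\mathsf{acts}|^2$ after accounting for the $N^3$ combinations of $(t,s,s')$ nodes. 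This is the second bound in the theorem.

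The computational plan is to obtain, for each of the $O(N)$ quadrature nodes $t$, an approximate inverse of $M(y_{t})$ via $\textsc{pm}.\textsc{update}$ (giving the $O(N)$ calls), and then use Lemma~\ref{lem:precondition} with this inverse as a preconditioner to compute each matrix--matrix product $Q(y_{s,t}) \cdot (\mathrm{sketch})$ in $O(\T_{\mat}(n,n,r)\cdot\log n)$ time; doing this across all $O(N^2)$ pairs $(s,t)$ inside each of the $O(N)$ outer integrals gives the $O(\T_{\mat}(n,n,r)\cdot N^3\log n)$ contribution. The main obstacle, and the only nontrivial piece, will be verifying the variance bound uniformly over the $N^3$ sample points while keeping the preconditioner error negligible; this forces the inverse-maintenance accuracy $\epsilon_{\comp}$ to appear quadratically and requires the $\ell_2$-closeness from Assumption~\ref{assumption:short_sequence_small_update} to control the operator norm of $W^\mid - W^\new$ rescaled by $V_{s,t}$. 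Combining all three error contributions (batched $n^{-100}$, discretization $\poly(n)/2^{2N}$, and JL variance) and summing runtimes yields the theorem.
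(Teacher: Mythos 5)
Your proposal follows essentially the same route as the paper's proof: the same split into a batched part (Theorem~\ref{thm:leverage_batch}) and a residual part, the same fine-grained formula from Lemma~\ref{lem:leverage_score_moving_complicated}, the same Gaussian-quadrature discretization with Cauchy-estimate derivative bounds yielding $\poly(n)/2^{2N}$, the same unbiased JL sketch with variance $\lesssim \|\gamma_{i,s,t}\|_2^2\|\gamma_{i,s',t}\|_2^2/r$, the same use of preconditioned inverses from \textsc{pm.Update} at the $O(N)$ quadrature nodes $t$, and the same $N^3$ count of sample triples to obtain the stated runtime. The only slips are cosmetic -- you write $M(y_t)$ where you mean $M(z_t)$, and you identify the BatchedUpdate output $w^\mid$ with the projection-maintenance output $v^\mid$, which are $\epsilon_{\comp}$-close but distinct -- neither affects the argument.
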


\begin{proof}[Proof of Theorem~\ref{thm:maintain_leverage_score_complicated}]

We only need to prove the guarantee for the function $\textsc{Update}(\mathsf{acts})$, as the guarantees for other functions are straightforward.
The running time upper bound for the function $\textsc{Update}(\mathsf{acts})$ is given in Lemma~\ref{lem:update_time_complicated}.
The upper bound on the $\ell_2$-error $\norm{ \Delta \sigma - \E[\Delta \wt{\sigma}] }_2$ follows from Theorem~\ref{thm:leverage_batch} and the lemmas in Section~\ref{subsec:error_bound_discrete_to_continuous}.
The upper bound on the variance follows from the lemmas in Section~\ref{subsec:upper_bound_eta_alpha_beta_gamma} and~\ref{subsec:variance_bound_random_gaussian_matrix}.
\end{proof}

\begin{figure}[htp!]
\centering
\includegraphics[width=0.6\textwidth]{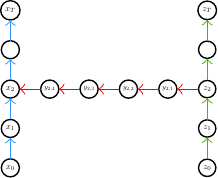}
\caption{This figure is related to Algorithm~\ref{alg:maintain_leverage_score_complicated}.}\label{fig:sequence}
\end{figure}

\begin{algorithm}[htp!]\caption{}\label{alg:maintain_leverage_score_complicated}
\begin{algorithmic}[1]
\State {\bf data structure} \textsc{MaintainLeverageScoreComplicated} \Comment{Theorem~\ref{thm:maintain_leverage_score_complicated}}
\State
\State {\bf members}
\State \hspace{4mm} $m \in \mathbb{Z}_+$  \Comment{$m = \Theta(n)$}
\State \hspace{4mm} $A \in \R^{m \times n}$ 
\State \hspace{4mm} $\wt{\sigma} \in \R^m$ \Comment{The approximate leverage scores}
\State \hspace{4mm} $w \in \R^m$ \Comment{Target vector}
\State \hspace{4mm} $r \in \mathbb{Z}_+$  \Comment{Dimension of JL matrices}
\State \hspace{4mm} $N \in \mathbb{Z}_+$, ${ \cal S }, { \cal T } \subseteq [0,1]$ \Comment{Discrete sampling points}
\State \hspace{4mm} \textsc{ProjectionMaintain} \textsc{pm} \Comment{Theorem~\ref{thm:maintain_projection}}
\State {\bf end members}
\State
\Procedure{\textsc{Init}}{$A \in \R^{m \times n}, w \in \R^m_+, r \in \mathbb{Z}_+, N \in \mathbb{Z}_+, \epsilon_{\comp} \in R_+$}  \Comment{Initialization}
	\State $A \leftarrow A$, $m \leftarrow m$, $w \leftarrow w$, $r \leftarrow r$, $N \leftarrow N$, $\wt{\sigma} \leftarrow \sigma(w)$
	\State Set ${\cal S}, {\cal T}$ with size $N$ and corresponding weights $\omega$ as in Theorem~\ref{thm:multiple_variable_integral} 
	\State $\textsc{pm}.\textsc{init}(A,w,\epsilon_{\comp})$ 
\EndProcedure
\State
\Procedure{\textsc{Update}}{$\mathsf{acts} \in \textsc{Vector} \langle \textsc{Action} \rangle$} \Comment{Update estimate for a sequence of updates}
	\State Let $A$ and $w^{\new}$ be the final matrix and vector in $\mathsf{acts}$
	\State $w^{\mid},\Delta \wt{\sigma}^{\mid} \leftarrow \textsc{BatchedUpdate}(\mathsf{acts})$ \Comment{ Let $w^{\mid}, \Delta \wt{\sigma}^{\mid}$ be computed as in Theorem~\ref{thm:leverage_batch} (Algorithm~\ref{alg:leverage_batch}, Algorithm~\ref{alg:leverage_batch2}) with the sequence of updates in $\mathsf{acts}$ }
	\State $v^{\mid}, M(v^{\mid})^{-1}, Q(v^{\mid}) \leftarrow$ \textsc{pm}.\textsc{update}$(w^{\mid})$
	\For{$t$ \textbf{in} ${\cal T}$}
		\State $x_t = w^{\mid} + t \cdot ( w^{\new} - w^{\mid} )$ \Comment{$w^{\mid} = x_0 = x_{{\cal T}_1} \rightarrow x_{{\cal T}_2} \rightarrow \cdots \rightarrow x_{{\cal T}_{|{\cal T}|}} = w^{\new}$}
		\State $M(z_t)^{-1}, Q(z_t) , z_t \leftarrow$ \textsc{pm}.\textsc{update}$(x_t)$ \Comment{$v^{\mid} = z_0 = z_{{\cal T}_1} \rightarrow z_{{\cal T}_2} \rightarrow \cdots \rightarrow z_{{\cal T}_{|{\cal T}|}} = v^{\new}$}
		\State Compute $\theta_{i,t}^\top \theta_{i,t}$ for all $i \in [m]$ as in Lemma~\ref{lem:compute_theta_theta}
		\For{$s$ \textbf{in} ${ \cal S } $}
			\State $y_{s,t} \leftarrow z_t + s \cdot (x_t - z_t)$ 
			\State Sample independent $R_{\alpha \beta,s,t}, R_{\eta,s} \in \R^{r \times m}$ where each entry $\sim \mathcal{N}(0,1/r)$
			\State Compute $R_{\alpha \beta,s,t} \alpha_{i,s,t}$, $R_{\alpha \beta,s,t} \beta_{i,s,t}$, $R_{\eta,s} \eta_{i,s}$ for all $i \in [m]$ as in Lemma~\ref{lem:compute_alpha_beta_gamma_eta}
			\For {$s'$ \textbf{in} ${ \cal S } $}
				\State Sample $R_{\gamma,s,s',t} \in \R^{r \times m}$ where each entry $\sim \mathcal{N}(0,1/r)$
				\State Compute $R_{\gamma,s,s',t} \gamma_{i,s,t}$ and $R_{\gamma,s,s',t} \gamma_{i,s',t}$ for all $i \in [m]$ as in Lemma~\ref{lem:compute_alpha_beta_gamma_eta}
			\EndFor
		\EndFor
	\EndFor
	\State Compute $\Delta \wt{\sigma}^{\new}$ as in Lemma~\ref{lem:compute_sigma_change}
	\State $\wt{\sigma} \leftarrow \wt{\sigma} + \Delta \wt{\sigma}^{\mid} + \Delta \wt{\sigma}^{\new}$, $w \leftarrow w^{\new}$
\EndProcedure
\State
\Procedure{\textsc{RefineEstimate}}{$\wt{\sigma}^{\new} \in \R_+^m$}   \Comment{Refine the estimation $\wt{\sigma}$ to $\wt{\sigma}^{\new}$}
	\State $\wt{\sigma} \leftarrow \wt{\sigma}^{\new}$
\EndProcedure
\State
\Procedure{\textsc{Query}}{$ $}
	\State \Return $\wt{\sigma}$
\EndProcedure
\State
\State {\bf end data structure}
\end{algorithmic}
\end{algorithm}

\subsection{Leverage score's moving}

\begin{definition}[$x_t$, $z_t$ and $y_{s,t}$]
Given vectors $w^{\mid}, w^{\new} \in \mathbb{R}_+^m$ and $v^{\mid} \in \mathbb{R}_+^m$. 
For all $t \in [0,1]$, we define $x_t$ to be $ w^{\mid} + t(w^{\new} - w^{\mid})$. 

To define $z_t$ for $t \in [0,1]$, we first define $z_t$ for all $t \in {\cal T}$ as in Algorithm~\ref{alg:maintain_leverage_score_complicated}.
We then extend the definition of $z_t$ to the entire interval $[0,1]$ by connecting consecutive points with segments.

For any $s \in [0,1]$, we define $y_{s,t} = z_t + s ( x_t - z_t )$. 
\end{definition}

\begin{definition}[$\eta$, $\theta$, $\alpha$, $\beta$ and $\gamma$]
\label{defn:eta_theta_alpha_beta_gamma}
We define
\begin{align*}
\eta_{i,s} = &~ \sqrt{Z_0 - X_0} \cdot Q(y_{s,0}) \cdot \sqrt{W^{\mid} - W^{\new}} \cdot e_i \\
\theta_{i,t} = & ~ \sqrt{W^{\mid} - W^{\new}} \cdot Q(z_t) \cdot \sqrt{ W^{\new} } \cdot e_i, \\
\alpha_{i,s,t} = & ~ \sqrt{  Z_t - X_t } \cdot Q(y_{s,t}) \cdot \sqrt{ W^{\new} } \cdot e_i , \\
\beta_{i,s,t} = & ~ \sqrt{ Z_t - X_t } \cdot Q(y_{s,t}) \cdot ( W^{\mid} - W^{\new} ) \cdot Q( z_t ) \cdot \sqrt{ W^{\new} } \cdot e_i , \\
\gamma_{i,s,t} = & ~ \sqrt{ W^{\mid} - W^{\new}} \cdot Q(y_{s,t}) \cdot (Z_t - X_t) \cdot Q(y_{s,t}) \cdot \sqrt{ W^{\new} } \cdot e_i. 
\end{align*}
\end{definition}

Here we make the simplifying assumption that $(Z_t - X_t)$ is PSD for every $t \in [0,1]$ and that $(W^{\mid} - W^{\new})$ is PSD.
In the case where these might not hold, we can simply break these matrices into positive and negative parts and bound both parts using essentially the same calculations. 

\begin{lemma}[Leverage score's moving] \label{lem:leverage_score_moving_complicated}
The change in leverage score can be written as
\begin{align}\label{eq:rewrite_sigma_into_two_terms}
\sigma(w^{\new} )_i - \sigma(w^{\mid})_i  
= & ~ \int_0^1 \theta_{i,t}^\top \theta_{i,t} \d t 
 + 2 \int_0^1 \int_0^1 \alpha_{i,s,t}^\top \beta_{i,s,t} \d s \d t + \int_0^1 \int_0^1 \int_0^1 \gamma_{i,s,t}^\top \gamma_{i,s',t} \d s \d s' \d t \nonumber  \\
 & ~  + \tau(v^{\mid})_i \cdot (w_i^{\new} - w^{\mid}_i) - \int_0^1 \eta_{i,s}^{\top} \eta_{i,s} \mathrm{d} s .
\end{align}
\end{lemma}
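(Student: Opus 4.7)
\textbf{Proof proposal for Lemma~\ref{lem:leverage_score_moving_complicated}.}

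The plan is to start from the basic identity $\sigma(w)_i = w_i \tau(w)_i$ and apply the fundamental theorem of calculus three times, matching each of the five terms on the right-hand side of \eqref{eq:rewrite_sigma_into_two_terms} with a specific piece of the decomposition. Concretely, I will first split
\begin{align*}
\sigma(w^{\new})_i - \sigma(w^{\mid})_i
= (w^{\new}_i - w^{\mid}_i)\,\tau(v^{\mid})_i
+ (w^{\new}_i - w^{\mid}_i)\bigl(\tau(w^{\mid})_i - \tau(v^{\mid})_i\bigr)
+ w^{\new}_i\bigl(\tau(w^{\new})_i - \tau(w^{\mid})_i\bigr),
\end{align*}
which immediately isolates the $\tau(v^{\mid})_i\cdot(w^{\new}_i-w^{\mid}_i)$ term. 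The remaining work is to show that the second summand equals $-\int_0^1 \eta_{i,s}^\top \eta_{i,s}\,\d s$, and that the third summand equals the three integral terms involving $\theta$, $\alpha^\top\beta$, and $\gamma^\top\gamma^\prime$.

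For the second summand, I will use the matrix identity $\tfrac{\d}{\d s}Q(y_{s,0}) = -Q(y_{s,0})(X_0-Z_0)Q(y_{s,0})$ together with $y_{0,0}=z_0=v^{\mid}$ and $y_{1,0}=x_0=w^{\mid}$, giving
$\tau(w^{\mid})_i - \tau(v^{\mid})_i = \int_0^1 e_i^\top Q(y_{s,0})(Z_0-X_0)Q(y_{s,0}) e_i\,\d s$. Multiplying by $w^{\new}_i-w^{\mid}_i$ and absorbing $(w^{\mid}_i-w^{\new}_i)$ into the diagonal matrix $\sqrt{W^{\mid}-W^{\new}}$ on both sides (using that $\sqrt{W^{\mid}-W^{\new}}e_i = \sqrt{w^{\mid}_i-w^{\new}_i}\,e_i$) yields exactly $-\eta_{i,s}^\top\eta_{i,s}$ inside the integral, matching the definition in Definition~\ref{defn:eta_theta_alpha_beta_gamma}.

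For the third summand, I will repeat the same FTC trick along $x_t$: by Lemma~\ref{lem:leverage_score_moving_simple} (already proved in Section~\ref{sec:simple_leverage_score}),
$w^{\new}_i(\tau(w^{\new})_i - \tau(w^{\mid})_i) = \int_0^1 \mu_{i,t}^\top \mu_{i,t}\,\d t$
with $\mu_{i,t} := \sqrt{W^{\mid}-W^{\new}}\,Q(x_t)\,\sqrt{W^{\new}}\,e_i$, again using the absorption trick for the factor $w^{\new}_i$. Then a second application of FTC in the $s$-variable, using $y_{0,t}=z_t$, $y_{1,t}=x_t$, gives
$Q(x_t)-Q(z_t) = \int_0^1 Q(y_{s,t})(Z_t-X_t)Q(y_{s,t})\,\d s$, hence $\mu_{i,t} = \theta_{i,t} + \int_0^1 \gamma_{i,s,t}\,\d s$. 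Expanding the square $\mu_{i,t}^\top\mu_{i,t}$ produces three pieces: $\theta_{i,t}^\top\theta_{i,t}$, the cross term $2\theta_{i,t}^\top\int_0^1\gamma_{i,s,t}\,\d s$, and the double integral $\int_0^1\!\int_0^1 \gamma_{i,s,t}^\top\gamma_{i,s',t}\,\d s\,\d s'$. The first and third directly match terms 1 and 3 in \eqref{eq:rewrite_sigma_into_two_terms}, and the cross term is identified with $2\int_0^1 \alpha_{i,s,t}^\top\beta_{i,s,t}\,\d s$ by writing the scalar $\theta_{i,t}^\top\gamma_{i,s,t}$ in the factored form $(\sqrt{Z_t-X_t}Q(y_{s,t})\sqrt{W^{\new}}e_i)^\top(\sqrt{Z_t-X_t}Q(y_{s,t})(W^{\mid}-W^{\new})Q(z_t)\sqrt{W^{\new}}e_i)$, which is exactly $\alpha_{i,s,t}^\top\beta_{i,s,t}$.

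The only subtle point (and the main place to be careful) is the sign bookkeeping when inserting the square roots $\sqrt{Z_t-X_t}$ and $\sqrt{W^{\mid}-W^{\new}}$, since as noted in the text these matrices need not be PSD in general. As in Definition~\ref{defn:eta_theta_alpha_beta_gamma}, I will adopt the same convention that $(Z_t-X_t)$ and $(W^{\mid}-W^{\new})$ are treated as PSD, so the square-root factorizations are well-defined; the general case follows by splitting each diagonal difference into its positive and negative parts and carrying the signs through the same calculation. Once that is fixed, every step is a direct algebraic identity, so no further analytic estimate is needed.
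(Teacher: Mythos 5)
Your proposal is correct and takes essentially the same route as the paper: the paper first splits $\sigma(w^{\new})_i - \sigma(w^{\mid})_i = w^{\new}_i(\tau(w^{\new})_i - \tau(w^{\mid})_i) + \tau(w^{\mid})_i(w^{\new}_i - w^{\mid}_i)$ and then further decomposes $\tau(w^{\mid})_i = \tau(v^{\mid})_i + (\tau(w^{\mid})_i - \tau(v^{\mid})_i)$, which is exactly your three-term decomposition, and it then applies the same two fundamental-theorem-of-calculus steps along $x_t$ and $y_{s,t}$ and expands the square of $Q(x_t) = Q(z_t) + \int_0^1 Q(y_{s,t})(Z_t-X_t)Q(y_{s,t})\,\d s$ to obtain the $\theta$, $\alpha^\top\beta$, and $\gamma^\top\gamma$ pieces. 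The only cosmetic difference is that you factor through the vector $\mu_{i,t}$ and expand $\mu_{i,t}^\top\mu_{i,t}$, whereas the paper expands the matrix product $Q(x_t)(W^{\mid}-W^{\new})Q(x_t)$ directly.
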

\begin{proof}

We first rewrite $\sigma(w^{\new})_i - \sigma(w^{\mid})_i$ into two terms
\begin{align*}
\sigma(w^{\new})_i - \sigma(w^{\mid})_i 
= & ~ w^{\new}_i \tau(w^{\new})_i - w^{\mid}_i \tau(w^{\mid})_i \notag \\
= & ~ w^{\new}_i \cdot ( \tau(w^{\new})_i - \tau(w^{\mid})_i ) + \tau(w^{\mid})_i \cdot ( w^{\new}_i - w^{\mid}_i ) .
\end{align*}
The lemma then follows immediately from the following Lemma~\ref{lem:first_term_leverage_score_moving_complicated} and~\ref{lem:second_term_leverage_score_moving_complicated}.
\end{proof}

\begin{lemma}[First term in leverage score's moving] \label{lem:first_term_leverage_score_moving_complicated}
For the first term in Lemma~\ref{lem:leverage_score_moving_complicated}, we have
\begin{align*}
w^{\new}_i \cdot ( \tau(w^{\new})_i - \tau(w^{\mid})_i )
= & ~ \int_0^1 \theta_{i,t}^\top \theta_{i,t} \d t 
 + 2 \int_0^1 \int_0^1 \alpha_{i,s,t}^\top \beta_{i,s,t} \d s \d t + \int_0^1 \int_0^1 \int_0^1 \gamma_{i,s,t}^\top \gamma_{i,s',t} \d s \d s' \d t .
\end{align*}
\end{lemma}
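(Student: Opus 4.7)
}
My plan is to express $\tau(w^{\new})_i-\tau(w^{\mid})_i = e_i^\top(Q(w^{\new})-Q(w^{\mid}))e_i$, apply the fundamental theorem of calculus twice along two different line segments in weight space, and then expand the resulting quadratic form into the three pieces $\theta^\top\theta$, $\alpha^\top\beta$, $\gamma^\top\gamma$ by splitting $Q(x_t)=Q(z_t)+(Q(x_t)-Q(z_t))$.

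First I would establish the basic identity for the derivative of $Q$: if $v_u$ is a smooth path of positive weights and $V_u=\diag(v_u)$, then
\begin{align*}
\tfrac{\d}{\d u}Q(v_u) \;=\; \tfrac{\d}{\d u} A(A^\top V_u A)^{-1}A^\top \;=\; -\,Q(v_u)\,\dot V_u\,Q(v_u),
\end{align*}
using $\tfrac{\d}{\d u}M^{-1}=-M^{-1}\dot M M^{-1}$ with $M=A^\top V_u A$. Applying this to the segment $x_t=w^{\mid}+t(w^{\new}-w^{\mid})$ (so $\dot X_t=W^{\new}-W^{\mid}$) and integrating from $t=0$ to $1$ yields
\begin{align*}
Q(w^{\new})-Q(w^{\mid}) \;=\; \int_0^1 Q(x_t)\,(W^{\mid}-W^{\new})\,Q(x_t)\,\d t.
\end{align*}
Conjugating by $\sqrt{W^{\new}}e_i$ on both sides gives $w^{\new}_i(\tau(w^{\new})_i-\tau(w^{\mid})_i)$ as a single integral.

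Next I would insert the decomposition $Q(x_t)=Q(z_t)+(Q(x_t)-Q(z_t))$ into both occurrences of $Q(x_t)$ in the integrand and expand. The four resulting pieces are (i) the pure $Q(z_t)$ term, (ii),(iii) two cross terms, and (iv) the pure $(Q(x_t)-Q(z_t))$ term. For (iv) and for the cross terms I would apply the same derivative identity a second time along $y_{s,t}=z_t+s(x_t-z_t)$ (so $\dot Y_{s,t}=X_t-Z_t$), obtaining
\begin{align*}
Q(x_t)-Q(z_t) \;=\; \int_0^1 Q(y_{s,t})\,(Z_t-X_t)\,Q(y_{s,t})\,\d s.
\end{align*}
Substituting this in and factoring the symmetric matrices $W^{\mid}-W^{\new}$ and $Z_t-X_t$ as squares of their square roots, each of the four pieces collapses to an inner product of two of the vectors from Definition~\ref{defn:eta_theta_alpha_beta_gamma}: piece (i) is exactly $\int_0^1\theta_{i,t}^\top\theta_{i,t}\,\d t$, pieces (ii) and (iii) are transposes of each other and combine to $2\int_0^1\!\int_0^1\alpha_{i,s,t}^\top\beta_{i,s,t}\,\d s\,\d t$, and piece (iv) becomes the triple integral $\int_0^1\!\int_0^1\!\int_0^1\gamma_{i,s,t}^\top\gamma_{i,s',t}\,\d s\,\d s'\,\d t$ after introducing a second parameter $s'$ for the second occurrence of $Q(x_t)-Q(z_t)$.

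There is no real obstacle here: the proof is a two-step Newton--Leibniz expansion plus bookkeeping. The only place requiring care is the sign and symmetry of the cross terms---I need to verify that swapping $(Q(x_t)-Q(z_t))$ between the left and right occurrences produces the transpose of the same scalar so the two cross terms merge into the factor of $2$ in front of the $\alpha^\top\beta$ integral---and the implicit assumption that both $W^{\mid}-W^{\new}$ and $Z_t-X_t$ are PSD so that their square roots are well defined. As remarked just before the lemma, the general (signed) case is handled by splitting each matrix into positive and negative parts and running the same identity on each part.
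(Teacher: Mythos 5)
Your proposal is correct and follows essentially the same route as the paper: apply the Newton--Leibniz / resolvent-derivative identity once along $x_t$ to write $Q(w^{\new})-Q(w^{\mid})$ as a single integral, then substitute $Q(x_t)=Q(z_t)+\int_0^1 Q(y_{s,t})(Z_t-X_t)Q(y_{s,t})\,\d s$ into both occurrences and expand into the $\theta^\top\theta$, $2\alpha^\top\beta$, and $\gamma^\top\gamma$ pieces. The handling of the cross-term symmetry and the PSD/sign-splitting caveat are exactly what the paper relies on as well.
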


\begin{proof}

We can compute $\tau( w^{\new} )_i - \tau(w^{\mid})_i$ as
\begin{align*}
& ~ \tau( w^{\new} )_i - \tau(w^{\mid})_i \\
= & ~ (A (A^\top W^{\new} A )^{-1} A^\top )_{i,i} - ( A ( A^\top W^{\mid} A )^{-1} A^\top )_{i,i} \\
= & ~ \left[ \int_0^1 A (A^\top (W^{\mid} + t (W^{\new} - W^{\mid}) ) A )^{-1} A^\top  (W^{\mid} - W^{\new}) A (A^\top (W^{\mid} + t (W^{\new} - W^{\mid}) ) A )^{-1} A^\top \mathrm{d} t \right]_{i,i} \\
= & ~ \left[ \int_0^1 Q(x_t) (W^{\mid} - W^{\new}) Q(x_t) \mathrm{d} t \right]_{i,i} ,
\end{align*}
where the last step follows by the definition of $Q(v) = A (A^\top V A)^{-1} A^\top \in \R^{m \times m}$ and $x_t = w^{\mid} + t(w^{\new} - w^{\mid})$.
Now we write $Q(x_t ) - Q( z_t )$ in a similar way as follows:
\begin{align*}
  Q( x_t ) - Q( z_t ) 
= & ~ A (A^\top X_t A )^{-1} A^\top - A ( A^\top Z_t A )^{-1} A^\top \\
= & ~ \int_0^1 Q(y_{s,t}) \cdot (Z_t - X_t) \cdot Q( y_{s,t} ) \mathrm{d} s ,
\end{align*}
where recall the definition of $y_{s,t} = z_t + s(x_t - z_t)$. 
Using the above expressions, we can rewrite $w^{\new}_i \cdot (\tau(w^{\new})_i - \tau(w^{\mid})_i)$ as follows:
\begin{align*}
& ~ w^{\new}_i \cdot (\tau(w^{\new})_i - \tau(w^{\mid})_i) \\
= & ~ w^{\new}_i \cdot \left[ \int_0^1 \left( Q(z_t) +  \int_0^1 Q(y_{s,t}) \cdot (Z_t - X_t) \cdot Q(y_{s,t}) \mathrm{d} s \right) \cdot ( W^{\mid} - W^{\new} ) \right. \\
& ~ \left. \cdot \left( Q(z_t) + \int_0^1 Q(y_{s',t}) \cdot (Z_t - X_t) \cdot Q(y_{s',t}) \mathrm{d} s' \right) \mathrm{d} t \right]_{i,i} \\
= & ~ \int_0^1 e_i^\top \sqrt{ W^{\new} } \cdot Q(z_t) \cdot ( W^{\mid} - W^{\new}) \cdot Q(z_t) \cdot \sqrt{ W^{\new} } e_i \d t \\
& ~ + 2 \int_0^1 e_i^\top \sqrt{ W^{\new} } \left( \int_0^1 Q(y_{s,t}) \cdot (Z_t - X_t) \cdot Q(y_{s,t}) \mathrm{d} s \right) ( W^{\mid} - W^{\new}) Q(z_t) \sqrt{ W^{\new} } e_i \mathrm{d} t \\
& ~ + \int_0^1 e_i^\top \sqrt{ W^{\new} } \left( \int_0^1 Q(y_{s,t}) \cdot (Z_t - X_t) \cdot Q(y_{s,t}) \mathrm{d} s \right) \cdot ( W^{\mid} - W^{\new}) \\
& ~ \cdot \left( \int_0^1 Q(y_{s',t}) \cdot (Z_t - X_t) \cdot Q(y_{s',t}) \mathrm{d} s' \right) \sqrt{ W^{\new} } e_i \mathrm{d} t \\
= & ~ \int_0^1 \theta_{i,t}^\top \theta_{i,t} \d t + 2 \int_0^1 \int_0^1 \alpha_{i,s,t}^\top \beta_{i,s,t} \d s \d t + \int_0^1 \int_0^1 \int_0^1 \gamma_{i,s,t}^\top \gamma_{i,s',t} \d s \d s' \d t .
\end{align*}

\end{proof}

\begin{lemma}[Second term in leverage score's moving] \label{lem:second_term_leverage_score_moving_complicated}
For the second term in Lemma~\ref{lem:leverage_score_moving_complicated}, we have
\begin{align*}
 &~ \tau(w^{\mid})_i \cdot (w_i^{\new} - w^{\mid}_i) =  \tau(v^{\mid})_i \cdot (w_i^{\new} - w^{\mid}_i) - \int_0^1 \eta_{i,s}^{\top} \eta_{i,s} \mathrm{d} s .
\end{align*}
\end{lemma}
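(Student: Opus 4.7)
The plan is to apply exactly the same integral identity for $Q$ that was used in the proof of Lemma~\ref{lem:first_term_leverage_score_moving_complicated}, but along the segment from $v^{\mid}$ to $w^{\mid}$ rather than from $w^{\mid}$ to $w^{\new}$. Since Algorithm~\ref{alg:maintain_leverage_score_complicated} initializes $z_0 = v^{\mid}$ and $x_0 = w^{\mid}$, we have $y_{s,0} = v^{\mid} + s(w^{\mid} - v^{\mid})$, so $y_{s,0}$ parametrizes precisely this segment. This is the key observation that ties the $\eta$ correction term to the difference between $\tau(w^{\mid})$ and $\tau(v^{\mid})$.

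First, I would apply the fundamental theorem of calculus combined with the derivative of the matrix inverse (as in Lemma~\ref{lem:first_term_leverage_score_moving_complicated}) to write
\begin{align*}
Q(w^{\mid}) - Q(v^{\mid}) = \int_0^1 Q(y_{s,0}) \cdot (V^{\mid} - W^{\mid}) \cdot Q(y_{s,0}) \, \mathrm{d} s .
\end{align*}
Reading off the $(i,i)$-diagonal entry yields
\begin{align*}
\tau(w^{\mid})_i - \tau(v^{\mid})_i = \int_0^1 e_i^\top Q(y_{s,0}) (V^{\mid} - W^{\mid}) Q(y_{s,0}) e_i \, \mathrm{d} s .
\end{align*}

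Next, I would unfold the definition of $\eta_{i,s}$ from Definition~\ref{defn:eta_theta_alpha_beta_gamma}. Because $W^{\mid}-W^{\new}$ is diagonal, $\sqrt{W^{\mid}-W^{\new}} \, e_i = \sqrt{w^{\mid}_i - w^{\new}_i} \, e_i$, and similarly $Z_0 - X_0 = V^{\mid} - W^{\mid}$, so
\begin{align*}
\eta_{i,s}^\top \eta_{i,s} = (w^{\mid}_i - w^{\new}_i) \cdot e_i^\top Q(y_{s,0}) (V^{\mid} - W^{\mid}) Q(y_{s,0}) e_i .
\end{align*}
Integrating in $s$ and combining with the previous display gives
\begin{align*}
\int_0^1 \eta_{i,s}^\top \eta_{i,s} \, \mathrm{d} s = -(w^{\new}_i - w^{\mid}_i) \bigl( \tau(w^{\mid})_i - \tau(v^{\mid})_i \bigr) .
\end{align*}
Rearranging and adding $\tau(v^{\mid})_i \cdot (w^{\new}_i - w^{\mid}_i)$ to both sides produces exactly the claimed identity.

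There is no real obstacle here: this lemma is essentially a sign-bookkeeping statement that justifies how the $\eta$-integral in~(\ref{eq:rewrite_sigma_into_two_terms}) corrects for using the approximate weight $v^{\mid}$ (from the projection maintenance data structure) in place of the true weight $w^{\mid}$ when evaluating $\tau(\cdot)_i$ times $(w^{\new}_i - w^{\mid}_i)$. The only care needed is to track signs of $V^{\mid} - W^{\mid}$ and $W^{\mid} - W^{\new}$; under the simplifying PSD assumption stated just before Lemma~\ref{lem:leverage_score_moving_complicated} the square roots are well-defined, and in the general case one splits these diagonal matrices into positive and negative parts and repeats the same argument componentwise.
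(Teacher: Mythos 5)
Your proposal is correct and follows essentially the same path as the paper's own proof: split off $\tau(v^{\mid})_i\cdot(w^{\new}_i-w^{\mid}_i)$, express $\tau(w^{\mid})_i-\tau(v^{\mid})_i = \tau(x_0)_i-\tau(z_0)_i$ via the integral identity $Q(x_0)-Q(z_0)=\int_0^1 Q(y_{s,0})(Z_0-X_0)Q(y_{s,0})\,\mathrm{d}s$, and then fold the diagonal factor $(w^{\mid}_i-w^{\new}_i)$ into the square roots to recognize $\eta_{i,s}^\top\eta_{i,s}$. The sign and PSD-splitting remarks you make at the end are exactly the caveats the paper relies on.
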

\begin{proof}

We can rewrite $\tau(w^{\mid})_i \cdot (w_i^{\new} - w^{\mid}_i)$ as
\begin{align*}
\tau(w^{\mid})_i \cdot (w_i^{\new} - w^{\mid}_i)  
= & ~ \tau(v^{\mid})_i \cdot (w_i^{\new} - w^{\mid}_i) + ( \tau(w^{\mid})_i - \tau(v^{\mid})_i ) \cdot (w_i^{\new} - w^{\mid}_i) .
\end{align*}
The second term can be computed as
\begin{align*}
 &~ (w_i^{\new} - w^{\mid}_i) \cdot (\tau(w^{\mid})_i - \tau(v^{\mid})_i)   \\
= &~ (w_i^{\new} - w^{\mid}_i) \cdot (\tau(x_0)_i - \tau(z_0)_i ) \\
= &~ (w_i^{\new} - w^{\mid}_i) \cdot \left[ \int_0^1 Q(z_0 + s(x_0- z_0)) \cdot (Z_0 - X_0) \cdot Q(z_0 + s(x_0 - z_0)) \mathrm{d} s \right]_{i,i} \\
= &~ -\int_0^1 e_i^{\top}  \cdot \sqrt{ W^{\mid} - W^{\new} } \cdot Q(y_{s,0}) \cdot (Z_0 - X_0) \cdot Q(y_{s,0}) \cdot \sqrt{ W^{\mid} - W^{\new} } \cdot e_i \mathrm{d} s \\
= & ~ - \int_0^1 \eta_{i,s}^{\top} \eta_{i,s} \mathrm{d} s .
\end{align*}
\end{proof}
 
\subsection{Running time anlaysis}

\begin{lemma}[Update time] \label{lem:update_time_complicated}
Assume the sequence $\mathsf{acts}$ satisfies Assumption~\ref{assumption:short_sequence_small_update} and $\epsilon_{\comp} \leq 0.01$. 
Then each call to the function $\textsc{Update}(\mathsf{acts})$ makes $O(N)$ calls to $\textsc{pm}.\textsc{update}$ and takes an extra $O(n^2 \cdot |\mathsf{acts}| + \T_{\mat}(n,n,r) \cdot N^3 \cdot \log(n))$ time.
\end{lemma}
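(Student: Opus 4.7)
The plan is to account separately for (i) the cost of \textsc{BatchedUpdate}, (ii) the calls to $\textsc{pm}.\textsc{update}$, and (iii) the cost of producing the sketches $R_{\eta,s}\eta_{i,s}$, $R_{\alpha\beta,s,t}\alpha_{i,s,t}$, $R_{\alpha\beta,s,t}\beta_{i,s,t}$, $R_{\gamma,s,s',t}\gamma_{i,s,t}$, and the inner products $\theta_{i,t}^\top\theta_{i,t}$ for every $i\in[m]$ and every $(s,s',t)\in\mathcal{S}\times\mathcal{S}\times\mathcal{T}$.

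First, by Theorem~\ref{thm:leverage_batch} applied to $\mathsf{acts}$ (which satisfies Assumption~\ref{assumption:short_sequence_small_update}), \textsc{BatchedUpdate} returns $(w^{\mid},\Delta\wt\sigma^{\mid})$ in $O(n^2\cdot|\mathsf{acts}|)$ time. Counting the calls to $\textsc{pm}.\textsc{update}$ is then immediate from inspection of Algorithm~\ref{alg:maintain_leverage_score_complicated}: one call on $w^{\mid}$ before entering the outer loop, and one call on $x_t$ for each $t\in\mathcal{T}$, which totals $1+|\mathcal{T}|=1+N=O(N)$ invocations. No further calls occur inside the $s$- or $s'$-loops.

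The bulk of the work is to show that, for each fixed $t\in\mathcal{T}$ and each $(s,s')\in\mathcal{S}\times\mathcal{S}$, the sketched vectors can all be computed in $O(\T_{\mathrm{mat}}(n,n,r)\cdot\log n)$ time. Here is the idea for $\gamma$; the other quantities are strictly easier. Recall
\[
\gamma_{i,s,t} \;=\; \sqrt{W^{\mid}-W^{\new}}\cdot Q(y_{s,t})\cdot(Z_t-X_t)\cdot Q(y_{s,t})\cdot \sqrt{W^{\new}}\cdot e_i.
\]
Stacking over $i$, computing $R_{\gamma,s,s',t}\cdot(\gamma_{i,s,t})_{i\in[m]}$ amounts to multiplying the $r\times m$ Gaussian matrix $R_{\gamma,s,s',t}$ against a product of $O(1)$ diagonal matrices and two copies of $Q(y_{s,t})=A\,M(y_{s,t})^{-1}A^\top$. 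We do \emph{not} have $M(y_{s,t})^{-1}$ in hand; however, $\textsc{pm}.\textsc{update}(x_t)$ has produced an explicit approximation of $M(z_t)^{-1}$, and $M(y_{s,t})$ and $M(z_t)$ are within a constant factor in the PSD order because $y_{s,t}$ lies on the segment from $z_t$ to $x_t$ and $\epsilon_{\comp}\le 0.01$ together with Assumption~\ref{assumption:short_sequence_small_update} gives $\|\log y_{s,t}-\log z_t\|_\infty=O(1)$. Lemma~\ref{lem:precondition} therefore lets us apply $M(y_{s,t})^{-1}$ to an $n\times r$ matrix to accuracy $1/\poly(n)$ in $t=O(\log n)$ preconditioning iterations, each of which is a single rectangular multiplication of cost $\T_{\mathrm{mat}}(n,n,r)$. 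Sandwiching this between multiplications by $A$, $A^\top$ and the relevant diagonal matrices gives a total of $O(\T_{\mathrm{mat}}(n,n,r)\cdot\log n)$ per evaluation. The same template applies to $\alpha$, $\beta$, $\eta$; for $\theta_{i,t}^\top\theta_{i,t}$ Lemma~\ref{lem:compute_theta_theta} yields the required cost directly.

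Summing over the triple loop, the sketch-production cost is at most
\[
O\!\left(|\mathcal{T}|\cdot|\mathcal{S}|^2\cdot \T_{\mathrm{mat}}(n,n,r)\cdot\log n\right) \;=\; O\!\left(\T_{\mathrm{mat}}(n,n,r)\cdot N^3\cdot\log n\right),
\]
and the $N$ calls to $\textsc{pm}.\textsc{update}$ contribute a lower-order term (they are absorbed in the amortized analysis of Theorem~\ref{thm:maintain_projection} and carry no explicit cost beyond what is already counted). Adding the $O(n^2\cdot|\mathsf{acts}|)$ charge for \textsc{BatchedUpdate} and the $O(n^2)$ cost of Lemma~\ref{lem:compute_sigma_change} to assemble $\Delta\wt\sigma^{\new}$ yields the claimed bound. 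The main obstacle, and the only non-bookkeeping part of the proof, is verifying that $M(z_t)^{-1}$ is a sufficiently good preconditioner for $M(y_{s,t})$ so that $O(\log n)$ Richardson iterations suffice; this is exactly what the closeness between $y_{s,t}$ and $z_t$ (inherited from Assumption~\ref{assumption:short_sequence_small_update} and the $\epsilon_{\comp}\le 0.01$ hypothesis) buys us via Lemma~\ref{lem:precondition}.
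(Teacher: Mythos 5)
Your proposal is correct and follows essentially the same decomposition the paper uses: charge $O(n^2\cdot|\mathsf{acts}|)$ to \textsc{BatchedUpdate} via Theorem~\ref{thm:leverage_batch}, read off the $1+|\mathcal{T}|=O(N)$ calls to $\textsc{pm}.\textsc{update}$ from Algorithm~\ref{alg:maintain_leverage_score_complicated}, and bound the sketch costs by $O(\T_{\mat}(n,n,r)\cdot N^3\cdot\log n)$. The one place you unfold more than the paper does --- the preconditioning of $M(y_{s,t})^{-1}$ by $M(z_t)^{-1}$ via Lemma~\ref{lem:precondition} to get $O(\log n)$ rectangular multiplications per evaluation --- is exactly the content the paper delegates to Lemma~\ref{lem:compute_alpha_beta_gamma_eta}, so there is no substantive difference.
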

\begin{proof}
The number of calls to $\textsc{pm}.\textsc{update}$ directly follows from Algorithm~\ref{alg:maintain_leverage_score_complicated}.
The time to compute $\Delta \wt{\sigma}^{\mid}$ is $O(n^2 \cdot |\mathsf{acts}|)$ by applying Theorem~\ref{thm:leverage_batch}. 
The remaining part of the running time is given by the following Lemma~\ref{lem:compute_theta_theta},~\ref{lem:compute_alpha_beta_gamma_eta} and~\ref{lem:compute_sigma_change}.
\end{proof}

\begin{lemma}[Computing $\theta^\top \theta$] \label{lem:compute_theta_theta}
Assume $m = O(n)$.
Then for any $t \in {\cal T}$, the time to compute $\theta_{i,t}^\top \theta_{i,t}$ for all $ i \in [m]$ is $O(n^2)$.
\end{lemma}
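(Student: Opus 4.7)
The plan is to exploit the diagonal structure of $\sqrt{W^{\mid}-W^{\new}}$ and $\sqrt{W^{\new}}$ so that computing all the inner products $\theta_{i,t}^\top \theta_{i,t}$ reduces to a single entry-wise squaring of $Q(z_t)$ followed by a single matrix-vector product.

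First I would expand $\theta_{i,t}$ entry-wise. Since $\sqrt{W^{\new}} e_i = \sqrt{w_i^{\new}}\, e_i$ and $\sqrt{W^{\mid}-W^{\new}}$ is diagonal, the $j$-th entry of $\theta_{i,t}$ is
\begin{align*}
(\theta_{i,t})_j \;=\; \sqrt{w_i^{\new}} \cdot \sqrt{w_j^{\mid}-w_j^{\new}} \cdot Q(z_t)_{j,i},
\end{align*}
so taking the squared norm yields
\begin{align*}
\theta_{i,t}^\top \theta_{i,t} \;=\; w_i^{\new} \cdot \sum_{j=1}^{m} (w_j^{\mid}-w_j^{\new}) \cdot [Q(z_t)_{j,i}]^2.
\end{align*}

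Next I would observe that the call $\textsc{pm}.\textsc{update}(x_t)$ inside Algorithm~\ref{alg:maintain_leverage_score_complicated} returns $Q(z_t) \in \mathbb{R}^{m \times m}$ explicitly, so we may treat it as a given dense matrix. Form the entry-wise square $Q^{(2)} \in \mathbb{R}^{m \times m}$ with $Q^{(2)}_{j,i} = [Q(z_t)_{j,i}]^2$; this takes $O(m^2) = O(n^2)$ time since $m = O(n)$. Letting $u = w^{\mid} - w^{\new} \in \mathbb{R}^m$, the inner sum above is precisely the $i$-th coordinate of $(Q^{(2)})^\top u$, so a single matrix-vector product computes $v := (Q^{(2)})^\top u$ in $O(n^2)$ time. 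Finally, for each $i \in [m]$ we output $\theta_{i,t}^\top \theta_{i,t} = w_i^{\new} \cdot v_i$, an additional $O(m) = O(n)$ time.

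The total cost is $O(n^2)$ as required, and there is no real obstacle once one notices the diagonal structure; the only thing to verify is that $Q(z_t)$ is indeed maintained explicitly by the projection maintenance data structure so that we may square it entry-wise, which is guaranteed by the interface described in Theorem~\ref{thm:maintain_projection} and used throughout Algorithm~\ref{alg:maintain_leverage_score_complicated}.
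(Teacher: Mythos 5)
Your proof is correct and takes essentially the same route as the paper: expand $\theta_{i,t}^\top \theta_{i,t}$ using the diagonal structure to get $w_i^{\new}\,e_i^\top Q^{(2)}(w^{\mid}-w^{\new})$, where $Q^{(2)}$ is the entrywise square of $Q(z_t)$, then observe that forming $Q^{(2)}$ and a single matrix-vector product cost $O(n^2)$. (Since $Q(z_t)$ is symmetric, your $(Q^{(2)})^\top u$ coincides with the paper's $Q^{(2)}u$.)
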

\begin{proof}
Recall from Definition~\ref{defn:eta_theta_alpha_beta_gamma} that
\begin{align*}
 \theta_{i,t} = \sqrt{ W^{\mid} - W^{\new} } \cdot Q(z_t) \cdot \sqrt{ W^{\new} } \cdot e_i.
\end{align*}
Denote $Q = Q(z_t)$ and $Q^{(2)}$ the entry-wise square of $Q$. 
We have
\begin{align*}
\theta_{i,t}^\top \theta_{i,t} 
= & ~ w_i^{\new} e_i^\top Q ( W^{\mid} - W^{\new} ) Q e_i \\
= & ~ w_i^{\new} \cdot \sum_{j \in [m]} (Q_{i,j})^2 ( w^{\mid} - w^{\new} )_j \\
= & ~ w_i^{\new} \cdot e_i^\top Q^{(2)} ( w^{\mid} - w^{\new} ) .
\end{align*}
Therefore, in order to compute $\theta_{i,t}^\top \theta_{i,t}$ for all $i \in [m]$, it suffices to compute the matrix-vector product $Q^{(2)} ( w^{\mid} - w^{\new} )$, which takes time $O(n^2)$. 

\end{proof}

\begin{lemma}[Computing $\alpha,\beta,\gamma,\eta$] \label{lem:compute_alpha_beta_gamma_eta}
Assume $m = O(n)$ and $\epsilon_{\comp} \leq 0.01$.
For any $r \geq 0$, $t \in {\cal T}$, $s, s' \in {\cal S}$ and matrices $R_{\alpha \beta,s,t}, R_{\gamma,s,s',t}, R_{\eta,s} \in \R^{r \times m}$, the time to compute $R_{\alpha \beta,s,t} \alpha_{i,s,t}$, $R_{\alpha \beta,s,t} \beta_{i,s,t}$, $R_{\gamma,s,s',t} \gamma_{i,s,t}$, $R_{\gamma,s,s',t} \gamma_{i,s',t}$, $R_{\eta,s} \eta_{i,s}$ for all $i \in [m]$ (up to negligible error) is $O(\T_{\mat}(n,n,r) \cdot \log(n))$ using fast matrix multiplication.
\end{lemma}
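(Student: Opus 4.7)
The plan is to view the task of computing $R v_i$ for all $i \in [m]$ (where $v_i$ is one of $\alpha_{i,s,t}, \beta_{i,s,t}, \gamma_{i,s,t}, \eta_{i,s}$) as a single rectangular matrix product, and then evaluate that product left-to-right using preconditioned iterative refinement for the only non-explicit factor, namely $(A^\top Y_{s,t} A)^{-1}$. For instance, stacking the $\alpha_{i,s,t}$'s into the columns of an $m \times m$ matrix, the target becomes
\begin{align*}
R_{\alpha\beta,s,t}\,\alpha_{\cdot,s,t}
= R_{\alpha\beta,s,t}\, \sqrt{Z_t - X_t}\, A\, \bigl(A^\top Y_{s,t} A\bigr)^{-1}\, A^\top \sqrt{W^{\new}} \in \R^{r \times m}.
\end{align*}
Multiplying from the left: premultiplication of the diagonal matrix $\sqrt{Z_t-X_t}$ into $R_{\alpha\beta,s,t}$ takes $O(rm)$ time; the subsequent product with $A \in \R^{m \times n}$ and, at the end, with $A^\top \in \R^{n \times m}$, are each rectangular matrix multiplications of cost $O(\T_{\mat}(n,n,r))$; the final diagonal scaling is $O(rm)$. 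The same pattern works for $\beta, \gamma, \eta$, where the only additional factors are diagonal matrices or one more copy of $A(A^\top Y_{s,t} A)^{-1} A^\top$, so a constant number of rectangular products $O(\T_{\mat}(n,n,r))$ suffice outside the inverse step.

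The nontrivial step is applying $(A^\top Y_{s,t} A)^{-1}$ to an $n \times r$ block $B^\top$. Here I would use Lemma~\ref{lem:precondition} with preconditioner $M(z_t)^{-1}$ (or $M(v^{\mid})^{-1}$), which is maintained explicitly by the projection-maintenance data structure via the \textsc{pm}.\textsc{update} calls in Algorithm~\ref{alg:maintain_leverage_score_complicated}. Because $y_{s,t} = z_t + s(x_t - z_t)$ is a convex combination of $z_t$ and $x_t$, and Assumption~\ref{assumption:short_sequence_small_update} combined with the projection-maintenance guarantee $\epsilon_{\comp} \le 0.01$ forces $\|\log y_{s,t} - \log z_t\|_\infty = O(1)$, we obtain the spectral sandwich
\begin{align*}
c_1 \cdot M(z_t) \preceq A^\top Y_{s,t} A \preceq c_2 \cdot M(z_t),
\end{align*}
for absolute constants $c_1, c_2 > 0$, i.e.\ a condition number $\kappa = O(1)$. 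Lemma~\ref{lem:precondition} then yields an approximation $f(M(z_t),t) B^\top$ of $(A^\top Y_{s,t} A)^{-1} B^\top$ with multiplicative error at most $(1-1/\kappa)^{t+1}$, so $t = O(\log n)$ iterations bring the error below any $1/\poly(n)$. Each iteration performs a multiplication of an $n \times n$ matrix ($M(z_t)^{-1}$ or $A^\top Y_{s,t} A$) by the $n \times r$ block, in $O(\T_{\mat}(n,n,r))$ time.

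Summing up: a constant number of rectangular products at $O(\T_{\mat}(n,n,r))$ each, plus one preconditioned solve of cost $O(\T_{\mat}(n,n,r) \cdot \log n)$, yields the claimed total time $O(\T_{\mat}(n,n,r) \log n)$. The main obstacle, therefore, is verifying the $O(1)$-spectral approximation needed to make the preconditioner effective; everything else reduces to careful bookkeeping of matrix shapes. This is precisely what Assumption~\ref{assumption:short_sequence_small_update} and the choice $\epsilon_{\comp} \le 0.01$ were designed to provide, so no further work beyond invoking these hypotheses and Lemma~\ref{lem:precondition} should be required.
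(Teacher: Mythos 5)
Your proof takes essentially the same route as the paper's: stack the vectors into a single $r \times m$ (or $m \times n$ intermediate) rectangular product, evaluate left-to-right in $O(\T_{\mat}(n,n,r))$ per factor, and handle the one non-explicit factor $M(y_{s,t})^{-1}$ by preconditioning with the explicitly maintained $M(z_t)^{-1}$ via Lemma~\ref{lem:precondition}, giving $O(\log n)$ iterations for $1/\poly(n)$ accuracy. Your extra step of verifying the $O(1)$ spectral equivalence (from $z_t \approx_{\epsilon_{\comp}} x_t$ and $y_{s,t}$ being a coordinate-wise convex combination) is a detail the paper leaves implicit, and it is correct.
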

\begin{proof}
We only describe how to compute $R_{\alpha \beta,s,t} \alpha_{i,s,t}$ for all $i \in [m]$ in time $O(\T_{\mat}(n,n,r) \cdot \log(n))$ as the rest of the calculations are similar. 
Notice that that computing $R_{\alpha \beta,s,t} \alpha_{i,s,t}$ for all $i \in [m]$ is essentially computing the matrix
\begin{align*}
\{ R_{\alpha \beta,s,t} \alpha_{i,s,t} \}_{i \in [m]} =  R_{\alpha \beta,s,t} \sqrt{  Z_t - X_t } \cdot A \cdot M(y_{s,t})^{-1} \cdot A^\top \cdot \sqrt{ W^{\new} } .
\end{align*}
Assume we know the matrix $M(y_{s,t})^{-1}$, then we can perform the computation from left to right, and it follows that each matrix multiplication here can be done in time $O(\T_{\mat}(n,n,r))$.
To remove the assumption that we know $M(y_{s,t})^{-1}$, we pre-condition on the matrix $M(z_t)^{-1}$ and apply Lemma~\ref{lem:precondition} to compute $R_{\alpha \beta,s,t} \alpha_{i,s,t}$ for all $i \in [m]$ (up to negligible error) in time $O( \T_{\mat} (n,n,r) \cdot \log (n))$.

\end{proof}

\begin{lemma}[Computing $\Delta \wt{\sigma}^{\new}$] \label{lem:compute_sigma_change}
Assume $m = O(n)$ and $\epsilon_{\comp} \leq 0.01$. 
Define an approximation $\Delta \wt{\sigma}^{\new}_i$ to the change in leverage score to be
\begin{align*}
\Delta \wt{\sigma}^{\new}_i 
=&~ \tau(v^{\mid})_i \cdot (w_i^{\new} - w^{\mid}_i) + \sum_{t \in \mathcal{T}} \omega_t \theta_{i,t}^\top \theta_{i,t} - \sum_{s \in \mathcal{S}}  \omega_s \eta_{i,s}^{\top} R_{\eta,s}^\top R_{\eta,s} \eta_{i,s} \\
&~ \quad + 2 \sum_{t \in \mathcal{T}} \sum_{s \in \mathcal{S}} \omega_t \omega_s \alpha_{i,s,t}^\top R_{\alpha \beta,s,t}^\top R_{\alpha \beta,s,t} \beta_{i,s,t} + \sum_{t \in \mathcal{T}} \sum_{s \in \mathcal{S}} \sum_{s' \in \mathcal{S}} \omega_t \omega_s \omega_{s'} \gamma_{i,s,t}^\top R_{\gamma,s,s',t}^\top R_{\gamma,s,s',t} \gamma_{i,s', t}.
\end{align*}
Then the time to compute $\Delta \wt{\sigma}^{\new}_i$ for all $i \in [m]$ (up to negligible error) is $O( \T_{\mat} (n,n,r) \cdot N^3 \cdot \log(n))$.
\end{lemma}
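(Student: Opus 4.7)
The plan is to bound the running time term-by-term in the formula for $\Delta\wt\sigma^{\new}_i$ and observe that the dominant cost comes from the triple-loop $\gamma$ term. First I would handle the easy deterministic pieces: the term $\tau(v^{\mid})_i \cdot (w^{\new}_i - w^{\mid}_i)$ can be read off the diagonal of $Q(v^{\mid})$ that the projection-maintenance data structure already stores, so it takes $O(n)$ total time for all $i\in[m]$. For the $\sum_{t\in\mathcal{T}} \omega_t\,\theta_{i,t}^\top\theta_{i,t}$ term, Lemma~\ref{lem:compute_theta_theta} gives $O(n^2)$ per $t$, which contributes $O(N n^2)$ overall; this is absorbed into the claimed bound since $\T_{\mat}(n,n,r)\geq n^2$.

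Next I would handle the three JL-sketched terms using Lemma~\ref{lem:compute_alpha_beta_gamma_eta} in a uniform way. The key observation is that the sketched vectors $R_{\eta,s}\eta_{i,s}$, $R_{\alpha\beta,s,t}\alpha_{i,s,t}$, $R_{\alpha\beta,s,t}\beta_{i,s,t}$, $R_{\gamma,s,s',t}\gamma_{i,s,t}$ and $R_{\gamma,s,s',t}\gamma_{i,s',t}$ live in $\R^r$, and computing them simultaneously for all $i\in[m]$ amounts to forming an $r\times m$ matrix via a constant number of $r\times n$, $n\times n$ multiplications (with one $M(y_{s,t})^{-1}$ obtained via preconditioning by $M(z_t)^{-1}$). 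By Lemma~\ref{lem:compute_alpha_beta_gamma_eta}, each such sketched-vector computation costs $O(\T_{\mat}(n,n,r)\log n)$ up to negligible error. Once those sketched vectors are in hand, each per-$i$ inner product like $\alpha_{i,s,t}^\top R_{\alpha\beta,s,t}^\top R_{\alpha\beta,s,t}\beta_{i,s,t}$ costs only $O(r)$, so summing over all $i\in[m]$ adds $O(mr)=O(nr)\leq O(\T_{\mat}(n,n,r))$ per sample point, which is dominated by the preceding cost.

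Finally I would tally up across sample points. The $\eta$ term has $N$ sample points, contributing $O(N\cdot \T_{\mat}(n,n,r)\log n)$. The $\alpha\beta$ term has $|\mathcal{S}|\cdot|\mathcal{T}|=N^2$ sample-point pairs, contributing $O(N^2\cdot \T_{\mat}(n,n,r)\log n)$. The $\gamma$ term has $|\mathcal{S}|^2\cdot|\mathcal{T}|=N^3$ sample-point triples, contributing $O(N^3\cdot \T_{\mat}(n,n,r)\log n)$, which dominates all other terms and yields the claimed bound $O(\T_{\mat}(n,n,r)\cdot N^3\cdot \log n)$.

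There is no real obstacle here beyond careful bookkeeping; the work has essentially been done in Lemmas~\ref{lem:compute_theta_theta} and~\ref{lem:compute_alpha_beta_gamma_eta}. The one subtlety to verify is that we genuinely can amortize the rectangular matrix multiplication across all $i\in[m]$ simultaneously rather than paying per $i$ — but this follows because each $\{\alpha_{i,s,t}\}_{i\in[m]}$ (and similarly for $\beta,\gamma,\eta$) assembles into a single matrix whose computation after preconditioning reduces to products of $n\times n$ by $n\times r$ matrices, as already established in Lemma~\ref{lem:compute_alpha_beta_gamma_eta}.
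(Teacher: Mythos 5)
Your proof is correct and follows essentially the same route as the paper: handle the $\tau(v^{\mid})$ term by reading the diagonal of $Q(v^{\mid})$, invoke Lemma~\ref{lem:compute_theta_theta} for the $\theta$ term, invoke Lemma~\ref{lem:compute_alpha_beta_gamma_eta} for the sketched $\eta$, $\alpha\beta$, and $\gamma$ terms, and tally the sample-point counts so that the $N^3$-fold $\gamma$ loop dominates. Your extra bookkeeping about the $O(r)$ per-$i$ inner products and the amortization of the rectangular multiplication across all $i$ is a correct elaboration of what Lemma~\ref{lem:compute_alpha_beta_gamma_eta} already encapsulates, but does not change the argument.
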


\begin{proof}
Since we know $\tau(v^{\mid})_i = Q(v^{\mid})_{i,i}$ and $w_i^{\new} - w^{\mid}_i$, we can compute the first term for all $i \in [m]$ in $O(n)$ time.
By Lemma~\ref{lem:compute_theta_theta}, the time to compute the second term for all $i \in [m]$ is $O(n^2 \cdot N)$.
It follows from Lemma~\ref{lem:compute_alpha_beta_gamma_eta} that the time to compute the rest of the terms for all $i \in [m]$ is $O( \T_{\mat}(n,n,r) \cdot N^3 \cdot \log(n))$.
\end{proof}

\subsection{Upper bounding $\eta$, $\alpha$, $\beta$ and $\gamma$}
\label{subsec:upper_bound_eta_alpha_beta_gamma}

From the computation in Section~\ref{subsec:variance_bound_random_gaussian_matrix}, the variance of the three terms where we applied JL matrices are bounded by $\sum_i \norm{\eta_i}_2^4$, $\sum_i \norm{\alpha_i}_2^2 \norm{\beta_i}_2^2$ and $\sum_i \norm{\gamma_i}_2^4$.
Our bounds for these terms are summarized in Table~\ref{table:upper_bound_eta_alpha_beta_gamma}.

\begin{table}[htp!]  
\centering
\begin{tabular}{| l | l | l |}
	\hline
	{\bf Quantity} & {\bf Bound} & {\bf Lemma} \\ \hline
	$\sum_{i=1}^m \| \eta_i \|_2^4$ & $\epsilon_{\comp}^2 \cdot \norm{ \log(w^{\new}) - \log(w^{\mid}) }_2^2$ & Lemma~\ref{lem:upper_bound_sum_i_eta_i_eta_i} \\ \hline
	$\sum_{i=1}^m \| \alpha_i \|_2^2 \| \beta_i \|_2^2$ & $\epsilon_{\comp}^2 \cdot \norm{ \log(w^{\new}) - \log(w^{\mid}) }_2^2$ & Lemma~\ref{lem:upper_bound_sum_i_alpha_i_beta_i} \\ \hline
	$\sum_{i=1}^m \| \gamma_i \|_2^4$ & $\epsilon_{\comp}^4 \cdot \norm{ \log(w^{\new}) - \log(w^{\mid}) }_2^2$ & Lemma~\ref{lem:upper_bound_sum_i_gamma_i_gamma_i} \\ \hline
\end{tabular}
\caption{Upper bounds for $\sum_i \norm{\eta_i}_2^4$, $\sum_i \norm{\alpha_i}_2^2 \norm{\beta_i}_2^2$ and $\sum_i \norm{\gamma_i}_2^4$, where $\eta$, $\alpha$, $\beta$ and $\gamma$ are defined in Definition~\ref{defn:eta_theta_alpha_beta_gamma}.}
\label{table:upper_bound_eta_alpha_beta_gamma}
\end{table}

\subsubsection{Upper bounding $\sum_{i=1}^m \| \eta_i \|_2^4$}

\begin{lemma}\label{lem:upper_bound_sum_i_eta_i_eta_i}
Assume $\epsilon_{\comp} \leq 0.01$, where $\epsilon_{\comp}$ is the error parameter in Algorithm~\ref{alg:maintain_leverage_score_complicated}.
Then for any $s \in {\cal S}$, we have
\begin{align*}
\sum_{i=1}^m \| \eta_{i,s} \|_2^4 = O(\epsilon_{\comp}^2) \cdot \norm{ \log(w^{\new}) - \log(w^{\mid}) }_2^2 .
\end{align*}
\end{lemma}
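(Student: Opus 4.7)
The plan is to reduce this to a trace computation that exposes the projection $P(y_{s,0})$ and the $\ell_\infty$ bound coming from the projection-maintenance guarantee. Writing $B=\sqrt{W^{\mid}-W^{\new}}$, $C=Z_0-X_0=V^{\mid}-W^{\mid}$, $Q=Q(y_{s,0})$, and $Y=Y_{s,0}$, we have
\[
\|\eta_{i,s}\|_2^2 = e_i^\top (B\,Q\,C\,Q\,B)\, e_i = M_{i,i},
\]
where $M=BQCQB$ is symmetric PSD (after splitting into positive and negative parts, as in Lemma~\ref{lem:error_bound_inner_w_update_simple}, we may assume $W^{\mid}-W^{\new}\succeq 0$ and $V^{\mid}-W^{\mid}\succeq 0$). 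Therefore
\[
\sum_{i=1}^m \|\eta_{i,s}\|_2^4 = \sum_i M_{i,i}^2 \le \|M\|_F^2 = \tr(M^2).
\]

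The second step is to convert $Q$ into the orthogonal projection $P=P(y_{s,0})$ via $Q = Y^{-1/2} P Y^{-1/2}$. Defining the diagonal matrices $\tilde C = Y^{-1/2} C Y^{-1/2}$ and $D = Y^{-1} B^2 = Y^{-1}(W^{\mid}-W^{\new})$, a direct rearrangement gives $M = Y^{1/2}\,(D^{1/2} P \tilde C P D^{1/2})\,Y^{-1/2}\cdot(\text{analogous})$, and a short calculation (identical in spirit to the one used to bound $\|\Delta_2\|_2$ in Lemma~\ref{lem:error_fourth_term_simple}) collapses this to
\[
\tr(M^2) = \tr(D\,P\tilde C P\,D\,P\tilde C P).
\]
Letting $T=P\tilde C P$ and using $\|T\|\le\|\tilde C\|$ (since $P$ is an orthogonal projection), the cyclic property of the trace yields
\[
\tr(D T D T) \le \|T\|\cdot \tr(D T D) = \|T\|\cdot \tr(D^2 T) \le \|T\|^2\cdot \tr(D^2).
\]

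The final step is to bound the two factors. From the projection-maintenance guarantee we have $\|\log v^{\mid}-\log w^{\mid}\|_\infty \le \epsilon_{\comp}$, so
\[
\|\tilde C\|_\infty = \max_j \frac{|v^{\mid}_j - w^{\mid}_j|}{y_{s,0,j}} = O(\epsilon_{\comp}),
\]
using that $y_{s,0}$ lies between $v^{\mid}$ and $w^{\mid}$. For $\tr(D^2)$, combining $\|\log v^{\mid}-\log w^{\mid}\|_\infty=O(\epsilon_{\comp})$ with the bound $\|\log w^{\mid}-\log w^{\new}\|_\infty = O(|\mathsf{acts}|\cdot n^{-0.08}) = O(1)$ from Theorem~\ref{thm:leverage_batch} shows that $w^{\mid}$, $w^{\new}$, $v^{\mid}$, and $y_{s,0}$ are all within constant multiplicative factors, so
\[
\tr(D^2) = \sum_j \left(\frac{w^{\mid}_j-w^{\new}_j}{y_{s,0,j}}\right)^{2} = O(1)\cdot \|\log w^{\new}-\log w^{\mid}\|_2^2.
\]
Multiplying the two bounds gives the claim.

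The only mildly delicate point is the PSD sign issue at the start: one must split $W^{\mid}-W^{\new}$ and $V^{\mid}-W^{\mid}$ into positive and negative parts and argue that each of the four resulting pieces contributes an error of the stated order, but this is entirely analogous to the splitting used elsewhere (e.g.\ Lemma~\ref{lem:error_bound_inner_w_update_simple}) and introduces only a constant factor.
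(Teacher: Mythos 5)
Your proof is correct and follows essentially the same route as the paper's: rewrite $\eta_{i,s}$ in terms of the orthogonal projection $P(y_{s,0})$ (you do this one step later via $Q=Y^{-1/2}PY^{-1/2}$, but it collapses to the same $\tr(DP\tilde CPDP\tilde CP)$ expression, which is exactly the paper's $\tr[(\sqrt BPCP\sqrt B)^2]$ after cyclicity), then bound using $\|P\tilde CP\|\le\|\tilde c\|_\infty=O(\epsilon_{\comp})$ and $\tr(D^2)=O(1)\|\log w^{\new}-\log w^{\mid}\|_2^2$. The sign-splitting caveat you flag is also handled identically (and explicitly) in the paper.
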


\begin{proof}
We recall the definition of vector $\eta_{i,s} \in \R^m$ from Definition~\ref{defn:eta_theta_alpha_beta_gamma} as follows:
\begin{align*}
\eta_{i,s} = & ~ \sqrt{ Z_0 - X_0} \cdot Q(y_{s,0}) \cdot \sqrt{ W^{\mid} - W^{\new} } \cdot e_i,
\end{align*}
where $y_{s,t} = z_t + s ( x_t - z_t ) $ and $Y_{s,t} = Z_t + s (X_t - Z_t) $.
Recall from Definition~\ref{defn:operator_Q_and_P} that $P(v) = V^{1/2} A (A^\top V A)^{-1} A^\top V^{1/2} = V^{1/2} \cdot Q(v) \cdot V^{1/2}$. 
We can rewrite $\eta_{i,s}$ as
\begin{align*}
\eta_{i,s} = & ~ \frac{ \sqrt{ Z_0 - X_0} }{ \sqrt{Y_{s,0}} } P(y_{s,0}) \frac{ \sqrt{ W^{\mid} - W^{\new} } }{ \sqrt{Y_{s,0}} } e_i .
\end{align*}
Therefore, we have
\begin{align*}
 & ~ \sum_{i=1}^m \| \eta_{i,s} \|_2^4 \\
= & ~ \sum_{i=1}^m \left( e_i^\top \frac{ \sqrt{ W^{\mid} - W^{\new} } }{ \sqrt{Y_{s,0}} } P(y_{s,0}) \frac{ \sqrt{ Z_0 - X_0} }{ \sqrt{Y_{s,0}} } \cdot \frac{ \sqrt{ Z_0 - X_0} }{ \sqrt{Y_{s,0}} } P(y_{s,0}) \frac{ \sqrt{ W^{\mid} - W^{\new} } }{ \sqrt{Y_{s,0}} }  e_i \right)^2 \\
\leq & ~ \left\| \frac{ \sqrt{ W^{\mid} - W^{\new} } }{ \sqrt{Y_{s,0}} } P(y_{s,0}) \frac{ Z_0 - X_0 }{ Y_{s,0} } P(y_{s,0}) \frac{ \sqrt{ W^{\mid} - W^{\new} } }{ \sqrt{Y_{s,0}} } \right\|_F^2 \\
= & ~ \tr \left[ \frac{ \sqrt{ W^{\mid} - W^{\new} } }{ \sqrt{Y_{s,0}} } P(y_{s,0}) \frac{ Z_0 - X_0 }{ Y_{s,0} } P(y_{s,0}) \frac{ \sqrt{ W^{\mid} - W^{\new} } }{ \sqrt{Y_{s,0}} } \right. \\
& ~ \cdot \left. \frac{ \sqrt{ W^{\mid} - W^{\new} } }{ \sqrt{Y_{s,0}} } P(y_{s,0}) \frac{ Z_0 - X_0 }{ Y_{s,0} } P(y_{s,0}) \frac{ \sqrt{ W^{\mid} - W^{\new} } }{ \sqrt{Y_{s,0}} } \right] \\
= & ~ \tr[ (\sqrt{ B } P C P \sqrt{B} )^2 ] ,
\end{align*}
where in the last step we define diagonal matrices $B, C \in \R^{m \times m}$ as
\begin{align*}
B =   \frac{ W^{\mid} - W^{\new} }{ Y_{s,0} }  ,
C =   \frac{ Z_0 - X_0 }{ Y_{s,0} } ,
\end{align*}
and for simplicity, we use $P$ to denote the projection matrix $P(y_{s,0})$.
Since $\epsilon_{\comp} \leq 0.01$, it follows that 
\begin{align*}
\norm{ \log(y_{s,0}) - \log(x_0) }_\infty \leq 0.01 .
\end{align*}
It follows that
\begin{align*}
 & ~ \tr[ (\sqrt{ B } P C P \sqrt{B} )^2 ] \\
\leq & ~ \| c \|_\infty^2 \cdot \tr[B^2] & P \cdot C \cdot P \preceq  \| c \|_\infty \cdot I\\
= & ~  \| c \|_\infty^2 \cdot \norm{b}_2^2 \\
= & ~ O(1) \cdot \norm{ \log(z_0) - \log(x_0) }_{\infty}^2 \cdot \norm{ \log(w^{\new} ) - \log( w^{\mid} )}_2^2 \\
= & ~ O(\epsilon_{\comp}^2) \cdot \norm{ \log(w^{\new}) - \log(w^{\mid}) }_2^2 .
\end{align*}
\end{proof}

\subsubsection{Upper bounding $\sum_{i=1}^m \| \alpha_i \|_2^2 \| \beta_i \|_2^2$} 


\begin{lemma}\label{lem:upper_bound_sum_i_alpha_i_beta_i}
Assume $\epsilon_{\comp} \leq 0.01$ and $\norm{ \log (w^{\new}) - \log(w^{\mid}) }_\infty \leq 0.01$.
Then we have
\begin{align*}
\sum_{i=1}^m \| \alpha_{i,s,t} \|_2^2 \| \beta_{i,s,t} \|_2^2 = O(\epsilon_{\comp}^2) \cdot \norm{ \log(w^{\new}) - \log(w^{\mid}) }_2^2 .
\end{align*}
\end{lemma}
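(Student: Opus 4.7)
The plan is to use a one-sided H\"older-type split
\begin{equation*}
\sum_{i=1}^m \|\alpha_{i,s,t}\|_2^2 \|\beta_{i,s,t}\|_2^2 \;\leq\; \Bigl(\max_{i \in [m]} \|\alpha_{i,s,t}\|_2^2\Bigr) \cdot \sum_{i=1}^m \|\beta_{i,s,t}\|_2^2,
\end{equation*}
and show separately that the max is $O(\epsilon_{\comp})$ while the sum is $O(\epsilon_{\comp}) \cdot \|\log(w^{\new}) - \log(w^{\mid})\|_2^2$. The product then gives the claimed $O(\epsilon_{\comp}^2)$ bound. Structurally this mirrors Lemma~\ref{lem:upper_bound_sum_i_eta_i_eta_i}: I would use the identity $Q(v) = V^{-1/2} P(v) V^{-1/2}$ to rewrite $\alpha$ and $\beta$ in terms of the projections $P(y_{s,t})$, $P(z_t)$ and the diagonal ratios
\begin{equation*}
b = \tfrac{w^{\mid} - w^{\new}}{y_{s,t}}, \qquad c = \tfrac{z_t - x_t}{y_{s,t}}, \qquad d = \tfrac{w^{\new}}{y_{s,t}}.
\end{equation*}
The projection-maintenance guarantee $z_t \approx_{\epsilon_{\comp}} x_t$ gives $\|c\|_\infty = O(\epsilon_{\comp})$, and the hypothesis $\|\log w^{\new} - \log w^{\mid}\|_\infty \leq 0.01$ (together with $\epsilon_{\comp} \leq 0.01$) gives $\|d\|_\infty = O(1)$ and $\|b\|_2 = \Theta(\|\log w^{\new} - \log w^{\mid}\|_2)$.

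For the max step, $\|\alpha_{i,s,t}\|_2^2$ is the $i$-th diagonal entry of $\sqrt{W^{\new}}\, Q(y_{s,t})(Z_t - X_t) Q(y_{s,t})\, \sqrt{W^{\new}}$. I would first derive the PSD sandwich
\begin{equation*}
Q(y_{s,t})(Z_t - X_t) Q(y_{s,t}) \;\preceq\; \|c\|_\infty \cdot Q(y_{s,t}),
\end{equation*}
by expanding $Q = Y_{s,t}^{-1/2} P(y_{s,t}) Y_{s,t}^{-1/2}$ and using $P(y_{s,t})^2 = P(y_{s,t})$ together with $(Z_t - X_t)/Y_{s,t} \preceq \|c\|_\infty I$. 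Conjugating by $\sqrt{W^{\new}}$ and reading off the $i$-th diagonal yields $\|\alpha_{i,s,t}\|_2^2 \leq \|c\|_\infty \cdot d_i \cdot \sigma(y_{s,t})_i = O(\epsilon_{\comp})$, since leverage scores are bounded by $1$.

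For the sum step, $\sum_i \|\beta_{i,s,t}\|_2^2$ is a Frobenius norm that expands to
\begin{equation*}
\tr\!\left[\sqrt{W^{\new}} Q(z_t)(W^{\mid}{-}W^{\new}) Q(y_{s,t}) (Z_t{-}X_t) Q(y_{s,t}) (W^{\mid}{-}W^{\new}) Q(z_t)\sqrt{W^{\new}}\right].
\end{equation*}
The same sandwich peels off the middle $Q(y_{s,t})(Z_t{-}X_t)Q(y_{s,t})$ factor, costing one $\|c\|_\infty$. Next I would absorb the remaining $(W^{\mid}{-}W^{\new}) Q(y_{s,t}) (W^{\mid}{-}W^{\new})$ block by writing it as $Y_{s,t}^{1/2} B\, P(y_{s,t}) B\, Y_{s,t}^{1/2}$ with $B = \mathrm{diag}(b)$, and then using $P(y_{s,t}) \preceq I$ to bound it by $(W^{\mid}{-}W^{\new})^2 / Y_{s,t}$. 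What remains is $\tr[\sqrt{D_1}\, P(z_t)\, D_2\, P(z_t)\, \sqrt{D_1}]$ with $D_1 = W^{\new}/Z_t$ and $D_2 = (W^{\mid}{-}W^{\new})^2/(Y_{s,t} Z_t)$; using $D_1 \preceq O(1) \cdot I$, the cyclic property, and $P(z_t)_{i,i} \leq 1$ collapses this to $O(1) \cdot \sum_i D_{2,i,i}$, which after the elementary conversion $|w^{\mid}_i - w^{\new}_i|^2 / (y_{s,t,i} z_{t,i}) \lesssim (\log w^{\mid}_i - \log w^{\new}_i)^2$ becomes $O(\|\log w^{\new} - \log w^{\mid}\|_2^2)$.

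The main friction will be bookkeeping: four interleaved diagonal matrices ($Z_t - X_t$, $W^{\mid} - W^{\new}$, $W^{\new}$, $Y_{s,t}$) and two projections, plus the sign issue that $Z_t - X_t$ and $W^{\mid} - W^{\new}$ need to be PSD for the sandwich bounds to apply; as in Lemma~\ref{lem:error_bound_inner_w_update_simple} we can split each into positive and negative parts and repeat the argument. Modulo that, every step is either $P^2 = P$, $P \preceq I$, or a ratio bound coming from the projection-maintenance guarantee, so no essentially new ingredient beyond those already used for $\sum_i \|\eta_{i,s}\|_2^4$ should be required.
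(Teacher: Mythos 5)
Your proposal is correct and follows essentially the same line as the paper's proof. Both factor the sum as $\max_i\|\alpha_i\|_2^2 \cdot \sum_i\|\beta_i\|_2^2$, bound the max by peeling $\sqrt{(Z_t-X_t)/Y_{s,t}}$ off the projection $P(y_{s,t})$ to get an $O(\epsilon_{\comp})$ factor times a bounded leverage-score/$\ell_2$ term, and bound the sum by writing it as a trace and applying successive sandwiches $P C P \preceq \|c\|_\infty I$ and $P D P \preceq O(1) I$ to reduce to $\|c\|_\infty\|b\|_2^2$. The only surface differences are that you state the $\alpha$-bound via the PSD matrix inequality $Q(y_{s,t})(Z_t-X_t)Q(y_{s,t}) \preceq \|c\|_\infty\, Q(y_{s,t})$ and read off the $i$-th diagonal (using $\sigma(y_{s,t})_i \leq 1$), while the paper bounds the vector norm directly using $\|P v\|_2 \leq \|v\|_2$; and you normalize the diagonal matrices in the trace computation slightly differently (your $D_2 = (W^{\mid}-W^{\new})^2/(Y_{s,t}Z_t)$ is exactly the paper's $B^2$ with $B=(W^{\mid}-W^{\new})/\sqrt{Z_t Y_{s,t}}$, so the expressions coincide). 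Your note about splitting into positive and negative parts for the PSD assumption also matches the simplifying assumption the paper states just before Lemma~\ref{lem:leverage_score_moving_complicated}.
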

\begin{proof}
We recall the definition of vector $\alpha_{i,s,t} \in \R^m$ and $\beta_{i,s,t} \in \R^m$ from Definition~\ref{defn:eta_theta_alpha_beta_gamma} as follows:
\begin{align*}
\alpha_{i,s,t} = & ~ \sqrt{  Z_t - X_t } \cdot Q(y_{s,t}) \cdot \sqrt{ W^{\new} } \cdot e_i , \\
\beta_{i,s,t} = & ~ \sqrt{ Z_t - X_t } \cdot Q(y_{s,t}) \cdot ( W^{\mid} - W^{\new} ) \cdot Q( z_t )\cdot \sqrt{ W^{\new} } \cdot e_i,
\end{align*}
where recall that $y_{s,t} = z_t + s ( x_t - z_t ) $ and $Y_{s,t} = Z_t + s (X_t - Z_t)$. Recall from Definition~\ref{defn:operator_Q_and_P} that $P(v) = V^{1/2} A (A^\top V A)^{-1} A^\top V^{1/2} = V^{1/2} \cdot Q(v) \cdot V^{1/2}$. 
We therefore can rewrite $\alpha_{i,s,t}$ and $\beta_{i,s,t}$ as
\begin{align*}
\alpha_{i,s,t} = & ~ \frac{ \sqrt{ Z_t - X_t } }{ \sqrt{ Y_{s,t} } } P(y_{s,t})  \frac{ \sqrt{ W^{\new} } }{ \sqrt{ Y_{s,t} } }  e_i , \\
\beta_{i,s,t} = & ~ \frac{ \sqrt{ Z_t - X_t } }{ \sqrt{ Y_{s,t} } } P(y_{s,t}) \frac{ W^{\mid} - W^{\new} }{ \sqrt{Z_t Y_{s,t}} }  P(z_t) \frac{ \sqrt{W^{\new}} } { \sqrt{Z_t} } e_i .
\end{align*}
Our assumptions $\epsilon_{\comp} \leq 0.01$ and $\norm{ \log (w^{\new}) - \log(w^{\mid}) }_\infty \leq 0.01$ imply that 
\begin{align*}
\norm{ \log(y_{s,t}) - \log(w^{\new}) }_\infty \leq 0.1 \qquad \text{and} \qquad \norm{ \log(z_t) - \log(w^{\new}) }_\infty \leq 0.1 .
\end{align*}
Therefore we can upper bound $\alpha_{i,s,t}$ as
\begin{align*}
  \max_{i,s,t} \| \alpha_{i,s,t} \|_2^2
= & ~ \max_{i,s,t} \left\| \frac{ \sqrt{ Z_t - X_t } }{ \sqrt{ Y_{s,t} } } P(y_{s,t}) \frac{ \sqrt{W^{\new}} }{ \sqrt{Y_{s,t}} } e_i \right\|_2^2 \\
\leq & ~ \max_{i,s,t} \norm{y_{s,t}^{-1} (z_t - x_t)}_\infty \cdot \norm{\frac{ \sqrt{W^{\new}} }{ \sqrt{Y_{s,t}} } e_i }_2^2 \\
= & ~ O(\epsilon_{\comp}).
\end{align*}
For $\beta_{i,s,t}$, we have
\begin{align*}
& ~ \sum_{i=1}^m \| \beta_{i,s,t} \|_2^2 \\
= & ~ \sum_{i=1}^m e_i^\top \frac{ \sqrt{W^{\new}} }{ \sqrt{Z_t} } P(z_t) \frac{W^{\mid} - W^{\new}}{ \sqrt{Z_t Y_{s,t}} } P(y_{s,t}) \frac{ Z_t-X_t }{ Y_{s,t} } P(y_{s,t}) \frac{ W^{\mid} - W^{\new}}{ \sqrt{Z_t Y_{s,t}} }  P(z_t) \frac{ \sqrt{W^{\new}} } { \sqrt{Z_t} } e_i \\
= & ~ \tr \left[ \frac{ \sqrt{W^{\new}} }{ \sqrt{Z_t} } P(z_t) \frac{W^{\mid} - W^{\new}}{ \sqrt{Z_t Y_{s,t}} } P(y_{s,t}) \frac{ Z_t-X_t }{ Y_{s,t} } P(y_{s,t}) \frac{ W^{\mid} - W^{\new}}{ \sqrt{Z_t Y_{s,t}} }  P(z_t) \frac{ \sqrt{W^{\new}} } { \sqrt{Z_t} } \right] \\
= & ~ \tr[ \sqrt{D} P(z_t) B P(y_{s,t}) C P(y_{s,t}) B P(z_t) \sqrt{D} ] ,
\end{align*} 
where in the last step we define diagonal matrices
\begin{align*}
B =  \frac{W^{\mid} - W^{\new}}{ \sqrt{ Z_t Y_{s,t} } } , C =  \frac{Z_t - X_t}{ Y_{s,t} } , D =  \frac{W^{\new}}{Z_t} .
\end{align*}
Thus, we have
\begin{align*}
 & ~ \tr[ \sqrt{D} P(z_t) B P(y_{s,t}) C P(y_{s,t}) B P(z_t) \sqrt{D} ]  \\
\leq & ~ \| c \|_\infty \cdot \tr[ \sqrt{D} P(z_t) B^2 P(z_t) \sqrt{D} ] & P(y_{s,t}) \cdot C \cdot P(y_{s,t}) \preceq \| c \|_\infty \cdot I \\
= & ~ \| c \|_\infty \cdot \tr[  B P(z_t) D P(z_t) B ] \\
= & ~ O(1) \cdot \| c \|_\infty \cdot \tr[  B^2 ] & P(z_t) \cdot D \cdot P(z_t) \preceq O(1) \cdot I \\
= & ~ O(1) \cdot \| c \|_\infty  \cdot \norm{b}_2^2 \\
= & ~ O(\epsilon_{\comp}) \cdot \norm{ \log( w^{\new} ) - \log( w^{\mid} )}_2^2 . 
\end{align*}

\end{proof}

\subsubsection{Upper bounding $\sum_{i=1}^m \| \gamma_i \|_2^4$}  

\begin{lemma}\label{lem:upper_bound_sum_i_gamma_i_gamma_i}
Assume $\epsilon_{\comp} \leq 0.01$ and $\norm{ \log (w^{\new}) - \log(w^{\mid}) }_\infty \leq 0.01$.
Then for any $s \in {\cal S}$ and $t\in {\cal T}$, we have
\begin{align*}
\sum_{i=1}^m \| \gamma_{i,s,t}  \|_2^4 = O(\epsilon_{\comp}^4) \cdot \norm{\log(w^{\new}) - \log(w^{\mid})}_2^2 .
\end{align*}
\end{lemma}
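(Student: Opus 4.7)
The proof will closely follow the template established by Lemma~\ref{lem:upper_bound_sum_i_eta_i_eta_i} and Lemma~\ref{lem:upper_bound_sum_i_alpha_i_beta_i}, only with an extra factor of $PCP$ appearing because $\gamma_{i,s,t}$ has two occurrences of $(Z_t-X_t)^{1/2}Q(y_{s,t})$ rather than one. First I would turn the sum of fourth powers into a Frobenius/trace expression: collecting the $\gamma_{i,s,t}$ as columns of a matrix $\Gamma$, and using $\sum_i\|\gamma_{i,s,t}\|_2^4=\sum_i(\Gamma^\top\Gamma)_{i,i}^2\le\|\Gamma^\top\Gamma\|_F^2=\tr[(\Gamma^\top\Gamma)^2]$. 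Substituting $Q(y_{s,t})=Y_{s,t}^{-1/2}P(y_{s,t})Y_{s,t}^{-1/2}$ and pulling the diagonal scalings through, I would rewrite
\begin{align*}
\Gamma = \sqrt{B}\,P\,C\,P\,\sqrt{D},\qquad
B=\tfrac{W^{\mid}-W^{\new}}{Y_{s,t}},\ C=\tfrac{Z_t-X_t}{Y_{s,t}},\ D=\tfrac{W^{\new}}{Y_{s,t}},
\end{align*}
where $P=P(y_{s,t})$, so that
\begin{align*}
\tr[(\Gamma^\top\Gamma)^2]=\tr\bigl[(D^{1/2}PCPBPCPD^{1/2})^2\bigr]=\tr\bigl[(PCPBPCPD)^2\bigr].
\end{align*}

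Next, the two hypotheses $\epsilon_{\comp}\le 0.01$ and $\|\log(w^{\new})-\log(w^{\mid})\|_\infty\le 0.01$ yield the entrywise bounds $\|c\|_\infty=O(\epsilon_{\comp})$ and $\|d\|_\infty=O(1)$ (since $y_{s,t}$ is a convex combination of $z_t$ and $x_t$, both of which are $O(1)$-close to $w^{\new}$ in the $\log$-$\ell_\infty$ sense). Using $PCP\preceq\|c\|_\infty I$, $PDP\preceq\|d\|_\infty I$, and the cyclic invariance of the trace, I would peel off the two $PCP$ factors and one $PDP$ factor in the standard way: letting $N=PCP$ and $K=NDN$, one gets $\|K\|\le\|D\|\,\|N\|^2=O(\epsilon_{\comp}^2)$, so $K\preceq O(\epsilon_{\comp}^2) I$. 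Then, with $L=B^{1/2}KB^{1/2}$, we have $L\preceq O(\epsilon_{\comp}^2)B$, both $L$ and $B$ PSD, and
\begin{align*}
\tr[(\Gamma^\top\Gamma)^2] = \tr[L^2] \le O(\epsilon_{\comp}^4)\,\tr[B^2] = O(\epsilon_{\comp}^4)\,\|b\|_2^2.
\end{align*}

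Finally, I would convert $\|b\|_2^2$ into the desired $\log$-distance: the bound $\|\log(w^{\new})-\log(w^{\mid})\|_\infty\le 0.01$ together with $\epsilon_{\comp}\le 0.01$ gives $y_{s,t}\asymp w^{\mid}\asymp w^{\new}$ entrywise, hence $|(w^{\mid}-w^{\new})_i|/(y_{s,t})_i=\Theta(|\log w^{\mid}_i-\log w^{\new}_i|)$ and $\|b\|_2^2=O(\|\log(w^{\new})-\log(w^{\mid})\|_2^2)$. Combining the two estimates gives the lemma.

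The only real obstacle, as in the preceding lemmas, is that the diagonal factors $B$, $C$, $D$ do not commute with the projector $P$, so the peeling must be done through PSD inequalities rather than by direct bounding of diagonals. The key clean step is the elementary fact that for PSD matrices with $L\preceq\lambda B$ one has $\tr[L^2]\le\lambda\,\tr[LB]\le\lambda^2\,\tr[B^2]$, which takes care of the last non-commuting reduction without losing any factors.
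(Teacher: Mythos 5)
Your proof is correct and follows essentially the same route as the paper's: both reduce $\sum_i\|\gamma_{i,s,t}\|_2^4$ to $\tr[(\Gamma^\top\Gamma)^2]$, swap $D$ and $B$ via the trace/transpose identity, and peel off the projected diagonal factors by Loewner-order inequalities to arrive at $O(\|c\|_\infty^4)\,\|b\|_2^2$. The only cosmetic difference is that you bound $K=(PCP)D(PCP)\preceq O(\epsilon_{\comp}^2)I$ in one shot via operator-norm submultiplicativity and then invoke the clean fact $0\preceq L\preceq\lambda B\Rightarrow\tr[L^2]\le\lambda^2\tr[B^2]$, whereas the paper uses the two separate monotonicity steps $PDP\preceq O(1)I$ and $PC^2P\preceq\|c\|_\infty^2 I$; the content is the same.
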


\begin{proof}
We recall the definition of vector $\gamma_{i,s,t} \in \R^m$ from Definition~\ref{defn:eta_theta_alpha_beta_gamma} as follows:
\begin{align*}
\gamma_{i,s,t} = & ~ \sqrt{ W^{\mid} - W^{\new}} \cdot Q(y_{s,t}) \cdot (Z_t - X_t) \cdot Q(y_{s,t}) \cdot \sqrt{ W^{\new} } \cdot e_i,
\end{align*}
where recall that $y_{s,t} = ( z_t + s ( x_t - z_t ) )$ and $Y_{s,t} = ( Z_t + s (X_t - Z_t) )$.
Recall from Definition~\ref{defn:operator_Q_and_P} that $P(v) = V^{1/2} A (A^\top V A)^{-1} A^\top V^{1/2} = V^{1/2} \cdot Q(v) \cdot V^{1/2}$. 
We therefore can rewrite $\gamma_{i,s,t}$ as
\begin{align*}
\gamma_{i,s,t} = & ~ \frac{ \sqrt{ W^{\mid} - W^{\new}} }{ \sqrt{ Y_{s,t} } } \cdot P(y_{s,t}) \cdot \frac{ Z_t - X_t }{ Y_{s,t} } \cdot P(y_{s,t}) \cdot \frac{ \sqrt{W^{\new}} }{ \sqrt{ Y_{s,t}  } } \cdot e_i .
\end{align*}
It follows that 
\begin{align*}
 & ~ \sum_{i=1}^m \| \gamma_{i,s,t} \|_2^4 \\
= & ~ \sum_{i=1}^m \left( e_i^\top \frac{ \sqrt{W^{\new}} }{ \sqrt{ Y_{s,t} } } P(y_{s,t}) \frac{Z_t - X_t}{ Y_{s,t} }P(y_{s,t}) \frac{ W^{\mid} - W^{\new} }{ Y_{s,t} } P(y_{s,t}) \frac{Z_t-X_t}{Y_{s,t}} P(y_{s,t}) \frac{ \sqrt{W^{\new}} }{ \sqrt{ Y_{s,t} } } e_i \right)^2 \\
\leq & ~ \left\| \frac{ \sqrt{W^{\new}} } { \sqrt{ Y_{s,t} } } P(y_{s,t}) \frac{Z_t - X_t}{ Y_{s,t} }P(y_{s,t}) \frac{ W^{\mid} - W^{\new} }{ Y_{s,t} } P(y_{s,t}) \frac{Z_t-X_t}{Y_{s,t}} P(y_{s,t}) \frac{ \sqrt{W^{\new}} }{ \sqrt{ Y_{s,t} } } \right\|_F^2 \\
= & ~ \tr \left[ \frac{ \sqrt{W^{\new}} }{ \sqrt{ Y_{s,t} } } P(y_{s,t}) \frac{Z_t - X_t}{ Y_{s,t}} P(y_{s,t}) \frac{ W^{\mid} - W^{\new}}{Y_{s,t}} P(y_{s,t}) \frac{Z_t - X_t}{Y_{s,t}} P(y_{s,t}) \frac{ \sqrt{W^{\new}} }{ \sqrt{ Y_{s,t} } } \right. \\
& ~ \cdot \left. \frac{ \sqrt{W^{\new}} }{ \sqrt{ Y_{s,t} } } P(y_{s,t}) \frac{Z_t - X_t}{ Y_{s,t}} P(y_{s,t}) \frac{ W^{\mid} - W^{\new}}{Y_{s,t}} P(y_{s,t}) \frac{Z_t - X_t}{Y_{s,t}} P(y_{s,t}) \frac{ \sqrt{W^{\new}} }{ \sqrt{ Y_{s,t} } } \right] \\
= & ~ \tr[ ( \sqrt{ D } P C P B P C P \sqrt{ D } )^2 ].
\end{align*}
where in the last step we define diagonal matrices $B, C, D \in \R^{m \times m}$, 
\begin{align*}
B =   \frac{ W^{\mid} - W^{\new}}{ Y_{s,t} }    ,
C =   \frac{ Z_t - X_t }{ Y_{s,t} }  ,
D =   \frac{ W^{\new} }{ Y_{s,t} } .
\end{align*}
and for simplicity, we use $P$ to denote $P(y_{s,t})$.
By our assumptions that $\epsilon_{\simp} \leq 0.01$ and 
\begin{align*}
\norm{\log(w^{\new}) - \log(w^{\mid})}_\infty \leq 0.01 ,
\end{align*}
we have 
\begin{align*}
\norm{ \log(y_{s,t}) - \log(w^{\new}) }_\infty \leq 0.1 .
\end{align*}
Therefore we have

\begin{align*}
  & ~ \tr[ ( \sqrt{ D } P C P B P C P \sqrt{ D } )^2 ] \\
= & ~ \tr[ ( \sqrt{B} P C P D P C P \sqrt{B} )^2 ] \\
\leq & ~ O(1) \cdot \tr[ ( \sqrt{B} P C^2 P \sqrt{B} )^2 ] & P \cdot D \cdot P \preceq O(1) \cdot I \\
\leq & ~ O(1) \cdot \| c \|_\infty^4 \cdot \tr[ B^2 ] & P \cdot C^2 \cdot P \preceq \| c \|_\infty^2 \cdot I \\
= & ~ O(1) \cdot \| c \|_\infty^4  \cdot \| b \|_2^2 \\
= & ~ O(\epsilon_{\comp}^4) \cdot \norm{\log(w^{\new}) - \log(w^{\mid})}_2^2  .
\end{align*}

\end{proof}

\subsection{Variance upper bound for random Gaussian matrices}
\label{subsec:variance_bound_random_gaussian_matrix}

\begin{table}[htp!]
\centering
\begin{tabular}{ | l | l | l | l |}
\hline
{\bf Notation} & {\bf Statement} & {\bf Quantity} & {\bf Upper bound} \\ \hline
$\eta^\top \eta$ & Lemma~\ref{lem:variance_bound_for_eta_eta_jl_to_discrete} & $ \Var[\sum \eta^\top \eta - \sum \eta^\top R^\top R \eta]$ & $ O(1/r) \cdot \sum \| \eta_{i,s} \|_2^4$ \\ \hline 
$\alpha^\top \beta$ & Lemma~\ref{lem:variance_bound_for_alpha_beta_jl_to_discrete} & $ \Var[\sum \alpha^\top \beta - \sum \beta^\top R^\top R \alpha]$ & $O(1/r) \cdot \sum \| \beta_{i,s,t} \|_2^2  \| \alpha_{i,s,t} \|_2^2$ \\ \hline 
$\gamma^\top \gamma$ & Lemma~\ref{lem:variance_bound_for_gamma_gamma_jl_to_discrete} & $ \Var[\sum \gamma^\top \gamma - \sum \gamma^\top R^\top R \gamma ]$ & $O(1/r) \cdot \sum \| \gamma_{i,s,t} \|_2^4 $ \\ \hline
\end{tabular}
\caption{Variance upper bounds for $\eta^\top \eta$, $\alpha^\top \beta$ and $\gamma^\top \gamma$, where $\eta$, $\alpha$, $\beta$ and $\gamma$ are defined in Definition~\ref{defn:eta_theta_alpha_beta_gamma}.}
\label{table:variance_bounds_complicated}
\end{table}

Our variance bounds for the terms where we applied JL matrices are summarized in Table~\ref{table:variance_bounds_complicated}.
These results are consequences of the following Lemma~\ref{lem:variance_bound_general}.

\subsubsection{A variance bound for random Gaussian matrices}

\begin{lemma}[JL variance bound]\label{lem:variance_bound_general}
Given vectors $x, y \in \R^m$. Let $R \in \R^{r \times m}$ be a random Gaussian matrix  with $R_{i,j} \sim {\cal N}(0, 1/r) $. Then we have
\begin{align*}
\E_{R} [ x^\top R^\top R y ] = x^\top y , ~~~~~ \Var_{R} [ x^\top R^\top R y ] \leq \frac{3}{r} \cdot \norm{x}_2^2 \norm{y}_2^2.
\end{align*}
\end{lemma}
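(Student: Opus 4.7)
My plan is to reduce the statement to a one-dimensional Gaussian moment calculation by exploiting the independence of the rows of $R$. Write $x^{\top} R^{\top} R y = \sum_{k=1}^{r} (Rx)_k (Ry)_k = \sum_{k=1}^{r} Z_k$, where $Z_k := (Rx)_k (Ry)_k$. Since the rows of $R$ are independent, the $Z_k$'s are i.i.d., so $\mathrm{Var}[x^{\top} R^{\top} R y] = r \cdot \mathrm{Var}[Z_1]$. This decomposition is the key conceptual step; everything afterwards is a routine calculation on a single row.

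Next I will verify unbiasedness. For a single row, let $u_i = \sqrt{r}\,R_{1,i}$ so that $u_1,\ldots,u_m$ are i.i.d.\ $\mathcal{N}(0,1)$. Then $(Rx)_1 = \tfrac{1}{\sqrt{r}} \sum_i u_i x_i$ and similarly for $y$. Using $\mathbb{E}[u_i u_j] = \delta_{ij}$, a direct expansion gives $\mathbb{E}[Z_1] = \tfrac{1}{r}\,x^{\top} y$, and summing over $k$ yields $\mathbb{E}[x^{\top}R^{\top}R y] = x^{\top}y$.

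For the variance, I will introduce $A := \sum_i u_i x_i$ and $B := \sum_i u_i y_i$, which are jointly centered Gaussian with $\mathbb{E}[A^2] = \|x\|_2^2$, $\mathbb{E}[B^2] = \|y\|_2^2$, and $\mathbb{E}[AB] = x^{\top}y$. Then $Z_1 = AB/r$, so $\mathrm{Var}[Z_1] = \tfrac{1}{r^2}\bigl(\mathbb{E}[A^2 B^2] - (x^{\top}y)^2\bigr)$. Applying Isserlis' (Wick's) theorem to the four-point Gaussian expectation gives $\mathbb{E}[A^2 B^2] = \mathbb{E}[A^2]\mathbb{E}[B^2] + 2(\mathbb{E}[AB])^2 = \|x\|_2^2 \|y\|_2^2 + 2(x^{\top}y)^2$. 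Therefore
\[
\mathrm{Var}[x^{\top}R^{\top}R y] = r \cdot \mathrm{Var}[Z_1] = \frac{1}{r}\bigl(\|x\|_2^2 \|y\|_2^2 + (x^{\top}y)^2\bigr).
\]
Finally I invoke Cauchy--Schwarz, $(x^{\top}y)^2 \le \|x\|_2^2 \|y\|_2^2$, to bound this by $\tfrac{2}{r}\|x\|_2^2\|y\|_2^2 \le \tfrac{3}{r}\|x\|_2^2\|y\|_2^2$, matching the stated conclusion.

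There is no real obstacle: the only substantive ingredient is the fourth-moment identity for a jointly Gaussian pair, which follows from Isserlis' theorem (or, equivalently, from writing $A = \alpha B/\|B\|^2 + A^{\perp}$ with $A^{\perp}$ independent of $B$). The factor $3$ in the statement is simply a convenient looseness over the sharper constant $2$ that the argument actually delivers.
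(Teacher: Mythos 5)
Your proof is correct, and it starts from the same decomposition as the paper's: write $x^\top R^\top R y = \sum_{k=1}^r Z_k$ as a sum of i.i.d.\ contributions from the rows of $R$, so that $\Var = r\Var[Z_1]$, and $\E[\cdot] = x^\top y$ follows from $\E[R^\top R] = I$. Where you diverge is in how you control $\Var[Z_1]$. The paper replaces $\Var[Z_1]$ by the cruder $\E[Z_1^2]$, then splits $\E[A^2 B^2]$ by Cauchy--Schwarz into $\bigl(\E[A^4]\bigr)^{1/2}\bigl(\E[B^4]\bigr)^{1/2}$ and invokes the one-dimensional Gaussian fourth moment $\E[g^4]=3\sigma^4$, which is where the constant $3$ comes from. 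You instead compute $\Var[Z_1]$ exactly via Isserlis' theorem, $\E[A^2B^2] = \E[A^2]\E[B^2] + 2\E[AB]^2$, which yields the exact identity $\Var = \frac{1}{r}\bigl(\|x\|_2^2\|y\|_2^2 + (x^\top y)^2\bigr)$ and, after Cauchy--Schwarz on the second term only, the sharper constant $2$. Your route is tighter and also more transparent about when the bound is close to equality (namely when $x$ and $y$ are parallel); the paper's is marginally more self-contained in that it only needs the univariate Gaussian fourth moment rather than the bivariate Wick identity. Either way the claimed constant $3$ holds with room to spare.
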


\begin{proof}
Since $\E[R^\top R] = I$, we have
\begin{align*}
\E_{R} [ x^\top R^\top R y ] = x^\top y .
\end{align*} 
Next we prove the bound on the variance. We use $R_i$ for each $i \in [r]$ to denote the column vector that corresponds to the $i$th row of $R$. We have

\begin{align*}
\Var_{R} [ x^\top R^\top R y ] 
= & ~ \Var_{R} \left[\sum_{i \in [r]} x^\top R_i R_i^\top y \right] \\
= & ~ r \Var_{R_1} \left[ x^\top R_1 R_1^\top y \right] \\
\leq & ~ r \E_{R_1} [ ( x^\top R_1 R_1^\top y )^2 ] \\
\leq & ~ r \left( \E_{R_1} [ (x^\top R_1)^4 ] \right)^{1/2} \cdot \left( \E_{R_1} [ (y^\top R_1)^4 ] \right)^{1/2} \\
\leq & ~ \frac{3}{r} \cdot \norm{x}_2^2 \norm{y}_2^2 ,
\end{align*}
where the last inequality is because $x^\top R_1$ is a Gaussian random variable with variance $\norm{x}_2^2 / r $.

\end{proof}

\subsubsection{Variance bound for $\eta^\top \eta$, $\alpha^\top \beta$ and $\gamma^\top \gamma$}
The lemmas below all follow immediately from Lemma~\ref{lem:variance_bound_general}.

\begin{lemma}[Variance bound for $\eta^\top \eta$]\label{lem:variance_bound_for_eta_eta_jl_to_discrete}
Let $\sigma_{i,\dis}$ and $\sigma_{i,\jl}$ be defined as follows
\begin{align*}
\sigma_{i,\dis} = \sum_{s \in {\cal S}} \omega_s \eta_{i,s}^\top \eta_{i,s}, ~~~ \sigma_{i,\jl} = \sum_{s \in {\cal S}} \omega_s \eta_{i,s}^\top R_{\eta,s}^\top R_{\eta,s} \eta_{i,s}.
\end{align*}
where $R_{\eta,s} \in \R^{r \times m}$ for $s \in {\cal S}$ are independent random Gaussian matrices with $(R_{\eta,s})_{i,j} \sim {\cal N}(0, 1/r) $. Then, we have
\begin{align*}
\E_{R} [ \sigma_{i,\jl} ] = \sigma_{i,\dis} , ~~~~~ \Var_{R} [ \sigma_{i,\jl} ] \leq \frac{3}{r} \cdot \sum_{s \in {\cal S}} \omega_s^2 \| \eta_{i,s} \|_2^4 .
\end{align*}
\end{lemma}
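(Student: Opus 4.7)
The plan is to prove both claims by directly invoking Lemma~\ref{lem:variance_bound_general} termwise for each $s \in {\cal S}$ and then exploiting the independence of the random matrices $\{R_{\eta,s}\}_{s \in {\cal S}}$ to aggregate over $s$. Concretely, I would view $\sigma_{i,\jl}$ as a weighted sum (over $s$) of the scalar random variables $X_s := \eta_{i,s}^\top R_{\eta,s}^\top R_{\eta,s} \eta_{i,s}$, where $R_{\eta,s}$ are drawn independently.

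For the expectation, linearity gives $\E[\sigma_{i,\jl}] = \sum_{s \in {\cal S}} \omega_s \E[X_s]$. Applying the first conclusion of Lemma~\ref{lem:variance_bound_general} with $x = y = \eta_{i,s}$ yields $\E[X_s] = \eta_{i,s}^\top \eta_{i,s}$, so summing gives $\E[\sigma_{i,\jl}] = \sigma_{i,\dis}$, as desired.

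For the variance, the key observation is that since $\{R_{\eta,s}\}_{s \in {\cal S}}$ are mutually independent, the scalars $\{X_s\}_{s \in {\cal S}}$ are also mutually independent. Hence the variance of a weighted sum decomposes as $\Var[\sigma_{i,\jl}] = \sum_{s \in {\cal S}} \omega_s^2 \Var[X_s]$. Applying the variance bound in Lemma~\ref{lem:variance_bound_general} with $x = y = \eta_{i,s}$ yields $\Var[X_s] \leq \frac{3}{r} \|\eta_{i,s}\|_2^4$, and plugging in finishes the proof.

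There is no real obstacle here: the lemma is essentially a packaging of Lemma~\ref{lem:variance_bound_general} for weighted sums over independent sketches, and both conclusions follow from one application per index $s$ together with linearity of expectation and independence-based additivity of variance. The only point to be careful about is to note explicitly that the randomness across different $s$ is independent (so that cross-covariance terms vanish); this is given in the statement of the lemma by the assumption that the $R_{\eta,s}$ are independent.
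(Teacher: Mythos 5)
Your proof is correct and is exactly the paper's intended argument: the paper proves Lemma~\ref{lem:variance_bound_general} once and then states that this lemma (along with the analogous ones for $\alpha^\top\beta$ and $\gamma^\top\gamma$) follows immediately from it. Your decomposition into termwise application of Lemma~\ref{lem:variance_bound_general} with $x=y=\eta_{i,s}$, combined with linearity of expectation and independence of the $R_{\eta,s}$ across $s$ (so the cross-covariances vanish and $\Var[\sum_s \omega_s X_s]=\sum_s \omega_s^2\Var[X_s]$), is precisely the expansion of that one-line remark.
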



\begin{lemma}[Variance bound for $\alpha^\top \beta$]\label{lem:variance_bound_for_alpha_beta_jl_to_discrete}
Let $\sigma_{i,\dis}$ and $\sigma_{i,\jl}$ be defined as follows
\begin{align*}
\sigma_{i,\dis} = \sum_{t \in {\cal T}} \sum_{s \in {\cal S}} \omega_t \omega_s \beta_{i,s,t}^\top \alpha_{i,s,t}, ~~~ \sigma_{i,\jl} = \sum_{t \in {\cal T}} \sum_{s \in {\cal S}} \omega_t \omega_s \beta_{i,s,t}^\top R_{\alpha \beta,s,t}^\top R_{\alpha \beta,s,t} \alpha_{i,s,t}.
\end{align*}
where $R_{\alpha \beta,s,t} \in \R^{r \times m}$ for $s \in {\cal S}, t \in {\cal T}$ are independent random Gaussian matrices with $(R_{\alpha \beta,s,t})_{i,j} \sim {\cal N}(0, 1/r) $.
Then, we have
\begin{align*}
\E_{R} [ \sigma_{i,\jl} ] = \sigma_{i,\dis} , ~~~~~ \Var_{R} [ \sigma_{i,\jl} ] \leq \frac{3}{r} \sum_{t \in {\cal T}} \sum_{s \in {\cal S}} \omega_t^2 \omega_s^2 \| \beta_{i,s,t} \|_2^2  \| \alpha_{i,s,t} \|_2^2 .
\end{align*}
\end{lemma}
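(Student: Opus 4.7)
The plan is to reduce this lemma to a routine application of Lemma~\ref{lem:variance_bound_general} combined with the mutual independence of the Gaussian matrices $\{R_{\alpha\beta,s,t}\}_{(s,t)\in\mathcal{S}\times\mathcal{T}}$. The structure mirrors exactly that of Lemma~\ref{lem:variance_bound_for_eta_eta_jl_to_discrete}, so I would mimic that proof.

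First, I would handle the expectation. By linearity of expectation,
\[
\E_R[\sigma_{i,\jl}] = \sum_{t\in\mathcal{T}}\sum_{s\in\mathcal{S}} \omega_t\omega_s \cdot \E_R\bigl[\beta_{i,s,t}^\top R_{\alpha\beta,s,t}^\top R_{\alpha\beta,s,t}\alpha_{i,s,t}\bigr].
\]
For each $(s,t)$, apply Lemma~\ref{lem:variance_bound_general} with the two vectors $x:=\beta_{i,s,t}$ and $y:=\alpha_{i,s,t}$ and the random matrix $R:=R_{\alpha\beta,s,t}$; this gives that the inner expectation equals $\beta_{i,s,t}^\top\alpha_{i,s,t}$. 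Summing recovers exactly $\sigma_{i,\dis}$.

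Second, I would handle the variance. The key observation is that the summands indexed by distinct pairs $(s,t)$ are independent random variables, because the family $\{R_{\alpha\beta,s,t}\}$ is independent by assumption and each summand depends only on its own $R_{\alpha\beta,s,t}$. Hence the variance of the sum decomposes as the sum of variances:
\[
\Var_R[\sigma_{i,\jl}] = \sum_{t\in\mathcal{T}}\sum_{s\in\mathcal{S}} \omega_t^2\omega_s^2 \cdot \Var_R\bigl[\beta_{i,s,t}^\top R_{\alpha\beta,s,t}^\top R_{\alpha\beta,s,t}\alpha_{i,s,t}\bigr].
\]
Pulling the deterministic scalar $\omega_t\omega_s$ outside introduces the square. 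Then applying Lemma~\ref{lem:variance_bound_general} to each inner variance gives an upper bound of $(3/r)\|\beta_{i,s,t}\|_2^2\|\alpha_{i,s,t}\|_2^2$, which upon substitution yields the claimed bound.

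There is no real obstacle here; the lemma is a one-paragraph corollary of Lemma~\ref{lem:variance_bound_general}. The only point that deserves explicit mention (and which I would flag in the write-up) is the independence across indices $(s,t)$, since it is what permits the variance-of-the-sum to split into the sum-of-variances; without this, one would pick up cross-covariance terms. Everything else is the same mechanical computation used in the proof of Lemma~\ref{lem:variance_bound_for_eta_eta_jl_to_discrete}.
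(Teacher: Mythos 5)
Your proposal is correct and matches the paper's approach: the paper states that Lemmas~\ref{lem:variance_bound_for_eta_eta_jl_to_discrete}--\ref{lem:variance_bound_for_gamma_gamma_jl_to_discrete} follow immediately from Lemma~\ref{lem:variance_bound_general}, and your argument — linearity of expectation for the mean, independence of the $R_{\alpha\beta,s,t}$ across $(s,t)$ to split the variance of the sum into the sum of variances, then applying Lemma~\ref{lem:variance_bound_general} termwise — is exactly the intended derivation.
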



\begin{lemma}[Variance bound for $\gamma^\top \gamma$]\label{lem:variance_bound_for_gamma_gamma_jl_to_discrete}
Let $\sigma_{i,\dis}$ and $\sigma_{i,\jl}$ be defined as follows
\begin{align*}
\sigma_{i,\dis} = \sum_{t\in {\cal T}} \sum_{s \in {\cal S}} \sum_{s' \in {\cal S}} \omega_t \omega_s \omega_{s'} \gamma_{i,s,t}^\top \gamma_{i,s',t}, ~~~ \sigma_{i,\jl} = \sum_{t\in {\cal T}} \sum_{s \in {\cal S}} \sum_{s' \in {\cal S}} \omega_t \omega_s \omega_{s'} \gamma_{i,s,t}^\top R_{\gamma,s,s',t}^\top R_{\gamma,s,s',t} \gamma_{i,s',t}.
\end{align*}
where $R_{\gamma,s,s',t} \in \R^{r \times m}$ are independent random Gaussian matrices with $(R_{\gamma,s,s',t})_{i,j} \sim {\cal N}(0, 1/r) $.
Then, we have
\begin{align*}
\E_{R} [ \sigma_{i,\jl} ] = \sigma_{i,\dis} , ~~~~~ \Var_{R} [ \sigma_{i,\jl} ] \leq \frac{3}{r} \sum_{t \in {\cal T}} \sum_{s \in {\cal S}} \sum_{s' \in {\cal S}} \omega_t^2 \omega_s^2 \omega_{s'}^2 \| \gamma_{i,s,t} \|_2^4 .
\end{align*}
\end{lemma}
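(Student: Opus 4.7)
The plan is to reduce the lemma to Lemma~\ref{lem:variance_bound_general} term by term, using that the Gaussian matrices $R_{\gamma,s,s',t}$ are independent across distinct triples $(s,s',t)$. First I would write $\sigma_{i,\jl}$ as a sum of $|\mathcal{T}| \cdot |\mathcal{S}|^2$ independent random scalars indexed by $(t,s,s')$, each of the form $\omega_t \omega_s \omega_{s'} \cdot x^\top R^\top R y$ with $x = \gamma_{i,s,t}$, $y = \gamma_{i,s',t}$, and $R = R_{\gamma,s,s',t}$. Applying the first part of Lemma~\ref{lem:variance_bound_general} to each summand together with linearity of expectation immediately yields $\E[\sigma_{i,\jl}] = \sum_{t,s,s'} \omega_t \omega_s \omega_{s'} \gamma_{i,s,t}^\top \gamma_{i,s',t} = \sigma_{i,\dis}$.

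Next, since the $R_{\gamma,s,s',t}$ are mutually independent, the variance of the sum equals the sum of variances. Applying the second part of Lemma~\ref{lem:variance_bound_general} to each term gives
\begin{align*}
\Var_{R}[\sigma_{i,\jl}] \leq \sum_{t \in \mathcal{T}} \sum_{s \in \mathcal{S}} \sum_{s' \in \mathcal{S}} \omega_t^2 \omega_s^2 \omega_{s'}^2 \cdot \frac{3}{r} \cdot \|\gamma_{i,s,t}\|_2^2 \, \|\gamma_{i,s',t}\|_2^2.
\end{align*}

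The only non-mechanical step is to convert the cross-product $\|\gamma_{i,s,t}\|_2^2 \|\gamma_{i,s',t}\|_2^2$ into the quartic quantity $\|\gamma_{i,s,t}\|_2^4$ that appears in the statement. I would handle this with AM-GM, $\|\gamma_{i,s,t}\|_2^2 \|\gamma_{i,s',t}\|_2^2 \leq \tfrac{1}{2}(\|\gamma_{i,s,t}\|_2^4 + \|\gamma_{i,s',t}\|_2^4)$, and then exploit the symmetry of the index set under the swap $s \leftrightarrow s'$ (the weights $\omega_s^2 \omega_{s'}^2$ are symmetric) to conclude that
\begin{align*}
\sum_{s,s'} \omega_s^2 \omega_{s'}^2 \|\gamma_{i,s,t}\|_2^2 \|\gamma_{i,s',t}\|_2^2 \leq \sum_{s,s'} \omega_s^2 \omega_{s'}^2 \|\gamma_{i,s,t}\|_2^4.
\end{align*}
Summing over $t$ and folding the factor $\omega_t^2$ back in yields the claimed bound.

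The whole argument is essentially routine given Lemma~\ref{lem:variance_bound_general}; there is no real obstacle, only the symmetrization bookkeeping. If one wished to avoid AM-GM entirely, one could alternatively keep the bound in the mixed form $\|\gamma_{i,s,t}\|_2^2 \|\gamma_{i,s',t}\|_2^2$ and observe that by Cauchy-Schwarz $\sum_{s,s'} \omega_s^2 \omega_{s'}^2 \|\gamma_{i,s,t}\|_2^2 \|\gamma_{i,s',t}\|_2^2 = \bigl(\sum_s \omega_s^2 \|\gamma_{i,s,t}\|_2^2\bigr)^2$, but the symmetrization route matches the form appearing in the lemma statement directly and mirrors the pattern used in Lemmas~\ref{lem:variance_bound_for_eta_eta_jl_to_discrete} and~\ref{lem:variance_bound_for_alpha_beta_jl_to_discrete}.
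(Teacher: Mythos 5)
Your proof is correct and fills in exactly the routine details the paper leaves implicit with its remark that the lemma "follows immediately" from Lemma~\ref{lem:variance_bound_general}: use independence of the $R_{\gamma,s,s',t}$ to split the variance, apply the general JL variance bound termwise, and symmetrize the mixed term $\|\gamma_{i,s,t}\|_2^2\|\gamma_{i,s',t}\|_2^2$ via AM--GM and the $s\leftrightarrow s'$ symmetry of the weights to reach $\|\gamma_{i,s,t}\|_2^4$. (Minor nit: the alternative identity $\sum_{s,s'}\omega_s^2\omega_{s'}^2\|\gamma_{i,s,t}\|_2^2\|\gamma_{i,s',t}\|_2^2=\bigl(\sum_s\omega_s^2\|\gamma_{i,s,t}\|_2^2\bigr)^2$ is just factoring of a product of sums, not Cauchy--Schwarz, but that aside doesn't affect the argument.)
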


\subsection{Error upper bound for discrete sampling} 
\label{subsec:error_bound_discrete_to_continuous}

\begin{table}[htp!]
\centering
\begin{tabular}{ | l | l | l | l |}
\hline
{\bf Notation} & {\bf Statement} & {\bf Quantity} & {\bf Upper bound} \\ \hline
$\eta^\top \eta$ & Lemma~\ref{lem:variance_bound_for_eta_eta_discrete_to_continuous} & $| \int \eta^\top \eta - \sum \eta^\top \eta |$ & $O(\poly(n) / 2^{2N})$ \\ \hline
$\theta^\top \theta$ & Lemma~\ref{lem:variance_bound_for_theta_theta_discrete_to_continuous} & $| \int \theta^\top \theta - \sum \theta^\top \theta |$ & $O(\poly(n) / 2^{2N})$ \\ \hline
$\alpha^\top \beta$ & Lemma~\ref{lem:variance_bound_for_beta_alpha_discrete_to_continuous} & $| \int \alpha^\top \beta - \sum \alpha^\top \beta |$ & $O(\poly(n) / 2^{2N})$ \\ \hline
$\gamma^\top \gamma$ & Lemma~\ref{lem:variance_bound_for_gamma_gamma_discrete_to_continuous} & $| \int \gamma^\top \gamma - \sum \gamma^\top \gamma |$ & $O(\poly(n) / 2^{2N})$ \\ \hline
\end{tabular}
\caption{Error upper bounds for discrete sampling. Here, $\eta$, $\alpha$, $\beta$ and $\gamma$ are defined as in Definition~\ref{defn:eta_theta_alpha_beta_gamma}.}
\label{table:discrete_sampling_bounds_complicated}
\end{table}
Our error upper bounds for approximating integral by discrete sampling are summarized in Table~\ref{table:discrete_sampling_bounds_complicated}.
We only prove Lemma~\ref{lem:variance_bound_for_eta_eta_discrete_to_continuous} in the following. The rest of the lemmas follows from similar arguments.

\begin{lemma}[Error upper bound for $\eta^\top \eta$]\label{lem:variance_bound_for_eta_eta_discrete_to_continuous}
Assume $\epsilon_{\comp} \leq 0.01$.
Let $\sigma_{i,\cts}$ and $\sigma_{i,\dis}$ be defined as follows
\begin{align*}
\sigma_{i,\cts} = \int_0^1 \eta_{i,s}^\top \eta_{i,s} \d s, ~~~ \sigma_{i,\dis} = \sum_{s \in {\cal S}} \omega_s \eta_{i,s}^\top \eta_{i,s} ,
\end{align*}
where $|{\cal S}| = N$.
Then we have
\begin{align*}
| \sigma_{i,\dis} - \sigma_{i,\cts} | \leq O\left(\poly(n) / 2^{2 N } \right).
\end{align*}
\end{lemma}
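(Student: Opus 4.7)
The plan is to view $\sigma_{i,\dis}$ as a quadrature approximation of $\sigma_{i,\cts}$ and invoke the one-variable case of Theorem~\ref{thm:multiple_variable_integral} (i.e., $d=1$), so that
\[
|\sigma_{i,\dis}-\sigma_{i,\cts}| \;\leq\; \frac{1}{(2N)!\,2^{2N}} \cdot \max_{s\in[0,1]} \left|\frac{d^{2N}}{ds^{2N}} f(s)\right|,
\]
where $f(s) := \eta_{i,s}^{\top}\eta_{i,s}$. Thus the entire task reduces to bounding $|f^{(2N)}(s)|$ by $(2N)!\cdot \poly(n)$ uniformly on $[0,1]$, which delivers the claimed $\poly(n)/2^{2N}$ upper bound.

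To control $f^{(2N)}$, I would first rewrite $f$ in a form that exhibits analyticity in $s$ on a large disk. Since $y_{s,0}=z_0+s(x_0-z_0)$, we have $M(y_{s,0})=M_0+s\Delta$ with $M_0:=M(z_0)$ and $\Delta:=A^{\top}(X_0-Z_0)A$. Writing $H:=M_0^{-1/2}\Delta M_0^{-1/2}$, the assumption $\epsilon_{\comp}\leq 0.01$ and the closeness of $z_0$ to $x_0$ (inherited from the projection maintenance guarantee) give $\|H\|\leq O(\epsilon_{\comp})\leq 0.01$. Setting $u:=M_0^{-1/2}A^{\top}\sqrt{W^{\mid}-W^{\new}}\,e_i$ and $K:=M_0^{-1/2}A^{\top}(Z_0-X_0)A M_0^{-1/2}$, one has
\[
f(s) \;=\; u^{\top}(I+sH)^{-1}\,K\,(I+sH)^{-1}u,
\]
which is holomorphic in $s$ on the disk $\{|s|<1/\|H\|\}\supset\{|s|<50\}$.

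Next I would apply Cauchy's integral formula on the circle of radius $r=10$ centered at any $s_0\in[0,1]$: on this circle, $\|(I+sH)^{-1}\|\leq (1-r\|H\|)^{-1}=O(1)$, and $\|K\|,\|u\|_2^2\leq \poly(n)$ follow from the trivial operator-norm bounds $\|Z_0-X_0\|\leq \poly(n)$, $\|W^{\mid}-W^{\new}\|\leq \poly(n)$, and the fact that the projection-type matrices $M_0^{-1/2}A^{\top}\cdot A M_0^{-1/2}$ have spectral norm $O(1)$. Thus $|f(s)|\leq \poly(n)$ on the entire circle, and Cauchy's estimate yields
\[
\max_{s_0\in[0,1]}|f^{(2N)}(s_0)| \;\leq\; \frac{(2N)!}{r^{2N}}\cdot \poly(n) \;=\; \frac{(2N)!\cdot\poly(n)}{10^{2N}}.
\]
Plugging this back into the quadrature bound gives $|\sigma_{i,\dis}-\sigma_{i,\cts}|\leq \poly(n)/(2^{2N}\cdot 10^{2N})\leq \poly(n)/2^{2N}$, as desired.

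The only genuinely delicate step is verifying $\|H\|\leq 0.01$, i.e., that $z_0$ and $x_0=w^{\mid}$ are sufficiently close so that the geometric series defining $(I+sH)^{-1}$ converges on a disk comfortably containing $[0,1]$. This follows from the invariant maintained by \textsc{pm} (the projection maintenance data structure) together with the normalization $\epsilon_{\comp}\leq 0.01$; once this is in hand, the bounds on $\|K\|$ and $\|u\|_2^2$ are crude operator-norm computations and the analyticity/Cauchy-estimate argument goes through mechanically.
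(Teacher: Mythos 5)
Your proposal is correct and takes essentially the same route as the paper: bound the quadrature error via the one-variable interpolation theorem, then control $\max_{s\in[0,1]}|f^{(2N)}(s)|$ through Cauchy's estimates after extending $f(s)=\eta_{i,s}^\top\eta_{i,s}$ to a holomorphic function of $s$. The only (minor) difference is that you are more explicit about the radius of holomorphy — using $\|H\|\le O(\epsilon_{\comp})\le 0.01$ to get a disk of radius $\approx 100$ and then invoking Cauchy on a circle of radius $10$ — whereas the paper simply notes invertibility for $|s|\le 1$ and applies $r=1$; your version is, if anything, a touch more careful on this point.
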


\begin{proof}

Define function $f: [0,1] \rightarrow \R$ to be $f(s) = \eta_{i,s}^\top \eta_{i,s}$.
Applying Theorem~\ref{thm:one_variable_integral} with $N = |{\cal S}|$, the error is bounded as
\begin{align*}
| \sigma_{i,\dis} - \sigma_{i,\cts} | \leq \frac{M_{2N}}{ (2N) ! \cdot 2^{2N} } ,
\end{align*}
where 
\begin{align*}
M_{2N} = \max_{s \in [0,1] }  \frac{ \partial^{(2N)} f } { \partial s^{ ( 2N ) } } ( s )  .
\end{align*}
In order to bound $M_{2N}$, we need the following Cauchy's estimates.

\begin{theorem}[Cauchy's Estimates] \label{thm:cauchy_estimates}
Suppose $f$ is holomorphic on a neighborhood of the ball $B = \{ z \in \C : | z - z_0 | \leq r \}$, then we have that 
\begin{align*}
|f^{(k)} (z_0)| \leq \frac{k !}{r^k} \cdot \sup_{z \in B} |f(z)|. 
\end{align*}
\end{theorem}

Since $f(x)$ is a rational polynomial, we can extend the definition of $f(s)$ to the complex plane and the resulting function, which we also denote as $f(s)$. 
Since $\norm{\log(z_0) - \log(x_0)}_\infty \leq \epsilon_{\comp} \leq 0.01$, $M(y_{s,0})$ is invertible for $|s| \leq 1$. 
Hence $f$ is holomorphic on the unit ball on the complex plane around $0$. 
Applying Theorem~\ref{thm:cauchy_estimates} with $r = 1$, we have that 
\begin{align*}
M_{2N} \leq (2N ) ! \cdot \poly(n) .
\end{align*}
Therefore, the error is bounded as
\begin{align*}
| \sigma_{i,\dis} - \sigma_{i,\cts} | \leq \frac{\poly(n)}{ 2^{2N} } ,
\end{align*}
which finishes the proof of the lemma.

\end{proof}

\begin{lemma}[Error upper bound for $\theta^\top \theta$]\label{lem:variance_bound_for_theta_theta_discrete_to_continuous}
Assume $\epsilon_{\comp} \leq 0.01$.
Let $\sigma_{i,\cts}$ and $\sigma_{i,\dis}$ be defined as follows
\begin{align*}
\sigma_{i,\cts} = \int_0^1 \theta_{i,t}^\top \theta_{i,t} \d t, ~~~ \sigma_{i,\dis} = \sum_{t \in {\cal T}} \omega_t \theta_{i,t}^\top \theta_{i,t} ,
\end{align*}
where $|{\cal T}| = N$.
Then we have
\begin{align*}
| \sigma_{i,\dis} - \sigma_{i,\cts} | \leq O\left(\poly(n) / 2^{2 N } \right) .
\end{align*}
\end{lemma}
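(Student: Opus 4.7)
The plan is to mirror the proof of Lemma~\ref{lem:variance_bound_for_eta_eta_discrete_to_continuous} almost verbatim, since the structure of $\theta_{i,t}^\top \theta_{i,t}$ as a rational function of $t$ is analogous to that of $\eta_{i,s}^\top \eta_{i,s}$. Define the one-variable function $f : [0,1] \to \R$ by $f(t) = \theta_{i,t}^\top \theta_{i,t}$, recalling from Definition~\ref{defn:eta_theta_alpha_beta_gamma} that $\theta_{i,t} = \sqrt{W^{\mid} - W^{\new}} \cdot Q(z_t) \cdot \sqrt{W^{\new}} \cdot e_i$. Apply Theorem~\ref{thm:one_variable_integral} with $N = |\mathcal{T}|$ sample points (using the weights $\omega_t$ guaranteed there) to obtain
\begin{equation*}
|\sigma_{i,\dis} - \sigma_{i,\cts}| \;\leq\; O(1)\cdot \frac{M_{2N}}{(2N)!\, 4^{N}}, \qquad M_{2N} := \max_{t\in[0,1]}\left|\frac{\partial^{(2N)} f}{\partial t^{(2N)}}(t)\right|.
\end{equation*}

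The remaining task is to bound $M_{2N}$ by $(2N)!\cdot \poly(n)$. For this I would extend $f$ to the complex plane and invoke Cauchy's Estimates (Theorem~\ref{thm:cauchy_estimates}) with radius $r=1$. The dependence of $f$ on $t$ enters only through $Z_t$ (viewing $t$ as the parameter that moves $z_t$ along its piecewise-linear trajectory), so $f(t)$ is a rational function of $t$ on each linear piece. By the projection-maintenance guarantee $\epsilon_{\comp} \leq 0.01$, we have $\|\log z_t - \log x_t\|_\infty \leq 0.01$, hence $M(z_t) = A^\top Z_t A$ is uniformly well-conditioned on $[0,1]$ and remains invertible for complex perturbations of $t$ within a unit disk around each real point. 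Thus $f$ extends holomorphically, $\sup_{|z-t|\leq 1}|f(z)| \leq \poly(n)$, and Cauchy gives $M_{2N} \leq (2N)!\cdot \poly(n)$. Plugging back yields the desired bound $|\sigma_{i,\dis}-\sigma_{i,\cts}| \leq \poly(n)/2^{2N}$.

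The main subtlety — and the only place where this differs from the $\eta$ case — is that $z_t$ is defined as a \emph{piecewise linear} interpolation between the sample points in $\mathcal{T}$ (see Algorithm~\ref{alg:maintain_leverage_score_complicated}), so $f$ is only piecewise rational across the entire interval $[0,1]$ and its derivatives may have jumps at the nodes of $\mathcal{T}$. The clean way to handle this is to apply Theorem~\ref{thm:one_variable_integral} separately on each of the $N-1$ subintervals $[t_k,t_{k+1}]$ (after a linear change of variables to $[0,1]$), on each of which $f$ is genuinely rational in $t$ and therefore holomorphically extendible to a neighborhood of that piece. Summing the per-piece errors costs only a factor $N$, which is absorbed into $\poly(n)/2^{2N}$. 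I expect this piecewise bookkeeping to be the only nontrivial part of the argument; all the spectral estimates needed to verify invertibility of $M(z_t)$ on the complex neighborhood are already implicit in the condition $\epsilon_{\comp} \leq 0.01$ used throughout Section~\ref{sec:complicated_leverage_score}.
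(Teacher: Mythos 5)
Your high-level plan — one-variable Gaussian quadrature via Theorem~\ref{thm:one_variable_integral} plus Cauchy's estimates (Theorem~\ref{thm:cauchy_estimates}) — mirrors the paper's proof of the $\eta$ lemma, and you are right to flag that ``the rest follows from similar arguments'' glosses over a real distinction. Since $z_t$ is defined by connecting the sampled points with line segments, the map $t\mapsto\theta_{i,t}^\top\theta_{i,t}$ has first-derivative jumps at every node of $\mathcal{T}$ and is not $C^{2N}$ on $[0,1]$, so Theorem~\ref{thm:one_variable_integral} does not apply directly; this is precisely the gap that does not arise in the $\eta$ case, where the path $y_{s,0}=z_0+s(x_0-z_0)$ is globally affine in $s$.

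However, the fix you propose proves a different statement. Running Gaussian quadrature separately on each linear piece would require fresh interior nodes on each subinterval, producing a discretization of the form $\sum_k\sum_{t\in\mathcal{T}_k}\omega^{(k)}_t\theta_{i,t}^\top\theta_{i,t}$, which is not the quantity $\sigma_{i,\dis}=\sum_{t\in\mathcal{T}}\omega_t\theta_{i,t}^\top\theta_{i,t}$ fixed in the lemma and actually computed by Algorithm~\ref{alg:maintain_leverage_score_complicated}. Two repairs that \emph{do} respect the algorithm: (i) change the off-grid extension of $z_t$ from piecewise-linear to the degree-$(N-1)$ polynomial interpolant through $\{(t_k,z_{t_k})\}_{t_k\in\mathcal{T}}$. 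The algorithm is unaffected (it only reads $z_t$ at $t\in\mathcal{T}$), the Lebesgue constant of the Gauss nodes keeps $\|\log z_t-\log x_t\|_\infty=O(\sqrt{N}\,\epsilon_{\comp})$, and now $t\mapsto\theta_{i,t}^\top\theta_{i,t}$ is genuinely rational on $[0,1]$, so your Cauchy argument goes through verbatim. (ii) Do not bound the $\theta^\top\theta$ quadrature error in isolation; instead use that for every fixed $t$
\begin{align*}
\theta_{i,t}^\top\theta_{i,t}+2\int_0^1\alpha_{i,s,t}^\top\beta_{i,s,t}\,\d s+\int_0^1\!\int_0^1\gamma_{i,s,t}^\top\gamma_{i,s',t}\,\d s\,\d s' \;=\; w^{\new}_i\, e_i^\top Q(x_t)(W^{\mid}-W^{\new})Q(x_t)e_i ,
\end{align*}
which depends only on the affine path $x_t$ and therefore extends holomorphically in $t$ regardless of how $z_t$ is interpolated; the $t$-quadrature error of this combined quantity is exponentially small even though the individual summand is not differentiable across the nodes, which is what ultimately enters Theorem~\ref{thm:maintain_leverage_score_complicated}.
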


\begin{lemma}[Error upper bound for $\alpha^\top \beta$]\label{lem:variance_bound_for_beta_alpha_discrete_to_continuous}
Assume $\epsilon_{\comp} \leq 0.01$.
Let $\sigma_{i,\cts}$ and $\sigma_{i,\dis}$ be defined as follows
\begin{align*}
\sigma_{i,\cts} = \int_0^1 \int_0^1 \beta_{i,s,t}^\top \alpha_{i,s,t} \d s \d t, ~~~ \sigma_{i,\dis} = \sum_{t \in {\cal T}} \sum_{s \in {\cal S}} \omega_t \omega_s \beta_{i,s,t}^\top \alpha_{i,s,t} ,
\end{align*}
where $|{\cal S}| = |{\cal T}| = N$.
Then we have
\begin{align*}
| \sigma_{i,\dis} - \sigma_{i,\cts} | \leq O\left(\poly(n) / 2^{2 N } \right).
\end{align*}
\end{lemma}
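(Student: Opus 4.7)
The plan is to reduce this two-variable statement to an application of the multiple-variable polynomial interpolation bound in Theorem~\ref{thm:multiple_variable_integral} with $d=2$, in direct parallel with the one-variable argument already written out for Lemma~\ref{lem:variance_bound_for_eta_eta_discrete_to_continuous}. Define the scalar function
\[
f_i(s,t) \;:=\; \beta_{i,s,t}^{\top}\alpha_{i,s,t},
\]
so that $\sigma_{i,\cts}=\int_{[0,1]^2} f_i(s,t)\,\d s\,\d t$ and $\sigma_{i,\dis}=\sum_{t\in\mathcal{T}}\sum_{s\in\mathcal{S}}\omega_t\omega_s f_i(s,t)$, with the same nodes and weights $\{(s_k,\omega_k)\}$ guaranteed by Theorem~\ref{thm:multiple_variable_integral}. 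Applying that theorem directly gives
\[
|\sigma_{i,\dis}-\sigma_{i,\cts}|\;\leq\;\frac{1}{(2N)!\,2^{2N}}\left(\max_{(s,t)\in[0,1]^2}\left|\frac{\partial^{2N} f_i}{\partial s^{2N}}\right|+\max_{(s,t)\in[0,1]^2}\left|\frac{\partial^{2N} f_i}{\partial t^{2N}}\right|\right),
\]
so it suffices to bound each of these $2N$-th partial derivatives by $(2N)!\cdot\poly(n)$.

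For each fixed $t\in[0,1]$, recall that $\alpha_{i,s,t}$ and $\beta_{i,s,t}$ (Definition~\ref{defn:eta_theta_alpha_beta_gamma}) are rational functions of $s$, with the only possible singularities coming from $M(y_{s,t})^{-1}=(A^{\top}Y_{s,t}A)^{-1}$, where $y_{s,t}=z_t+s(x_t-z_t)$. I extend $f_i(\cdot,t)$ holomorphically in $s$ to a complex neighborhood of $[0,1]$: since $\epsilon_{\comp}\leq 0.01$ and by construction $\|\log x_t-\log z_t\|_\infty=O(\epsilon_{\comp})$, the diagonal matrix $Y_{s,t}$ stays positive definite and spectrally close to $Z_t$ for all complex $s$ with $|s-s_0|\leq 1$ and $s_0\in[0,1]$, so $M(y_{s,t})$ remains invertible and $\|M(y_{s,t})^{-1}\|\leq\poly(n)$ on that disk. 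The same reasoning works in the $t$-variable: for fixed $s\in[0,1]$ one holomorphically extends in $t$ around any $t_0\in[0,1]$ to a disk of radius $1$, using that the underlying weights $w^{\mid},w^{\new},z_t,x_t$ only vary by a factor of $O(1)$ so $M(y_{s,t})$ is invertible with $\poly(n)$-bounded inverse throughout that complex disk.

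Given the holomorphic extensions, I apply Cauchy's estimates (Theorem~\ref{thm:cauchy_estimates}) with radius $r=1$ to obtain
\[
\left|\frac{\partial^{2N} f_i}{\partial s^{2N}}(s,t)\right|\;\leq\;(2N)!\cdot\sup_{|s'-s|\leq 1}|f_i(s',t)|\;\leq\;(2N)!\cdot\poly(n),
\]
and analogously for the $t$-derivative, using that $f_i$ is a product of a bounded number of matrix-vector factors each of polynomial norm on the disk. Plugging into the displayed error bound yields $|\sigma_{i,\dis}-\sigma_{i,\cts}|\leq\poly(n)/2^{2N}$, which is the claim.

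The only genuinely new content relative to Lemma~\ref{lem:variance_bound_for_eta_eta_discrete_to_continuous} is the bivariate holomorphic extension; the main obstacle is therefore checking that invertibility of $M(y_{s,t})$ and $O(1)$-closeness of the diagonal weights persist throughout the complex bidisk of radius $1$ around any base point in $[0,1]^2$. This is essentially the same linear-algebraic perturbation check as in the scalar case, repeated in each variable while holding the other in the real interval $[0,1]$, and I do not anticipate any further complications — all the structural estimates ($\epsilon_{\comp}\leq 0.01$, boundedness of projections, etc.) are already available from Section~\ref{subsec:upper_bound_eta_alpha_beta_gamma}.
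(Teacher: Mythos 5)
Your broad plan—apply the $d=2$ case of Theorem~\ref{thm:multiple_variable_integral}, then control the $2N$-th partial derivatives via Cauchy's estimates after holomorphic extension—is exactly the route taken in the paper's proof of Lemma~\ref{lem:variance_bound_for_eta_eta_discrete_to_continuous}, and the paper's claim that Lemma~\ref{lem:variance_bound_for_beta_alpha_discrete_to_continuous} ``follows from similar arguments'' suggests the authors had the same generalization in mind. The $s$-variable half of your argument is on solid ground: with $t$ held at a node of ${\cal T}$, the vector $z_t$ is a fixed positive vector, $y_{s,t}=z_t+s(x_t-z_t)$ is affine in $s$, and the assumption $\epsilon_{\comp}\leq 0.01$ indeed keeps the complex diagonal matrix $Y_{s,t}$ a small relative perturbation of $Z_t$ on a unit disk around each $s_0\in[0,1]$, so $M(y_{s,t})^{-1}$ exists with $\poly(n)$ norm and $f_i(\cdot,t)$ extends holomorphically. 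That matches the paper's argument verbatim, with $t=0$ replaced by a general $t\in{\cal T}$.

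The gap is in the $t$-variable. You write that ``the same reasoning works in the $t$-variable,'' but this is where the two cases are genuinely different. By Definition 7.2, $z_t$ for $t\notin{\cal T}$ is defined by \emph{piecewise linear interpolation} between the values $\{z_t\}_{t\in{\cal T}}$ returned by the projection maintenance data structure. A piecewise linear map is not real-analytic at its breakpoints, and the breakpoints are precisely the quadrature nodes ${\cal T}$. So $f_i(s,t)=\beta_{i,s,t}^{\top}\alpha_{i,s,t}$, which depends on $z_t$ through both $Q(z_t)$ (inside $\beta$) and $y_{s,t}=z_t+s(x_t-z_t)$ (inside both $Q(y_{s,t})$ factors), is at best continuous and piecewise smooth in $t$. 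It does not extend holomorphically to a complex neighborhood of $t_0\in{\cal T}$, so Cauchy's estimate cannot be applied in $t$ and the hypothesis ``$N+1$ times differentiable in $t_j$'' required by Theorem~\ref{thm:multiple_variable_integral} fails. Note that Lemma~\ref{lem:variance_bound_for_eta_eta_discrete_to_continuous} never encounters this because $\eta_{i,s}$ only involves $z_0=v^{\mid}$ which is fixed; the extension to $t$ is exactly where the ``similar arguments'' claim is too quick. A careful version must exploit that it is only the \emph{sum} $\theta^{\top}\theta+2\alpha^{\top}\beta+\gamma^{\top}\gamma$ (Lemma~\ref{lem:first_term_leverage_score_moving_complicated}) that needs to be integrated accurately in $t$, and this sum equals $e_i^{\top}\sqrt{W^{\new}}\,Q(x_t)(W^{\mid}-W^{\new})Q(x_t)\sqrt{W^{\new}}\,e_i$, which depends only on $x_t$ (affine in $t$) and hence \emph{is} holomorphic in a complex neighborhood of $[0,1]$. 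One should therefore apply the $t$-quadrature bound to this combined smooth function, and only use the $z_t$-based decomposition and the $s,s'$-quadrature/JL approximation at the fixed nodes $t\in{\cal T}$, where everything is analytic in $s$. Your proposal, as written, bounds the quadrature error of the individual piece $\int\int\beta^{\top}\alpha$ in both variables, which does not go through because of $z_t$'s kinks in $t$.
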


\begin{lemma}[Error upper bound for $\gamma^\top \gamma$]\label{lem:variance_bound_for_gamma_gamma_discrete_to_continuous}
Assume $\epsilon_{\comp} \leq 0.01$.
Let $\sigma_{i,\cts}$ and $\sigma_{i,\dis}$ be defined as follows
\begin{align*}
\sigma_{i,\cts} = & ~ \int_0^1 \int_0^1 \int_0^1 \gamma_{i,s,t}^\top \gamma_{i,s',t} \d s \d s' \d t \\
\sigma_{i,\dis} = & ~ \sum_{t \in {\cal T}} \sum_{s \in {\cal S}} \sum_{s' \in {\cal S}} \omega_t \omega_s \omega_{s'} \gamma_{i,s,t}^\top \gamma_{i,s',t} ,
\end{align*}
where $|{\cal S}| = |{\cal T}| = N$.
Then we have
\begin{align*}
| \sigma_{i,\dis} - \sigma_{i,\cts} | \leq O\left(\poly(n) / 2^{2 N } \right).
\end{align*}
\end{lemma}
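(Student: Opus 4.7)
\textbf{Proof proposal for Lemma~\ref{lem:variance_bound_for_gamma_gamma_discrete_to_continuous}.}
The plan is to mimic the proof of Lemma~\ref{lem:variance_bound_for_eta_eta_discrete_to_continuous} but invoke the multi-variable polynomial interpolation Theorem~\ref{thm:multiple_variable_integral} with $d=3$ in place of the one-variable Theorem~\ref{thm:one_variable_integral}. Concretely, I would fix $i \in [m]$ and define the three-variable function
\begin{align*}
f(s, s', t) = \gamma_{i,s,t}^\top \gamma_{i,s',t},
\end{align*}
so that $\sigma_{i,\cts} = \int_{[0,1]^3} f$ and $\sigma_{i,\dis} = \sum_{t,s,s'} \omega_t \omega_s \omega_{s'} f(s,s',t)$. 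Theorem~\ref{thm:multiple_variable_integral} then immediately yields
\begin{align*}
|\sigma_{i,\dis} - \sigma_{i,\cts}| \leq \frac{1}{(2N)!\,2^{2N}} \sum_{j \in \{s,s',t\}} \max_{(s,s',t) \in [0,1]^3} \left|\tfrac{\partial^{2N} f}{\partial x_j^{2N}}(s,s',t)\right|,
\end{align*}
so the whole question reduces to showing each of these three $2N$-th partial derivatives is at most $(2N)! \cdot \poly(n)$ in absolute value.

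Next, I would bound the partial derivatives by Cauchy's estimates (Theorem~\ref{thm:cauchy_estimates}), exactly as in the proof of Lemma~\ref{lem:variance_bound_for_eta_eta_discrete_to_continuous}. Recalling the definition
\begin{align*}
\gamma_{i,s,t} = \sqrt{W^{\mid} - W^{\new}} \cdot Q(y_{s,t}) \cdot (Z_t - X_t) \cdot Q(y_{s,t}) \cdot \sqrt{W^{\new}} \cdot e_i,
\end{align*}
where $y_{s,t} = z_t + s(x_t - z_t)$ and $Q(v) = A(A^\top V A)^{-1} A^\top$, we see that $f(s,s',t)$ is a rational function of the complex variables $s, s', t$ whose denominators are $\det(A^\top Y_{s,t} A)$ and $\det(A^\top Y_{s',t} A)$ (and the linear factor $Z_t, X_t$ appearing polynomially). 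Thus $f$ extends holomorphically to any open region where these determinants remain nonzero.

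The main technical step is therefore to verify that we can extend $f$ holomorphically to a ball of radius $1$ (or any constant) around each point of $[0,1]^3$ in $\mathbb{C}^3$ with a $\poly(n)$ uniform bound on $|f|$. Since $\epsilon_{\comp} \leq 0.01$, the real weight vector $y_{s,t}$ for $s,t \in [0,1]$ satisfies $\|\log(y_{s,t}) - \log(w^{\new})\|_\infty = O(1)$, so $M(y_{s,t}) = A^\top Y_{s,t} A$ is uniformly well-conditioned and lower-bounded in spectrum by $\Omega(1/\poly(n))$ times $M(w^{\new})$. Extending $s, s', t$ into the complex unit ball only perturbs each entry of $Y_{s,t}$ by an $O(1)$ multiplicative complex factor, so by a standard perturbation argument $M(y_{s,t})$ remains invertible with $\|M(y_{s,t})^{-1}\| \leq \poly(n)$ on this complex neighborhood. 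This gives $|f(s,s',t)| \leq \poly(n)$ uniformly. Applying Theorem~\ref{thm:cauchy_estimates} with $r = \Theta(1)$ to each of the three coordinates then yields $\left|\partial_{x_j}^{2N} f\right| \leq (2N)! \cdot \poly(n)$.

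Substituting back gives $|\sigma_{i,\dis} - \sigma_{i,\cts}| \leq \poly(n) / 2^{2N}$, as claimed. The only subtle point I anticipate is making the holomorphic extension argument rigorous, i.e.\ carefully picking a complex neighborhood on which all three appearances of $M(\cdot)^{-1}$ in $f$ remain well-conditioned with a uniform $\poly(n)$ bound; this is essentially the same computation that underlies the one-variable case but must now be done jointly in three complex variables. Once that uniform bound is in hand, the rest is immediate from the two quoted theorems.
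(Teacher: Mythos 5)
Your overall strategy — applying the $d$-variable interpolation Theorem~\ref{thm:multiple_variable_integral} with $d=3$ and then bounding the $2N$-th partial derivatives via Cauchy's estimate exactly as in the one-variable proof of Lemma~\ref{lem:variance_bound_for_eta_eta_discrete_to_continuous} — is precisely the "similar argument" the paper has in mind, and the pieces you list (multi-variable interpolation, holomorphic extension, Cauchy's estimate, uniform $\poly(n)$ bound via the $\epsilon_{\comp}\leq 0.01$ condition) are the right ones.

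There is, however, one step where your argument as written would fail, and it concerns only the $t$-variable. You assert that $f(s,s',t)=\gamma_{i,s,t}^\top\gamma_{i,s',t}$ \emph{is a rational function of $s,s',t$} and hence extends holomorphically in all three variables wherever the determinants are nonzero. This is true for $s$ and $s'$, because for fixed $t$ the weight $y_{s,t}=z_t+s(x_t-z_t)$ is \emph{linear} in $s$, and so $M(y_{s,t})^{-1}$ is genuinely rational in $s$; the paper's one-variable proof exploits exactly this structure. But the $t$-dependence enters through $z_t$, and the paper defines $z_t$ on $[0,1]$ as the \emph{piecewise-linear} interpolant through the $N$ computed values $\{z_t\}_{t\in\mathcal{T}}$ ("connecting consecutive points with segments"). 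A piecewise-linear $z_t$, and hence $Z_t$ and $y_{s,t}$, has kinks at the nodes and is not $2N$ times differentiable in $t$, let alone real-analytic, so $f$ is not rational in $t$ and the Cauchy estimate for $\partial_t^{(2N)} f$ has no leg to stand on. To close this you would need either to (i) replace the piecewise-linear extension of $z_t$ by a smooth one (e.g.\ a polynomial interpolant) and show it admits a uniformly $\poly(n)$-bounded holomorphic extension on a $\Theta(1)$-radius complex neighborhood of $[0,1]$ — which is not automatic, since polynomial interpolants of degree $N-1$ can grow exponentially off the real interval — or (ii) split the $t$-integral into the $N-1$ smooth sub-intervals between consecutive nodes and redo the quadrature error analysis piecewise, which requires a different accounting. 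The paper's terse "follows from similar arguments" elides this point, and the same issue is latent in the $\theta^\top\theta$ and $\alpha^\top\beta$ lemmas; you should make the resolution explicit if you want a complete proof rather than inheriting the gap.
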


\newpage 
\appendix
\section*{Appendix}

\section{Perturbed Volumetric Center Cutting Plane Method}\label{sec:perturb}

In this section we present an overview of Vaidya's cutting plane method \cite{v89}
and illustrate how our leverage score maintenance data structure in Section~\ref{sec:main_leverage_score} implies
a faster implementation in $O(n\SO\log (\kappa)+n^{3}\log (\kappa))$ time.
Formally, we prove the following Theorem~\ref{thm:mainvaidya} from Section~\ref{sec:vaidya}.

\mainvaidya*


Our cutting plane method essentially replaces the leverage scores $\sigma$ in Vaidya's method by estimates $\tilde{\sigma}$ from our leverage score maintenance data structure in Section~\ref{sec:main_leverage_score}, which satisfies that $\|\tilde{\sigma}-\sigma\|_{2}\leq1/\log^{O(1)} (n)$. 
To justify the validility of such a replacement, we first give an overview of Vaidya's method. 

\paragraph*{Feasible region}

Vaidya's algorithm maintains a polytope $P^{(k)}=\{x \in \R^n :A^{(k)}x\geq b^{(k)}\} \subset \R^n$ with $m^{(k)} = O(n)$ constraints as the feasible region.

\paragraph*{Volmetric barrier function and volumetric center}

Vaidya's algorithm maintains an approximate minimizer $z^{(k)} \in \R^n$, known as the volumetric center,
of the volumetric barrier function: 
\begin{align*}
z^{(k)}\approx \arg\min_{x\in P^{(k)}}F^{(k)}(x)\qquad\text{with}\qquad F^{(k)}(x):=\frac{1}{2}\log(\det(A^{(k)\top}(S_{x}^{(k)})^{-2}A^{(k)}))
\end{align*}
where $s_{x}^{(k)}=A^{(k)}x-b^{(k)} \in \R^{m^{(k)}}$ is the slack and $S_{x}^{(k)} \in \R^{ m^{(k)} \times m^{(k)} }$
is the diagonal matrix for $s_{x}^{(k)}$. Here $z^{(k)} \in \R^n$
serves as the query point to the separation oracle.

\paragraph*{Leverage score}

Each constraint $i$ of $P^{(k)}$ is associated with a leverage score
\begin{align*}
\sigma_{i}(z)=\left( (S_{x}^{(k)})^{-1}A^{(k)}(A^{(k)\top}(S_{x}^{(k)})^{-2}A^{(k)})^{-1}A^{(k)\top}(S_{x}^{(k)})^{-1} \right)_{i,i}, \forall i \in [m^{(k)}]
\end{align*}
which measures its relative importance (see preliminary for the definition).
It is well-known that $0 \leq \sigma_{i} (z) \leq 1$, $\forall i \in [m^{(k)}]$ and $\sum_{i=1}^{ m^{(k)} } \sigma_{i}(z)=n$.
We denote by $\sigma$ and $\Sigma$ the vector and
 diagonal matrix of leverage scores respectively. 

\paragraph*{Updating $P^{(k)}$ according to leverage scores}

$P^{(k)}$ can be updated in two different ways. Parameters are chosen such that $F(z) - \min_{z \in \R^n} F(z)$ does not increased more than a multiplicative constant. 
\begin{itemize}
\item Whenever the leverage score $\sigma_{i}(z)$ is smaller than some
universal constant $c_{1}$, constraint $i$ is dropped and $z$ is
updated by the Newton method below. As $\sum_{i}\sigma_{i}(z)=n$, this implies that the number of constraints
is $\leq\frac{\sum_{i}\sigma_{i}}{c_{1}}=\frac{n}{c_{1}}=O(n)$. 

\item Otherwise, the separation oracle is queried at the volumetric center
$z^{(k)}$ and returns a new separating hyperplane $a_{k}^{\top}x\geq b_{k}$.
However, $P^{(k+1)}$ is \emph{not }the intersection of $P^{(k)}$
and $a_{k}^{\top}x\geq b_{k}$. Instead, $a_{k}^{\top}x\geq b_{k}'$ is added for
some $b_{k}'\leq b_{k}$ so that the leverage score of $a_{k}^{\top}x\geq b_{k}'$
is $0.5(\delta c_{1})^{1/2}$, where $\delta\geq10^{3}c_{1}$ is another
small universal constant.
\end{itemize}

\paragraph{Updating $z^{(k)}$ via Newton-type method}

Upon adding or removing a constraint the volumetric center $z^{(k)} \in \R^n$
must be recomputed. As $z^{(k)}$ should minimize $F^{(k)}(x)$, we
iteratively update $z^{(k)}$ via a Newton-type step:
\begin{align*}
z^{\new}\leftarrow z-\frac{1}{10}Q(z)^{-1}\nabla F(z)
\end{align*}
where $Q(z):=A^{\top}S_{z}^{-1}\Sigma S_{z}^{-1}A \in \R^{n \times n}$ can be shown to
be a constant spectral approximation to the Hessian $\nabla^{2}F(z) \in \R^{n \times n}$.
One can calculate $\nabla F(z)=A^{\top}S_{z}^{-1}\sigma \in \R^n$.

We perform this Newton step iteratively until $z^{\new}$
is a good approximate minimizer of $F^{(k)}(x)$, i.e. $F(z)-\min_{z \in \R^n} F(z)\leq c_{2}$
for some small universal constant $c_{2}$. Theorem 1 in full version
of \cite{v89} states that 
\begin{align*}
F( z^{\new} )-\min_{z\in \R^n} F(z)\leq(1-\frac{1}{100})(F(z)-\min_{z \in \R^n} F(z))
\end{align*}
Since the decrease is multiplicative, this can be accomplished in
only $O(1)$ many iterations.

\paragraph{Performance guarantee of Vaidya's method}\mbox{} \\

\noindent\emph{Number of iterations required}

Vaidya showed that after $T$ iterations, the volume of $P^{(k)}$
decreases by a factor of $c^{T-O(n\log (n))}$ for some constant $c$,
i.e.
\begin{align*}
\vol(P^{(k)})\leq c^{T-O(n\log (n))}\vol(P^{(0)})=c^{T-O(n\log (n))}R^{n}.
\end{align*}
Therefore in $T=O(n\log(nR/\epsilon))$ iterations, we have $\vol(P^{(k)})\leq\epsilon^{O(n)}$
showing that $P^{(k)}$ does not contain a ball of radius $\epsilon$
and hence solving the feasibility problem.

\smallskip

\noindent \emph{Running time per iteration}

As $\sum_{i}\sigma_{i}(z)=n$ and the leverage scores are maintained
so that $\sigma_{i}\geq c_{1}$ always holds, the number of constraints
is $\leq\frac{\sum_{i}\sigma_{i}}{c_{1}}=\frac{n}{c_{1}}=O(n)$. Thus
all vectors and matrices above have dimension $O(n)$ and $O(n)\times O(n)$. Moreover, recall that only $O(1)$ steps of Newton method are needed within a iteartion of cutting plane.

Therefore in one iteration, the running time of Vaidya is $O(n^{2})$
plus the time to compute $\sigma$ and to solve a linear system in
$Q(z)^{-1}$ (from the Newton step), which naively requires $O(n^{\omega})$ time. In the
rest of this section we explain speed up these two bottlenecks using
our leverage score maintenance data structure.

\subsection{Our faster implementation via leverage score maintenance}

We provide a faster implementation of Vaidya's method via our leverage
score maintenance data structure, which efficiently updates leverage scores. Specifically, We design a data structure which, upon updates to the volumetric center $z$, maintains an estimate $\tilde{\sigma}$
of the leverage scores $\sigma$ in amortized $O(n^{2})$
time (Theorem~\ref{thm:leverage_score_maintain_main}). Our error guarantee satisfies
\begin{align*}
\|\tilde{\sigma}-\sigma\|_{2}\leq1/\log^{O(1)} (n).
\end{align*}

To apply our data structure, we plug in $W=S_{z}^{-2}$
as the weight in Theorem~\ref{thm:leverage_score_maintain_main}. To establish the validity of our method, we show that conditions (1) and
(2) required for our data structure are satisfied for sufficiently small parameters $c_1,\delta,c_2$. We then prove that Vaidya's performance guarantee is preserved in the presence of a small perturbation to the leverage score.

\begin{lemma}[Condition 1 of data structure (Theorem~\ref{thm:leverage_score_maintain_main})]\label{lem:con1satisfied}
For any constraint $a^{\top} x\geq b$ added or removed, let $s=a^{\top}z-b$
be its slack. We have 
\begin{align*}
\frac{1}{s^2} aa^{\top} \preceq0.01A^{\top}S_{z}^{-2}A.
\end{align*}
\end{lemma}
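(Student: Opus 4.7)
\medskip

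\noindent\textbf{Proof plan for Lemma~\ref{lem:con1satisfied}.}
The plan is to reduce the PSD inequality to a scalar bound on a leverage score, and then to invoke Vaidya's own constraint-management rules, which explicitly control this leverage score by small universal constants $c_1$ and $\delta$.

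First, I would record the standard equivalence: for any positive definite matrix $M \in \R^{n \times n}$ and any vector $u \in \R^n$,
\begin{align*}
uu^\top \preceq c \cdot M \quad \Longleftrightarrow \quad u^\top M^{-1} u \leq c.
\end{align*}
(The forward direction follows by plugging $v = M^{-1} u$ into $v^\top uu^\top v \leq c v^\top M v$; the reverse follows from Cauchy-Schwarz in the $M$-inner product.) Applying this with $u = a/s$ and $M = A^\top S_z^{-2} A$, the claim reduces to showing $\frac{1}{s^2}\, a^\top (A^\top S_z^{-2} A)^{-1} a \leq 0.01$. This quantity is precisely the (unnormalized) leverage score of the row $a^\top x \geq b$, viewed with weight $1/s^2$, with respect to the current matrix $A^\top S_z^{-2} A$.

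Next I would handle the two cases separately. For a \emph{deletion}, Vaidya's rule drops a constraint exactly when its leverage score is below $c_1$; thus $\frac{1}{s^2} a^\top (A^\top S_z^{-2} A)^{-1} a < c_1$ for the row being removed, which is at most $0.01$ provided $c_1$ is chosen as a sufficiently small universal constant. For an \emph{insertion}, Vaidya's rule chooses $b'_k$ so that the leverage score of the new row $a^\top x \geq b'_k$ equals $0.5(\delta c_1)^{1/2}$, but this value is with respect to the \emph{new} matrix $M_{\text{new}} = A^\top S_z^{-2} A + \frac{1}{s^2} aa^\top$, whereas Lemma~\ref{lem:con1satisfied} requires comparison against the \emph{old} matrix $M = A^\top S_z^{-2} A$. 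I would bridge the two via Sherman-Morrison: setting $\tau := \frac{1}{s^2} a^\top M^{-1} a$ (old) and $\sigma_{\text{new}} := \frac{1}{s^2} a^\top M_{\text{new}}^{-1} a$ (new), a direct calculation gives
\begin{align*}
\sigma_{\text{new}} = \frac{\tau}{1+\tau}, \qquad \text{equivalently} \qquad \tau = \frac{\sigma_{\text{new}}}{1 - \sigma_{\text{new}}}.
\end{align*}
Since $\sigma_{\text{new}} = 0.5(\delta c_1)^{1/2}$, choosing $c_1$ and $\delta$ small enough to guarantee $\sigma_{\text{new}} \leq 1/2$ yields $\tau \leq 2 \sigma_{\text{new}} = (\delta c_1)^{1/2}$, and this is at most $0.01$ for the universal constants $c_1, \delta$ stated in Vaidya's analysis.

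Combining the two cases and translating the scalar bound $\tau \leq 0.01$ back to the PSD inequality via the equivalence in the first paragraph completes the proof. There is no real obstacle here: the entire argument is a bookkeeping exercise that confirms Vaidya's pre-existing parameter choices automatically satisfy the insertion/deletion hypothesis of our data structure in Theorem~\ref{thm:leverage_score_maintain_main}. The only point that requires any care is the Sherman-Morrison step on insertion, which is where the distinction between old-matrix and new-matrix leverage scores must be tracked explicitly.
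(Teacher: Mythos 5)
Your proof is correct and follows the same essential route as the paper: reduce the PSD inequality to the scalar bound $\frac{1}{s^2} a^\top (A^\top S_z^{-2} A)^{-1} a \leq 0.01$ (the paper does this by rescaling with $H(z)^{1/2}$, you do it by the general equivalence $uu^\top \preceq cM \iff u^\top M^{-1}u \leq c$; these are the same), then invoke Vaidya's rules which keep the relevant leverage scores bounded by $c_1$ (deletion) and $0.5(\delta c_1)^{1/2}$ (insertion). One thing you do that the paper does not: you notice that on insertion Vaidya's target leverage score is taken with respect to the post-insertion matrix, whereas the data-structure condition of Theorem~\ref{thm:leverage_score_maintain_main} compares $\frac{1}{s^2}aa^\top$ against the pre-insertion matrix, and you bridge the gap with the Sherman--Morrison identity $\tau = \sigma_{\mathrm{new}}/(1-\sigma_{\mathrm{new}})$. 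The paper's proof elides this distinction and treats the two cases uniformly, relying on "choose $c_1,\delta$ small enough"; your version records the factor that choice must absorb, which is a modest but genuine improvement in precision.
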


\begin{proof}
Let $H(z)=A^{\top}S_{z}^{-2}A$. Our goal is to show
\begin{align*}
\frac{1}{s^2} aa^{\top} \preceq0.01H(z).
\end{align*}

Recall that a constraint is removed when its leverage score is smaller
than $c_{1}$ and added so that its leverage score is $(\delta c_{1})^{1/2}$.
In Vaidya's analysis, the only requirement on $c_{1}$ and $\delta$
is that $c_{1},\delta$ are sufficiently small constants and $\delta\geq10^{3}c_{1}$.
Hence in either case, we can make the leverage score of $a^{\top} x\geq b$
smaller than 0.01 by choosing $c_{1}$ and $\delta$ small enough, i.e. the leverage score of $a^{\top}x\geq b$ satisfies
\begin{align*}
\frac{1}{s^{2}} \cdot a^{\top}H(z)^{-1}a \leq0.01.
\end{align*}
Since $H(z)$ is PSD and the square root of a PSD matrix exists,
\begin{align*}
\frac{1}{s^{2}} \cdot a^{\top}H(z)^{-1}a =\frac{1}{s^{2}} \cdot (H(z)^{-1/2}a)^{\top}(H(z)^{-1/2}a) \leq0.01.
\end{align*}
Note that the spectral norm of $(H(z)^{-1/2}a)(H(z)^{-1/2}a)^{\top}$
is $(H(z)^{-1/2}a)^{\top}(H(z)^{-1/2}a)$. Thus
\begin{align*}
\frac{1}{s^{2}} H(z)^{-1/2}a^{\top}aH(z)^{-1/2} =\frac{1}{s^{2}} (H(z)^{-1/2}a)(H(z)^{-1/2}a)^{\top} \preceq0.01I.
\end{align*}
Multiplying by $H(z)^{1/2}$ on the both sides of the above equation, we have
\begin{align*}
\frac{1}{s^2} a^\top a \preceq 0.01 \cdot H(z),
\end{align*}
thus we complete the proof.
\end{proof}

\begin{lemma}[Condition 2 of data structure (Theorem~\ref{thm:leverage_score_maintain_main})]\label{lem:con2satisfied}
Whenever the volumetric center $z \in \R^n$ is updated to $z^{\new} \in \R^n$, we have
\begin{align*}
\left\| \log ( s_{z^{\new}} ) - \log ( s_{z} )\right \|_{2}\leq 0.01.
\end{align*}
\end{lemma}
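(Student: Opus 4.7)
The plan is to first bound the Newton step in a slack-relative $\ell_2$ sense, and then convert this to the logarithmic bound by linearization. Since the Newton update is $z^{\new} - z = -\tfrac{1}{10} Q(z)^{-1} \nabla F(z)$, we have $s_{z^{\new}} - s_z = A(z^{\new} - z)$, and hence
\begin{align*}
\|S_z^{-1}(s_{z^{\new}} - s_z)\|_2^2 \;=\; \tfrac{1}{100}\, \nabla F(z)^\top Q(z)^{-1} H(z)\, Q(z)^{-1} \nabla F(z),
\end{align*}
where $H(z) := A^\top S_z^{-2} A$ and $Q(z) = A^\top S_z^{-1} \tilde{\Sigma}\, S_z^{-1} A$ with $\tilde{\sigma}$ being the approximate leverage scores maintained by Theorem~\ref{thm:leverage_score_maintain_main}. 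Once this quantity is bounded by $(0.005)^2$, the inequality $|\log(1+x)| \le 2|x|$ for $|x| \le 1/2$ applied entrywise (using $\|S_z^{-1}(s_{z^{\new}}-s_z)\|_\infty \le \|S_z^{-1}(s_{z^{\new}}-s_z)\|_2 \le 0.005$) immediately gives $\|\log(s_{z^{\new}}) - \log(s_z)\|_2 \le 2 \cdot 0.005 \le 0.01$.

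The key spectral inequality I will establish is $H(z) \preceq \frac{1}{c_1 - O(\epsilon_\sigma)}\, Q(z)$, where $\epsilon_\sigma = 1/\log^{O(1)}(n)$ is the $\ell_2$ error of $\tilde{\sigma}$. This follows because every retained constraint has true leverage score $\sigma_i \ge c_1$ (otherwise it would have been dropped by Lemma~\ref{lem:con1satisfied}), so the perturbation bound forces $\tilde{\sigma}_i \ge c_1 - \epsilon_\sigma$ on all kept rows, yielding $Q(z) \succeq (c_1 - \epsilon_\sigma)\, H(z)$. Combined with the proximity measure bound $\|\nabla F(z)\|_{Q(z)^{-1}}^2 \le O(1)$, which holds throughout the algorithm because $F(z) - \min F$ is maintained at $O(1)$ (the increment from any single add/drop is a universal constant and each Newton step reduces this gap multiplicatively by Vaidya's Theorem~1 in \cite{v89}), we obtain
\begin{align*}
\|S_z^{-1}(s_{z^{\new}} - s_z)\|_2^2 \;\le\; \tfrac{1}{100 (c_1 - \epsilon_\sigma)} \cdot O(1),
\end{align*}
which can be made smaller than $(0.005)^2$ by choosing $c_1$ to be a suitably large (universal) constant, as permitted by Vaidya's freedom in the choice of $c_1, \delta, c_2$.

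The main obstacle I anticipate is verifying that the proximity measure $\|\nabla F(z)\|_{Q(z)^{-1}}$ still enjoys the needed $O(1)$ bound when both $\nabla F(z)$ and $Q(z)$ are assembled using the perturbed $\tilde{\sigma}$ rather than the exact $\sigma$. Concretely, one must show that replacing $\sigma \mapsto \tilde{\sigma}$ with $\|\tilde{\sigma}-\sigma\|_2 \le 1/\log^{O(1)}(n)$ changes $A^\top S_z^{-1}\tilde{\sigma}$ and $A^\top S_z^{-1}\tilde\Sigma S_z^{-1} A$ by amounts that are absorbed into the constants of the self-concordant style analysis of the volumetric barrier. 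This is essentially the content of the perturbed Vaidya analysis promised at the start of Section~\ref{sec:perturb}; once it yields $\|\nabla F(z)\|_{Q(z)^{-1}} = O(1)$ uniformly over the iterations, the slack-bound calculation above closes the proof. I would carry out the steps in the order: (i) the algebraic identity for $\|S_z^{-1}(s_{z^{\new}}-s_z)\|_2^2$, (ii) the spectral bound $H \preceq \frac{1}{c_1 - O(\epsilon_\sigma)} Q$, (iii) the proximity bound on $\|\nabla F(z)\|_{Q(z)^{-1}}$, and (iv) the entrywise $\log$-to-linear comparison.
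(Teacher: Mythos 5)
Your algebraic skeleton matches the paper's: both reduce the claim to bounding the relative slack $\|S_z^{-1}(s_{z^{\new}}-s_z)\|_2$, and both use the spectral inequality $Q(z) \succeq (c_1 - O(\epsilon_\sigma)) H(z)$ (your steps (i) and (ii) reproduce the identity $(z^{\new}-z)^\top H(z)(z^{\new}-z) = \tfrac{1}{100}\nabla F^\top Q^{-1} H Q^{-1}\nabla F \le \tfrac{1}{100 c_1}\nabla F^\top Q^{-1}\nabla F$, which is exactly the quantity the paper bounds). The gap is in step (iii) and how you close the argument. You cap $\|\nabla F(z)\|^2_{Q^{-1}}$ at a generic $O(1)$ and then try to absorb the resulting $O(1)/(100 c_1)$ by choosing $c_1$ ``suitably large.'' That move is not available: $c_1$ is a threshold on leverage scores (which lie in $[0,1]$), it must be chosen \emph{small} — Lemma~\ref{lem:con1satisfied} explicitly requires $c_1,\delta$ small enough that kept/added constraints have leverage score $\le 0.01$, and if $c_1 \ge 1$ every constraint is immediately eligible for dropping so the polytope collapses. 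So the direction of your tuning contradicts the rest of the analysis, and the bound $O(1)/(100 c_1) \le (0.005)^2$ cannot be achieved by any valid $c_1$ unless the $O(1)$ is already extremely small, which is the thing you have not established.

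The paper resolves this by using Vaidya's quantitative estimate on the same quantity: throughout the iterations $F(z)-\min F$ is not merely $O(1)$ but $O(\delta (\delta c_1)^{1/4})=O(\delta^{5/4}c_1^{1/4})$, and $(z-z^{\new})^\top Q(z)(z-z^{\new}) = O(F(z)-\min F)$, so after dividing by $c_1$ the bound becomes $O(\delta^{5/4}/c_1^{3/4})$. With $\delta \asymp c_1$ this scales as $c_1^{1/2}$ and can be driven below any fixed threshold by taking $c_1$ (and $\delta$ correspondingly) \emph{small} — the correct direction, consistent with Lemma~\ref{lem:con1satisfied}. To fix your proof you should replace the unquantified $O(1)$ in step (iii) with this $\delta^{5/4}c_1^{1/4}$ dependence (which is precisely what ``Vaidya's Theorem 1'' and the add/drop increment analysis deliver when the constants are tracked), after which the conclusion follows by taking $c_1,\delta$ sufficiently small rather than large.
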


\begin{proof}
First, we note that it suffices to show that
\begin{align*}
\left\Vert \frac{s_{z^{\new}}-s_{z}}{s_{z}}\right\Vert _{2}\leq0.00001.
\end{align*}
Indeed, this implies that 
\begin{align*}
(1-0.00001)s_{z^{\new}}\leq s_{z}\leq(1+0.00001)s_{z^{\new}}.
\end{align*}
Thus each coordinate of $\log ( s_{z^{\new}} ) - \log ( s_{z} ) $ is
bounded by $\log(1\pm0.00001)$. 

Now using $\log^{2}(1+t)\leq2t^{2}$ for $|t| \leq 0.01$, we have
\begin{align*}
\left\Vert \log ( s_{z^{\new}} ) - \log ( s_{z} ) \right\Vert _{2}\leq2\left\Vert \frac{s_{z^{\new}}-s_{z}}{s_{z}}\right\Vert _{2}\leq0.01.
\end{align*}
It then remains to prove $\left\Vert \frac{s_{z^{\new}}-s_{z}}{s_{z}}\right\Vert _{2}\leq0.00001$.

In Vaidya's work \cite{v89}, they showed that 
\begin{align*}
(z-z^{\new})^{\top}Q(z)(z-z^{\new})
= & ~ O(F(z)-\min_{z\in \R^n} F(z))\\
= & ~ O(\delta(\delta c_1)^{1/4}) \\
= & ~ O( \delta^{5/4} c_1^{1/4} ).
\end{align*}
Recall that leverage scores are at least $c_{1}$, and $Q(z)=A^{\top}S_{z}^{-1}\tilde{\Sigma}S_{z}^{-1}A$
is PSD. Thus
\begin{align*}
(z^{\new}-z)^{\top}Q(z)(z^{\new}-z)\geq c_{1}(z^{\new}-z)^{\top}(A^{\top}S_{z}^{-2}A)(z^{\new}-z).
\end{align*}
Moreover, note that $A(z^{\new}-z) = s_{z^{\new}}-s_{z}$ so 
\begin{align*}
(z^{\new}-z)^{\top}(A^{\top}S_{z}^{-2}A)(z^{\new}-z)=\left\Vert \frac{s_{z^{\new}}-s_{z}}{s_{z}}\right\Vert _{2}^2.
\end{align*}
Combining all, we obtain
\begin{align*}
\left\Vert \frac{s_{z^{\new}}-s_{z}}{s_{z}}\right\Vert_{2}^2=O(\delta^{5/4}/c_1^{3/4}).
\end{align*}
\end{proof}

Having established the conditions of our data structure which maintains perturbed leverage scores, we argue that Vaidya's method tolerates small additive perturbations
$o(1)$ in the leverage scores\footnote{In fact, Vaidya's method would survive even if the perturbation is
a sufficiently small constant.}. Note that in the presence of such perturbations, both the procedure for updating
$P^{(k)}$ and the Newton step are affected.

\begin{lemma}[Newton step]\label{lem:newtonPreserved}
Assume the leverage scores $\sigma$ are replaced by estimate $\tilde{\sigma}$,
where $\|\tilde{\sigma}-\sigma\|_{2}\leq1/\log^{O(1)} (n)$ in Vaidya's method.
Then Newton step still requires only $O(1)$ many iterations so that $F(z) - \min_{z\in \R^n} F(z)\leq c_{2}$ for some small constant $c_{2}$.
\end{lemma}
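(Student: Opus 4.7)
The plan is to show that replacing the true leverage scores $\sigma$ by the estimates $\tilde\sigma$ only perturbs the Newton direction in a controlled way that preserves Vaidya's multiplicative decrease, up to a negligible additive error. I will proceed in three steps.

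\textbf{Step 1: Gradient perturbation in the $Q(z)^{-1}$-norm.} The true and perturbed gradients are $\nabla F(z) = A^{\top} S_z^{-1}\sigma$ and $\widetilde{\nabla F}(z) := A^{\top} S_z^{-1}\tilde\sigma$, so their difference is $A^{\top} S_z^{-1}(\tilde\sigma-\sigma)$. Setting $U = \Sigma^{1/2}S_z^{-1}A$, one has $Q(z) = U^{\top}U$ and $S_z^{-1}A\,Q(z)^{-1}A^{\top}S_z^{-1} = \Sigma^{-1/2}P\Sigma^{-1/2}$, where $P = U(U^{\top}U)^{-1}U^{\top}$ is an orthogonal projection. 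Hence
\[
\|\widetilde{\nabla F}(z) - \nabla F(z)\|_{Q(z)^{-1}}^{2} = (\tilde\sigma-\sigma)^{\top}\Sigma^{-1/2}P\Sigma^{-1/2}(\tilde\sigma-\sigma) \leq \frac{\|\tilde\sigma-\sigma\|_2^2}{c_1} = O(1/\log^{O(1)}(n)),
\]
using $\sigma_i \geq c_1$ throughout Vaidya's method and $\|P\|\leq 1$.

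\textbf{Step 2: Spectral closeness of $\widetilde{Q}(z)$ and $Q(z)$.} With $\widetilde{Q}(z) := A^{\top}S_z^{-1}\widetilde{\Sigma}S_z^{-1}A$, I write
\[
Q(z)^{-1/2}\bigl(\widetilde{Q}(z)-Q(z)\bigr)Q(z)^{-1/2} = (UQ(z)^{-1/2})^{\top}\Sigma^{-1/2}(\widetilde{\Sigma}-\Sigma)\Sigma^{-1/2}(UQ(z)^{-1/2}).
\]
Since $UQ(z)^{-1/2}$ has spectral norm at most $1$ and $|\tilde\sigma_i-\sigma_i|/\sigma_i \leq \|\tilde\sigma-\sigma\|_2/c_1 = O(1/\log^{O(1)}(n))$, this shows $(1-\epsilon_n)Q(z)\preceq \widetilde{Q}(z)\preceq (1+\epsilon_n)Q(z)$ for some $\epsilon_n = O(1/\log^{O(1)}(n))$.

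\textbf{Step 3: Perturbed Newton decrease.} Writing the true Newton step $h := -\tfrac{1}{10}Q(z)^{-1}\nabla F(z)$ and the perturbed one $\tilde h := -\tfrac{1}{10}\widetilde{Q}(z)^{-1}\widetilde{\nabla F}(z)$, Steps 1 and 2 together give $\tilde h = h + \delta$ with $\|\delta\|_{Q(z)}^{2} = O(\epsilon_n^{2})\cdot(\|\nabla F(z)\|_{Q(z)^{-1}}^{2}+1)$. Plugging $\tilde h$ into Vaidya's second-order expansion of $F$ (which exploits $\nabla^{2}F(z)\asymp Q(z)$ and whose key identity is already invoked in the statement of Lemma~\ref{lem:con2satisfied}), the cross terms $\langle\nabla F(z),\delta\rangle$ and $\delta^{\top}\nabla^{2}F(\xi)\delta$ along the step direction are bounded by $O(\epsilon_n)\cdot(F(z)-\min F)+O(\epsilon_n)$. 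Consequently,
\[
F(z^{\new}) - \min_{x}F(x) \leq \Bigl(1-\tfrac{1}{100}+O(\epsilon_n)\Bigr)\bigl(F(z)-\min_x F(x)\bigr) + O(\epsilon_n) \leq \Bigl(1-\tfrac{1}{200}\Bigr)\bigl(F(z)-\min_x F(x)\bigr)
\]
whenever $F(z)-\min F > c_2$, because $\epsilon_n \ll c_2$ for $n$ large enough. Therefore $O(1)$ perturbed Newton iterations suffice to reach $F(z)-\min F\leq c_2$.

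\textbf{Main obstacle.} The delicate point is Step~3: reproducing Vaidya's one-step decrease estimate with the perturbed step $\tilde h$ in place of $h$. Vaidya's original proof handles cancellations between $\nabla F$ and the Newton direction; when we add the perturbation $\delta$, we must verify that the new cross terms $\langle\nabla F(z),\delta\rangle_{\cdot}$ and the deviation between $\widetilde{Q}(z)^{-1}$ and $Q(z)^{-1}$ acting on $\nabla F(z)$ only contribute $O(\epsilon_n)(F(z)-\min F) + O(\epsilon_n)$, rather than an unbounded multiple of $F(z)-\min F$. This requires carefully expanding $\widetilde{Q}(z)^{-1}\widetilde{\nabla F}(z) - Q(z)^{-1}\nabla F(z)$ via the resolvent identity $\widetilde{Q}^{-1}-Q^{-1} = Q^{-1}(Q-\widetilde{Q})\widetilde{Q}^{-1}$ and controlling each piece in the $Q(z)$-norm using Steps 1 and 2.
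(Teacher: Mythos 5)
Your proposal takes essentially the same route as the paper's proof, which is itself very terse: the paper simply asserts the perturbed decrease estimate $F(z^{\new})-\min F\leq(1-\tfrac{1}{101})(F(z)-\min F)+O(\|\tilde\sigma-\sigma\|_2^2)$, citing Vaidya's Theorem~1 and the observation that only $\sigma$ and $\Sigma$ are perturbed. Your Steps~1 and~2 correctly work out the two perturbation bounds — gradient error in the $Q(z)^{-1}$-norm and spectral closeness of $\widetilde Q(z)$ to $Q(z)$ — that the paper leaves implicit, and your Step~3 outlines the same modification of Vaidya's Newton decrease, at essentially the same level of rigor as the paper. One small refinement: your additive error $O(\epsilon_n)$ is linear in the perturbation, whereas the paper claims the quadratic $O(\|\tilde\sigma-\sigma\|_2^2)$. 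With your bounds $\|\delta\|_{Q(z)}=O(\epsilon_n)(1+\|\nabla F\|_{Q(z)^{-1}})$ and the fact that the Newton decrement squared is comparable to $F(z)-\min F$, one can absorb the cross term $O(\epsilon_n)\sqrt{F-\min F}$ into a slightly degraded contraction factor via weighted AM--GM, recovering an $O(\epsilon_n^2)$ additive term; but since either bound is $o(1)$, both suffice for the $O(1)$-iteration conclusion.
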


\begin{proof}
For the Newton step, let $\widetilde{\Sigma}$ be the diagonal matrix
of $\tilde{\sigma}$. Vaidya's Newton step is modified as
\begin{align*}
z^{\new}\leftarrow z-\frac{1}{10}(A^{\top}S_{z}^{-1}\widetilde{\Sigma}S_{z}^{-1}A)^{-1}A^{\top}S_{z}^{-1}\tilde{\sigma}
\end{align*}
As only $\Sigma$ and $\sigma$ are changed, this amounts to a small
difference in the convergence rate, i.e.
\begin{align*}
F(z^{\new})- \min_{z\in \R^n} F(z)\leq(1-\frac{1}{101})(F(z)-\min_{z \in \R^n}F(z))+O(\|\tilde{\sigma}-\sigma\|_{2}^{2}).
\end{align*}
Recall that our goal is $F(z)-\min_{z \in \R^n}F(z)\leq c_{2}$ for a small universal constant $c_{2}$.
Therefore our modified Newton step still requires only $O(1)$ many iterations
as long as $\|\tilde{\sigma}-\sigma\|_{2}^{2}=o(1)$.
\end{proof}

\begin{lemma}[Convergence rate]\label{lem:ratePreserved}
Assume the leverage scores $\sigma$ are replaced by estimate $\tilde{\sigma}$,
where $\|\tilde{\sigma}-\sigma\|_{2}\leq1/\log^{O(1)} (n)$ in Vaidya's method.

Then Vaidya's convergence guarantee still holds: after $T$ iterations, the volume of $P^{(k)}$
decreases by a factor of $c^{T-O(n\log (n))}$ for some constant $c$,
i.e.
\begin{align*}
\vol(P^{(k)})\leq c^{T-O(n\log (n))}\vol(P^{(0)})=c^{T-O(n\log (n))}R^{n}.
\end{align*}
\end{lemma}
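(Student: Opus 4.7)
The plan is to retrace Vaidya's original potential-function analysis of the volumetric barrier $F(x) = \tfrac{1}{2}\log\det(A^\top S_x^{-2}A)$ while carefully tracking how an $\ell_2$-perturbation of size $\varepsilon := 1/\log^{O(1)}(n)$ in the leverage scores propagates through each of the three basic operations (add a constraint, drop a constraint, Newton step). The key observation is that all relevant decisions are governed by universal constants $c_1, \delta, c_2$ that can be chosen so that $\varepsilon$ is asymptotically negligible compared to any threshold that appears.

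First I would verify stability of the add/drop rule. Vaidya drops a constraint when its leverage score falls below $c_1$ and adds a new constraint whose leverage score equals $0.5(\delta c_1)^{1/2}$, with $\delta \geq 10^3 c_1$. Replacing $\sigma$ by $\tilde\sigma$ with $\|\tilde\sigma - \sigma\|_\infty \leq \|\tilde\sigma - \sigma\|_2 \leq \varepsilon \ll c_1$, the decision rule based on $\tilde\sigma$ (using thresholds $c_1 \pm \varepsilon$) coincides with Vaidya's original rule up to an absorbed constant; in particular the bound $m^{(k)} = O(n)$ on the number of constraints, which follows from $\sum_i \sigma_i = n$ and $\sigma_i \geq c_1 - \varepsilon = \Omega(1)$, is preserved.

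Next I would redo Vaidya's bounds on the change in $F$ per operation. Vaidya shows (i) each add increases $F$ by at least $\Omega((\delta c_1)^{1/2})$, (ii) each drop decreases $F$ by at most $O(c_1)$, and (iii) after the $O(1)$ Newton iterations of Lemma~\ref{lem:newtonPreserved}, $F(z)-\min F \leq c_2$. For (i)–(ii) the leverage scores enter only through $\tilde\sigma_i$ for the added/dropped row $i$, which differs from $\sigma_i$ by at most $\varepsilon = o(1)$, so the increases/decreases in $F$ change by $o(1)$ and remain qualitatively identical (with slightly adjusted constants). For (iii), Lemma~\ref{lem:newtonPreserved} has already supplied the convergence of the modified Newton step with the additive $O(\|\tilde\sigma-\sigma\|_2^2)$ error, which is negligible.

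Combining these three pieces exactly as in Vaidya, let $A$ and $D$ be the numbers of adds and drops after $T$ iterations. Then
\begin{align*}
F(z^{(T)}) - F(z^{(0)}) \;\geq\; \Omega((\delta c_1)^{1/2})\cdot A \;-\; O(c_1)\cdot D \;-\; O(c_2),
\end{align*}
while $A - D = m^{(T)} - m^{(0)} = O(n)$. Using the standard upper bound $F(z^{(T)}) \leq O(n\log(R/\epsilon))$ coming from the fact that the feasible region is contained in a box of radius $R$, one obtains $A, D = O(n\log(nR/\epsilon))$ and hence $T = O(n\log(nR/\epsilon))$. Since $\vol(P^{(k)})$ is controlled by $e^{-F(z)}$ up to $e^{O(n\log n)}$ factors from the Newton slack and the relation between the volumetric and true centers, this yields $\vol(P^{(k)}) \leq c^{T - O(n\log n)} R^n$ for a universal $c<1$, as required.

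The main obstacle will be Step (i): verifying that the carefully engineered jump in $F$ upon adding a constraint with leverage score $0.5(\delta c_1)^{1/2}$ survives $\varepsilon$-perturbations. This reduces to checking that Vaidya's closed-form expression for the change in $F$, which is a smooth function of the added row's leverage score near $0.5(\delta c_1)^{1/2}$, has derivative bounded by a constant at that point, so that an $\varepsilon$-shift produces only an $O(\varepsilon)$ change — dwarfed by the $\Omega((\delta c_1)^{1/2})$ main term once we choose $\varepsilon$ small enough relative to the fixed constant $(\delta c_1)^{1/2}$.
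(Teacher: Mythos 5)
Your proposal is correct and follows essentially the same approach as the paper: both argue that the $\ell_2$-perturbation $\varepsilon = 1/\log^{O(1)}(n)$ is negligible against the universal constants $c_1, \delta, c_2$ governing Vaidya's add/drop/Newton decisions, so the analysis survives after absorbing $\varepsilon$ into a slight rescaling of those constants. The paper's own proof is a terse one-paragraph sketch of exactly this observation; your write-up fleshes out the underlying potential-function bookkeeping over $F$ that the paper only alludes to, but the key idea is identical.
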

\begin{proof}
The proof of Vaidya's convergence lemma essentially depends on the fact that leverage scores are at least $c_1$ and at most $0.5(\delta c_1)^{1/2}$. 

Note that $P^{(k)}$ is updated by dropping constraint $i$ if $\sigma_{i}(z)\geq c_{1}$, or adding constraint $i$ s.t. $\sigma_{i}(z) = 0.5(\delta c_1)^{1/2}$. 
The purpose of the constraint adding and dropping is to make sure the leverage
score of all constraints are $\Theta(1)$. Hence, we can use any constant
approximation to leverage score. In particular, an additive $o(1)$ pertubation in the leverage score can be absorbed by scaling $c_1,\delta$ slightly.
\end{proof}

Now we are ready to bound the running time of our modification of Vaidya's cutting plane method and complete the proof of Theorem~\ref{thm:mainvaidya}. 


\begin{proof}[Proof of Theorem~\ref{thm:mainvaidya}]
By Lemma~\ref{lem:ratePreserved}, in $T=O(n\log(nR/\epsilon))=O(n\log (\kappa))$ iterations, we have $\vol(P^{(k)})\leq\epsilon^{O(n)}$
showing that $P^{(k)}$ does not contain a ball of radius $\epsilon$.
Thus the number of calls to the separation oracle is $O(n \log (\kappa))$.
We next analyze the runtime per iteration.

By Lemma~\ref{lem:newtonPreserved}, we still only need $O(1)$ Newton steps. Thus, as argued at the end of last subsection, the per-iteration running
time is $O(n^{2})$ plus the time to compute $\sigma$ and
to solve a linear system in $Q(z)^{-1}$. We argue that both of these
two tasks can be accomplished in amortized $O(n^{2})$ time.

For $\sigma$, we instead use its estimate $\tilde{\sigma}$ output
by our leverage score maintenance data structure (Theorem~\ref{thm:leverage_score_maintain_main}). By Lemmas~\ref{lem:con1satisfied} and~\ref{lem:con2satisfied},
the conditions of the data structure are satisfied. Hence we can update
$\tilde{\sigma}$ in amortized $O(n^{2})$ time.

Solving a linear system in $Q(z)^{-1}$, as pointed out in Theorem
31 of LSW~\cite{lsw15}, can be done by inverse maintenance. Using the inverse maintenance
procedure in~\cite{cls19}, this can also be done in amortized $O(n^{2})$
time (see Theorem~\ref{thm:maintain_projection_inn}).
\end{proof}
\newpage
\section{Modified Projection Maintenance}\label{sec:projection_maintenance}

\begin{theorem}[{\cite[Appendix]{cls19}}]
\label{thm:maintain_projection}
Let $t_k = \T_{\mat}(m,m,k)$ denote the time to multiply a $m\times m$ and a $m\times k$ matrix. 
Given a matrix $A\in\R^{m\times n}$ with $m\geq n$, and $k^{*}\in[m]$,
there is a deterministic data structure that approximately maintains
the projection matrices $\sqrt{W}A(A^{\top}WA)^{-1}A^{\top}\sqrt{W}$
and the inverse matrices $(A^{\top}WA)^{-1}$ for positive diagonal
matrices $W$ through the following operations: 
\begin{itemize}
\item $\textsc{Initialize}(A,w,\epsilon)$: Initialize the data structure
of the matrix $A$, the weight $w$ and the target accuracy $\epsilon\in(0,1/4)$
in $m^{\omega+o(1)}$ time.
\item $\textsc{Update}(w)$: Output a vector $v\in\R^{m}$ along with $(A^{\top}VA)^{-1}$
and $\sqrt{V}A(A^{\top}VA)^{-1}A^{\top}\sqrt{V}$ such that for all
$i$,
\[
(1-\epsilon)v_{i}\leq w_{i}\leq(1+\epsilon)v_{i}.
\]
\item $\textsc{Insert}(a,w_{a})$: Insert a column $a$ into $A$, a weight
$w_{a}$ into $w$ in $O(m^{2})$ time.
\item $\textsc{Delete}(a,w_{a})$: Delete a column $a$ from $A$ and its corresponding
weight $w_a$ from $w$ in $O(m^{2})$ time.
\end{itemize}
Suppose that the number of columns is $O(m)$ during the whole algorithm
and that for any call of $\textsc{Update}$, we have 
\[
\sum_{i=1}^{m}\left(\log w_{i}-\log w_{i}^{(\textrm{old})}\right)^{2}\leq C^{2}
\]
where $w$ is the input of call, $w^{(\textrm{old})}$ is the weight
before the call. Then, the amortized expected time per call of $\textsc{Update}(w)$
is
\[
O\left(t_{k^{*}}+(C/\epsilon)\cdot \left( \frac{t_{k^{*}}^2}{k^{*}}+\sum_{i=k^{*}}^{m}\frac{t_{i}^{2}}{i^{2}}\right)^{1/2}\cdot\log m \right)
\]
\end{theorem}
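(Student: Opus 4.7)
The plan is to follow the lazy projection maintenance framework from \cite{cls19}, adapted to also maintain the inverse matrix $M(v)^{-1} = (A^\top V A)^{-1}$ explicitly. The key invariant is that we maintain an approximate weight $v$ with $v \approx_\epsilon w$ together with the matrices $M(v)^{-1}$ and $P(v) = \sqrt{V} A M(v)^{-1} A^\top \sqrt{V}$. On a call $\textsc{Update}(w^{\new})$, we do not refresh $v$ on every coordinate; instead, we choose which coordinates to ``commit'' based on the magnitude of the relative change $|\log(w^{\new}_i/v_i)|$, and only update $v$ on a carefully chosen subset $S$ so that $v^{\new}$ still satisfies $v^{\new} \approx_\epsilon w^{\new}$. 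For insertions and deletions, since each changes $A^\top W A$ by a rank-one matrix, we use Fact \ref{fac:woodbury_matrix_identity} (Woodbury) to update $M(v)^{-1}$ in $O(m^2)$ time and to update the diagonal of $P(v)$ accordingly.

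The nontrivial step is how to batch the updates on $S$. Sort the coordinates of $|\log(w^{\new}/v)|$ in decreasing order and consider, for each dyadic scale $j \in \{\log k^*, \log k^* + 1, \ldots, \log m\}$, the set of coordinates with relative change in the band $[\epsilon/2^j, \epsilon/2^{j-1}]$. Coordinates in smaller bands can tolerate being left untouched longer, while coordinates in larger bands must be committed now. When a committed block has size $k$, we apply Woodbury in a batched form: writing the change as $A^\top (W^{\new} - V) A = U U^\top$ for some $m \times k$ matrix $U$, we get $M(v^{\new})^{-1} = M(v)^{-1} - M(v)^{-1} U (I + U^\top M(v)^{-1} U)^{-1} U^\top M(v)^{-1}$, and updating $P$ on the relevant rows/columns reduces to an $(m \times m) \times (m \times k)$ matrix multiplication costing $t_k = \T_{\mat}(m,m,k)$. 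Whenever $k \leq k^*$, we simply use the threshold size $k^*$ and charge $t_{k^*}$.

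For the amortized analysis, define a potential $\Phi$ counting coordinates of $v$ weighted by dyadic distance to $w$; the $\ell_2$ update bound $\sum_i (\log w_i - \log w_i^{(\mathrm{old})})^2 \leq C^2$ ensures that on average only $O(C^2 / \epsilon^2)$ worth of coordinate-mass moves per call, and the ``commit'' rule guarantees that any coordinate committed at band $j$ had to have drifted by $\Omega(\epsilon/2^j)$ since last commit. Charging a commit of a size-$k$ block in band $j$ to the $\Omega(k \epsilon^2 / 4^j)$ squared drift accumulated since its previous commit, the amortized cost is
\begin{align*}
O\Bigl( t_{k^*} + \sum_{j \geq \log k^*} \frac{C^2}{\epsilon^2} \cdot \frac{4^j}{2^j \cdot \epsilon} \cdot \frac{t_{2^j}}{2^j} \Bigr) \cdot \log m,
\end{align*}
which after the standard Cauchy--Schwarz rearrangement becomes the stated bound $O\bigl(t_{k^*} + (C/\epsilon)(t_{k^*}^2/k^* + \sum_{i \geq k^*} t_i^2/i^2)^{1/2} \log m\bigr)$.

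The main obstacle will be getting the Cauchy--Schwarz step right: one must split the sum of $t_{2^j}/2^j$ weighted by the per-band commit frequencies and use the constraint $\sum_j (\text{drift})^2 \leq C^2$ to balance. This is exactly where the choice of threshold $k^*$ enters, cutting off the sum at the level below which it is cheaper to just pay $t_{k^*}$ outright rather than partition into smaller blocks. Everything else (correctness of the approximation, the Woodbury computation of $M(v)^{-1}$ and the diagonal of $P(v)$, the $O(m^2)$ cost per insert/delete, the $m^{\omega + o(1)}$ initialization via fast matrix multiplication) is essentially mechanical and follows \cite{cls19}.
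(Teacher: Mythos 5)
The paper does not prove this theorem; it is stated as a black-box citation to \cite[Appendix]{cls19}, so there is no in-paper proof to compare against. What you have written is a from-scratch reconstruction of the CLS argument. The skeleton is right: lazy maintenance of an approximate weight $v \approx_\epsilon w$ together with explicit $M(v)^{-1}$ and $P(v)$, batched Woodbury updates costing $\T_{\mat}(m,m,k)$ for a rank-$k$ commit, $O(m^2)$ per single insert/delete via rank-one Woodbury, $m^{\omega+o(1)}$ initialization, and a potential-function amortization tied to the per-call $\ell_2$ drift budget $C^2$.

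However, your dyadic banding is broken and would not give the stated bound. You place the coordinates in bands $[\epsilon/2^j, \epsilon/2^{j-1}]$ for $j$ from $\log k^{*}$ up to $\log m$, so the smallest threshold is $\epsilon/m$. That means you commit coordinates whose log-drift is as small as $\epsilon/m$ --- far inside the $(1\pm\epsilon)$ tolerance --- and your charging argument (a size-$2^j$ block credited $\Omega(2^j\cdot \epsilon^2/4^j)=\Omega(\epsilon^2/2^j)$ of squared drift) runs the wrong way: the larger $j$, the more expensive the commit $t_{2^j}$, yet the smaller the credit. Summed over $j$, this is dominated by $j=\log m$ and blows up to order $C^2 m\, t_m / \epsilon^2 \gtrsim m^{\omega+1}$; no Cauchy--Schwarz rearrangement repairs a charging scheme whose per-commit credit goes to $0$ while the cost goes to $t_m$. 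In CLS the thresholds decay like $\epsilon\,(1 - 1/\Theta(\log m))^{j}$, so every threshold stays in $[\epsilon/2,\epsilon]$: a coordinate is only committed once it has drifted a full $\Theta(\epsilon)$, independent of $j$, and the $\log m$ in the stated bound is precisely the depth of that slowly-decaying chain, not an artifact of Cauchy--Schwarz. With that fix, each size-$k$ commit is charged to $\Omega(k\epsilon^2)$ of squared drift, the budget $C^2$ caps the total commit mass at $O(C^2/\epsilon^2)$, and a Cauchy--Schwarz over the per-band commit frequencies (together with the floor at $k^{*}$, which you describe correctly) yields $t_{k^{*}} + (C/\epsilon)\bigl(t_{k^{*}}^2/k^{*} + \sum_{i\geq k^{*}} t_i^2/i^2\bigr)^{1/2}\log m$.
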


We will use this theorem with difference algorithms for rectangular
matrix multiplication. Recall that it takes $O(m^{2}\log^{2}m)$ time
to multiply an $m\times m$ and an $m\times m^{0.17}$ matrix. By splitting
the matrix into blocks (See e.g. \cite[Lemma A.5]{cls19}), one can check
it takes
\[
t_{k}:= O(m^{2}\log^{2}m+k^{\frac{\omega-2}{1-\alpha}}m^{2-\frac{\alpha(\omega-2)}{1-\alpha}}\log^{2}m)
\]
time to multiply a $m\times m$ and a $m\times k$ matrix with $\alpha=0.17$.
Using this and putting $k^{*}=m^{\alpha}$, we have
\begin{align*}
\left( \frac{t_{k^{*}}^{2}}{k^{*}}+\sum_{i=k^{*}}^{m}\frac{t_{i}^{2}}{i^{2}} \right)^{1/2} & \leq\frac{m^{2}\log^{2}m}{m^{\alpha/2}}+m^{2-\frac{\alpha(\omega-2)}{1-\alpha}}\log^{2}m \left( \sum_{i=k^{*}}^{m}\frac{i^{2\frac{\omega-2}{1-\alpha}}}{i^{2}} \right)^{1/2}\\
 & =m^{2-\alpha/2}\log^{2}m+m^{2-\frac{\alpha(\omega-2)}{1-\alpha}}\log^{2}m\cdot O(m^{-\frac{\alpha}{2}(1-2\frac{\omega-2}{1-\alpha})})\\
 & =O(m^{2-\alpha/2}\log^{2}m)
\end{align*}
Hence, applying Theorem \ref{thm:maintain_projection} with $k^*=m^{0.17}$, we have the following Theorem
\begin{theorem}\label{thm:maintain_projection_mid} 
There is a variant of the data structure
in Theorem \ref{thm:maintain_projection} where the amortized time per call
of $\textsc{Update}(w)$ is 
\begin{align*}
O(m^{2}\log^{2}m+ C m^2 \epsilon^{-1}m^{-0.085}).
\end{align*}
\end{theorem}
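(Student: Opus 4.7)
The plan is to apply Theorem~\ref{thm:maintain_projection} directly with Coppersmith's fast rectangular matrix multiplication~\cite{c82}. The paragraphs immediately preceding the statement have already laid out most of the calculation; I would simply substitute the parameters into the master formula and verify the log factors.

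First, I would extend Coppersmith's bound $\T_{\mat}(m,m,m^{0.17}) = O(m^2 \log^2 m)$ to a general rectangular width $k$ using a block decomposition (as in~\cite[Lemma A.5]{cls19}), obtaining
\begin{align*}
t_k = O\!\left(m^2 \log^2 m + k^{(\omega-2)/(1-\alpha)} \, m^{\,2 - \alpha(\omega-2)/(1-\alpha)} \log^2 m\right), \qquad \alpha = 0.17.
\end{align*}
The choice $k^* = m^\alpha$ balances the two summands so $t_{k^*} = O(m^2 \log^2 m)$.

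Next I would bound the bracket $\bigl(t_{k^*}^2/k^* + \sum_{i=k^*}^m t_i^2/i^2\bigr)^{1/2}$ appearing in Theorem~\ref{thm:maintain_projection}. The first term contributes $O(m^{\,2 - \alpha/2} \log^2 m)$ directly. For the sum, for $i \geq k^*$ the second summand of $t_i$ dominates, so
\begin{align*}
t_i^2/i^2 = O\!\left(i^{\,2(\omega-3+\alpha)/(1-\alpha)} \, m^{\,4 - 2\alpha(\omega-2)/(1-\alpha)} \log^4 m\right).
\end{align*}
Crucially, $\omega < 2.373$ together with $\alpha = 0.17$ gives $\omega - 3 + \alpha < 0$ and in fact makes the exponent on $i$ strictly less than $-1$, so the sum is dominated by its lower endpoint $i = k^*$ and matches the $t_{k^*}^2/k^*$ term up to constants. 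Thus the whole bracket evaluates to $O(m^{\,2 - \alpha/2} \log^2 m) = O(m^{\,1.915} \log^2 m)$.

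Finally, substituting into the per-update cost formula of Theorem~\ref{thm:maintain_projection}, the amortized time is $O\!\bigl(m^2 \log^2 m + (C/\epsilon) \cdot m^{\,1.915} \log^3 m\bigr)$. Slightly shrinking $\alpha$ below $0.17$ (Coppersmith's algorithm still applies for all widths up to $m^{0.17}$) trades the $\log^3 m$ factor in the second term for a polynomial saving $m^{-\alpha/2}$, yielding the claimed $O\!\bigl(m^2 \log^2 m + C\,\epsilon^{-1} m^2 \cdot m^{-0.085}\bigr)$ bound. I expect no real obstacle: the statement is essentially a substitution into the master theorem, and the only technical check is the exponent inequality $\alpha < 3 - \omega$, which ensures the $i$-sum is endpoint-dominated at $k^*$.
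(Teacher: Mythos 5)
Your proof follows the paper's own approach almost line for line: the block-decomposition extension of Coppersmith to
\begin{align*}
t_k = O\!\left(m^2 \log^2 m + k^{(\omega-2)/(1-\alpha)} m^{\,2 - \alpha(\omega-2)/(1-\alpha)} \log^2 m\right),
\end{align*}
the choice $k^* = m^{\alpha}$ with $\alpha = 0.17$, the endpoint-dominated sum $\sum_{i\ge k^*} t_i^2/i^2$, and the final bracket evaluation $O(m^{2-\alpha/2}\log^2 m)$ all match the paper exactly. Your computation that the exponent on $i$ is $2(\omega-3+\alpha)/(1-\alpha) \approx -1.1$ is correct and is precisely what drives the endpoint domination.

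Two small slips in the closing remarks, neither fatal. First, absorbing the residual $\log^3 m$ factor (from the extra $\log m$ in the master formula of Theorem~\ref{thm:maintain_projection}) cannot be done by \emph{shrinking} $\alpha$: a smaller $\alpha$ weakens the polynomial saving $m^{-\alpha/2}$ while leaving the log factors intact, making the bound strictly worse. The paper itself simply drops the $\log^3 m$ from the statement; the honest way to justify this is to note that the exponent $-0.085$ should be read as conservative (e.g.\ $m^{-0.085+o(1)}$), and that in the sole downstream use in Theorem~\ref{thm:leverage_score_maintain_main} the term $C m^2 \epsilon^{-1} m^{-0.085}$ is chosen to be strictly smaller than $m^2$ by a polynomial margin, so any $\log^{O(1)} m$ factor is irrelevant. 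Second, the inequality you cite at the very end, $\alpha < 3 - \omega$, only ensures the exponent on $i$ is negative; endpoint domination of the $i$-sum actually requires the exponent to be below $-1$, which is the stronger condition $\alpha < 5 - 2\omega \approx 0.25$. You had already verified this numerically in the preceding sentence, so this is only a misstatement of which inequality is the operative one.
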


Unfortunately, this version still have extra $\log^{O(1)}m$ terms
in the runtime. To get the $m^{2}$ time, we use the following lemma:
\begin{lemma}
\label{lem:n2mul}For any $c>0$, it takes 
\begin{align*}
O \Big( m^{2} + \frac{m^{2}r^{0.4}}{\log^{c}m} \Big)
\end{align*}
time to multiply $m\times r$ and $r\times m$ matrices.
\end{lemma}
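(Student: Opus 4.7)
The plan is to combine the three rectangular fast matrix multiplication results from Theorem~\ref{thm:fast_matrx_multiplication} via block splitting, covering different regimes of $r$ with different algorithms, and to exploit a small but positive exponent gap to extract the $\log^c m$ saving.

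First, I would handle the small-$r$ regime, $r \leq \log^{K} m$ for a large constant $K = K(c)$: apply Bini--D\"urr (Theorem~\ref{thm:fast_matrx_multiplication}.3) directly to obtain $\T_{\mat}(m,r,m) = O(m^2)$, matching the first term of the bound. For slightly larger $r$, split $r$ into $\lceil r/\log^{K} m\rceil$ blocks of width $\log^{K} m$ and apply Bini--D\"urr to each, yielding total time $O(m^2 r/\log^{K} m) \leq O(m^2 r^{0.4}/\log^c m)$ whenever $r \leq \log^{(K-c)/0.6} m$, so the bound is preserved in a slightly larger polylog range.

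For $r$ in the polynomial regime, write $r = m^\alpha$ with $\alpha \in (0,1]$ and use the rectangular exponent $\omega(1,1,\alpha)$. By Gu (Theorem~\ref{thm:fast_matrx_multiplication}.1), $\omega(1,1,0.31) \leq 2$; by standard square multiplication, $\omega(1,1,1) \leq \omega < 2.373$; by convexity of $\omega(1,1,\cdot)$ (Sch\"onhage's $\tau$-theorem), the linear function $\alpha \mapsto 2 + 0.4\alpha$ strictly dominates $\omega(1,1,\alpha)$ on $(0,1]$. The crucial numerical check is $2 + 0.4 = 2.4 > \omega$ at $\alpha = 1$ and $2 + 0.124 = 2.124 > 2$ at $\alpha = 0.31$. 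The resulting positive exponent gap $\epsilon_0(\alpha) := (2+0.4\alpha) - \omega(1,1,\alpha)$ is bounded below by an absolute constant on any closed subinterval of $(0,1]$. Picking an algorithm achieving exponent $\omega(1,1,\alpha) + o(1)$, its running time is at most $O(m^{2+0.4\alpha - \epsilon_0(\alpha)/2})$ for $m$ large, and the factor $m^{-\epsilon_0(\alpha)/2} = o(1/\log^c m)$ for any constant $c$, giving the desired bound.

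The main obstacle is uniformity of $\epsilon_0(\alpha)$ as $\alpha \to 0$, where $\epsilon_0(\alpha) = 0.4\alpha$ itself tends to zero. In this transition range, I would instead use Coppersmith (Theorem~\ref{thm:fast_matrx_multiplication}.2), which carries only a $\log^2 m$ factor rather than $m^{o(1)}$: for $\alpha \leq 0.17$, Coppersmith gives $O(m^2 \log^2 m) \leq O(m^{2+0.4\alpha}/\log^c m)$ as soon as $m^{0.4\alpha} \geq \log^{c+2} m$, i.e., $r \geq \log^{(c+2)/0.4} m$, which aligns with the Bini--D\"urr splitting threshold once $K$ is chosen large enough relative to $c$. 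For $\alpha \in (0.17, 1]$ where $\epsilon_0(\alpha)$ is bounded below by a positive constant, the convexity argument above applies. Combining these cases, choosing the block widths and thresholds to match at each transition, yields the claimed bound $O(m^2 + m^2 r^{0.4}/\log^c m)$ uniformly in $r$.
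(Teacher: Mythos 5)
Your proof is correct in substance but takes a genuinely different route from the paper. The paper handles the entire intermediate range $\log^{2c}m < r < m^{0.38/0.39}$ with a \emph{single} block decomposition: set $k = r/\log^{c/0.4}m$, tile both factors into $k\times k$ blocks, run the Bini--D\"urr algorithm (Theorem~\ref{thm:fast_matrx_multiplication}(3)) \emph{at the block level} on the $\tfrac{m}{k}\times \tfrac{r}{k}$ by $\tfrac{r}{k}\times \tfrac{m}{k}$ grid (legal because $r/k = \log^{O(1)}m$ and $m/k = m^{\Omega(1)}$), and multiply each $k\times k$ block pair with any exponent-$2.4$ square algorithm; the count $O((m/k)^2)$ of block products times $O(k^{2.4})$ per product gives $O(m^2 k^{0.4}) = O(m^2 r^{0.4}/\log^c m)$ directly, with the large-$r$ and tiny-$r$ endcases dispatched separately. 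You instead partition $r$ into four regimes and invoke a different fact in each: Bini--D\"urr chunking for polylog $r$, Coppersmith for small polynomial $\alpha$, convexity of $\omega(1,1,\cdot)$ together with a uniform exponent gap for $\alpha \in [0.17,1]$, and square multiplication at the top. Your threshold matching (Bini--D\"urr covers $r \le \log^{(K-c)/0.6}m$, Coppersmith kicks in once $r \ge \log^{(c+2)/0.4}m$, so $K \gtrsim 2.5c + 3$ suffices) is correct, and the numerical check that $2+0.4\alpha$ dominates $\omega(1,1,\alpha)$ on $(0,1]$ holds. One point you gloss over in the convexity leg: you invoke ``an algorithm achieving exponent $\omega(1,1,\alpha)+o(1)$'' uniformly for $\alpha$ ranging over a compact interval, but the table gives algorithms only at $\alpha \in \{0.17, 0.31, 1\}$, and the $o(1)$ a priori depends on which algorithm you use. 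This is patchable --- e.g.\ the single block decomposition that writes the product as $m^{2(1-\alpha)}$ square products of size $m^\alpha$ costs $m^{2+\alpha(\omega-2)+o(1)}$ with $\alpha$-independent $o(1)$ and already beats $m^{2+0.4\alpha}$ since $\omega<2.4$ --- but as written the step is informal. The paper's argument sidesteps this entirely: no appeal to $\omega(1,1,\cdot)$ convexity, no case split in the polynomial regime, and the polylog saving falls out of choosing $k$ so that the inner dimension of the block grid is polylogarithmic. Your approach is valid and somewhat more general in flavor (it degrades gracefully if Bini--D\"urr or Coppersmith were unavailable), but the paper's is shorter and self-contained given the stated matrix-multiplication facts.
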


\begin{proof}
If $r>m^{\frac{0.38}{0.39}}$, we simply multiply it using a $m^{2.38}$
time square matrix multiplication algorithm. This is faster than $m^{2}r^{0.39}\leq O(\frac{m^{2}r^{0.4}}{\log^{c}m})$.
If $r<\log^{2c}m$, we simply use (3) in Theorem \ref{thm:fast_matrx_multiplication} which takes
$O(m^{2})$ time. Hence, we can assume $\log^{2c}m<r<m^{\frac{0.38}{0.39}}$.

Let $k=\frac{r}{\log^{c/0.4}m}$. We can view the problem as multiplying
a $\frac{m}{k}\times\frac{r}{k}$ and a $\frac{r}{k}\times\frac{m}{k}$
block matrices and each block has size $k\times k$ size. 

Note that
\begin{align*}
\frac{m}{k}=\frac{m\log^{c/0.4}m}{r}=m^{\Omega(1)} \text{~~~and~~~} \frac{r}{k}=\log^{O(1)}m.
\end{align*}
Hence,  (3) in Theorem \ref{thm:fast_matrx_multiplication} shows that the total cost is $O((\frac{m}{k})^{2})$
many block matrix multiplication and each takes $O(k^{2.4})$ time.
Therefore, the total cost is
\begin{align*}
O \Big( \left(\frac{m}{k}\right)^{2} \times k^{2.4} \Big) = O( m^{2} k^{0.4} ) = O \Big( \frac{m^{2}r^{0.4}}{\log^{c}m} \Big).
\end{align*}
\end{proof}

Now, applying Theorem \ref{thm:maintain_projection} with $k^*=\log^{O(1)}m$, we have
\begin{theorem}\label{thm:maintain_projection_inn} 
For any $c>0$, there is a variant of the
data structure in Theorem \ref{thm:maintain_projection} with the amortized
time per call of $\textsc{Update}(w)$ is 
\begin{align*}
O \Big( m^{2}+\frac{ C m^2 }{ \epsilon \log^{c} m  } \Big).
\end{align*}
\end{theorem}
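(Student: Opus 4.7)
The plan is to invoke Theorem~\ref{thm:maintain_projection} with a polylogarithmic choice of the threshold parameter $k^{*}$, using Lemma~\ref{lem:n2mul} to control the rectangular matrix multiplication time $t_k$ for $k \geq k^*$. Specifically, I will set $k^{*} = \log^{K} m$ for a constant $K = K(c)$ to be chosen at the end, and I will invoke Lemma~\ref{lem:n2mul} with an internal exponent $c' = c'(c)$ that I will also choose at the end. The key numerical task is to show that with these choices the bracketed expression
\[
O\!\left(t_{k^{*}} + (C/\epsilon)\cdot\Big(\tfrac{t_{k^{*}}^{2}}{k^{*}}+\sum_{i=k^{*}}^{m}\tfrac{t_{i}^{2}}{i^{2}}\Big)^{1/2}\cdot\log m\right)
\]
collapses to $O(m^{2} + Cm^{2}/(\epsilon\log^{c} m))$.

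First I would bound $t_{k^{*}}$: since $k^{*} = \log^{O(1)} m$, part (3) of Theorem~\ref{thm:fast_matrx_multiplication} directly yields $t_{k^{*}} = O(m^{2})$, which is responsible for the first summand of the claimed bound. Next I would use Lemma~\ref{lem:n2mul} for the generic term, obtaining $t_{i} = O(m^{2} + m^{2} i^{0.4}/\log^{c'} m)$ for $i \in [k^{*}, m]$. Squaring and dividing by $i^{2}$ gives
\[
\frac{t_{i}^{2}}{i^{2}} = O\!\left(\frac{m^{4}}{i^{2}} + \frac{m^{4}}{i^{1.2}\log^{2c'} m}\right).
\]
Summing over $i \in [k^{*}, m]$, the first term contributes $O(m^{4}/k^{*}) = O(m^{4}/\log^{K} m)$ (telescoping through $\sum 1/i^{2}$), and the second term contributes $O(m^{4}/(k^{*\,0.2}\log^{2c'} m)) = O(m^{4}/\log^{0.2K + 2c'} m)$. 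The $t_{k^{*}}^{2}/k^{*}$ term is $O(m^{4}/\log^{K} m)$, which is absorbed by the first contribution above.

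Taking the square root and multiplying by $(C/\epsilon)\log m$, the second summand of the Theorem~\ref{thm:maintain_projection} bound becomes
\[
O\!\left(\frac{Cm^{2}}{\epsilon\log^{K/2 - 1} m} + \frac{Cm^{2}}{\epsilon\log^{0.1K + c' - 1} m}\right).
\]
To make both exponents at least $c$, I would set $K = 2c + 2$ (so $K/2 - 1 = c$) and $c' = c + 1$ (so $0.1K + c' - 1 = 0.2c + c + 0.2 \geq c$). Both terms are then dominated by $O(Cm^{2}/(\epsilon\log^{c} m))$, completing the claim.

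The main obstacle will be the bookkeeping between the two competing exponents $K/2 - 1$ and $0.1K + c' - 1$, since $K$ and $c'$ must be chosen jointly to eliminate every polylogarithmic factor in the runtime while still allowing $k^{*}$ to sit in the regime where Theorem~\ref{thm:fast_matrx_multiplication}(3) applies. Once the exponent balancing is done, the rest is routine substitution into Theorem~\ref{thm:maintain_projection} and verification that the resulting data structure inherits all other guarantees (initialization, insertions, deletions) unchanged, since we are only specializing the parameter $k^{*}$.
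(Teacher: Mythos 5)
Your proposal is correct and follows exactly the route the paper intends: apply Theorem~\ref{thm:maintain_projection} with $k^{*}=\log^{O(1)}m$, bound $t_{k^*}$ via Theorem~\ref{thm:fast_matrx_multiplication}(3), and bound the generic $t_i$ via Lemma~\ref{lem:n2mul}. The paper states this as a one-line consequence, whereas you have correctly carried out the exponent bookkeeping (splitting $t_i^2/i^2$ into an $O(m^4/i^2)$ part and an $O(m^4/(i^{1.2}\log^{2c'}m))$ part, summing, and choosing $K=2c+2$, $c'=c+1$ so both contributions are $O(Cm^2/(\epsilon\log^c m))$); the arithmetic and the chosen parameters all check out.
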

%
\newpage
\section{Cutting Plane Method for Convex Minimization and Saddle Point Problems}\label{sec:saddle_point}

We show in this section that cutting plane methods can be applied to not only convex minimization, but also the more general problem of computing a saddle point in a convex-concave game with essentially the same guarantee.

\subsection{Notations and definitions}
For a convex set $\mathcal{X} \subseteq \R^n$, the interior of $\mathcal{X}$, denoted as $\interior(\mathcal{X})$, is the subset of points in $\mathcal{X}$ that has a small neighborhood fully contained inside $\mathcal{X}$.
We denote by $\vol(\mathcal{X})$ the volume of $\mathcal{X}$. For any vector $x$ and $r>0$, we use $B(x,r)$ to denote the $\ell_2$ ball of radius $r$ centered at $x$, i.e. $B_\infty(x,r) = \{y : \|x-y\|_2 \leq r\}$. Similarly, $B_\infty(x,r) = \{y : \|x-y\|_\infty \leq r\}$.
For any set $K$, we denote $B(K,-\epsilon)$ to denote the set $\{x : B(x,\epsilon) \subset K \}$.

\subsection{Convex minimization}

Using a standard reduction of convex minimization to the feasiblity problem (\cite{nemi94} and Theorem 42 of~\cite{lsw15}), we can minimize a convex function with ${O}(n\log\kappa)$ subgradient oracle calls and $O(n^3\log\kappa)$ time. This improves over the previous best of $O(n^3\log^{O(1)}\kappa)$~\cite{lsw15}. Since $\kappa$ can be exponential in certain applications, this allows us to obtain significantly faster algorithms (see e.g. subsection~\ref{sec:walras}).

\begin{theorem}\label{thm:convex}
Let $f$ be a convex function on $\R^n$ and $S$ be a convex set that contains a minimizer of $f$. Suppose we have a subgradient oracle for $f$ with cost $\mathcal{T}$ and $S\subset B_\infty(0,R)$. Using $B_\infty(0,R)$ as the initial polytope for our Cutting Plane Method, for any $0<\alpha<1$, we can compute $x\in S$ such that $f(x)-\min_{y\in S}f(y)\leq \alpha\left(\max_{y\in S}f(y)-\min_{y\in S}f(y)\right),$ with high probability in $n$ and with a running time of 
\begin{align*}
O(\T \cdot n \log(\kappa)+n^{3}\log (\kappa)),
\end{align*} 
where $\kappa =  n\gamma / \alpha$ and $\gamma=R /\mathrm{minwidth}(S)$. Here the {\rm minwidth} of $S$ is defined by 
\begin{align*}
\mathrm{minwidth}(S):=\min_{a\in\R^n:\left\Vert a \right\Vert_2 =1}  \left( \max_{y\in S} a^\top y - \min_{y\in S} a^\top y \right).
\end{align*}
\end{theorem}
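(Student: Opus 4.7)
The plan is to invoke Theorem~\ref{thm:mainvaidya} on the initial polytope $B_\infty(0,R)$ with a composite separation oracle and extract the $\alpha$-approximate minimizer from the sequence of queries. First I would design an oracle $\mathcal{O}$ that, given a query point $z_k \in \R^n$, tests membership in $S$: if $z_k \notin S$, it returns an $S$-separating hyperplane; otherwise it calls the subgradient oracle to obtain $g_k \in \partial f(z_k)$ and returns the halfspace $\{y : g_k^\top (y - z_k) \leq 0\}$. By convexity of $f$, this halfspace contains every $y \in S$ with $f(y) \leq f(z_k)$, and in particular every global minimizer $x^{\ast}$ of $f$ over $S$. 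Hence every cut preserves all minimizers, the feasible region $\Omega$ maintained by the cutting plane shrinks monotonically yet always contains $x^{\ast}$, and I would simultaneously maintain a pointer $x_{\mathrm{best}}$ to the in-$S$ query with smallest $f$-value seen so far (set to an arbitrary point of $S$ by default).

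Next I would run the cutting plane with $\epsilon$ chosen so that $\kappa = nR/\epsilon = \Theta(n\gamma/\alpha)$ with a sufficiently large hidden constant, and output $x_{\mathrm{best}}$ the moment the method declares that $\Omega$ contains no ball of radius $\epsilon$. For correctness, suppose for contradiction that $f(x_{\mathrm{best}}) - f(x^{\ast}) > \alpha \cdot (\max_{y\in S} f(y) - \min_{y\in S} f(y))$; in particular every in-$S$ query $z_k$ has $f(z_k) > f(x^{\ast}) + \alpha(\max_S f - \min_S f)$. Consider the shrunken body $S_\alpha := (1-\alpha) x^{\ast} + \alpha S \subseteq S$. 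For any $y = (1-\alpha) x^{\ast} + \alpha s$ with $s \in S$, convexity yields $f(y) \leq (1-\alpha) f(x^{\ast}) + \alpha f(s) \leq f(x^{\ast}) + \alpha(\max_S f - \min_S f) < f(z_k)$, so $y$ lies strictly inside every subgradient halfspace produced by $\mathcal{O}$, and the $S$-separators trivially preserve $S \supseteq S_\alpha$. Therefore $S_\alpha \subseteq \Omega$. Since $\mathrm{minwidth}(S_\alpha) = \alpha \cdot \mathrm{minwidth}(S) \geq \alpha R/\gamma$, a standard width-to-inradius estimate exhibits an inscribed ball in $S_\alpha$ (and hence in $\Omega$) of radius strictly larger than $\epsilon$, contradicting the cutting plane's conclusion.

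The running time follows by straightforward accounting: Theorem~\ref{thm:mainvaidya} makes $O(n\log\kappa)$ oracle calls with $O(n^3\log\kappa)$ additional internal work, and each invocation of $\mathcal{O}$ costs $\mathcal{T}$ (subgradient or $S$-separation) plus $O(n)$ for updating $x_{\mathrm{best}}$, yielding the claimed $O(\mathcal{T}\cdot n\log(\kappa) + n^3 \log(\kappa))$. The ``with high probability in $n$'' qualifier is inherited from the randomized leverage-score maintenance that underlies Theorem~\ref{thm:mainvaidya}.

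The main technical obstacle is the width-to-inradius conversion in the contradiction step: translating $\mathrm{minwidth}(S) \geq R/\gamma$ into a genuine inscribed-ball lower bound of order $\epsilon = \Theta(\alpha R/\gamma)$ requires either a sharp Steinhagen-type inequality, a suitable affine rescaling that makes $S$ approximately isotropic before running the cutting plane, or a careful choice of the hidden constants in $\kappa = n\gamma/\alpha$ so that the slack in the estimate is absorbed. A secondary point is that the statement presumes access to a separation (or membership) oracle for $S$; its cost should be folded into $\mathcal{T}$, or derived via the membership-to-separation reductions cited in Section~\ref{sec:vaidya}.
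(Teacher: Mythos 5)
Your proof is correct and follows exactly the standard reduction that the paper invokes (citing Nemirovski and Theorem~42 of~\cite{lsw15} without supplying details): a composite oracle that returns either an $S$-separator or a subgradient cut, tracking the best in-$S$ query, and arguing by contradiction that the shrunken body $S_\alpha=(1-\alpha)x^{\ast}+\alpha S$ would otherwise survive every cut and hence still sit inside $\Omega$ when the method certifies emptiness.

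One clarification on the part you flag as the ``main technical obstacle.'' You do \emph{not} need an inscribed ball of radius $\Theta(\alpha R/\gamma)$ in $S_\alpha$; Steinhagen already gives $\mathrm{inradius}(S)\gtrsim \mathrm{minwidth}(S)/\sqrt{n}$, so $S_\alpha$ contains a ball of radius $\Omega\big(\alpha\,\mathrm{minwidth}(S)/\sqrt{n}\big)$. Running Theorem~\ref{thm:mainvaidya} with $\epsilon=c\,\alpha\,\mathrm{minwidth}(S)/\sqrt{n}$ gives a feasibility parameter $nR/\epsilon=\Theta\big(n^{3/2}\gamma/\alpha\big)$, and since this only enters the runtime through a logarithm, $\log\!\big(n^{3/2}\gamma/\alpha\big)=O\!\big(\log(n\gamma/\alpha)\big)$ (using $\gamma\geq1/2$ because $S\subseteq B_\infty(0,R)$ forces $\mathrm{minwidth}(S)\leq2R$). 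So no affine rescaling or sharpened Steinhagen bound is required; the $\sqrt{n}$ overhead is absorbed inside the log, not by ``hidden constants'' in the literal sense of a bounded multiplicative correction to $\kappa$ itself. Your secondary remark — that the statement implicitly assumes a separation oracle for $S$ — is a fair reading of how the theorem is used downstream (e.g.\ in the Walrasian application).
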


\subsection{Convex-concave games}

In this subsection we show that a similar guarantee holds for solving convex-concave games. Much of the materials in this section are modified from \cite[Lecture 5]{n95}. For completeness, we will explain both the standard theory and the various changes needed for our promised runtime.

In the convex-concave game, we are asked to solve 
\begin{align*}
\min_{x \in \mathcal{X}} \max_{y \in \mathcal{Y}} f(x,y),
\end{align*}
where $\mathcal{X} \in \R^n,\mathcal{Y} \in \R^m$ are convex sets and $f(x,y): \mathcal{X} \times \mathcal{Y} \rightarrow \R$ is a continuous function that is convex in $x \in \mathcal{X}$ and concave in $y \in \mathcal{Y}$. 
Von Neumann's minimax theorem states that 
\begin{align*}
\min_{x \in \mathcal{X}} \max_{y \in \mathcal{Y}} f(x,y) = \max_{y \in \mathcal{Y}} \min_{x \in \mathcal{X}} f(x,y),
\end{align*}
and that all solutions to the LHS are solutions to the RHS, and vice versa.
Any solution $(x^*,y^*)$ to this problem is called a {\em saddle point}, and satisifes $f(x,y^*) \geq f(x^*,y^*) \geq f(x^*,y)$ for any $(x,y) \in \mathcal{X} \times \mathcal{Y}$. 

We will be interested in computing an $\epsilon$-saddle point which we define in the following. Define $\overline{f}(x):= \max_{y \in \mathcal{Y}} f(x,y)$ and $\underline{f}(y) = \min_{x \in \mathcal{X}} f(x,y)$. The minimax theorem is equivalent to 
\begin{align}
\label{eqn:minimaxEquiv}
\min_{x \in \mathcal{X}} \overline{f}(x) = \min_{x \in \mathcal{X}} \max_{y \in \mathcal{Y}} f(x,y) =\max_{y \in \mathcal{Y}}\min_{x \in \mathcal{X}} f(x,y) = \max_{y \in \mathcal{Y}} \underline{f}(y),
\end{align}
and it is immediate that the set of saddle points $S^*(f)$ is exactly the direct product of the optimal solutions to $\overline{f} : {\cal X} \rightarrow \mathbb{R}$ and $\underline{f} : {\cal Y} \rightarrow \mathbb{R}$:
\begin{align*}
S^*(f) = \arg\max_{x \in \mathcal{X}} \overline{f}(x) \times \arg\min_{y \in \mathcal{Y}} \underline{f} (y).
\end{align*}

Since certifying the values of $f(x,y)$ around the boundary of $\mathcal{X}$ and $\mathcal{Y}$ is quite difficult under the black-box setting, our definition of $\epsilon$-saddle point ignores small portion of the domain around the boundary.

\begin{definition}[$\epsilon$-Saddle Point]
\label{defn:epsSaddlePoint}
Consider a convex-concave game with convex sets $\mathcal{X} \subseteq B(0,R) \subset \R^n, \mathcal{Y} \subseteq B(0,R) \subset \R^m$, and $L$-Lipschitz function $f(x,y): \mathcal{X} \times \mathcal{Y} \rightarrow \R$ that is convex in $x$ and concave in $y$.
Given $\epsilon \in (0,1)$, a pair $(x,y) \in \mathcal{X} \times \mathcal{Y}$ is called an $\epsilon$-saddle point, if
\begin{align}
\label{eqn:dualityGap}
\overline{f}(x) - \underline{f}(y) \leq \epsilon L R.
\end{align}
\end{definition}

It follows from~(\ref{eqn:minimaxEquiv}) that the LHS of~(\ref{eqn:dualityGap}) is
\begin{align*}
\overline{f}(x) - \underline{f}(y) = \left( \overline{f}(x) - \min_{x' \in \mathcal{X}} \overline{f}(x') \right) + \left( \max_{y' \in \mathcal{Y}} \underline{f}(y') - \underline{f}(y) \right).
\end{align*}

In what follows we assume access to a first-order oracle which, 
given $(x,y) \in \interior(\mathcal{X} \times \mathcal{Y})$, returns the subgradient vector
\begin{align}\label{eq:def_g_x_y}
g(x,y) = (\nabla_x f (x,y), - \nabla_y f(x,y)) \in \R^n \times \R^m.
\end{align}

The crucial property of this vector is as follows:
\begin{lemma}[Convex-Concave Property]
\label{lem:convexprop}
Let $f : {\cal X} \times {\cal Y} \rightarrow \mathbb{R}$ be convex in $x$ and concave in $y$. Let $g : {\cal X} \times {\cal Y} \rightarrow \R^{n+m}$ be defined as in Eq.~\eqref{eq:def_g_x_y}.  
For any $z = (x,y) \in \mathcal{X} \times \mathcal{Y}$ and $z'=(x',y') \in \interior(\mathcal{X} \times \mathcal{Y})$, we have
\begin{align*}
(z'-z)^\top g(z') \geq f(x',y) - f(x,y').
\end{align*}
In particular, if $z = (x,y)$ is a saddle point of $f$, then we have
\begin{align*}
(z' - z)^\top g(z') \geq 0.
\end{align*}
\end{lemma}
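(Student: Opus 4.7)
The plan is to prove the inequality by a direct two–line application of the defining first–order inequalities for convex and concave functions. First I would unpack the inner product, splitting it along the product structure $z=(x,y)$, $z'=(x',y')$:
\begin{align*}
(z'-z)^\top g(z') = (x'-x)^\top \nabla_x f(x',y') + (y-y')^\top \nabla_y f(x',y'),
\end{align*}
where the sign on the $y$–block comes from the minus sign in the definition $g(z')=(\nabla_x f(x',y'),-\nabla_y f(x',y'))$.

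Next I would invoke the two slice-wise first-order inequalities. Since $f(\cdot,y')$ is convex, the subgradient inequality at $x'$ tested against $x$ gives
\begin{align*}
f(x,y') \geq f(x',y') + (x-x')^\top \nabla_x f(x',y'),
\end{align*}
equivalently $(x'-x)^\top \nabla_x f(x',y') \geq f(x',y') - f(x,y')$. Since $f(x',\cdot)$ is concave, the analogous inequality at $y'$ tested against $y$ gives
\begin{align*}
f(x',y) \leq f(x',y') + (y-y')^\top \nabla_y f(x',y'),
\end{align*}
equivalently $(y-y')^\top \nabla_y f(x',y') \geq f(x',y) - f(x',y')$. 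Adding the two inequalities, the terms $\pm f(x',y')$ cancel and yield exactly $(z'-z)^\top g(z') \geq f(x',y) - f(x,y')$, which is the first claim.

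For the second claim, I would simply combine the first part with the definition of a saddle point. If $z=(x,y)$ is a saddle point of $f$, then $f(x',y) \geq f(x,y) \geq f(x,y')$ for all $x' \in \mathcal{X}$ and $y' \in \mathcal{Y}$, hence $f(x',y) - f(x,y') \geq 0$, and the bound from the first part gives $(z'-z)^\top g(z') \geq 0$.

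There is no real obstacle here: the argument is the textbook derivation of the variational inequality characterization of saddle points. The only minor care point is book-keeping the sign convention of $g$ (so that the concavity inequality ends up oriented correctly after negation), and ensuring $z' \in \interior(\mathcal{X}\times\mathcal{Y})$ so that the (sub)gradient $g(z')$ is well-defined and the first-order inequalities are valid; this is exactly the hypothesis stated in the lemma.
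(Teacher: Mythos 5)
Your proof is correct and follows essentially the same approach as the paper: apply the first-order inequalities for the convex slice in $x$ and the concave slice in $y$ at $(x',y')$, add them so the $f(x',y')$ terms cancel, and then use the saddle-point inequalities $f(x',y)\geq f(x,y)\geq f(x,y')$ for the second claim.
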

\begin{proof}
Since $f$ is convex in $x$ and concave in $y$, we have
\begin{align*}
f(x,y') - f(x',y') \geq (x-x')^\top \nabla_x f(x',y') \qquad \text{and} \qquad f(x',y) - f(x',y') \leq (y - y')^\top \nabla_y f(x',y').
\end{align*}
It follows that
\begin{align*}
f(x',y) - f(x,y') \leq (z'-z)^\top g(z'),
\end{align*}
which proves the first part of the lemma. For the second part, simply notice that if $z = (x,y)$ is a saddle point, we have
\begin{align*}
f(x',y) \geq f(x,y) \geq f(x,y').
\end{align*}
\end{proof}

\subsection{Applying cutting plane method to convex-concave games}
In the cutting plane framework, we maintain a polytope $P^{(k)} \subseteq \R^{n + m}$ that contains a saddle point $(x^*,y^*)$ of $f$. The algorithm initially starts with the polytope
$P^{(0)} := B_\infty(0,R) \supset \mathcal{X} \times \mathcal{Y}$.

In each iteration, we call the first-order oracle at a certain point $z^{(k)} = (x^{(k)},y^{(k)}) \in P^{(k)}$ depending on $P^{(k)}$. 
If $z^{(k)} \notin \mathcal{X} \times \mathcal{Y}$ is not feasible, then we obtain a supporting hyperplane to separate $z^{(k)}$ from the feasible region $\mathcal{X} \times \mathcal{Y}$; and if $z^{(k)} \in \mathcal{X} \times \mathcal{Y}$ is feasible, Lemma~\ref{lem:convexprop} shows that any saddle point $(x^*,y^*)$ lies in the half-space $H^{(k)}:=\{z \in \R^{n+m}: (z-z^{(k)})^\top g(z^{(k)}) \leq 0 \}$ and hence $(x^*,y^*) \in  P^{(k)} \cap H^{(k)}$. 
The algorithm continues by choosing $P^{(k+1)}$ which contains $P^{(k)} \cap H^{(k)}$.

For simplicity, we extend the gradient function $g$ as follows: 
\begin{align}\label{eq:def_ghat_x_y}
\hat{g}_{\beta}(x,y)=\begin{cases}
(\nabla_{x}f(x,y),-\nabla_{y}f(x,y)) & \text{if }(x,y)\in\interior(\mathcal{X}\times\mathcal{Y})\\
n(x,y) & \text{otherwise}
\end{cases}
\end{align}
where $n(x,y)$ is some vector of length $\beta > L$ such that $\mathcal{X}\times\mathcal{Y} \subset \{z \in \R^{n+m} : (z-(x,y))^\top n(x,y) \leq 0\}$.
The cutting plane framework is given in Algorithm~\ref{alg:cutting_plane_method}.

\begin{algorithm}\caption{Cutting Plane Method}\label{alg:cutting_plane_method}
\begin{algorithmic}[1]
\Procedure{CuttingPlaneMethod}{$\hat{g}, T,n,m$} \Comment{Theorem~\ref{thm:vaidya}}
\State $P^{(0)} = B_\infty(0,R)$ \Comment{$P^{(0)} \subseteq \R^{n + m}$}
\For{$k=0,1,2, \cdots, T$}

	\State Find a point $z^{(k)} \in P^{(k)}$ \Comment{$z^{(k)} \in \R^{n+m}$}
	\If{$z^{(k)} \notin B_\infty(0,R)$}
		\State Project $z^{(k)}$ onto the face of the boundary $\partial B_\infty(0,R)$
		\State Let $\hat{g}_{\beta}(z^{(k)})$ be the normal of that face (scaled by $\beta$). \label{line:move_z}
	\EndIf
		\State Define $H^{(k)} := \{z \in \R^{n+m} : (z-z^{(k)})^\top \hat{g}_{\beta}(z^{(k)}) \leq 0\}$ \Comment{$H^{(k)} \subseteq \R^{n+m}$} (as the separating hyperplane)
	\State Construct the next $z^{(k+1)},P^{(k+1)}$ according to cutting plane method
\EndFor
\State \Return a point according to Lemma \ref{lem:convcombSolution}
\EndProcedure
\end{algorithmic}
\end{algorithm}




\begin{theorem}\label{thm:vaidya}
Let $c<1$ be a universal constant. For $T\geq n+m$, we can find $z^{(k)} \in \R^{n + m}$ and $P^{(k)} \subseteq \R^{n + m}$ in total time
$O((n+m)^{2}T)$ with high probability in $n+m$ such that 
\begin{align*}
\vol ( P^{(T)} ) \leq c^{T-O(n+m) \log(n+m) } \vol ( P^{(0)} ).
\end{align*}
Furthermore, $P^{(T)}\subset\bigcap_{k\in I}H^{(k)}\cap B_{\infty}(0,R)$
with $|I|=O(n)$.
\end{theorem}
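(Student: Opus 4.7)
The plan is to obtain Theorem~\ref{thm:vaidya} by instantiating the perturbed volumetric center method of Theorem~\ref{thm:mainvaidya} (whose correctness and amortized runtime were established in Section~\ref{sec:perturb}) at ambient dimension $d := n+m$, using the half-spaces $H^{(k)}$ produced by $\hat{g}_\beta$ as the separating cuts. The key observation that legitimizes this reduction is Lemma~\ref{lem:convexprop}: whenever $z^{(k)} \in \interior(\mathcal{X}\times\mathcal{Y})$, every saddle point $z^\star$ lies in the half-space $\{z : (z-z^{(k)})^\top g(z^{(k)}) \le 0\} = H^{(k)}$; and when $z^{(k)} \notin B_\infty(0,R)$, the projection step in Line~\ref{line:move_z} produces a valid hyperplane separating $z^{(k)}$ from $B_\infty(0,R)\supset \mathcal{X}\times\mathcal{Y}$, so $H^{(k)}$ again contains every saddle point. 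Thus at every iteration the cut is a bona fide supporting hyperplane in the sense required by Vaidya, and the algorithm from Section~\ref{sec:perturb} can be applied verbatim.

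First I would fix the Vaidya iteration as described in Section~\ref{sec:perturb}, with the initial polytope $P^{(0)} = B_\infty(0,R) \subset \R^{d}$ and the first-order oracle $\hat{g}_\beta$ playing the role of the separation oracle $\SO$. Each iteration consists of (i) one call to $\hat{g}_\beta$, (ii) $O(1)$ Newton-type steps on the volumetric barrier, and (iii) a call to the leverage-score-maintenance data structure of Theorem~\ref{thm:leverage_score_maintain_main} instantiated on the current constraint matrix $A^{(k)} \in \R^{m^{(k)} \times d}$. Lemmas~\ref{lem:con1satisfied} and~\ref{lem:con2satisfied} in Section~\ref{sec:perturb} verify that the two hypotheses of Theorem~\ref{thm:leverage_score_maintain_main} (low-rank insertions/deletions and bounded $\ell_2$ change of $\log w$ between Newton steps) are preserved, so the data structure delivers leverage-score estimates $\tilde\sigma$ with $\|\tilde\sigma-\sigma\|_2 \le 1/\log^{O(1)}(d)$ in amortized $O(d^{2})$ time per iteration. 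Summing over $T$ iterations yields the claimed $O(d^{2}T) = O((n+m)^2 T)$ total runtime; the high-probability qualifier comes directly from the randomized outer layer of Theorem~\ref{thm:leverage_score_maintain_main}.

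Second, I would inherit the volume-reduction guarantee from Lemma~\ref{lem:ratePreserved}, which states that for some universal constant $c<1$,
\begin{align*}
\vol(P^{(T)}) \le c^{T - O(d\log d)}\vol(P^{(0)}).
\end{align*}
That lemma absorbs exactly the additive $o(1)$ perturbation in $\sigma$ introduced by the data structure by slightly adjusting the universal constants $c_1,\delta,c_2$ of Vaidya. Finally, the bound $|I|=O(n+m)$ on the number of active constraints in $P^{(T)}$ is the standard leverage-score-based pruning bound: Vaidya drops any constraint whose (perturbed) leverage score falls below the threshold $c_1$, and since the total leverage score is $\sum_i \sigma_i = d$, at most $d/c_1 = O(n+m)$ constraints survive at any time.

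The main obstacle, and the only point requiring genuine care beyond reduction, is the boundary treatment encoded in $\hat{g}_\beta$. Because $f$ is only defined on $\mathcal{X}\times\mathcal{Y}$, iterates generated by Vaidya can drift outside the feasible region, and the choice $\beta > L$ for the fictitious normal $n(x,y)$ is what keeps the magnitudes of the Lipschitz-bounded gradient cuts and the boundary-projection cuts on the same scale; this is precisely what Lemma~\ref{lem:con1satisfied} needs in order to guarantee $\tfrac{1}{s^2}aa^\top \preceq 0.01\, A^\top S_z^{-2} A$ uniformly over both types of cut. Once this uniform scaling is checked, every other estimate from Section~\ref{sec:perturb} passes through unchanged and the proof of Theorem~\ref{thm:vaidya} is complete.
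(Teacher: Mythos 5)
Your proof is correct and follows essentially the same route as the paper: invoke Lemma~\ref{lem:convexprop} to certify that $H^{(k)}$ is a valid separating hyperplane, feed $\hat{g}_\beta$ to the perturbed Vaidya machinery of Section~\ref{sec:perturb}, and inherit the volume-contraction rate from Lemma~\ref{lem:ratePreserved} and the $O(n+m)$ bound on active constraints from the leverage-score pruning rule. Your write-up is simply more explicit than the paper's two-line proof about verifying the data-structure hypotheses (Lemmas~\ref{lem:con1satisfied}, \ref{lem:con2satisfied}) and about the role of $\beta > L$ in the boundary cuts.
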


\begin{proof}
By Lemma~\ref{lem:convexprop}, any saddle point lies in $H^{(k)}$ showing that it is indeed a separating hyperplane. With this separation oracle, we apply our faster implementation of Vaidya's cutting plane method from section~\ref{sec:vaidya}.

The second part of the lemma follows from the fact that this method always maintains a polytope with $O(n)$ constraints. For the first part, note that this just paraphases Lemma~\ref{lem:ratePreserved}.
\end{proof}

\subsection{Cutting plane method for convex-concave games: generating solutions}

Recall that for minimizing a convex function $f(z)$, one can simply output the best $z^{(k)}$ found in all iterations. 
The argument here is that as long as the volume $\vol(P^{(T)})$ is small enough, then in some iteration $k$, a point close to the optimal solution $z^*$ gets removed from $P^{(k)}$ which indicates that $f(z^{(k)})$ is also close to optimal. 
For the convex-concave game, however, such a naive approach of outputting the ``best'' $z^{(k)}$ would fail as illustrated by \cite[Section 5.3]{n95}. The crucial difference here in the convex-concave game is that although we have an objective function $f$, but we are not interested in optimizing it. Instead, we are interested in finding an $\epsilon$-saddle point $(x,y)$ that satisfies~\eqref{eqn:dualityGap}.

The correct idea is to output some convex combination of all $z^{(k)}$'s that does converge to the saddle point $(x^*,y^*) \in \mathcal{X} \times \mathcal{Y}$ which we describe in the following.

Assume that we have performed $T$ steps of the cutting plane method.

\begin{definition}[Gap function]
Let $I$ be the set of all constraints in $P^{(k)}\cap B_\infty(0,R)$. If the $k^{\text{th}}$ constraint comes from $B_\infty(0,R)$, denote by  $z^{(k)}$ the center of the face and $g^{(k)}$ the vector normal to the face scaled by $\beta$.  Otherwise, $z^{(k)}$ and $g^{(k)}$ denote the query point $z^{(k)}$ of the oracle and its output  $\hat{g}_{\beta}(z^{(k)})$. The gap function $\gamma$ is defined as 
\begin{align*}
\gamma(z) = \min_{k \in I} (z^{(k)} - z)^{\top} g^{(k)}:= \min_{k \in I} \gamma^{(k)}(z).
\end{align*}
\end{definition}
Notice that $\gamma(z)$ is concave as it is the minimum of affine functions. 

Now we show that if $\gamma(z)$ is small for all $z$, then we can form a good approximation to the saddle point set from $z^{(k)}$.
We first prove the following lemma which states that the maximum of $\gamma(z)$ is given by some certain convex combination of $\gamma^{(k)}(z)$. 

\begin{lemma}[{Optimal Lagrange Multipliers}]
\label{lem:ConvCombSameMax}
Define $\Delta_{I}$ to be the simplex of all convex combinations of $I$, i.e., 
$ \Delta_{I} := \left\{\lambda \in \R^{I} ~\Big|~\lambda_k \geq 0, \sum_{k \in I} \lambda_k = 1 \right\}.$
Then there exist {\em optimal Lagrange multipliers} $\lambda^* = \{ \lambda_k^* \}_{k \in I} \in \Delta$ such that
\begin{align*}
\sum_{k \in I} \lambda_k^* \cdot \gamma^{(k)}(z') = \max_{z \in \R^{n+m}} \gamma(z)\qquad \forall z' \in \R^{n+m}.
\end{align*}
Furthermore, for any $0 < \eta < 1/2$, we can find $\lambda \in \Delta$ such that
\begin{align}\label{eq:apx_lag_mul}
\sum_{k \in I} \lambda_k \cdot \gamma^{(k)}(z') \leq \max_{z \in \R^{n+m}} \gamma(z) + \eta \beta R \qquad \forall z' \in B_\infty(0, R).
\end{align}
in time $O((n+m)^{\omega+o(1)} \log ( (n+m) / \eta )$.
\end{lemma}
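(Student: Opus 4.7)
The existence statement in the first part is a direct application of strong LP duality. Observe that $\max_{z \in \R^{n+m}} \gamma(z)$ is precisely the value of the linear program
\begin{align*}
\max \; t \quad \text{s.t.} \quad g^{(k)\top} z \,+\, t \;\leq\; g^{(k)\top} z^{(k)} \quad \text{for all } k \in I,
\end{align*}
in variables $(z,t) \in \R^{n+m} \times \R$. Since $I$ contains the constraints coming from the box $B_\infty(0,R)$, whose normals span $\R^{n+m}$, this LP is both feasible and bounded above; indeed $\gamma(z)<0$ whenever $z \notin P^{(T)}$ and $P^{(T)} \subseteq B_\infty(0,R)$ is compact. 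The LP dual has variables $\lambda \in \R^{I}_{\ge 0}$ with equality constraints $\sum_{k}\lambda_k = 1$ (from the free variable $t$) and $\sum_{k}\lambda_k g^{(k)} = 0$ (from the free variables $z$), and the objective is to minimize $\sum_{k}\lambda_k\,g^{(k)\top} z^{(k)}$. Strong duality gives an optimal $\lambda^*\in\Delta_I$ with $\sum_k \lambda_k^* g^{(k)} = 0$ achieving this minimum. For any $z'\in\R^{n+m}$, the dual feasibility then yields
\[
\sum_{k\in I}\lambda_k^*\,\gamma^{(k)}(z') \;=\; \sum_{k}\lambda_k^*\,g^{(k)\top}z^{(k)} \;-\; z'^\top\!\!\sum_{k}\lambda_k^* g^{(k)} \;=\; \sum_{k}\lambda_k^*\,g^{(k)\top}z^{(k)} \;=\; \max_{z}\gamma(z),
\]
which proves the first part.

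For the second (algorithmic) part, the idea is to solve the dual LP approximately. Rather than insisting on $\sum_k \lambda_k g^{(k)} = 0$ exactly, I will aim for an $\eta$-approximate dual feasible $\lambda \in \Delta_I$ satisfying $\bigl\|\sum_{k}\lambda_k g^{(k)}\bigr\|_1 \le \eta\beta/2$ together with the near-optimality bound $\sum_{k}\lambda_k\,g^{(k)\top} z^{(k)} \le \max_z \gamma(z) + \eta\beta R/2$. Given such $\lambda$, for every $z'\in B_\infty(0,R)$ we have $\|z'\|_\infty \le R$ and therefore
\[
\sum_{k}\lambda_k\,\gamma^{(k)}(z') \;=\; \sum_{k}\lambda_k\,g^{(k)\top}z^{(k)} \;-\; z'^\top\!\!\sum_{k}\lambda_k g^{(k)} \;\le\; \max_{z}\gamma(z) + \tfrac{\eta\beta R}{2} + R\cdot\tfrac{\eta\beta}{2} \;=\; \max_{z}\gamma(z)+\eta\beta R,
\]
establishing~\eqref{eq:apx_lag_mul}. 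The dual LP has $|I|=O(n+m)$ variables and $O(n+m)$ equality constraints, with data entries of magnitude $O(\beta)$ and $O(\beta R)$; it can be solved to the required relative accuracy $\eta$ by a fast interior-point LP solver (e.g.\ the one from \cite{cls19}) in the advertised time $O((n+m)^{\omega+o(1)}\log((n+m)/\eta))$.

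\textbf{Main obstacle.} The conceptual content is standard LP duality, so the only real subtleties are bookkeeping. I expect two checkpoints to require care. First, one must verify that the primal LP is bounded above in the unconstrained variable $z\in\R^{n+m}$: this uses crucially that $I$ already contains the box-face cuts (with normals scaled by $\beta$), as otherwise a single negative-gradient direction could push $\gamma(z)\to+\infty$. Second, converting the LP-solver's convergence guarantee into the additive error $\eta\beta R$ promised by~\eqref{eq:apx_lag_mul} requires fixing the right normalization; because all data entries are bounded by $O(\beta R)$, relative accuracy $\eta$ suffices, so the $\log((n+m)/\eta)$ dependence in the stated runtime matches the iteration count of the LP solver.
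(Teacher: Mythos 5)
Your proposal is essentially the same as the paper's: the existence claim via LP strong duality is exactly the paper's minimax argument written in dual form (the paper applies the bilinear minimax theorem, which for this polyhedral game is the same statement), and your approximate multipliers come from solving the same small LP with the CLS19 solver and bookkeeping the constraint-violation and objective gaps. Your error accounting is slightly cleaner (and in fact fixes a sign/direction slip in the paper's chain, which as written derives a lower bound rather than the needed upper bound), but the route is identical.
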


\begin{proof}
By the minimax theorem, we have
\begin{align*}
\max_{z \in \R^{n+m}} \gamma(z) := \max_{z \in \R^{n+m}} \min_{k \in I} \gamma^{(k)}(z) = \max_{z} \min_{\lambda \in \Delta} \sum_{k \in I} \lambda_k \cdot \gamma^{(k)}(z) = \min_{\lambda \in \Delta} \max_{z \in \R^{n+m}} \sum_{k \in I} \lambda_k \cdot \gamma^{(k)}(z),
\end{align*}
where we used the fact that $\sum_{k \in I} \lambda_k \cdot \gamma^{(k)}(z)$ is bilinear in $\lambda$ and $z$. 
The first part of the lemma follows from the fact that the affine function $\sum_{k \in I} \lambda_k \cdot \gamma^{(k)}(z)$ has a maximium over $z\in\R^{n+m}$ if and only if it is a constant function. Note that this happens precisely when $\sum_{k\in I}\lambda_k \cdot g^{(k)}=0$.

The second part of the lemma follows from the observation that ``$\sum_{k \in I} \lambda_k \cdot \gamma^{(k)}(z)$ is a constant function'' is a linear constraint over $\lambda$. Therefore, finding $\lambda \in \Delta$ that forms a constant function with smallest possible value can be captured by the following linear program (in $\lambda$):

\begin{align*}
\max_{\lambda} ~&  \sum_{k\in I} z^{(k)}\cdot  g^{(k)}\lambda_{k}\\
\text{s.t.}~ & \sum_{k\in I}g^{(k)}\lambda_{k}	=0\\
& \sum_{k\in I}\lambda_{k} =1 \\
& \lambda_{k}	\geq0 , \forall k\in I .\\
\end{align*}
Using a recent LP solver from Theorem 2.1 in \cite{cls19}, in $O((n+m)^{\omega+o(1)}\log(1/\delta))$ time we can output an approximate solution $\lambda_k$ satisfying

\begin{align*}
\sum_{k\in I} z^{(k)}\cdot  g^{(k)}\lambda_{k} &\geq \sum_{k\in I} z^{(k)}\cdot  g^{(k)}\lambda_{k}^* - \delta \max_{k\in I}|z^{(k)}\cdot  g^{(k)}|\\
& \geq \sum_{k\in I} z^{(k)}\cdot  g^{(k)}\lambda_{k}^* - \delta \sqrt{n+m}R\beta\\
& = \max_{z \in \R^{n+m}} \gamma(z)  - \delta \sqrt{n+m}R\beta
\end{align*}
and 
\begin{align*}
\norm{\sum_{k\in I}g^{(k)}\lambda_{k}}_1 \leq \delta \left( \sum_k \norm{g_k}_1 \right)\leq \delta \sqrt{n+m}\beta.
\end{align*}	
where we used $\norm{g^{(k)}}_2 \leq \max (\beta,L) = \beta$ and $ z^{(k)}\in B_\infty(0,R)$. 
Now for $z' \in B_\infty(0, R)$,
\begin{align*}
&\sum_{k \in I} \lambda_k \cdot \gamma^{(k)}(z') \\
= & \sum_{k\in I} z^{(k)}\cdot  g^{(k)}\lambda_{k} + \left( \sum_{k\in I}g^{(k)}\lambda_{k} \right)\cdot z'\\
\geq & \max_{z \in \R^{n+m}} \gamma(z)  - \delta \sqrt{n+m}R\beta -  (\delta \sqrt{n+m}\beta)(\sqrt{n+m}R)\\
\geq & \max_{z \in \R^{n+m}} \gamma(z) -2(\delta (n+m)\beta R)
\end{align*}
Our result then follows by taking $\eta = 2 \delta (n+m)$.
\end{proof}

Now given the notion of approximate multipliers \eqref{eq:apx_lag_mul}, we prove the following lemma. 

\begin{lemma}[Convex Combination Provides a Good Solution]
\label{lem:convcombSolution}
Given $\lambda$ satisfying $$\sum_{k \in I} \lambda_k \cdot \gamma^{(k)}(z') \leq 9\eta \beta R (n+m)^{1/2} \qquad \forall z' \in B_\infty(0, R)$$ with $\eta<1$. Assume that $\beta \geq 3\sqrt{n+m} \cdot L$. Let 
\begin{align*}
\hat{z} = (\hat{x},\hat{y}) = \Big( \sum_{k \in J} \lambda_k \cdot z^{(k)} \Big) / \Big( \sum_{k \in J} \lambda_k \Big)
\end{align*}
where $J = \{ k \in I: z^{(k)} \in \mathcal{X} \times \mathcal{Y} \}$. Then $\hat{z}$ is feasible, i.e. $\hat{z} \in \mathcal{X} \times \mathcal{Y}$, and we have
\begin{align*}
\overline{f}(\hat{x}) - \underline{f}(\hat{y}) \leq 18\eta \beta R (n+m)^{1/2}.
\end{align*}
\end{lemma}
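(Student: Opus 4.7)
My plan is to decouple the statement into three steps: (i) reduce the duality-gap bound to a statement about a weighted sum of $\gamma^{(k)}(z)$ over $k \in J$; (ii) use nonnegativity of $\gamma^{(k)}$ on feasible $z$ for $k \notin J$ together with the hypothesis to upper bound that sum; and (iii) lower bound the total feasible mass $\Lambda_J := \sum_{k \in J} \lambda_k$ by $1/2$, which is the crux of the argument. Feasibility of $\hat z$ is a corollary of (iii), since each $z^{(k)}$ for $k \in J$ lies in the convex set $\mathcal{X}\times\mathcal{Y}$ by definition of $J$, so any convex combination over $J$ stays inside.

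For step (i), fix any $(x,y) \in \mathcal{X}\times\mathcal{Y}$ and set $z = (x,y)$, $\mu_k = \lambda_k/\Lambda_J$ for $k\in J$. Convexity of $f$ in $x$ combined with Jensen gives $f(\hat x, y) \leq \sum_{k\in J} \mu_k f(x^{(k)}, y)$, and concavity in $y$ gives $f(x,\hat y) \geq \sum_{k\in J} \mu_k f(x, y^{(k)})$. Subtracting and applying Lemma~\ref{lem:convexprop} to each $k \in J$ yields
\begin{align*}
f(\hat x, y) - f(x, \hat y) \;\leq\; \sum_{k \in J} \mu_k \bigl[f(x^{(k)}, y) - f(x, y^{(k)})\bigr] \;\leq\; \frac{1}{\Lambda_J}\sum_{k \in J} \lambda_k \, \gamma^{(k)}(z).
\end{align*}

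For step (ii), the crucial observation is that for $k \notin J$ the vector $g^{(k)}$ is either the outward normal of a face of $B_\infty(0,R)$ at $z^{(k)}$ or the normal $n(z^{(k)})$ separating $\mathcal{X}\times\mathcal{Y}$ from $z^{(k)}$. In both situations $\gamma^{(k)}(z) = (z^{(k)}-z)^\top g^{(k)} \geq 0$ whenever $z \in \mathcal{X}\times\mathcal{Y}\subseteq B_\infty(0,R)$. Therefore $\sum_{k\in J}\lambda_k\gamma^{(k)}(z) \leq \sum_{k\in I}\lambda_k\gamma^{(k)}(z) \leq 9\eta\beta R\sqrt{n+m}$ by hypothesis, and combining with step (i) and maximizing over $(x,y) \in \mathcal{X}\times\mathcal{Y}$ produces $\overline f(\hat x)-\underline f(\hat y) \leq 9\eta\beta R\sqrt{n+m}/\Lambda_J$.

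The main obstacle is step (iii). To lower bound $\Lambda_J$ I will evaluate the hypothesis at a carefully chosen $z^\dagger \in B_\infty(0,R)$ that lies deep inside $\mathcal{X}\times\mathcal{Y}$ (for instance the center of an inscribed ball, which can be assumed to be the origin after translation). For $k \notin J$, since $\|g^{(k)}\|_2 = \beta$ and $g^{(k)}$ separates an inscribed ball of positive radius from $z^{(k)}$, one obtains $\gamma^{(k)}(z^\dagger) \geq c\beta$ for a geometric constant $c$; for $k \in J$, $\|g^{(k)}\|_2 \leq L$ and $\|z^{(k)} - z^\dagger\|_2 \leq 2R\sqrt{n+m}$ give the crude lower bound $\gamma^{(k)}(z^\dagger) \geq -2LR\sqrt{n+m}$. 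Plugging these into $\sum_k \lambda_k\gamma^{(k)}(z^\dagger) \leq 9\eta\beta R\sqrt{n+m}$ and using the assumption $\beta \geq 3\sqrt{n+m}\,L$ to dominate all $L$-terms by $\tfrac{1}{3}\beta$, the Lipschitz contributions become negligible relative to the separation $c\beta$, so that a sufficiently small $\eta$ (the range $\eta < 1$ combined with the geometric constant yields the requisite slack) forces $1 - \Lambda_J = \sum_{k \notin J}\lambda_k \leq 1/2$. Substituting $\Lambda_J \geq 1/2$ into step (ii) then gives exactly $\overline f(\hat x)-\underline f(\hat y) \leq 18\eta\beta R\sqrt{n+m}$, completing the proof.
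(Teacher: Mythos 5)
Your steps (i) and (ii) track the paper's proof exactly: you combine Lemma~\ref{lem:convexprop} with Jensen to reduce $\overline{f}(\hat{x}) - \underline{f}(\hat{y})$ to the maximum over $z\in\mathcal{X}\times\mathcal{Y}$ of $\frac{1}{\Lambda_J}\sum_{k\in J}\lambda_k\gamma^{(k)}(z)$, extend the sum to all of $I$ using $\gamma^{(k)}(z)\geq 0$ for $k\notin J$ and feasible $z$, and conclude $\overline{f}(\hat{x}) - \underline{f}(\hat{y}) \leq 9\eta\beta R\sqrt{n+m}/\Lambda_J$ where $\Lambda_J=\sum_{k\in J}\lambda_k$. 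That portion is fine.

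The gap is in step (iii), the lower bound $\Lambda_J\geq 1/2$. Evaluating at a center $z^\dagger$ of an inscribed ball of radius $r$ and using ``$g^{(k)}$ separates that ball from $z^{(k)}$'' yields, for $k\notin J$, only $\gamma^{(k)}(z^\dagger)\geq r\beta$ --- your ``geometric constant $c$'' cannot from this argument exceed the inradius. Plugging that into the hypothesis together with $\gamma^{(k)}(z^\dagger)\geq -2LR\sqrt{n+m}\geq -\tfrac{2}{3}\beta R$ for $k\in J$ gives
\begin{align*}
\Lambda_J\geq \frac{r-9\eta R\sqrt{n+m}}{\tfrac{2}{3}R+r},
\end{align*}
which for $r\ll R$ is far below $1/2$ (indeed it can be negative) regardless of $\eta<1$. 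The sentence ``the range $\eta<1$ combined with the geometric constant yields the requisite slack'' is therefore asserting, not proving, the crux, and the assertion is false with the bound you derived. The paper instead evaluates at the origin and uses the fact that the constraints indexed by $I\setminus J$ arise from faces of $\partial B_\infty(0,R)$: there $z^{(k)}$ lies on a face $\{z:z_j=\pm R\}$ and $g^{(k)}$ is the face normal of length $\beta$, so $\gamma^{(k)}(0)=z^{(k)\top}g^{(k)}=\beta R$ exactly --- an $\Omega(\beta R)$ bound, not $\Omega(\beta r)$. That $R/r$ factor is exactly what is needed to dominate the Lipschitz term $\sqrt{n+m}LR\leq\tfrac{1}{3}\beta R$ and close the inequality. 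To repair your proof you would need to use this box-face geometry (and, more carefully, also handle constraints from query points inside $B_\infty(0,R)$ but outside $\mathcal{X}\times\mathcal{Y}$), rather than relying on inscribed-ball separation alone.
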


\begin{proof}
Since $z^{(k)} \in \mathcal{X} \times \mathcal{Y}$ for each $k \in J$, it follows from convexity that $\hat{z} \in \mathcal{X} \times \mathcal{Y}$. 
For any point $z=(x,y) \in \mathcal{X} \times \mathcal{Y}$, we have $\hat{g}_{\beta}(z^{(k)}) = g(z^{(k)})$ and from Lemma~\ref{lem:convexprop}, for any $k \in J$
\begin{align*}
\gamma^{(k)}(z) := (z^{(k)} - z)^{\top} \cdot g(z^{(k)}) \geq f(x^{(k)},y) - f(x,y^{(k)}).
\end{align*}
Let $\hat{\lambda}_k = \lambda_k / (\sum_{k \in J} \lambda_k )$. Taking weighted sum of these inequalities, we have
\begin{align*}
\sum_{k \in J} \hat{\lambda}_k \cdot \gamma^{(k)}(z) \geq \sum_{k \in J} \hat{\lambda}_k \cdot f(x^{(k)},y) - \sum_{k \in J} \hat{\lambda}_k \cdot f(x,y^{(k)}) \geq f(\hat{x},y) - f(x,\hat{y}),
\end{align*}
where the last inequality follows from the fact that $f$ is convex in $x$ and concave in $y$. 
Taking the maximum over $z \in \mathcal{X} \times \mathcal{Y}$ on both sides, we have
\begin{align}
\overline{f}(\hat{x})-\underline{f}(\hat{y}) 
\leq & ~ \max_{z\in\mathcal{X}\times\mathcal{Y}}\sum_{k\in J}\hat{\lambda}_{k}\cdot\gamma^{(k)}(z) \notag \\
\leq & ~ \max_{z\in\mathcal{X}\times\mathcal{Y}}\sum_{k\in I}\hat{\lambda}_{k}\cdot\gamma^{(k)}(z) \notag\\
= & ~ \frac{1}{\sum_{k\in J}\lambda_{k}}\max_{z\in\mathcal{X}\times\mathcal{Y}}\sum_{k\in I}\lambda_{k}\cdot\gamma^{(k)}(z) \notag \\
\leq & ~ \frac{1}{\sum_{k\in J}\lambda_{k}} 9\eta \beta R (n+m)^{1/2} \label{eq:fuflbound}
\end{align}
where the second inequality uses $\gamma^{(k)}(z) \geq 0$ for all $k \notin J$ and $z \in \mathcal{X}\times\mathcal{Y}$  because $\mathcal{X}\times\mathcal{Y} \subset \{z \in \R^{n+m} : (z-(x,y))^\top n(x,y) \leq 0\}$, the third inequality follows from the assumption.

Now we bound $\sum_{k\in J}\lambda_{k}$. For $k\in J$, because $f$ is $L$-Lipschitz and $\|z^{(k)}\|_{2}\leq \sqrt{n+m}R$
$$\gamma^{(k)}(0)=z^{(k)\top}\hat{g}_{\beta}(z^{(k)})\geq-\sqrt{n+m}LR.$$

For $k\in I\backslash J$, we claim that
$$\gamma^{(k)}(0)=z^{(k)\top}n(z^{(k)})\geq\beta R.$$ 

Recall that $z^{(k)}$ is cut off by a constraint of $B_\infty(0, R)$ with normal $n(z^{(k)})$ of length $\|n(z^{(k)})\|=\beta$. Hence $z^{(k)\top}n(z^{(k)})$ is the length of the projection of $z^{(k)}$ onto $n(z^{(k)})$, which is at least $\beta R$.

So we have
\begin{align*}
\eta LR 
\geq & ~ \sum_{k\in J}\lambda_{k}\cdot\gamma^{(k)}(0)+\sum_{k\notin J}\lambda_{k}\cdot\gamma^{(k)}(0)\\
\geq & ~ -\sqrt{n+m}LR \sum_{k\in J} \lambda_{k}+ \Big( 1 - \sum_{k\in J} \lambda_{k} \Big) \cdot \beta R.
\end{align*}
Using $\beta \geq 3\sqrt{n+m}L$, we have $\sum_{k\in J}\lambda_{k}\geq\frac{3\sqrt{n+m}-\eta}{4\sqrt{n+m}}>\frac{1}{2}$.
Now the result follows from \eqref{eq:fuflbound}.
\end{proof}

Thus, given that the maximum of the function $\gamma(z)$ is small, we can find some convex combination of the $z^{(k)}$'s in the cutting plane method that is a good approximation to the saddle point of $f$. 
And it turns out that the maximum of $\gamma$ goes to 0 as $T \rightarrow \infty$, and at the same convergence rate as the cutting plane method for convex minimization.

\begin{lemma}[Upper Bound on $\gamma$]
\label{lem:upperboundgT}
Consider solving the convex-concave game by the cutting plane method. Assume that at certain step $T$ we have
\begin{align*}
\eta = \frac{\vol( P^{(T)})^{1/(n+m)} }{\vol(\mathcal{X} \times \mathcal{Y})^{1/(n+m)}} < \frac{1}{2}.
\end{align*}
Assume that $\beta > L$, we have 
\begin{align*}
\max_{z \in \R^{n+m}} \gamma(z) \leq 8 \eta \beta R (n+m)^{1/2}.
\end{align*}
\end{lemma}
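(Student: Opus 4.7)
My plan is to bound $c := \max_{z \in \R^{n+m}} \gamma(z)$ by exhibiting a shrunken-and-translated copy of $\mathcal{X}\times\mathcal{Y}$, anchored at the maximizer of $\gamma$, that lives entirely inside $P^{(T)}$, and then converting the resulting lower bound on $\vol(P^{(T)})$ into an upper bound on $c$ via the hypothesis on $\eta$. The first step will be to reduce to the case $c \geq 0$, which holds without loss of generality since any saddle point $z_\star \in \mathcal{X}\times\mathcal{Y}$ satisfies $\gamma^{(k)}(z_\star) \geq 0$ for every $k \in I$: for $k \in J$ this is the ``in particular'' clause of Lemma~\ref{lem:convexprop}, while for $k \notin J$ it follows from the design of the cut, which keeps $\mathcal{X}\times\mathcal{Y}\subset H^{(k)}$.

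Next I will let $z^*$ maximize $\gamma$ and, for $z \in \mathcal{X}\times\mathcal{Y}$ and $t \in [0,1]$, set $z_t := (1-t)z^* + t z$. Linearity of each affine function $\gamma^{(k)}$ gives $\gamma^{(k)}(z_t) = (1-t)\gamma^{(k)}(z^*) + t\gamma^{(k)}(z) \geq (1-t)c + t\gamma^{(k)}(z)$. To make the right-hand side nonnegative I need a lower bound on $\gamma^{(k)}(z)$ for $z \in \mathcal{X}\times\mathcal{Y}$. For $k\notin J$, the cut construction gives $\gamma^{(k)}(z)\geq 0$ directly. For $k\in J$, only Cauchy--Schwarz is available: since $z^{(k)}$ comes from a feasible oracle call, $\|g^{(k)}\|_2 \leq L$ from the Lipschitz assumption on $f$, and $\|z^{(k)}-z\|_2 \leq 2\sqrt{n+m}R$ because $\mathcal{X}\times\mathcal{Y}\subseteq B_\infty(0,R)\subset\R^{n+m}$. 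This gives $\gamma^{(k)}(z)\geq -C$ with $C := 2\sqrt{n+m}RL$, hence $\gamma^{(k)}(z_t)\geq (1-t)c - tC$, which is nonnegative precisely when $t\leq t^* := c/(c+C)$.

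Once I know $\gamma^{(k)}(z_{t^*})\geq 0$ for all $k\in I$ and all $z\in\mathcal{X}\times\mathcal{Y}$, the affine image $S := \{(1-t^*)z^* + t^* z : z\in\mathcal{X}\times\mathcal{Y}\}$ lies inside $\bigcap_{k\in I}H^{(k)}\cap B_\infty(0,R)$ (the containment in $B_\infty(0,R)$ being automatic by convexity), which in turn lies inside $P^{(T)}$ because the polytope's actual constraints are only relaxations of the $H^{(k)}$. Since the map $z\mapsto z_{t^*}$ has Jacobian $t^* I_{n+m}$, I obtain $\vol(P^{(T)}) \geq (t^*)^{n+m}\vol(\mathcal{X}\times\mathcal{Y})$. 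Taking $(n+m)$-th roots yields $\eta \geq t^* = c/(c+C)$, and the hypothesis $\eta<1/2$ then gives $c \leq C\eta/(1-\eta) < 2C\eta = 4\sqrt{n+m}RL\eta$. Using $\beta>L$ finishes: $c < 4\sqrt{n+m}R\beta\eta < 8\sqrt{n+m}R\beta\eta$.

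The step I expect to be most delicate is the asymmetric sign bookkeeping for $\gamma^{(k)}(z)$ on $\mathcal{X}\times\mathcal{Y}$: nonnegative for boundary cuts ($k\notin J$) but possibly as negative as $-C$ for interior cuts ($k\in J$), so that $t^*$ must be calibrated to the worst case. The rest--the volume-scaling conversion via the affine map and the final algebra under $\eta<1/2$--is routine.
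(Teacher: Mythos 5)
Your proof is correct and uses the same machinery as the paper's: affine shrinkage of $\mathcal{X}\times\mathcal{Y}$ toward the maximizer $z^*$ of $\gamma$, linearity of each $\gamma^{(k)}$, a Cauchy--Schwarz bound on $\gamma^{(k)}$ over the box, and volume scaling against the hypothesis on $\eta$. The paper shrinks by $\alpha=2\eta$ to exhibit a cut-off point and extracts the bound by contrapositive, while you compute the optimal shrinkage $t^*$ that lands inside $P^{(T)}$ and read off $\eta\geq t^*$---mirror images; your split of $I$ into $J$ and $I\setminus J$, applying $\|g^{(k)}\|\leq L$ only on $J$ and nonnegativity on $I\setminus J$, tightens the constant to $4\eta L R\sqrt{n+m}$ before you relax $L<\beta$ and double up to the stated $8\eta\beta R\sqrt{n+m}$, whereas the paper uses $\|g^{(k)}\|\leq\beta$ uniformly.
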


\begin{proof}
Denote $G = \mathcal{X} \times \mathcal{Y}$ for simplicity.
First, we note that $\gamma(z) < \beta(\|x\|_{\infty}-R) \leq 0$ for $z \notin B_\infty(0,R)$ since we include all constraints of $\partial B_\infty(0,R)$
into the definition of $\gamma$.
Hence, it suffices to consider $\gamma$ over $B_\infty(0,R)$.

Let $z^* \in \R^{n + m}$ be a maximizer of $\gamma$ over $B_\infty(0,R)$. Let $\alpha = 2 \eta$ and
\begin{align*}
G^{\alpha} = z^* + \alpha (G - z^* ) := \{ (1-\alpha) z^* + \alpha z ~|~ z \in G \}.
\end{align*}

Notice that $\vol(G^{\alpha}) = \alpha^{m+n} \vol(G) > \vol(P^{(T)})$. 
Therefore, there is some $w \in G^\alpha\setminus P^{(T)}$. This point $w$ is cut off by some $\gamma^{(k)}(z)$: 
\begin{align*}
(w - z^{(k)})^{\top} g^{(k)} > 0,
\end{align*}
for some $z^{(k)} \in B_\infty(0,R)$ (by Line \ref{line:move_z} of the algorithm).
It follows that 
\begin{align*}
\gamma^{(k)}(w) = (z^{(k)} - w)^{\top} g^{(k)} < 0.
\end{align*}
Since $w = (1-\alpha)z^* + \alpha z \in \R^{n + m}$ for some $z \in G$ and the function $\gamma^{(k)} :\R^{n + m} \rightarrow \R$ is affine, we have
\begin{align*}
(1-\alpha) \cdot \gamma^{(k)}(z^*) \leq -\alpha \cdot \gamma^{(k)}(z) + \gamma^{(k)}(w) \leq -\alpha \cdot \gamma^{(k)}(z) = -\alpha \cdot (z^{(k)} - z)^{\top} g^{(k)} \leq 2\alpha \beta R (n+m)^{1/2},
\end{align*}
where we used $\|g^{(k)}\| \leq \max(\beta,L) \leq \beta$ and both $z^{(k)}$ and $z$ are in $B_\infty(0,R)$. 
%
It follows that 
\begin{align*}
\max_{z \in \R^{n+m}} \gamma(z) = \gamma(z^*)  \leq 4\alpha \beta R (n+m)^{1/2} \leq 8 \eta \beta R (n+m)^{1/2}.
\end{align*}
\end{proof}

Combining Theorem~\ref{thm:vaidya}, Lemma~\ref{lem:ConvCombSameMax},~\ref{lem:convcombSolution} and~\ref{lem:upperboundgT}, we obtain the following theorem which states that solving convex-concave games is exactly as fast as minimizing convex functions.

\begin{theorem}\label{thm:saddle_point}
Given convex sets $\mathcal{X}\subset B(0,R)\subset\R^{n}$
and $\mathcal{Y}\subset B(0,R)\subset\R^{m}$ such that both $\mathcal{X}$ and $\mathcal{Y}$ contain a ball of radius $r$. Let $f(x,y):\mathcal{X}\times\mathcal{Y}\rightarrow\R$ be an $L$-Lipschitz function 
that is convex in $x$ and concave in $y$. For any $0<\epsilon\leq\frac{1}{2}$,
we can find $(\hat{x},\hat{y})$ such that
\begin{align*}
\max_{y\in\mathcal{Y}}f(\hat{x},y)-\min_{x\in\mathcal{X}}f(x,\hat{y})\leq\epsilon L r
\end{align*}
in time 
\begin{align*}
O \left( (n+m)^{3} \log \Big( \frac{n+m}{\epsilon} \frac{R}{r} \Big) +(n+m)\log\Big(\frac{n+m}{\epsilon} \frac{R}{r} \Big) \cdot \mathcal{T} \right)
\end{align*}
with high probability in $n+m$ where $\mathcal{T}$ is the cost of computing subgradient $\nabla f$.
\end{theorem}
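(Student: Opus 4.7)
The plan is to invoke Vaidya's cutting plane method (Theorem~\ref{thm:vaidya}) on the initial polytope $P^{(0)} = B_\infty(0,R) \subset \R^{n+m}$ with the separation oracle built from the first-order oracle $\hat{g}_\beta$ defined in~\eqref{eq:def_ghat_x_y}, and then extract a certified approximate saddle point via the convex-combination scheme of Lemmas~\ref{lem:ConvCombSameMax} and~\ref{lem:convcombSolution}. First I would set $\beta = 3\sqrt{n+m}\, L$, which is the minimal choice that simultaneously satisfies the hypothesis of Lemma~\ref{lem:convcombSolution} ($\beta \ge 3\sqrt{n+m}\cdot L$) and of Lemma~\ref{lem:upperboundgT} ($\beta > L$), so that the ``boundary'' cuts dominate the legitimate first-order cuts by a factor large enough to guarantee the total feasible mass $\sum_{k\in J}\lambda_k > 1/2$.

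Next I would pick the number of iterations $T$. By Theorem~\ref{thm:vaidya}, after $T$ iterations $\vol(P^{(T)}) \le c^{T-O((n+m)\log(n+m))} (2R)^{n+m}$. Since $\mathcal{X}$ and $\mathcal{Y}$ each contain a ball of radius $r$, we have $\vol(\mathcal{X}\times\mathcal{Y}) \ge (c_n r)^{n+m}$ for an absolute constant $c_n$. Thus the ratio in Lemma~\ref{lem:upperboundgT} is
\begin{align*}
\eta = \Big( \vol(P^{(T)}) / \vol(\mathcal{X}\times\mathcal{Y}) \Big)^{1/(n+m)} \le c^{T/(n+m) - O(\log(n+m))} \cdot O(R/r).
\end{align*}
Tracing the accuracy requirement backwards: Lemma~\ref{lem:convcombSolution} gives gap $\le 18\eta''\beta R\sqrt{n+m} = 54\eta'' L R (n+m)$, which must be $\le \epsilon L r$, forcing $\eta'' \lesssim \epsilon r /(R(n+m))$. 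Since $\eta''$ must absorb both the cutting-plane error $\eta$ from Lemma~\ref{lem:upperboundgT} and the Lagrange-multiplier approximation error from~\eqref{eq:apx_lag_mul} of Lemma~\ref{lem:ConvCombSameMax}, it suffices to take both at scale $\Theta(\epsilon r /(R(n+m)^{3/2}))$. Solving for $T$ in the volume inequality yields $T = O((n+m)\log((n+m)R/(\epsilon r))) = O((n+m)\log\kappa)$, which is also the number of first-order oracle calls.

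The runtime then decomposes cleanly: the cutting plane bookkeeping contributes $O((n+m)^2 T) = O((n+m)^3 \log\kappa)$ by Theorem~\ref{thm:vaidya}; the oracle calls contribute $\mathcal{T}\cdot T = O(\mathcal{T}(n+m)\log\kappa)$; and the one-time LP solve in Lemma~\ref{lem:ConvCombSameMax} for approximate multipliers contributes $O((n+m)^{\omega+o(1)}\log((n+m)/\eta')) = O((n+m)^{\omega+o(1)}\log\kappa)$, which is absorbed into $O((n+m)^3\log\kappa)$. Summing gives the stated runtime bound.

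The step I expect to be the most delicate is the consistency of the three error parameters: the volumetric $\eta$ in Lemma~\ref{lem:upperboundgT}, the LP-approximation $\eta$ in~\eqref{eq:apx_lag_mul}, and the final $\eta''$ consumed by Lemma~\ref{lem:convcombSolution}. Each introduces a different prefactor ($8\eta\beta R\sqrt{n+m}$ versus $\eta\beta R$ versus $9\eta''\beta R\sqrt{n+m}$), and one must choose them so that the combined bound $\sum_k\lambda_k\gamma^{(k)}(z') \le 9\eta''\beta R\sqrt{n+m}$ required by Lemma~\ref{lem:convcombSolution} is satisfied while keeping all parameters $< 1/2$. Additionally, I would verify that the high-probability guarantee of Theorem~\ref{thm:vaidya} carries through, and that $\hat{g}_\beta$ is a valid separation oracle everywhere in $P^{(k)}$, using Lemma~\ref{lem:convexprop} on the interior and the projection step on line~\ref{line:move_z} at the boundary. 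Everything else is essentially bookkeeping.
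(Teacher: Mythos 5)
Your proposal follows exactly the same route as the paper's own proof: run the cutting plane method of Theorem~\ref{thm:vaidya} on $P^{(0)}=B_\infty(0,R)$ for $T=O((n+m)\log\kappa)$ iterations, set $\beta=3\sqrt{n+m}\,L$, chain Lemma~\ref{lem:upperboundgT} and Lemma~\ref{lem:ConvCombSameMax} into the hypothesis of Lemma~\ref{lem:convcombSolution}, and account for the one-time $O((n+m)^{\omega+o(1)}\log\kappa)$ LP solve for the approximate Lagrange multipliers. Your error accounting is marginally more conservative than the paper's (which simply drives $\eta$ down to a high power of $\epsilon r/((n+m)R)$ so all three error sources fit trivially), but this only affects the constant inside the logarithm and the argument is sound.
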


\begin{proof}
We run our cutting plane method for $T=(n+m)\log \left(\frac{n+m}{\epsilon} \frac{R}{r} \right)$ iterations. By Theorem~\ref{thm:vaidya}, we obtain $P^{(T)}$ with volume 
\begin{align*}
\vol ( P^{(T)} ) \leq \Big( \frac{\epsilon}{n+m} \frac{r}{R} \Big)^{100(m+n)} \cdot \vol ( P^{(0)} ) \leq \Big( \frac{\epsilon}{n+m} \frac{r}{R} \Big)^{99(m+n)} \cdot \vol(\mathcal{X} \times \mathcal{Y})
\end{align*}
in $O((n+m)^{3}\log(\frac{n+m}{\epsilon} \frac{R}{r}))$ time. 
Notice that 
\begin{align*}
\eta = \frac{\vol( P^{(T)})^{1/(n+m)} }{\vol(\mathcal{X} \times \mathcal{Y})^{1/(n+m)}} \leq \left( \frac{\epsilon}{n+m} \frac{r}{R} \right)^{99} < 1/2 .
\end{align*}
Lemma~\ref{lem:upperboundgT} shows that 
\begin{align*}
\max_{z\in\R^{n+m}}\gamma(z)\leq 8 \eta \beta R (n+m)^{1/2}.
\end{align*}
Lemma~\ref{lem:ConvCombSameMax} shows in $O((n+m)^{\omega+o(1)}\log(\frac{n+m}{\eta})) = O((n+m)^{\omega+o(1)}\log(\frac{n+m}{\epsilon} \frac{R}{r}))$ time, we
can find $\lambda$ such that $$\sum\lambda_{k}\gamma^{(k)}(z)\leq 8 \eta \beta R (n+m)^{1/2}+ \eta \beta R\leq 9 \eta \beta R (n+m)^{1/2}$$
for all $z\in B_{\infty}(0,R)$. Using this $\lambda$, Lemma~\ref{lem:convcombSolution}
shows that, by taking a convex combination w.r.t. $\lambda$, we can find $(\hat{x},\hat{y})$ for which
\begin{align*}
\overline{f}(\hat{x}) - \underline{f}(\hat{y}) \leq 18\eta \beta R (n+m)^{1/2}.
\end{align*}
Picking $\beta = 3\sqrt{n+m}L$, we get
\begin{align*}
\overline{f}(\hat{x}) - \underline{f}(\hat{y}) \leq 54\eta L R (n+m) \leq 54 \epsilon L r.
\end{align*}
Now our result follows by replacing $\epsilon$ with $\epsilon / 54$, which doesn't change the asymptotic runtime.
\end{proof}

Next we give a slightly refined runtime in the case where the domain is a ball using a result of Nemirovski~\cite{n95}. 

\begin{corollary}
Given an $L$-Lipschitz function $f(x,y):\mathcal{X}\times\mathcal{Y}\rightarrow\R$
that is convex in $x$ and concave in $y$ with $\mathcal{X}=B(0,R)\subset\R^{n}$
and $\mathcal{Y}=B(0,R)\subset\R^{m}$. For any $0<\epsilon\leq\frac{1}{2}$,
we can find $(\hat{x},\hat{y})$ such that
\begin{align*}
\max_{y\in\mathcal{Y}}f(\hat{x},y)-\min_{x\in\mathcal{X}}f(x,\hat{y})\leq\epsilon LR
\end{align*}
in time 
\begin{align*}
O((n+m)^{3}\log(1/\epsilon)+(n+m)\log(1/\epsilon) \cdot \mathcal{T})
\end{align*}
with high probability in $n+m$ where $\mathcal{T}$ is the cost of
computing $\nabla f$.
\end{corollary}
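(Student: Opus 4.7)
The plan is to obtain this sharper bound by invoking a classical result of Nemirovski~\cite{n95} specific to ball-constrained games. Applied directly with $r = R$, Theorem~\ref{thm:saddle_point} already yields accuracy $\epsilon LR$ in time $O((n+m)^{3} \log(\frac{n+m}{\epsilon}) + (n+m) \log(\frac{n+m}{\epsilon}) \cdot \mathcal{T})$, differing from the claim only by a $\log(n+m)$ factor. This factor matters only when $\log(1/\epsilon) \ll \log(n+m)$; for exponentially small $\epsilon$ the two bounds already agree.

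To kill the extra $\log(n+m)$, I would trace where it enters the proof of Theorem~\ref{thm:saddle_point}: it appears in Lemma~\ref{lem:upperboundgT} through the volume ratio $\vol(P^{(0)})/\vol(\mathcal{X}\times\mathcal{Y})$. With the default choice $P^{(0)} = B_\infty(0,R)$ of Algorithm~\ref{alg:cutting_plane_method}, this ratio is $(n+m)^{\Theta(n+m)}$ against the ball product $B(0,R) \times B(0,R)$, and driving the volume of $P^{(T)}$ below $\eta^{n+m}\vol(\mathcal{X}\times\mathcal{Y})$ with $\eta = \Theta(\epsilon)$ costs $O((n+m)\log(\frac{n+m}{\eta}))$ Vaidya iterations. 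I would replace $P^{(0)}$ by an $O(n+m)$-facet polytopal enclosure of the ball product whose volume exceeds $\vol(B(0,R)\times B(0,R))$ only by a factor of $2^{O(n+m)}$, for instance a regular simplex around the origin in $\R^{n+m}$ scaled so that its inscribed ball contains $B(0,R)\times B(0,R)$. With this initialization, the volume-ratio factor contributes only $O(n+m)$ to the iteration count, so that $T = O((n+m)\log(1/\epsilon))$ iterations suffice.

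Once $P^{(0)}$ is replaced, the rest of the proof of Theorem~\ref{thm:saddle_point} goes through verbatim: Theorem~\ref{thm:vaidya} still drives the volume down at rate $c^{T - O((n+m)\log(n+m))}$, but the $(n+m)\log(n+m)$ additive warm-up is now dominated by the $(n+m)\log(1/\epsilon)$ main term because of the reduced starting volume; Lemma~\ref{lem:ConvCombSameMax} produces approximate Lagrange multipliers in $O((n+m)^{\omega+o(1)}\log(1/\epsilon))$ time; and Lemma~\ref{lem:convcombSolution} combines the query points into an $\epsilon LR$-saddle point, after picking $\beta = 3\sqrt{n+m}\,L$ exactly as in the proof of Theorem~\ref{thm:saddle_point}. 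Each cutting plane step still costs amortized $O((n+m)^{2} + \mathcal{T})$ by Theorem~\ref{thm:mainvaidya}, so the total runtime is $O((n+m)^{3}\log(1/\epsilon) + (n+m)\log(1/\epsilon)\cdot\mathcal{T})$ as claimed.

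The main obstacle is verifying that the leverage score and projection maintenance machinery of Sections~\ref{sec:main_leverage_score} and~\ref{sec:perturb} remains valid when Vaidya's method is initialized from a simplex rather than a box. Because the simplex still has $O(n+m)$ facets whose associated leverage scores are well-conditioned, the $(n+m)^{\omega+o(1)}$ initialization cost of Theorem~\ref{thm:leverage_score_maintain_main} is preserved, and the spectral closeness conditions of Lemmas~\ref{lem:con1satisfied} and~\ref{lem:con2satisfied} follow from the same argument as in Section~\ref{sec:perturb} once $c_{1}, \delta$ are tuned. Justifying this compatibility and exhibiting the explicit initial simplex is the technical content borrowed from Nemirovski~\cite{n95}; after that the corollary reduces to a direct substitution into the proof of Theorem~\ref{thm:saddle_point}.
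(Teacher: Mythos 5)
Your plan hinges on replacing $P^{(0)}$ with an $O(n+m)$-facet polytope enclosing $B(0,R)\times B(0,R)$ whose volume exceeds that of the ball product by only $2^{O(n+m)}$. No such polytope exists: a regular simplex circumscribing the Euclidean ball $B_2^d(1)$ has volume on the order of $(cd)^{d/2}$ times the ball volume (this follows from the exact formula for the simplex volume together with Stirling), which is $d^{\Theta(d)}$, not $2^{O(d)}$ --- the same order as the box. More generally, by a result of B\'ar\'any and F\"uredi any polytope with $N$ facets containing $B_2^d$ has volume at least $(cd/\log N)^{d/2}$ times the ball, so with $\poly(d)$ facets you are stuck with a $d^{\Theta(d)}$ ratio. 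Consequently, the term $O((n+m)\log(n+m))$ you are trying to eliminate from the iteration count via a smarter $P^{(0)}$ cannot be removed this way, and in the regime $\log(1/\epsilon)\ll\log(n+m)$ your argument still gives $T=\Theta((n+m)\log(n+m))$ iterations. I will also note that citing Nemirovski for the simplex-initialization compatibility is a misattribution; his Theorem 5.5.4 is a subgradient-type method, not a statement about Vaidya's method.

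The paper's actual proof sidesteps the issue with a two-case argument. When $\epsilon\le 1/(n+m)$ one has $\log((n+m)/\epsilon)=O(\log(1/\epsilon))$, so Theorem~\ref{thm:saddle_point}, applied verbatim, already yields the claimed bound. When $\epsilon>1/(n+m)$ one instead runs Nemirovski's first-order method from \cite[Theorem~5.5.4]{n95}, which finds an $\epsilon$-saddle point in time $O((\mathcal{T}+n+m)/\epsilon)=O((n+m)\mathcal{T}+(n+m)^2)$, dominated by the claimed $O((n+m)^3\log(1/\epsilon)+(n+m)\log(1/\epsilon)\cdot\mathcal{T})$ since $\log(1/\epsilon)\ge 1$. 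This is a genuinely simpler route: rather than attempting to sharpen the cutting-plane warm-up (which is not possible by a polytope swap), it observes that the warm-up only hurts at moderate accuracy, where a cheap first-order method already suffices.
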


\begin{proof}
By \cite[Theorem 5.5.4]{n95}, we can minimize such a function in time $O(\frac{\mathcal{T}+n+m}{\epsilon})$.
On the other hand, Theorem~\ref{thm:saddle_point}  shows we can minimize it in time 
\begin{align*}
O((n+m)^{3}\log( ( n+m ) / \epsilon )+(n+m)\log( ( n+m ) / \epsilon ) \cdot \mathcal{T}).
\end{align*}
By using the first result when $\epsilon\geq\frac{1}{n+m}$ and using
the second result when $\epsilon\leq\frac{1}{n+m}$, we have the promised
runtime.
\end{proof}

\newpage
\section{Applications of Cutting Plane Method}\label{sec:application}

\subsection{Linear Arrow-Debreu markets}

\begin{table}[htp!]
    \centering
    \begin{tabular}{ | l | l | l | l | l | }
        \hline
        {\bf Reference} & {\bf Year} & {\bf Number of Operations} & {\bf Time per Operation} & {\bf Poly Type } \\ \hline
        \cite{e75}    & 1975    & Finite                 & Polynomial      & Not poly \\ \hline
        \cite{j07} & 2007  & Polynomial & Polynomial      & Weakly poly \\ \hline
        \cite{y08}         & 2008    & $O(n^6 \log(nU))$       & $M(n \log (nU))$ & Weakly poly \\ \hline
        \cite{dpsv08}  & 2008    & Polynomial & Polynomial    & Weakly poly \\ \hline
        \cite{dm15} & 2015 & $O(n^9 \log(nU))$ & $M(n \log(nU))$ &  Weakly poly \\ \hline
        \cite{dgm16}   & 2016    & $O(n^6 \log^2 (nU))$   &  $M(n \log (nU))$ & Weakly poly \\ \hline
        \cite{gv19}               & 2019    & $O(m n^{9} \log^2 n)$  & $M(n \log (nU))$  & Strongly poly \\ \hline
        Our result & 2019 & $O(m n^2 \log(nU))$ & $M(n \log (nU))$ & Weakly poly \\ \hline
    \end{tabular}
    \caption{(More detailed version of Table~\ref{tab:arrow_debreu_market_intro}) Linear Arrow-Debreu Markets. Let $n$ denote the number of agents and $m$ the number of edges. $M(l)$ denotes the time to perform a basic arithmetic operations on $l$-bit numbers.}
    \label{tab:arrow_debreu_market}
\end{table}

\paragraph{Linear Exchange (Arrow-Debreu) Markets} The input consists of $n$ agents where each agent $i \in [n]$ has a unit of divisible good $g_i$ and utility $u_{i,j} \geq 0$ for a unit of good $g_j$. 
The output is a set of prices $p: [n] \rightarrow \mathbb{R}_+$ for the goods and allocations $x: [n] \times [n] \rightarrow \mathbb{R}_+$ of the goods to agents.
Notice that $x_{i,j}$ can be understood as the amount of good $g_j$ perchased by agent $i$, and $x_{i,i}$ is the amount of good $g_i$ that agent $i$ keeps to himself. 

\paragraph{Market equilibrium}
By a {\em market equilibrium}, we mean a set of prices $p: [n] \rightarrow \mathbb{R}_+$ and allocations $x: [n] \times [n] \rightarrow \mathbb{R}_+$ satisfying the following conditions:

\begin{itemize}
	\item $\sum_{i \in [n]} x_{i,j} = 1$ for every agent $j \in [n]$, i.e. every good is fully sold.
	\item $ p_i = \sum_{j \in [n]} x_{i,j} p_j $ for every agent $i \in [n]$, i.e. the money spent by agent $i$ equals to his income $p_i$.
	\item $p_i > 0$ for every $i \in [n]$, i.e. prices are positive.
	\item $\forall i \in [n]$, if $x_{i,j} > 0$ then $u_{i,j}/p_j = \max_{j' \in [n]} u_{i,j'}/p_{j'}$, i.e. agent $i$ only buys good that attains the best bang-per-bucks.
\end{itemize}

\begin{assumption}
To ensure the existence of an equilibrium, we assume the following.
Consider the directed graph $G=(V,E)$ where $V = [n]$ and directed edges $E=\{ (i,j) : u_{i,j} > 0\}$.
\begin{itemize}
	\item For each agent $i$, there exist $j,j' \in [n]$ such that $u_{i,j} > 0$ and $u_{j',i} > 0$, i.e. each $i \in G$ has at least one incoming edge and one outgoing edge.
	\item For every strongly connected component $S \subseteq G$, if $|S| = 1$ then there is a loop incident to the node in $S$.
\end{itemize}
\end{assumption}

\begin{theorem}
A market equilibrium always exists under the above assumptions.
\end{theorem}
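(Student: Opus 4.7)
The plan is to establish existence via a classical fixed-point argument, following the Arrow--Debreu framework adapted to the linear exchange setting. First I would work with the price simplex $\Delta = \{p \in \R^n_{\geq 0} : \sum_i p_i = 1\}$. For any $p$ in its relative interior, each agent $i$ has income $p_i$ from selling his unit endowment of $g_i$ and optimally spends it on goods maximizing the bang-per-buck ratio $u_{i,j}/p_j$; collecting these optimal responses induces an excess demand correspondence $z : \interior(\Delta) \rightrightarrows \R^n$ that is convex-valued, upper hemicontinuous, and satisfies Walras' law $p^\top z \equiv 0$. Applying the Gale--Nikaido--Debreu lemma (equivalently, Kakutani's fixed-point theorem after truncating each agent's demand by a large parameter $M$ to compactify the image, then sending $M \to \infty$) yields a candidate pair $(p^*, x^*)$ with $p^* \in \Delta$, $z_j(p^*) \leq 0$ for every $j$, and equality whenever $p^*_j > 0$. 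Once $p^* > 0$ on every coordinate is known, Walras' law upgrades this to $z(p^*) = 0$, and the four defining properties of a market equilibrium (full sale, budget balance, positive prices, best bang-per-buck) follow directly from the construction of $z$.

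The main obstacle is therefore to verify that $p^* > 0$, and this is where the graph-theoretic hypotheses enter. Suppose for contradiction that $S := \{j : p^*_j = 0\}$ is nonempty and write $\overline{S} = [n] \setminus S$. I would first show that no edge of the utility graph $G$ crosses from $\overline{S}$ into $S$: otherwise some agent $i \in \overline{S}$ with a utility edge into some $j \in S$ has infinite bang-per-buck on $j$, his truncated demand for goods in $S$ saturates at the truncation level $M$, and a balance-of-payment argument forces the entire positive income of $i$ to flow into $S$; in the $M \to \infty$ limit this conflicts with the supply bound $\sum_{i'} x^*_{i',j} \leq 1$ and with Walras' law applied to $\overline{S}$. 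Hence $S$ must be a union of strongly connected components of $G$ closed under incoming edges.

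The hard part is to eliminate the remaining case where $S$ is so closed, which is precisely where both halves of the structural hypothesis must be combined. In this case goods in $S$ can only be demanded by agents in $S$, all of whom have zero income, so no money circulates inside $S$ and market clearing $\sum_i x^*_{i,j} = 1$ for $j \in S$ reduces to a pure combinatorial feasibility question on the restricted graph $G|_S$. The assumption that every agent has an incoming edge in $G$ ensures that each $j \in S$ has at least one potential consumer within $S$ and thus rules out immediately unmatched goods, \emph{except} in the singleton-SCC case; the self-loop hypothesis on singleton components is exactly what rules out this last pathology, since a lone node $j \in S$ with $u_{j,j} = 0$ would have no consumer in $S$ at all. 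Combining the incoming-edge condition and the self-loop condition, one can derive a contradiction with Walras' law restricted to $S$, forcing $S = \emptyset$ and completing the argument. I expect the bookkeeping in this final case analysis --- fusing both halves of the hypothesis into a clean contradiction across all possible SCC decompositions of $G|_S$ --- to be the most delicate part of the proof.
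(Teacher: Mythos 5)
The paper does not prove this statement: it is quoted as a classical fact (the paragraph that follows cites Arrow--Debreu), and within the paper's own development an equilibrium is produced from the convex program and Theorem~1 of~\cite{dgv16}. Your Gale--Nikaido--Debreu route is a genuine alternative---it is the classical existence argument---but as sketched it has two fatal gaps. First, you cannot apply GND (or Kakutani after truncating to $[0,M]^n$) on the whole simplex because the excess-demand correspondence is not upper hemicontinuous there: when $p_i=0$ and $u_{ij}>0$, agent $i$'s truncated demand for $j$ equals $0$ for every $p_j>0$ and jumps to $M$ at $p_j=0$, so the graph is not closed and the fixed-point lemma does not apply. The standard repair is to work on the retracted simplex $\Delta_\varepsilon=\{p:\,p_j\geq\varepsilon,\,\sum_j p_j=1\}$ and take a subsequential limit $\varepsilon\to 0$; the entire weight of the hypotheses then falls on controlling price \emph{ratios} along that sequence, a step your sketch does not engage with.

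Second, your concluding case analysis misidentifies what assumption (A2) is for, and would in fact rule out any nonempty $S$ from (A1) alone: once $S$ is closed under incoming edges, every $j\in S$ has, by (A1) plus closure, an in-edge from some $i\in S$, and $p_j=0$ makes $i$'s budget constraint vacuous on $x_{ij}$, so $i$ demands $j$ unboundedly no matter what $p_i$ is. If that argument were sound (A2) would be redundant---but it is not. With four agents and edge set $\{1\to 2,\,2\to 1,\,1\to 3,\,3\to 4,\,4\to 4\}$, assumption (A1) holds, the singleton SCC $\{3\}$ has no loop so (A2) fails, and one checks that at \emph{every} $p$ in the simplex some $z_j(p)>0$: with all prices positive agent~$4$ alone exhausts good~$4$ and agent~$3$'s income $p_3>0$ cannot be spent, forcing $z_4>0$; the boundary cases blow up similarly. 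This failure occurs with every price strictly positive, so (A2) is not about defeating a zero-price set $S$ at all---it is about income conservation across the condensation DAG, ensuring each SCC can absorb the money that flows into it. The ``pure combinatorial feasibility question'' framing drops exactly the piece of the argument that the hypotheses were designed to supply.
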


\paragraph{Previous work}
The celebrated result of Arrow and Debreu~\cite{ad54} shows the existence of a market equilibrium for a broad class of utility functions.  
From the computational aspects, computing the equilibrium of a market equilibrium in the case of linear utility functions admits a long line of research (see Table~\ref{tab:arrow_debreu_market}), leading to both weakly and strongly polynomial runtimes. 

\paragraph{A convex formulation}
The problem of computing a market equilibrium in linear exchange markets can be formulated as the following convex program due to~\cite{dgv16}, with variables $p_i$ representing the prices, $\beta_i$ the inverse best bang-per-bucks, and $y_{ij}$ the money paid by agent $i$ to agent $j$.
Given $u \in \R_+^{n \times n}$ as an input matrix, the goal is to solve the following minimization problem,
\begin{align}
\min_{ p \in \R^n, \beta \in \R^n, y \in \R^{n \times n} } & \quad \sum_{i \in [n]}p_i \log (p_i/\beta_i) - \sum_{ (i,j) \in E} y_{i,j} \log u_{i,j} \nonumber\\
\text{subject~to~} & \sum_{i: (i,j) \in E} y_{i,j} = p_j \quad \forall j \in [n] \label{eqn:GoodFullySold}\\
& \sum_{j: ij \in E} y_{ij} = p_i \quad \forall i \in [n] \label{eqn:SpentAllMoney} \\
& u_{ij} \beta_i \leq p_j \quad \forall (i,j) \in E \nonumber \\
& p_i \geq 1 \quad \forall i \in [n] \nonumber \\
& y,\beta \geq 0 \nonumber
\end{align}

\begin{theorem}[Theorem 1 in~\cite{dgv16}]
Consider an instance of the linear exchange market given by the graph $([n],E)$ and the utilities $u: [n] \times [n] \rightarrow \mathbb{R}_+$. The above convex program is feasible if and only if the above assumptions hold, and in this case the optimum value is $0$ and the prices $p_i$ in an optimal solution give a market equilibrium with allocations $x_{ij} = y_{ij} / p_j$.
\end{theorem}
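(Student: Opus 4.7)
The plan is to show equivalence between feasibility of the convex program and the existence assumption on the graph $G$, then to rewrite the objective in a form that manifestly attains its infimum $0$ precisely at the equilibrium allocations. The key algebraic identity, obtained by substituting the flow-conservation constraints (\ref{eqn:GoodFullySold}) and (\ref{eqn:SpentAllMoney}) into the objective, is
\begin{align*}
\sum_{i \in [n]} p_i \log(p_i/\beta_i) - \sum_{(i,j) \in E} y_{ij} \log u_{ij}
= \sum_{(i,j) \in E} y_{ij} \log \frac{p_j}{u_{ij}\beta_i},
\end{align*}
where I have used $\sum_j p_j \log p_j = \sum_{(i,j) \in E} y_{ij} \log p_j$ (via (\ref{eqn:GoodFullySold})) and $\sum_i p_i \log \beta_i = \sum_{(i,j) \in E} y_{ij} \log \beta_i$ (via (\ref{eqn:SpentAllMoney})). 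Since the constraint $u_{ij}\beta_i \leq p_j$ together with $y_{ij} \geq 0$ makes every summand nonnegative, the optimum value is at least $0$.

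Next, I would establish the correspondence between zero-objective feasible points and market equilibria. If $(p,\beta,y)$ is feasible with objective $0$, then for every $(i,j) \in E$ we must have either $y_{ij} = 0$ or $u_{ij}\beta_i = p_j$; combined with the constraints $u_{ij}\beta_i \leq p_j$ for all $(i,j) \in E$ and $\beta_i \geq \max_{j : (i,j) \in E} u_{ij}/p_j \cdot \beta_i$ (tight at the support of $y$), this is precisely the best-bang-per-buck condition under allocations $x_{ij} := y_{ij}/p_j$; the remaining equilibrium conditions (market clearing, budget balance, positivity of prices via $p_i \geq 1$) are exactly the linear constraints of the program. Conversely, any market equilibrium $(p,x)$ yields a feasible point $(p,\beta,y)$ with $y_{ij} = p_j x_{ij}$ and $\beta_i = 1/\max_{j : (i,j) \in E}(u_{ij}/p_j)$ achieving objective $0$.

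The most delicate step is the feasibility equivalence. For the easy direction, any feasible point gives a market equilibrium by the argument above (after shown to be optimal by a separate argument, or directly by noting any feasible $(p,\beta,y)$ could be scaled to attain zero objective), and thus the classical necessary conditions for equilibrium existence (the irreducibility-type conditions in the assumptions) must be met; this uses a careful analysis of the flow constraints (\ref{eqn:GoodFullySold}),(\ref{eqn:SpentAllMoney}) on the support graph of $y$ together with $p_i \geq 1 > 0$. For the converse, I would invoke the classical existence result for linear Arrow-Debreu markets under the stated assumptions to obtain an equilibrium, then construct the corresponding feasible point as above. The hardest part is verifying that $p \geq 1$ can always be arranged by homogeneity of the equilibrium conditions in $p$ (a scaling argument), and that the log terms remain finite on the support of $y$, i.e.\ that $u_{ij} > 0$ whenever $y_{ij} > 0$, which follows because only $(i,j) \in E$ contribute to the objective and to the flow.
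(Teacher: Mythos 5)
The paper does not prove this theorem; it is imported as ``Theorem 1 in~\cite{dgv16}'' and used as a black box. There is therefore no internal proof to compare your sketch against. That said, your proposal is substantively the argument of the cited reference: the crux is exactly the rewriting
\begin{align*}
\sum_{i} p_i \log(p_i/\beta_i) - \sum_{(i,j)\in E} y_{ij}\log u_{ij} \;=\; \sum_{(i,j)\in E} y_{ij}\log\frac{p_j}{u_{ij}\beta_i},
\end{align*}
which follows from the two flow-conservation constraints, and then each summand is nonnegative on the feasible region by $u_{ij}\beta_i\le p_j$, $y_{ij}\ge 0$. The complementary-slackness reading at zero objective (each $(i,j)$ in the support of $y$ has $u_{ij}\beta_i=p_j$, hence attains the best bang-per-buck) is correct, as is the reconstruction $y_{ij}=p_j x_{ij}$, $\beta_i = 1/\max_j(u_{ij}/p_j)$ from an equilibrium.

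Three caveats. First, the displayed inequality ``$\beta_i \geq \max_{j : (i,j) \in E} (u_{ij}/p_j) \cdot \beta_i$'' is garbled; you mean $1/\beta_i \geq \max_j u_{ij}/p_j$ with equality on the support of $y$ at the optimum. Second, the parenthetical fall-back ``any feasible $(p,\beta,y)$ could be scaled to attain zero objective'' is false: the scaling $(p,\beta,y)\mapsto(\alpha p,\alpha\beta,\alpha y)$ is the only one compatible with the linear constraints, and it \emph{multiplies} the objective by $\alpha$ (each ratio $p_j/(u_{ij}\beta_i)$ is scale-invariant while $y$ scales linearly), so a strictly positive objective cannot be driven to zero this way; the $p\ge 1$ constraint moreover forces $\alpha\ge 1$, which only increases it. That the optimum equals $0$ must instead be established by exhibiting a zero-objective feasible point, i.e.\ by first invoking the classical existence theorem for linear exchange equilibria under the stated graph assumptions. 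Third, this reliance on the classical existence result is not a gap — it is precisely the route of~\cite{dgv16} — so your final paragraph correctly identifies where the real work lies, even if it leaves the combinatorial necessity direction (``feasible $\Rightarrow$ assumptions'') at the level of a plan rather than a proof.
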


\paragraph{Convex-concave game formulation}
The number of variables in the above convex program can be as large as $n^2$ and thus a naive application of the cutting plane method would lead to lead to a large runtime.
To solve the convex program more efficiently, we transform it into a convex-concave game with a reduced number of variables. 
Let $\lambda, \eta \in \R^n$ be the Lagrange multiplier of constraint~\ref{eqn:GoodFullySold} and~\ref{eqn:SpentAllMoney} respectively. We have that the above convex program is equivalent to:

\begin{align*}
\min_{p \geq 1 \in \R^n, \beta \in \R_+^n, u_{ij}\beta_i \leq p_j} \min_{y \in \R_{+}^{n \times n}} \max_{\lambda, \eta \in \R^n} 
& ~ \sum_{i=1}^n p_i \log ( p_i / \beta_i ) - \sum_{i=1}^n \sum_{j=1}^n y_{i,j} \log u_{i,j} \\
& ~ + \sum_{j=1}^n \lambda_j \Big( \sum_{i=1}^n y_{i,j} - p_j \Big) + \sum_{i=1}^n \eta_i \Big( \sum_{j=1}^n y_{i,j} - p_i \Big).
\end{align*}
which is equivalent to
\begin{align*}
\min_{p \geq 1, \beta \geq 0, u_{ij}\beta_i \leq p_j} \max_{\lambda, \eta \in \R^n} \min_{y \geq 0} & ~ \sum_{i=1}^n p_i \log (p_i / \beta_i) + \sum_{i=1}^n \sum_{j=1}^n y_{i,j} ( \lambda_j + \eta_i - \log u_{i,j}) \\
& ~ - \sum_{j=1}^n \lambda_j p_j - \sum_{i=1}^n \eta_i p_i \\
\end{align*}
which is further equivalent to the following convex-concave game
\begin{align*}
\min_{p \geq 1, \beta \geq 0, u_{ij}\beta_i \leq p_j} \max_{\lambda_j + \eta_i \geq \log u_{i,j}} & ~ \sum_{i=1}^n p_i \log (p_i / \beta_i) - \sum_{j=1}^n \lambda_j p_j - \sum_{i=1}^n \eta_i p_i \\
\end{align*}
Notice that the convex-concave game formulation above has $O(n)$ variables. 
The following upper bound on equilibrium prices is due to~\cite{dgv16}.
\begin{lemma}[Lemma 13 and Remark 14 in~\cite{dgv16}] \label{lem:price_upper_bound}
Assume all utilities are integers $\leq U$ and we let $\Delta = (nU)^n$.
Then there exists equilibrium prices $p$ that are quotient of two integers $\leq \Delta$, along with allocations $x$ that are quotients of two integers $\leq \Delta^2$.
\end{lemma}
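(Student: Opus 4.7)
The plan is to show that an equilibrium price vector (and the matching allocation) can be characterized as the unique solution of a small linear system with bounded integer coefficients, and then apply Cramer's rule and Hadamard's inequality to bound the bit complexity.

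\textbf{Step 1 (Structural characterization).} First I would use the standard fact that, at any equilibrium, the set of ``tight'' best-bang-per-buck edges $T = \{(i,j) : u_{i,j}\beta_i = p_j\}$, together with the fully-sold constraints~\eqref{eqn:GoodFullySold} and spend-all-money constraints~\eqref{eqn:SpentAllMoney}, pins down prices and allocations via a system of linear equations. Concretely, once the support pattern $T \subseteq [n]\times[n]$ is fixed, the equilibrium conditions become: (i) $u_{i,j}\beta_i = p_j$ for $(i,j)\in T$, (ii) $\sum_{i:(i,j)\in T} y_{i,j} = p_j$ for all $j$, (iii) $\sum_{j:(i,j)\in T} y_{i,j} = p_i$ for all $i$, together with a normalization (e.g., $p_{i_0}=1$ for some fixed $i_0$, or $\sum_i p_i = n$). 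This is a linear system in $(p,\beta,y)$ with integer coefficients of magnitude at most $U$.

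\textbf{Step 2 (Reduction to an $O(n)\times O(n)$ rational system for prices).} Within each strongly connected component of $T$, the spending constraints let me eliminate the $y$-variables and reduce to a linear system purely in the price vector $p$ with integer coefficients bounded by $\poly(U)$. After an appropriate normalization this yields a square non-singular linear system $Mp = b$ of dimension at most $n$, with $\|M\|_\infty, \|b\|_\infty \le U$ (up to a constant factor coming from combining a few tightness and flow-conservation equations). By Cramer's rule, each $p_i$ equals a ratio of two determinants of submatrices of an $n\times n$ integer matrix with entries bounded by $U$. Hadamard's inequality then gives that each such determinant is at most $(\sqrt{n}\cdot U)^n \le (nU)^n = \Delta$, so $p_i = a_i/b_i$ with integers $|a_i|,|b_i|\le\Delta$.

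\textbf{Step 3 (Bounding the allocations).} Once $p$ is fixed as a rational of bit complexity $\log\Delta$, the allocation variables $y_{i,j}$ (and hence $x_{i,j}=y_{i,j}/p_j$) are determined by a second linear system: the spending / full-sale equations with coefficients now involving the $p_j$'s. Clearing denominators turns this into an integer linear system with coefficients bounded by $\Delta\cdot U \le \Delta^2$. Applying Cramer's rule and Hadamard's bound once more yields $y_{i,j}=a'_{i,j}/b'_{i,j}$ with $|a'_{i,j}|,|b'_{i,j}|\le (n\cdot \Delta^2)^n$; a slightly more careful accounting (using that the spending subsystem in each SCC is already $O(n)\times O(n)$ with $O(\Delta)$-sized entries) gives the sharper bound $\Delta^2$ claimed in the statement. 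Dividing by $p_j$ preserves the $\Delta^2$-bound on numerator and denominator.

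\textbf{Main obstacle.} The delicate part is Step~2: showing that after imposing tightness and flow-conservation on an arbitrary equilibrium support one can always extract a non-singular $n\times n$ integer system whose entries are genuinely bounded by $U$ (and not $U^{O(n)}$). This requires exploiting the SCC structure of the tight-edge graph and the assumption that every SCC has at least one edge, so that in each SCC spending conservation together with the tightness ratios yields a consistent set of price ratios with integer coefficients; cycling through a cycle inside each SCC accumulates a product of at most $n$ utilities, and Hadamard's bound then absorbs this into the single $(nU)^n$ factor. The remaining steps are a standard Cramer's-rule / Hadamard-inequality bit-complexity argument once this structural reduction is in place.
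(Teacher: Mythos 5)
The paper does not prove this lemma itself; it is imported directly from Duan--Garg--Mehlhorn~\cite{dgv16} (their Lemma 13 and Remark 14). So I can only assess your proposal on its own terms. Your Step 1 and Step 2 outline (support pattern, SCC structure of the tight-edge graph, eliminating $\beta_i$ to get $u_{i,j'}p_j = u_{i,j}p_{j'}$, Cramer plus Hadamard on a $U$-bounded $n\times n$ system) is the right high-level route to the $\Delta$ bound on prices, and is consistent with the standard path-ratio argument for linear exchange equilibria; the ``delicate part'' you flag is indeed where care is needed, but the skeleton is sound.

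Step 3, however, has a genuine gap that cannot be papered over with ``slightly more careful accounting.'' If you set up a second linear system for $y$ whose entries are $O(\Delta)$ integers and apply Cramer plus Hadamard, you get a bound of order $(\sqrt{n}\,\Delta)^n$, which is $\Delta^{\Theta(n)}$, not $\Delta^2$ --- the exponent in $n$ does not disappear just because the subsystem is $O(n)\times O(n)$. The $\Delta^2$ bound requires a different, combinatorial ingredient: the flow-conservation constraints~\eqref{eqn:GoodFullySold}--\eqref{eqn:SpentAllMoney} have a bipartite incidence matrix, which is totally unimodular, so the relevant square subdeterminants are $\pm 1$. Equivalently, an equilibrium allocation can be chosen to be a vertex of the transportation polytope with supplies and demands $p$, and vertices are supported on forests, forcing each $y_{i,j}$ to be a signed subset-sum $\sum_{k}\pm p_k$. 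Combining this with the fact from Step 2 that all $p_k = N_k/D$ with a common denominator $D\le\Delta$ and numerators $N_k\le\Delta$, one gets $y_{i,j} = (\sum\pm N_k)/D$ with numerator at most $n\Delta\le\Delta^2$ and denominator at most $\Delta$, and then $x_{i,j}=y_{i,j}/p_j = (\sum\pm N_k)/N_j$ with both parts at most $\Delta^2$. Without this total-unimodularity (or forest-support) observation, your argument only yields a $\Delta^{O(n)}$ bound on the allocations, which is strictly weaker than the lemma's claim.
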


Now using Theorem~\ref{thm:saddle_point}, we need $O(n^2 \log (nU))$ iterations and the first order oracle requires $O(m)$ operations. 
Therefore, the total number of operations is $O(m n^2 \log (nU))$. We formally state our result as follows. 

\begin{theorem}[Formal version of Theorem~\ref{thm:arrow_debreu_market_intro}]\label{thm:arrow_debreu_market}
There exists a weakly polynomial algorithm that computes a market equilibrium in linear exchange markets in time $O(mn^2 \log (nU))$.
\end{theorem}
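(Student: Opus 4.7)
The plan is to follow the three-step recipe sketched right before the theorem statement, turning it into a careful runtime analysis. First, I would take the convex program of \cite{dgv16} with variables $(p,\beta,y)$ and introduce Lagrange multipliers $\lambda,\eta\in\R^n$ for the two linear equality constraints~\eqref{eqn:GoodFullySold} and~\eqref{eqn:SpentAllMoney} that couple $y$ to $p$. After swapping min/max (justified by strong duality, since Slater's condition is satisfied under the stated assumptions) and noting that the inner minimum over $y\geq 0$ forces the dual constraint $\lambda_j+\eta_i\geq \log u_{i,j}$ on edges $(i,j)\in E$ and drives the $y$ terms to $0$, what remains is a convex-concave game in only the $O(n)$ variables $(p,\beta;\lambda,\eta)$ with objective
\[
F(p,\beta;\lambda,\eta)=\sum_{i}p_i\log(p_i/\beta_i)-\sum_j\lambda_j p_j-\sum_i\eta_i p_i,
\]
constrained by $p\geq 1$, $\beta\geq 0$, $u_{ij}\beta_i\leq p_j$, and $\lambda_j+\eta_i\geq\log u_{i,j}$ for $(i,j)\in E$. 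This is convex in $(p,\beta)$ (since $p\log(p/\beta)$ is convex) and linear (hence concave) in $(\lambda,\eta)$.

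Second, I would set up the parameters needed to feed into Theorem~\ref{thm:saddle_point}. By Lemma~\ref{lem:price_upper_bound}, some equilibrium has prices and allocations expressible as quotients of integers of size at most $\Delta^2=(nU)^{2n}$, so to distinguish equilibrium values it suffices to drive the duality gap below $1/\poly(\Delta)$. This makes $\log(\kappa)=O(n\log(nU))$. The domain radii $R$ and minwidths $r$ can be taken $\exp(O(n\log(nU)))$ apart (truncate $(\lambda,\eta)$ to a ball of radius $\poly(\Delta)$ around any feasible dual point, and truncate $(p,\beta)$ similarly using Lemma~\ref{lem:price_upper_bound}), so the $\log(R/r)$ factor contributes another $O(n\log(nU))$ to $\log(\kappa)$. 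Theorem~\ref{thm:saddle_point} then yields an $\epsilon$-saddle point in $O((n+m+n)\log(\kappa))=O(n^2\log(nU))$ first-order oracle calls plus $O(n^3\log(nU))$ additional work.

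Third, I would bound the per-iteration cost of the first-order oracle. Computing $\nabla F$ requires (a) the gradient of $\sum_i p_i\log(p_i/\beta_i)$ in $O(n)$ time, and (b) the subgradient of the indicator of the constraint polytope $\{u_{ij}\beta_i\leq p_j,\ \lambda_j+\eta_i\geq\log u_{ij}\}_{(i,j)\in E}$, which is a separation oracle over $m$ linear constraints and costs $O(m)$. Thus $\mathcal{T}=O(m)$ and the total runtime is $O(m\cdot n^2\log(nU)+n^3\log(nU))=O(mn^2\log(nU))$ since $m\geq n$ under our assumptions. Finally, a standard rounding step (as in \cite{dgv16}) converts the $\epsilon$-saddle point into an exact rational equilibrium, without changing the asymptotic runtime.

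The hard part will be the last bookkeeping step: verifying that the box truncations of $(\lambda,\eta)$ and $(p,\beta)$ do not cut off any $\epsilon$-saddle point for $\epsilon$ small enough to round, and that the Lipschitz constant $L$ on this truncated domain is only $\poly(nU)$ so that the $\log(1/\epsilon L r)$ factor absorbs cleanly into the stated $\log(nU)$. This is essentially routine but requires care because the objective has a $p\log p$ term whose gradient grows with $\log p$; fortunately the truncation keeps $\log p=O(n\log(nU))$, so $L$ enters only logarithmically inside $\log\kappa$.
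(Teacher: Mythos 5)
Your proposal matches the paper's proof essentially step for step: same convex program from \cite{dgv16}, same Lagrangian dualization of the two coupling equality constraints, same observation that minimizing over $y \geq 0$ enforces $\lambda_j + \eta_i \geq \log u_{i,j}$ and kills the $y$-terms, same $O(n)$-variable convex--concave reformulation, same invocation of Lemma~\ref{lem:price_upper_bound} to set $\log\kappa = O(n\log(nU))$, and same $\mathcal{T} = O(m)$ first-order oracle cost feeding into Theorem~\ref{thm:saddle_point}. The paper is in fact terser than you are --- it does not explicitly address the domain truncation for $(\lambda,\eta)$, the Lipschitz bound on the $p\log(p/\beta)$ term, or the final rounding step you honestly flag as the bookkeeping to be checked --- so your write-up is, if anything, more careful than the original at the same level of correctness.
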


\subsection{Fisher markets with spending constraint utilities}

\begin{table}[htp!]
    \centering
    \begin{tabular}{ | l | l | l | l | l | }
        \hline
        {\bf Reference} & {\bf Year} & {\bf Running Time} & {\bf Time per Operation} &  {\bf Poly Type } \\ \hline
        \cite{v10} & 2010 &  $O(n^3 (n+m)^2 \log (U)) \cdot \T_{\text{max-flow}} $ & $M(n \log (nU))$  & Weakly poly \\ \hline
        \cite{v16} & 2016 & $O(mn^3 + m^2 (m+n \log(n))\log(m))$ & $M(n \log (nU))$ & Strongly poly  \\ \hline
        \cite{wan16} & 2016 & $O(m^3 n + m^2 \log(n) (n \log(n) + m))$ & $M(n \log (nU))$ & Strongly poly \\ \hline
        Our result & 2019 & $O(m n^2 \log (nU))$ & $M(n \log (nU))$ & Weakly poly \\ \hline
    \end{tabular}
    \caption{(More detailed version of Table~\ref{tab:fisher_markets_intro}) Fisher Markets with Spending Constraint Utilities. Let $n$ denote the total number of buyers and sellers, and $m$ the total number of segments. $\T_{\text{max-flow}}$ denotes the number of operations needed for a max-flow computation. $M(l)$ denotes the time to perform a basic arithmetic operations on $l$-bit numbers.} 
    \label{tab:fisher_markets}
\end{table}

\paragraph{Fisher markets with spending constraint utilities}
In the Fisher market model with spending constraint utilities, there are $n_B$ buyers and $n_G$ perfectly divisible goods. There is a unit\footnote{This is without loss of generality since the goods are divisible.} supply of each good.
Each buyer $i \in [n_B]$ has a budget $B_i$. 
The utility of a buyer depends on the prices, as follows. 
The utility is additive across goods, that is, the total utility for a bundle of goods is the sum of utilities for each good separately. 
For a given good, the utility function is divided into segments; each segment $l$ has a constant rate of utility $u_{i,j,l}$, and a budget $B_{i,j,l}$. The utility of the buyer for $x_{i,j,l}$ amount of good $j \in [n_G]$ under segment $l \in [L]$ is $u_{i,j,l} x_{i,j,l}$ , but is subject to the constraint that $p_j x_{i,j,l} \leq B_{i,j,l}$. 
Denote $n := n_B + n_G$ the total number of buyers and goods, and $m$ the total number of segments among all pair $(i,j)$.

\paragraph{Market equilibrium} 
An allocation $x \in \R^{n_B \times n_G \times L}_+$ and price vector $p \in \R^{n_G}_+$
are an equilibrium if two conditions are satisfied. 
The first condition, {\em buyer optimality}, is that given $p$, each player maximizes his utility subject to his budget constraint.
In other words, for all $i \in [n_B]$, the allocation $x$ optimizes the following program (where $p$ is a constant):
\begin{align*}
\max_{x \in \R^{n_B \times n_G \times L}_+} \quad&\sum_{j=1}^{n_G} \sum_{l=1}^L u_{i,j,l} x_{i,j,l} \\
\text{subject to} \quad &\sum_{j=1}^{n_G} \sum_{l=1}^L p_j x_{i,j,l} \leq B_i, \\
& 0 \leq x_{i,j,l} \text{ and } p_j x_{i,j,l} \leq B_{i,j,l} & \forall j \in [n_G], l \in [L].
\end{align*}
The second equilibrium condition, {\em market clearance}, is that $\sum_{i,l} x_{i,j,l} = 1$ for all $j \in [n_G]$.

\paragraph{Previous work} The Fisher market with spending constraint utilities problem was introduced in~\cite{v10} where they gave a weakly poly time algorithm that takes $O(n^3 (n + m)^2 \log U)$ max-flow computations, where $U = \max_{i \in [n_B], j \in [n_G], l \in [L]} u_{i,j,l}$.
A convex formulation of the problem was first given in the unpublished manuscript~\cite{bdx10}.
A strongly poly time algorithm (which is also the best weakly polynomial running time) was given in~\cite{v16}which achieves $O(mn^3 + m^2(m + n \log n) \log m)$ number of operations.

\paragraph{A convex formulation}
The problem of computing a market equilibrium in Fisher markets with spending constraint utilities problem can be captured by the following convex program due to~\cite{bdx10}.
\begin{align}
\max_{b \in \R^{n_B \times n_G \times L}_+, p \in \R^{n_G}_+} \quad & \sum_{i=1}^{n_B} \sum_{j=1}^{n_G} \sum_{l=1}^L b_{i,j,l} \log u_{i,j,l} - \sum_{j=1}^{n_G} p_j \log p_j \label{eqn:fisher_market_program} \\
\text{subject to} \quad &\sum_{i=1}^{n_B} \sum_{l=1}^L b_{i,j,l} = p_j, & \forall j \in [n_G], \nonumber \\
			&\sum_{j=1}^{n_G} \sum_{l=1}^L b_{i,j,l} = B_i, & \forall i \in [n_B], \nonumber \\
			& 0 \leq b_{i,j,l} \leq B_{i,j,l}, & \forall i \in [n_B],j \in [n_G], l \in [L] \nonumber.
\end{align}

\begin{theorem}[\cite{bdx10}]
An optimum solution to the above convex program corresponds to an equilibrium for the Fisher market with spending constraint utilities with allocation given by $x_{i,j,l} = b_{i,j,l} / p_j$.
\end{theorem}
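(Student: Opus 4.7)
The plan is to verify the theorem via the Karush--Kuhn--Tucker (KKT) conditions of the convex program in~(\ref{eqn:fisher_market_program}). Since the program is a maximization of a concave objective over a convex feasible set (the log-sum-exp--like term is concave in $p$ and the objective is linear in $b$), Slater's condition together with the finite optimum implies that any optimum $(b^*, p^*)$ admits Lagrange multipliers satisfying the KKT conditions, which I will show in turn imply both market clearance and buyer optimality.

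First, I would set up the Lagrangian with multipliers $\lambda_j \in \R$ for the constraint $\sum_{i,l}b_{i,j,l} = p_j$, multipliers $\mu_i \in \R$ for $\sum_{j,l}b_{i,j,l} = B_i$, and nonnegative multipliers $\nu_{i,j,l} \geq 0$ for the cap constraints $b_{i,j,l} \leq B_{i,j,l}$. Stationarity in $p_j$ gives $-\log p_j - 1 + \lambda_j = 0$, so $\lambda_j = \log p_j + 1$. Stationarity in $b_{i,j,l}$ (with $b_{i,j,l}\geq 0$) gives $\log u_{i,j,l} - \lambda_j - \mu_i - \nu_{i,j,l} \leq 0$, with equality when $b_{i,j,l}^* > 0$. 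Substituting $\lambda_j$ and rearranging, this becomes
\begin{equation*}
\frac{u_{i,j,l}}{p_j^*} \leq e^{1+\mu_i} \cdot e^{\nu_{i,j,l}},
\end{equation*}
with equality whenever $b_{i,j,l}^* > 0$, and with $\nu_{i,j,l}\geq 0$ vanishing unless $b_{i,j,l}^* = B_{i,j,l}$ (by complementary slackness).

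Market clearance is then immediate from the allocation rule $x_{i,j,l} := b_{i,j,l}^*/p_j^*$ combined with the first equality constraint: $\sum_{i,l} x_{i,j,l} = \sum_{i,l} b_{i,j,l}^*/p_j^* = p_j^*/p_j^* = 1$. For buyer optimality, I would interpret $\alpha_i := e^{1+\mu_i}$ as buyer $i$'s reservation bang-per-buck. The KKT conditions then say exactly: (a) on any segment $(j,l)$ with $b_{i,j,l}^* > 0$ and $b_{i,j,l}^* < B_{i,j,l}$, the bang-per-buck $u_{i,j,l}/p_j^*$ equals $\alpha_i$; (b) on any segment with $b_{i,j,l}^* = B_{i,j,l}$ (the cap binds), the bang-per-buck is at least $\alpha_i$; (c) on any segment with $b_{i,j,l}^*=0$, the bang-per-buck is at most $\alpha_i$. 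This is precisely the characterization of a utility-maximizing bundle under budget $B_i$ with per-segment caps $B_{i,j,l}/p_j^*$ on $x_{i,j,l}$, which together with $\sum_{j,l} b_{i,j,l}^* = B_i$ (the second equality constraint, which says the buyer spends her full budget) completes buyer optimality.

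The main obstacle I anticipate is handling the interaction between the segment cap $b_{i,j,l}\leq B_{i,j,l}$ and the budget equality $\sum_{j,l} b_{i,j,l} = B_i$ cleanly; a degenerate case to be careful about is when some buyer's total spendable amount $\sum_{j,l} B_{i,j,l}$ equals her budget, which forces the caps to all bind and makes the relation between $\mu_i$ and $\nu_{i,j,l}$ less transparent. A secondary subtlety is verifying that the objective is bounded and Slater's condition holds so that the KKT multipliers exist and the program genuinely attains an optimum with $p_j^* > 0$ for every good; here I would invoke the structural assumptions guaranteeing existence of equilibrium (e.g.\ every good has some buyer with positive utility and budget), which prevent $p_j^* = 0$.
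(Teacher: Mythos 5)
The paper states this theorem as a cited result from \cite{bdx10} and supplies no proof of its own, so there is no internal argument to compare your proposal against. On its own merits, your KKT derivation is correct and is the standard way to verify Eisenberg--Gale--type characterizations: stationarity in $p_j$ pins the multiplier $\lambda_j$ to $\log p_j$ plus a constant, stationarity in $b_{i,j,l}$ then yields exactly the bang-per-buck inequalities with the cap multipliers $\nu_{i,j,l}$ producing the ``bang-per-buck is at least $\alpha_i$ on saturated segments, equal on partially filled segments, at most $\alpha_i$ on untouched segments'' characterization; market clearance is immediate from the first equality constraint after dividing by $p_j$; and the second equality constraint encodes full budget expenditure.

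Two small tightenings. First, you do not need Slater: all constraints of (\ref{eqn:fisher_market_program}) are affine, so KKT multipliers exist at any optimizer given only feasibility — invoking Slater is harmless but superfluous. Second, your argument implicitly uses $p_j^* > 0$ both in the stationarity condition $\partial/\partial p_j$ (which involves $\log p_j$) and in defining the allocation $x_{i,j,l} = b_{i,j,l}^*/p_j^*$; it is worth stating the one-line reason: if $p_j^* = 0$ then the constraint $\sum_{i,l} b_{i,j,l} = p_j$ forces $b_{i,j,l} = 0$ for all $i,l$ with that $j$, and since every good is assumed to have some buyer with positive utility rate, increasing $p_j$ slightly and routing a matching $\epsilon$ of budget to good $j$ strictly improves the objective (the $-p_j\log p_j$ term has derivative $+\infty$ at $p_j = 0^+$), contradicting optimality. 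The degenerate case you flag ($\sum_{j,l} B_{i,j,l} = B_i$, all caps binding) is actually benign: in that case complementary slackness leaves $\nu_{i,j,l}$ free to absorb whatever surplus is needed, and the bang-per-buck characterization still certifies that buyer $i$'s unique feasible bundle is trivially optimal.
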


\paragraph{Convex-concave game formulation} 
We transform the convex program above to a convex-concave saddle point which allows us to apply the cutting plane method.
Let $\eta_j, \lambda_i$ be the Lagrange multipliers for the two equality constraints and $\mu_{i,j,l} \geq 0$ be the Lagrange multipliers for the inequalities $b_{i,j,l} \leq B_{i,j,l}$ in the above convex formulation.
We have the above convex program is equivalent to 
\begin{align*}
& \max_{p \geq 0} \max_{b\geq 0} \min_{\mu\geq 0, \eta, \lambda} - \sum_{j=1}^{n_G} p_j \log p_j + \sum_{i,j,l} b_{i,j,l} \log u_{i,j,l} \\
& ~ + \sum_{j=1}^{n_G} \eta_j(p_j - \sum_{i,l} b_{i,j,l}) + \sum_{i=1}^{n_B} \lambda_i (B_i - \sum_{j,l} b_{i,j,l}) + \sum_{i,j,l} \mu_{i,j,l} (B_{i,j,l} - b_{i,j,l}) \\
= & ~ \max_{p \geq 0} \min_{\mu\geq 0, \eta, \lambda} \max_{b\geq 0} - \sum_{j=1}^{n_G} p_j \log p_j  + \sum_{i,j,l} b_{i,j,l} (\log u_{i,j,l} - \eta_j - \lambda_i - \mu_{i,j,l} ) \\
& ~ + \sum_{j=1}^{n_G} \eta_j p_j + \sum_{i=1}^{n_B} \lambda_i B_i + \sum_{i,j,l} \mu_{i,j,l} B_{i,j,l} \\
= & \max_{p \geq 0} \min_{\mu\geq 0, \eta, \lambda, \eta_j + \lambda_i + \mu_{i,j,l} \geq \log u_{i,j,l}} - \sum_j p_j \log p_j \\
& ~ + \sum_{j=1}^{n_G} \eta_j p_j + \sum_i \lambda_i B_i + \sum_{i,j,l} \mu_{i,j,l} B_{i,j,l} \\
= & \max_{p \geq 0} \min_{\eta, \lambda} - \sum_{j=1}^{n_G} p_j \log p_j \\
& ~ + \sum_{j=1}^{n_G} \eta_j p_j + \sum_{i=1}^{n_B} \lambda_i B_i + \sum_{i,j,l} B_{i,j,l} \max\{0, \log u_{i,j,l} - \eta_j - \lambda_i\},
\end{align*}
where the last step is because $B_{i,j,l} \geq 0$. 
Now applying Theorem~\ref{thm:saddle_point}, we need $O(n^2 \log (nU))$ iterations and the first order oracle takes $O(m)$ operations. 
So the total number of operations is $O(m n^2 \log (nU))$.
We formally state our result as follows:

\begin{theorem}[Formal version of Theorem~\ref{thm:fisher_market_intro}]\label{thm:fisher_market}
There exists a weakly polynomial algorithm that computes a market equilibrium in Fisher markets with spending constraint utilities in time $O(mn^2 \log (nU))$.
\end{theorem}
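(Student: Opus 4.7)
}
The plan is to show that the convex program~(\ref{eqn:fisher_market_program}) of \cite{bdx10}, which has $\Theta(m)$ primal variables (the $b_{i,j,l}$ together with the $p_j$), is equivalent to a convex-concave saddle point problem over only $O(n)$ variables, after which Theorem~\ref{thm:saddle_point} immediately yields the claimed runtime. The derivation sketched in the ``Convex-concave game formulation'' paragraph above already gives the blueprint, so my proposal is to formalize it.

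First I would form the Lagrangian of~(\ref{eqn:fisher_market_program}) by introducing multipliers $\eta_j\in\R$ for the supply constraints $\sum_{i,l}b_{i,j,l}=p_j$, multipliers $\lambda_i\in\R$ for the budget constraints $\sum_{j,l}b_{i,j,l}=B_i$, and multipliers $\mu_{i,j,l}\geq0$ for the segment caps $b_{i,j,l}\leq B_{i,j,l}$. Strong duality holds because the program is convex with a strictly feasible interior, so $\max/\min$ may be swapped. I would then eliminate $b_{i,j,l}\geq0$ by taking the inner maximum: since the objective is linear in $b_{i,j,l}$ with coefficient $\log u_{i,j,l}-\eta_j-\lambda_i-\mu_{i,j,l}$, boundedness forces the dual feasibility constraint $\eta_j+\lambda_i+\mu_{i,j,l}\geq\log u_{i,j,l}$, in which case the optimum is attained with $b_{i,j,l}=0$ and the term $\sum_{i,j,l}b_{i,j,l}(\cdots)$ vanishes. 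Finally, I would eliminate $\mu_{i,j,l}\geq0$: since $B_{i,j,l}\geq0$, the pointwise optimum is $\mu_{i,j,l}=\max\{0,\log u_{i,j,l}-\eta_j-\lambda_i\}$. What remains is the saddle problem displayed in the excerpt, with variables $(p)\in\R^{n_G}$ on the outer $\max$ and $(\eta,\lambda)\in\R^{n_G+n_B}$ on the inner $\min$, i.e.\ $O(n)$ variables in total.

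Next I would implement the first-order oracle. The objective $-\sum_j p_j\log p_j+\sum_j\eta_j p_j+\sum_i\lambda_iB_i+\sum_{i,j,l}B_{i,j,l}\max\{0,\log u_{i,j,l}-\eta_j-\lambda_i\}$ is convex in $(\eta,\lambda)$ (as a sum of convex functions) and concave in $p$. A subgradient with respect to any coordinate is a sum of $O(m/n)$ segmentwise terms in the worst case, so the entire subgradient can be assembled in $\T=O(m)$ time by one pass over the segments. To invoke Theorem~\ref{thm:saddle_point} I also need a Lipschitz bound $L$ and a bound on $R/r$ for the confining ball. I would use the standard equilibrium-existence argument (analogous to Lemma~\ref{lem:price_upper_bound}) to show that an optimum $(p^\star,\eta^\star,\lambda^\star)$ exists with entries of bit-complexity $\poly(n,\log U)$; thus restricting to $B(0,R)$ with $R=(nU)^{O(n)}$ is without loss of generality, and the minwidth of the feasible set is at least $(nU)^{-O(n)}$, giving $\log(R/r)=O(n\log(nU))$ which, together with $\log(n/\epsilon)$ for the desired accuracy, contributes the $O(\log(nU))$ factor after absorbing $n$ into the iteration count. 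Substituting into Theorem~\ref{thm:saddle_point} gives runtime $O((n+m)^3\log(nU)+\T\cdot(n+m)\log(nU))$; since the first term is dominated by the second for $m\geq n$, this is $O(mn\cdot n\log(nU))=O(mn^2\log(nU))$.

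The main obstacle I expect is the $R/r$ estimate: I need tight enough a priori bounds on optimal prices and dual multipliers so that the $\log(\kappa)$ factor in Theorem~\ref{thm:saddle_point} collapses to $O(\log(nU))$ rather than $\poly(n)\log(nU)$. The arithmetic of Lagrangian duality and the $O(m)$ oracle time are both routine; what takes care is verifying that the effective saddle domain can be clipped to a box of radius $\poly((nU)^n)$ without discarding the equilibrium, which I would prove by a rational-certificate argument in the spirit of \cite{dgv16} combined with the fact that $-p_j\log p_j$ forces $p_j$ to be bounded away from $0$ and $\infty$ at any extremal solution.
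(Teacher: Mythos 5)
Your proposal matches the paper's argument essentially step for step: you pass from the Birnbaum--Devanur--Xiao convex program to its Lagrangian, eliminate the $\Theta(m)$ variables $b_{i,j,l}$ and then the multipliers $\mu_{i,j,l}$ in closed form to obtain the $\max_{p}\min_{\eta,\lambda}$ saddle problem over $O(n)$ variables, observe that a subgradient can be assembled in $\T=O(m)$ time by one sweep over segments, and plug into Theorem~\ref{thm:saddle_point} with $\log\kappa=O(n\log(nU))$ from rational bounds on the equilibrium. The paper states this reduction more tersely (it does not spell out the Slater condition or the $R/r$ estimate, which you rightly flag as the delicate point); your extra detail there is a faithful fleshing-out of the same route rather than a different one.
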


\subsection{Walrasian equilibrium for general buyer valuations and fixed supply}\label{sec:walras}

\begin{table}[htp!]
    \centering
    \begin{tabular}{ | l | l | l | l | l | }
        \hline
        {\bf Reference} & {\bf Year} & {\bf Running Time} \\ \hline
        \cite{kc82} & 1982 & Not polynomial time \\ \hline
        \cite{p99} & 1999 & pseudo-polynomial time \\ \hline
        \cite{pu02} & 2002 & pseudo-polynomial time \\ \hline
        \cite{am02} & 2002 & pseudo-polynomial time \\ \hline
        \cite{dsv07} & 2007 & pseudo-polynomial time \\ \hline
        \cite{lw17} & 2017 &  $O(n^2 \T_{\mathrm{AD}} \log(SMn)+n^6 \log^{O(1)}(SMn))$   \\ \hline
        Our result & 2019 & $O(n^2 \T_{\mathrm{AD}} \log(SMn)+n^4 \log(SMn))$ \\ \hline
    \end{tabular}
    \caption{(More detailed version of Table~\ref{tab:walrasian_equilibrium_intro}) Walrasian equilibrium for general buyer valuations and fixed supply. } 
    \label{tab:walrasian_equilibrium} 
\end{table}

Recently Paes Leme and Wong~\cite{lw17} studied the problem of computing Walrasian equilibrium in a market with \emph{arbitrary} buyers' valuation functions $v_i$ and fixed supply $s$ of indivisible goods. They showed that such equilibria are characterized by the minima of the following convex program, which can be solved in polynomial time via a suitable extension of the cutting plane methods.

Let $[m]$ and $[n]$ be the set of buyers and goods respectively. The setting of the problem is as follows.

\paragraph{Buyers' valuation and utility} Each buyer $i\in [m]$ may have an arbitrary valuation $v_i:\mathbb{N}^{n} \rightarrow \R$ over goods, i.e. for $x\in\mathbb{N}^{n}$, $v_i(x)$ is the valuation of $i$ if she gets $x_j$ units of good $j \in [n]$. Given prices $p\in \R^{[n]}$ on goods, the utility of $i$ getting $x\in\mathbb{N}^{n}$ is given by $v_i(x) - p \cdot x$. To maximize her utility, given $p$, $i$ would buy a bundle 
\begin{align*}
x^{(i)}\in \arg\max_{x\in\mathbb{N}^{n}:0\leq x\leq s} v_i(x) - p \cdot x.
\end{align*}

\paragraph{Aggregate demand oracle} To study this problem computationally, one must specify the information about the market available to the algorithm. Paes Leme and Wong showed that this problem can be solved in polynomial time assuming the aggregate demand oracle.

Given prices $p$, the aggregate demand is given by $\sum_{i\in [m]} x^{(i)}$, where 
\begin{align*}
x^{(i)} \in \arg\max_{ x \in \mathbb{N}^n : 0 \leq x \leq s } v_i(x)-p\cdot x
\end{align*}
is a utility-maximizing bundle for buyer $i$.

\paragraph{Walrasian equilibrium} In this economy the supply of goods is fixed at $s \in \mathbb{N}^{n}$. Prices $p$ are said to be at equilibrium if there exist some aggregate demand that matches the supply, i.e. 
\begin{align*}
s = \sum_{i\in [m]} x^{(i)}.
\end{align*}

\paragraph{Convex programming} One approach to computing Walrasian equilibrium is by formalating the problem as the following convex program. Previous works applied subgradient descent methods to this program to obtain pseudopolynomial time algorithms~\cite{p99,pu02,am02}. In contrast, Paes Leme and Wong showed that this program can in fact be solved in weakly polynomial time.

\begin{align*}
 \min_{u,p} & \sum_{i\in [m]} u_i + p\cdot s \\
u_i & \geq v_i(x)-p\cdot x \quad \forall i \in [m], \forall x
\end{align*}

Here $p$ denotes the prices and $u_i$ can be thought of as the utility of buyer $i$, $\forall i \in [m]$. This convex program can clearly be solved with access to the \emph{demand oracle} of each buyer $i$, i.e. given $p$, return $\arg\max_x v_i(x)-p\cdot x$, which serves as a separation oracle.

The main result of Paes Leme and Wong is that this can be done via the weaker \emph{aggregate demand oracle}, i.e. given $p$, return $\sum_i x^{(i)}$, where 
\begin{align*}
x^{(i)} \in \arg\max_x v_i(x) - p \cdot x.
\end{align*}
Their algorithm is based on a suitable extension of standard cutting plane methods:

\begin{theorem}[\cite{lw17}]
There is an algorithm that runs in time 
\begin{align*}
O(n^2 \T_{\mathrm{AD}} \log(SMn) + n^6 \log^{O(1)}(SMn) )
\end{align*}
to compute Walrasian prices whenever it exists using only access to an aggregate demand oracle. Here $T_{AD}$ denotes the runtime of the aggregate demand oracle, $n$ denotes the number of goods, $M = \max_{i\in [m],0\leq x\leq s} |v_i(x)|$ and $S=\max_{j\in [n]} s_j$ ($s$ and $v_i$ are integer-valued).
\end{theorem}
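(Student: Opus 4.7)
The plan is to reformulate Walrasian equilibrium as a convex minimization problem over prices alone and then apply a cutting plane method whose separation oracle is implemented by the aggregate demand oracle. Concretely, I would work with the ``indirect utility'' $V(p) := \sum_{i \in [m]} \max_{0 \leq x \leq s}(v_i(x) - p \cdot x)$ and the objective $L(p) := V(p) + p \cdot s$. Each $V_i$ is a maximum of affine functions of $p$, hence convex and piecewise linear; so $L$ is convex. The envelope theorem gives the subgradient characterization $\partial L(p) \ni s - X(p)$, where $X(p) = \sum_i x^{(i)}(p)$ is exactly what the aggregate demand oracle returns in one call taking time $\T_{\mathrm{AD}}$. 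Standard LP-duality-style arguments show that $p$ is a Walrasian price vector iff it is a minimizer of $L$, so the task reduces to convex minimization over $p \in \mathbb{R}^n_{\geq 0}$.

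The main obstacle, and the key technical move, is that a cutting plane method applied to the level sets $\{p : L(p) \leq v\}$ requires evaluating the function value $L(p)$, not merely its subgradient; but the AD oracle returns only bundles, not utilities, so $V(p)$ is not directly observable. My plan is to recover $V(p)$ by line integration: since $\nabla V = -X$ wherever $V$ is differentiable, we have $V(p) = V(p_0) - \int_{\gamma} X(q) \cdot dq$ along any path $\gamma$ from a reference point $p_0$ to $p$. Choosing $p_0$ large enough so that the demand is identically zero gives $V(p_0) = 0$, and integrating coordinate by coordinate along an axis-aligned path reduces the task to $n$ one-dimensional quadratures. Each quadrature is carried out to inverse-polynomial accuracy using a constant-degree rule (or the polynomial interpolation of Theorem~\ref{thm:multiple_variable_integral}) sampled at $\poly\log(SMn)$ price vectors, so a single evaluation of $L(p)$ costs $O(n \cdot \poly\log(SMn))$ AD queries. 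The integer structure of $v_i$ and $s$ makes $V$ piecewise linear with rational breakpoints of bounded bit complexity; breakpoints can be handled by perturbing the integration path generically, so that the envelope-theorem formula applies along the entire path up to a measure-zero set.

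Once the separation oracle is in place, the remaining ingredients are routine. A ball of radius $\Omega(1/\poly(SMn))$ around any equilibrium price is contained in the feasible region (valuations are integer-valued with magnitude $\leq M$, supplies are $\leq S$, and Cramer's rule bounds the bit complexity of equilibrium prices), and equilibrium prices themselves lie in a box of radius $R = \poly(SMn)$, so the effective conditioning parameter is $\kappa = \poly(SMn)$ with $\log \kappa = O(\log(SMn))$. Feeding this into a cutting plane method in dimension $n$ (Vaidya's volumetric center method at the accuracy level available in \cite{lw17}) yields $O(n \log(SMn))$ outer iterations, each invoking the separation oracle once, for a total of $O(n^2 \T_{\mathrm{AD}} \log(SMn))$ oracle time and $O(n^6 \log^{O(1)}(SMn))$ internal arithmetic. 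A final rounding step, using continued fraction expansion on the output approximate minimizer, produces exact Walrasian prices whenever an equilibrium exists.

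The hard part will be justifying the integration-based evaluation of $L(p)$: one must argue that the breakpoints of the piecewise-linear function $V$ do not cause the quadrature to accumulate excessive error, that the integration path can be chosen to avoid them with probability one under a small random perturbation, and that the additive error of $L(p)$ propagates into the cutting plane framework as a tolerable perturbation of the separating hyperplanes (in the spirit of the perturbed analysis in Section~\ref{sec:perturb}). Once this is established, the complexity bound follows by multiplying the number of outer iterations, the number of AD queries per function evaluation, and the per-query cost $\T_{\mathrm{AD}}$, plus the standard $O(n^3)$-per-iteration overhead of the cutting plane method available in \cite{lw17}, which compounds to the $n^6 \log^{O(1)}(SMn)$ term.
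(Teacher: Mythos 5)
This is a cited theorem from \cite{lw17}; the paper offers no proof of it and uses it as a black box (see the proof of Theorem~\ref{thm:walrasian_equilibrium}, which only swaps the inner cutting plane method). Your proposal is a reconstruction of Paes Leme--Wong's argument, but it has two substantive problems.

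First, the line-integration step that you identify as ``the key technical move'' does not work. The function $V(p)=\sum_i \max_{0\leq x\leq s}(v_i(x)-p\cdot x)$ is piecewise linear, so the integrand $X(q)$ along a coordinate segment is \emph{piecewise constant}, with as many as $(S+1)^{\Theta(n)}$ jump discontinuities coming from the $m$ buyers' exponentially many candidate bundles. A constant-degree rule sampled at $\poly\log(SMn)$ points cannot integrate such a function to inverse-polynomial error; the polynomial interpolation result (Theorem~\ref{thm:multiple_variable_integral}) requires the integrand to be $2N$ times differentiable (indeed, its use in Section~\ref{sec:complicated_leverage_score} relies on holomorphy and Cauchy estimates), which fails outright for a step function. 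Generic perturbation of the path removes the measure-zero set of breakpoints from the domain of integration but does nothing to reduce the total variation that the quadrature must capture.

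Second, and more fundamentally, the obstacle you are trying to engineer around is not the actual obstacle, because the standard reduction of convex minimization to the feasibility problem (the one invoked in Theorem~\ref{thm:convex}, i.e.\ Theorem~42 of~\cite{lsw15}) never needs function values. At a query point $p$, the subgradient cut $\{p': (s-X(p))\cdot (p'-p)\leq 0\}$ already contains every $p'$ with $L(p')\leq L(p)$ by the subgradient inequality, so you can localize the entire sublevel set using AD queries alone. The genuine difficulty that makes \cite{lw17} nontrivial is elsewhere: at a Walrasian price $p^*$ the zero vector lies in $\partial L(p^*)$, but the aggregate demand oracle commits to a single tie-breaking and may return $X(p^*)\neq s$, so one never gets a certificate of optimality from a single query and the cut through $p^*$ can be repeated without shrinking the relevant face. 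Their contribution is a cutting-plane-plus-rounding scheme that is robust to this adversarial tie-breaking, exploiting the bounded bit complexity of the Walrasian price polytope, rather than any evaluation of $L$. As written, your argument both introduces a broken subroutine and leaves the real issue unaddressed.
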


As in previous applications,  $n^2$ iterations of cutting plane are required as $\epsilon$ has to be taken to be exponentially small which results in an overall runtime overhead of $\tilde{O}(n^6)$. By leveraging our faster cutting plane method, we obtain the following improved result:

\begin{theorem}[Formal version of Theorem~\ref{thm:walrasian_equilibrium_intro}]\label{thm:walrasian_equilibrium}
There is an algorithm that runs in time
\begin{align*}
O( n^2 \T_{\mathrm{AD}} \log(S M n) + n^4 \log(S M n) )
\end{align*}
to compute Walrasian prices whenever it exists using only access to an aggregate demand oracle.
\end{theorem}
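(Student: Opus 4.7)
}
The plan is to reuse Paes Leme and Wong's reduction of Walrasian equilibrium computation to the cutting plane framework, and simply plug in our faster cutting plane method (Theorem~\ref{thm:cutting_plane_method_intro}) in place of the \cite{lsw15} method that they originally invoked. The entire speedup is driven by the improved $\log(\kappa)$ dependence: exactly as in our other market equilibrium applications, $\epsilon$ must be chosen exponentially small in $n$, so $\log(\kappa) = \Theta(n \log(SMn))$, and the $n^3 \log^{O(1)}(\kappa)$ term of LSW becomes $n^6 \log^{O(1)}(SMn)$ whereas ours becomes $n^4 \log(SMn)$.

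First, I would recall the setup of \cite{lw17}: the convex program
\begin{align*}
\min_{u,p} ~ \sum_{i\in[m]} u_i + p\cdot s \qquad \text{s.t.} \qquad u_i \geq v_i(x) - p\cdot x \; \forall i, \forall x,
\end{align*}
whose minimizers yield the Walrasian prices, is projected onto the $n$-dimensional space of prices $p$ after eliminating each $u_i = \max_x (v_i(x) - p \cdot x)$, giving an $n$-variable convex optimization problem. The key observation of Paes Leme and Wong is that a subgradient of the projected objective at a price vector $p$ can be obtained from the aggregate demand oracle alone (the $u_i$ terms collapse into $p \cdot \sum_i x^{(i)}$), so the separation oracle for the feasibility-style subproblem used inside cutting plane can be implemented with a constant number of aggregate demand queries, at cost $O(T_{\mathrm{AD}})$ per call.

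Next I would verify the two quantitative inputs needed by Theorem~\ref{thm:cutting_plane_method_intro}. The radius $R$ of the initial box containing the equilibrium prices is bounded by $\poly(SMn)$, since integrality of $v_i$ and $s$ forces equilibrium prices to lie in a bounded region of size polynomial in $S$, $M$, $n$. The accuracy $\epsilon$ must be chosen so that an $\epsilon$-approximate solution rounds to a true equilibrium; standard arguments (as in \cite{lw17}) require $\epsilon = 1/(SMn)^{\Theta(n)}$, whence $\log(\kappa) = \log(nR/\epsilon) = O(n \log(SMn))$. Substituting into the bound $O(n \cdot \SO \cdot \log(\kappa) + n^3 \log(\kappa))$ with $\SO = O(T_{\mathrm{AD}})$ yields exactly the promised runtime $O(n^2 T_{\mathrm{AD}} \log(SMn) + n^4 \log(SMn))$.

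The main obstacle is a verification, not a new construction: one must check that Paes Leme and Wong's particular way of wrapping a cutting plane method (they treat the aggregate demand oracle as producing a ``summed'' separating hyperplane rather than an individual one, and use a specific rounding and certification scheme after the method terminates) is compatible with our Theorem~\ref{thm:cutting_plane_method_intro} as a black box. Since our method offers the same input/output interface as \cite{lsw15}, namely an oracle-driven feasibility solver with the standard $n\log(\kappa)$ oracle calls and $O(n^3 \log \kappa)$ additional runtime, their reduction carries over verbatim with only the two stated substitutions. I would state this as a lemma that their reduction is oracle-model-generic and then invoke Theorem~\ref{thm:cutting_plane_method_intro}, deferring the full reproduction of their rounding/certification arguments to a citation.
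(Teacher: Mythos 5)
Your proposal is correct and takes essentially the same approach as the paper, whose proof is literally the one line ``Same as Paes Leme--Wong except that we invoke our cutting plane method for convex minimization in Theorem~\ref{thm:convex} instead of LSW's.'' The only cosmetic difference is that you cite the feasibility version (Theorem~\ref{thm:cutting_plane_method_intro}) where the paper cites the convex-minimization wrapper (Theorem~\ref{thm:convex}), but your prose already accounts for the convex-minimization reduction and the $\log(\kappa)=O(n\log(SMn))$ accounting, so the substance is identical.
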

\begin{proof}
Same as Paes Leme-Wong except that we invoke our cutting plane method for convex minimization in Theorem~\ref{thm:convex} instead of LSW's.
\end{proof}

\end{document}